\theoremstyle{plain}
\newtheorem{theorem}{Theorem}
\newtheorem{lemma}[theorem]{Lemma}
\newtheorem{corollary}[theorem]{Corollary}
\theoremstyle{definition}
\newtheorem{definition}[theorem]{Definition}
\newtheorem{assumption}[theorem]{Assumption}
\theoremstyle{remark}
\newcommand{\ce}{\ensuremath{\mathrm{e}}}
\newcommand{\ci}{\ensuremath{\mathrm{i}}}
\newcommand{\cj}{\ensuremath{\mathrm{j}}}
\newcommand{\ck}{\ensuremath{\mathrm{k}}}
\newcommand{\cpi}{\ensuremath{\uppi}}
\DeclareMathOperator*{\bigast}{\scalerel*{\ast}{\sum}}
\begin{document}
% \title{A Unified Theory of Quantum Neural Network Loss Landscapes}
% \author{Eric R.\ Anschuetz}
% \email{eans@caltech.edu}
% \affiliation{Institute for Quantum Information and Matter, Caltech, 1200 E. California Blvd., Pasadena, CA 91125, USA}
% \affiliation{Walter Burke Institute for Theoretical Physics, Caltech, 1200 E. California Blvd., Pasadena, CA 91125, USA}
\title{A Unified Theory of Quantum Neural Network Loss Landscapes}
\author{Eric R.\ Anschuetz\\
Institute for Quantum Information and Matter \& Walter Burke Institute for Theoretical Physics, Caltech\\
Pasadena, CA 91125, USA\\
\texttt{eans@caltech.edu}}

\newcommand{\fix}{\marginpar{FIX}}
\newcommand{\new}{\marginpar{NEW}}

\maketitle

\begin{abstract}
    Classical neural networks with random initialization famously behave as Gaussian processes in the limit of many neurons, which allows one to completely characterize their training and generalization behavior. No such general understanding exists for quantum neural networks (QNNs), which---outside of certain special cases---are known to not behave as Gaussian processes when randomly initialized. We here prove that QNNs and their first two derivatives instead generally form what we call \emph{Wishart processes}, where certain algebraic properties of the network determine the hyperparameters of the process. This Wishart process description allows us to, for the first time: give necessary and sufficient conditions for a QNN architecture to have a Gaussian process limit; calculate the full gradient distribution, generalizing previously known barren plateau results; and calculate the local minima distribution of algebraically constrained QNNs. Our unified framework suggests a certain simple operational definition for the ``trainability'' of a given QNN model using a newly introduced, experimentally accessible quantity we call the \emph{degrees of freedom} of the network architecture.
\end{abstract}

% \tableofcontents

\section{Introduction}

\subsection{Motivation}\label{sec:motivation}

One of the miracles of machine learning on classical computers is that simple, gradient-based optimizers can efficiently find the minimum of extremely high-dimensional, nonconvex loss landscapes, allowing for the efficient training of deep neural networks. Over the past decade this has been understood in more and more detail via random matrix theory. In particular, it is now known that the loss landscapes of randomly initialized, wide neural networks are distributed as Gaussian processes with covariance given by the so-called \emph{neural tangent kernel} (NTK)~\citep{Neal1996,pmlr-v38-choromanska15,chaudhari2017energy,46760}. The NTK is completely determined by the neural structure of the network, linking the asymptotic behavior of the network to architectural choices made in its construction. This understanding of classical neural networks has been used to show that wide neural networks train efficiently via gradient descent~\citep{pmlr-v38-choromanska15,chaudhari2017energy,10.5555/3327757.3327948,pmlr-v97-allen-zhu19a}. It also explains other emergent phenomena in deep learning, such as the remarkably good generalization performance of neural networks beyond what learning theory predicts~\citep{10.5555/3327757.3327948,pmlr-v162-wei22a}.

One might hope that a similar, universal story would exist for \emph{quantum neural networks} (QNNs). These are classes of neural networks where the associated loss function $\ell$ takes as input a \emph{quantum state} $\bm{\rho}\succeq\bm{0}\in\mathbb{C}^{N\times N}$ with trace $1$, conjugates $\bm{\rho}$ by a unitary operation $\bm{U}_{\bm{\theta}}\in\operatorname{U}\left(N\right)$ parameterized by $\bm{\theta}\in\mathbb{R}^p$, and then takes the Frobenius inner product with the Hermitian $\bm{O}\in\mathbb{C}^{N\times N}$. That is,
\begin{equation}\label{eq:intro_loss_simp}
    \ell\left(\bm{\rho};\bm{\theta}\right)=\Tr\left(\bm{\rho} \bm{U}_{\bm{\theta}}^\dagger \bm{O}\bm{U}_{\bm{\theta}}\right).
\end{equation}
Such networks are defined by the choice of parameterization for $\bm{U}_{\bm{\theta}}$ (typically called the \emph{ansatz}) and the Hermitian $\bm{O}$ (here called the \emph{objective observable})~\citep{peruzzo2014variational}, and are known to efficiently (on a quantum computer) perform learning tasks provably difficult using traditional machine learning methods~\citep{liu2021rigorous,10.1145/3519935.3519960,anschuetzgao2022,huang2024learning,anschuetz2024arbitrary}. Unfortunately, it is known that QNNs generally do not have a Gaussian process (or \emph{quantum neural tangent kernel (QNTK)}) asymptotic limit~\citep{garciamartin2023deep,girardi2024trained}. Indeed, it is not even obvious whether an equivalently simple universal description exists due to QNN training dynamics differing completely in various parameter regimes. For instance, the loss landscapes of generic shallow QNNs are known to be described by so-called \emph{Wishart hypertoroidal random fields} (WHRFs) and dominated by poor local minima~\citep{anschuetz2021critical,anschuetz2022barren}; this is a far cry from the effectively convex training landscapes of deep quantum networks which, in certain circumstances, \emph{can} be shown to have a Gaussian process limit~\citep{PRXQuantum.3.030323,PhysRevLett.130.150601,you2022convergence,garciamartin2023deep,girardi2024trained,garciamartin2024architectures}. Unfortunately, these deep networks are still untrainable in practice due to an exponential decay in their gradients known as the \emph{barren plateau} phenomenon~\citep{mcclean2018barren}, making estimating gradients on a quantum computer asymptotically intractable.\footnote{\citet{girardi2024trained} also study instances which do not suffer from barren plateaus, but which only achieve an asymptotically-vanishing improvement over random $\bm{\theta}$ in the optimization of Eq.~\eqref{eq:intro_loss_simp}.}

The presence of barren plateaus or poor local minima in QNN loss landscapes paints a pessimistic picture for the practical utility of generic QNNs. However, these negative results can be circumvented by considering \emph{structured} QNNs, further complicating any unifying theory of the asymptotic training behavior of QNNs. For instance, $\bm{U}_{\bm{\theta}}$ may be constrained to belong to some low-dimensional Lie subgroup $G$ of the full unitary group, and $\bm{O}$ in the generating algebra of $G$. This is known as the \emph{Lie algebra-supported ansatz} (LASA) setting, and has been shown to be capable of preventing the barren plateau phenomenon from occurring~\citep{fontana2023adjoint,ragone2023unified}. Though many of the initial proposals for such structured networks have since been ``dequantized''~\citep{anschuetz2022efficient,goh2023liealgebraicclassicalsimulationsvariational,cerezo2023does}---i.e., efficient classical algorithms have been found which simulate such networks---there still do exist unconditionally provable quantum advantages in using such networks for certain learning tasks~\citep{anschuetzgao2022,anschuetz2024arbitrary}. Though this is perhaps the most promising direction in QNN research, nothing concretely is known about the loss landscapes of such networks beyond the variance of the loss function over parameter space.

\subsection{Contributions}\label{sec:conts}

Fully understanding how the various phenomenologies of QNN loss landscapes are related is important if we ever hope to have as deep an understanding of QNNs as the neural tangent kernel has enabled for classical neural networks. Motivated by this, we here for the first time prove a concise asymptotic limit for the loss functions of effectively all QNNs with approximately uniformly random initialization, and show that it unifies much of our previous understanding of QNN training behavior. We then use this new asymptotic description to prove a variety of novel results on the asymptotic training behavior of QNNs.

We achieve this by demonstrating that all QNNs have a natural algebraic structure described by a \emph{Jordan subalgebra} $\mathcal{A}$ of the complex Hermitian matrices $\mathcal{H}^N\left(\mathbb{C}\right)$ generated by $\left\{\bm{U}_{\bm{\theta}}^\dagger \bm{O}\bm{U}_{\bm{\theta}}\right\}_{\bm{\theta}}$ under the anticommutator $\left\{\bm{A},\bm{B}\right\}=\bm{A}\bm{B}+\bm{B}\bm{A}$. The Jordan subalgebras of $\mathcal{H}^N\left(\mathbb{C}\right)$ have been completely classified in the sense that they are expressible as a direct sum of subalgebras:
\begin{equation}
    \mathcal{A}\cong\bigoplus_\alpha\mathcal{A}_\alpha,
\end{equation}
where for our purposes each $\mathcal{A}_\alpha$ is isomorphic to the algebra of $N_\alpha\times N_\alpha$ Hermitian matrices $\mathcal{H}^{N_\alpha}\left(\mathbb{F}_\alpha\right)$ over a field $\mathbb{F}_\alpha$ that is one of the reals $\mathbb{R}$, complex numbers $\mathbb{C}$, or quaternions $\mathbb{H}$. This allows us to show that the loss functions and first two derivatives of randomly initialized QNNs have a concise asymptotic description in terms of \emph{Wishart-distributed random matrices} over $\mathbb{F}_\alpha$:
\begin{equation}
    \bm{W}_\alpha=\bm{X}_\alpha\bm{X}_\alpha^\dagger,
\end{equation}
where $\bm{X}_\alpha$ is $N_\alpha\times r_\alpha$ with i.i.d.\ standard Gaussian entries over $\mathbb{F}_\alpha$ and $r_\alpha$ is called the \emph{degrees of freedom} of $\bm{W}_\alpha$. We give an explicit expression for $r_\alpha$ based on the structure of the network, and we find that it dictates the asymptotic behavior of the loss landscape. We call this connection between the algebraic structure of a given QNN and its asymptotic Wishart process description a \emph{Jordan algebraic Wishart system} (JAWS). We summarize in Table~\ref{tab:references} how this JAWS description generalizes previous models of QNN loss landscapes which only held in specific parameter regimes.
\begin{table*}
    \begin{center}
        \begin{tabular}{cc}
        \textbf{Barren Plateaus} & \textbf{Gaussian Processes (QNTK)}\\\hline
        \begin{tabular}{c|c}
        Reference & Corollary~\ref{cor:barren_plateaus_inf}\\\hline
        \citet{mcclean2018barren} & $\mathbb{F}=\mathbb{C}$\\
        \citet{cerezo2021cost} & $\left\{\mathbb{F}_\alpha=\mathbb{C}\right\}_\alpha$\\
        \citet{ragone2023unified} & $\ce^{\ci\bm{O}}\in\operatorname{Aut}\left(\mathcal{A}\right)$
        \end{tabular} & \begin{tabular}{c|c}
        Reference & Corollary~\ref{cor:exact_gp_inf}\\\hline
        \citet{garciamartin2023deep} & $\mathbb{F}=\mathbb{C}$, $\bm{O}$ Pauli\\
        \citet{girardi2024trained} & $\mathbb{F}=\mathbb{C}$, $\bm{O}$ $1$-local\\
        \citet{garciamartin2024architectures} & $\mathbb{F}=\mathbb{H}$, $\bm{O}$ Pauli
        \end{tabular}
        \end{tabular}\\\vspace{0.2cm}\begin{tabular}{c}
        \textbf{Local Minima}\\\hline
        \begin{tabular}{c|c}
        Reference & Corollary~\ref{cor:loc_min_gen_inf}\\\hline
        \citet{anschuetz2021critical} & $\mathbb{F}=\mathbb{C}$\\
        \citet{anschuetz2022barren} & $\left\{\mathbb{F}_\alpha=\mathbb{C}\right\}_\alpha$, $p<2r_\alpha$\\
        \citet{you2022convergence} & $p\gg r$
        \end{tabular}
        \end{tabular}
        \caption{\textbf{Previous quantum neural network loss landscape results.} We summarize (a representative subset of) previous results in quantum neural network loss landscape theory, as described in Sec.~\ref{sec:motivation}. We also state which limits of our Corollaries~\ref{cor:barren_plateaus_inf},~\ref{cor:exact_gp_inf}, and~\ref{cor:loc_min_gen_inf} encompass these referenced results. Here, $\mathcal{A}$ is the Jordan algebra, $\mathbb{F}_\alpha$ are the fields, and $r_\alpha$ are the degrees of freedom parameters as defined in the associated corollary statements; $\bm{O}$ is the objective observable; and $p$ is the number of trained parameters.\label{tab:references}}
    \end{center}
\end{table*}

The JAWS framework allows us to prove many new, general properties of QNNs. First, we prove a generalized barren plateau result for the variance of the loss function $\ell\left(\bm{\rho};\bm{\theta}\right)$ of a given QNN over its initialization:
\begin{equation}\label{eq:barren_plateaus_int}
    \operatorname{Var}_{\bm{\theta}}\left[\ell\left(\bm{\rho};\bm{\theta}\right)\right]=\sum_\alpha\frac{\Tr\left(\left(\bm{O}^\alpha\right)^2\right)\Tr\left(\left(\bm{\rho}^\alpha\right)^2\right)}{\dim_{\mathbb{R}}\left(\operatorname{Aut}\left(\mathcal{A}_\alpha\right)\right)},
\end{equation}
where $\mathcal{A}=\bigoplus_\alpha\mathcal{A}_\alpha$ is the Jordan algebra associated with the QNN, $\dim_{\mathbb{R}}\left(\operatorname{Aut}\left(\mathcal{A}_\alpha\right)\right)$ is the dimension of the automorphism group of $\mathcal{A}_\alpha$ as a real manifold, and $\cdot^\alpha$ denotes projection into $\mathcal{A}_\alpha$ (see Eq.~\eqref{eq:proj_alg_intro} for a mathematical definition). This extends previously known barren plateau results proved in the LASA framework~\citep{fontana2023adjoint,ragone2023unified} to the setting where neither $\bm{\rho}$ nor $\bm{O}$ are elements of the algebra generating $\operatorname{Aut}\left(\mathcal{A}_\alpha\right)$. It also captures the gradient scaling in the setting of so-called matchgate networks~\citep{diaz2023showcasing}, unifying barren plateau results beyond just the LASA setting.

We are also able to calculate the asymptotic density of local minima at a loss value $\ell=z$ for a QNN with associated Jordan algebra $\mathcal{A}=\bigoplus_\alpha\mathcal{A}_\alpha$:
\begin{equation}
    \kappa\left(z\right)=\bigast_\alpha f_\Gamma\left(z;k_\alpha,s_\alpha\right)=\int_{\sum_\alpha z_\alpha=z}\prod_\alpha\frac{\dd{z_\alpha}}{\operatorname{\Gamma}\left(k_\alpha\right)s_\alpha^{k_\alpha}}z_\alpha^{k_\alpha-1}\exp\left(-\frac{z_\alpha}{s_\alpha}\right);
\end{equation}
i.e., we show that it is a convolution of gamma distributions $f_\Gamma$ with shape and scale parameters $k_\alpha$ and $s_\alpha$, respectively, which we calculate in Sec.~\ref{sec:local_minima_inf}. The $k_\alpha$ and $s_\alpha$ are such that $\kappa\left(z\right)$ experiences a phase transition: local minima of a network with $p$ parameters are concentrated near the global minimum if and only if the network is \emph{overparameterized}, which occurs when (up to a constant):
\begin{equation}\label{eq:overparameterized_cond_int}
    p\gtrsim\max_\alpha\frac{\Tr\left(\bm{O}^\alpha\right)^2}{\Tr\left(\left(\bm{O}^\alpha\right)^2\right)}.
\end{equation}
This result generalizes previous studies of local minima in QNNs~\citep{anschuetz2021critical,anschuetz2022barren} to a setting where the variational ansatz may have some sort of algebraic structure.\footnote{\citet{anschuetz2022barren} consider local minima in local, shallow networks, where the Hilbert space can be decomposed into a direct sum of Hilbert spaces associated with the light cones of local observables. Here, we allow for general algebraic structures.}

Finally, we prove the necessary and sufficient conditions for a class of QNNs to asymptotically converge to a Gaussian process limit. Taken together, our results indicate that QNNs are efficiently trainable asymptotically using problem-independent optimization algorithms if and only if Eq.~\eqref{eq:barren_plateaus_int} is not exponentially vanishing with the problem size and Eq.~\eqref{eq:overparameterized_cond_int} is satisfied.

\section{Preliminaries}\label{sec:prelims_mt}

\subsection{Quantum Neural Networks}

We first review \emph{quantum neural networks} (QNNs). These are defined by a parameterized unitary:
\begin{equation}
    \bm{U}_{\bm{\theta}}=\bm{g}_{p+1}\prod_{i=p}^1\exp\left(-\ci\theta_i \bm{A}_i\right)\bm{g}_i,
\end{equation}
often called an \emph{ansatz} in the physics literature. Here, the $\bm{A}_i$ are complex Hermitian $N\times N$ matrices and the $\bm{g}_i$ are $N\times N$ unitary matrices. Generally, the $\bm{g}_i$ and $\exp\left(-\ci\theta_i \bm{A}_i\right)$ may be constrained to belong to a path-connected Lie subgroup $G$ of the unitary group $\operatorname{U}\left(N\right)$ to strengthen the inductive bias of the network~\citep{equivariant}. Such a constraint has also been used as a theoretical model for shallow quantum networks, where $G$ is approximately the direct product $\bigtimes_\alpha\operatorname{U}\left(N_\alpha\right)$ of unitaries acting on local patches in the network~\citep{anschuetz2022barren}.

Training such networks involves minimizing an empirical risk of the form:
\begin{equation}\label{eq:typ_vqe_loss_mt}
    f\left(\bm{\theta}\right)=\frac{1}{\left\lvert\mathcal{R}\right\rvert}\sum_{\bm{\rho}\in\mathcal{R}}\ell\left(\bm{\rho};\bm{\theta}\right)=\frac{1}{\left\lvert\mathcal{R}\right\rvert}\sum_{\bm{\rho}\in\mathcal{R}}\Tr\left(\bm{U}_{\bm{\theta}}\bm{\rho} \bm{U}_{\bm{\theta}}^\dagger \bm{O}\right).
\end{equation}
Here, $\mathcal{R}$ can be thought of a data set comprising multiple input \emph{quantum states} $\bm{\rho}$---that is, $\bm{\rho}\succeq\bm{0}\in\mathbb{C}^{N\times N}$ with trace $1$---and $\bm{O}\in\mathbb{C}^{N\times N}$ is Hermitian. When $\bm{O}$ and the algebra elements generating $\bm{U}_{\bm{\theta}}$ can be expressed as a sum of $\operatorname{O}\left(\operatorname{poly}\log\left(N\right)\right)$ Pauli matrices, such a loss can be estimated in $\operatorname{O}\left(\operatorname{poly}\log\left(N\right)\right)$ time on a quantum computer~\citep{nielsen_chuang_2010}.

Our first main result is that all losses of the form of $\ell\left(\bm{\rho};\bm{\theta}\right)$ can be interpreted in terms of automorphisms of Jordan algebras. In preparation for discussing this connection, we now review Jordan algebras.

\subsection{Jordan Algebras}\label{sec:jordan_algebra_classifications}

A \emph{Jordan algebra} over the reals is formally a real vector space $V$ with a commutative multiplication operation $\circ$ acting on $u,v\in V$ satisfying the \emph{Jordan identity}:
\begin{equation}
    u\circ\left(\left(u\circ u\right)\circ v\right)=\left(u\circ u\right)\circ\left(u\circ v\right),
\end{equation}
which ensures the associativity of the power. A simple example is the Jordan algebra $\mathcal{H}^N\left(\mathbb{C}\right)$ of $N\times N$ complex-valued Hermitian matrices with $\circ$ given by the anticommutator $\bm{A}\circ\bm{B}=\bm{A}\bm{B}+\bm{B}\bm{A}$.

The Jordan subalgebras $\mathcal{A}$ of $\mathcal{H}^N\left(\mathbb{C}\right)$ have been completely classified~\citep{Koecher19997} and are known to have a semisimple decomposition:
\begin{equation}
    \mathcal{A}\cong\bigoplus_\alpha\mathcal{A}_\alpha;
\end{equation}
that is, $\mathcal{A}$ is isomorphic to a direct sum of subalgebras $\mathcal{A}_\alpha$. For our purposes, each $\mathcal{A}_\alpha$ is isomorphic to the algebra $\mathcal{H}^{N_\alpha}\left(\mathbb{F}_\alpha\right)$ of $N_\alpha\times N_\alpha$ Hermitian matrices over a field $\mathbb{F}_\alpha$ that is one of the reals, complex numbers, or quaternions. We label these three cases with the integers $\beta_\alpha=1,2,4$, respectively, and call the representation of each of these algebras as $N_\alpha\times N_\alpha$ matrices over $\mathbb{F}_\alpha$ the \emph{defining representation}. We call the $\mathcal{A}_\alpha$ the \emph{simple components} of $\mathcal{A}$ and say that $\mathcal{A}$ is \emph{simple} when there is only a single $\mathcal{A}_\alpha$, i.e., when $\mathcal{A}\cong\mathcal{H}^{N}\left(\mathbb{F}\right)$ for some $\mathbb{F}$.

The automorphism group $\operatorname{Aut}\left(\mathcal{A}\right)$ of such a Jordan algebra $\mathcal{A}$ is also known: its path-connected components are isomorphic to direct products of the classical Lie groups $\operatorname{SO}\left(N_\alpha\right)$, $\operatorname{SU}\left(N_\alpha\right)$, and $\operatorname{Sp}\left(N_\alpha\right)$, each a subgroup of the corresponding $\operatorname{Aut}\left(\mathcal{A}_\alpha\right)$~\citep{Koecher19994,orlitzky2023}. As a Lie group, each $\operatorname{Aut}\left(\mathcal{A}_\alpha\right)$ has a well-defined dimension as a real manifold which we denote as $\dim_{\mathbb{R}}\left(\operatorname{Aut}\left(\mathcal{A}_\alpha\right)\right)$.

It is also known that the Frobenius inner product $\Tr_R\left(\bm{A}^\dagger\bm{B}\right)$ between $A,B\in\mathcal{A}_\alpha$ in any representation $R$ is always proportional to the Frobenius inner product in the defining representation~\citep{Koecher19993}. In particular, for any $A,B\in\mathcal{A}_\alpha\subseteq\mathcal{A}\subseteq\mathcal{H}^N\left(\mathbb{C}\right)$, using $\Tr_\alpha\left(\cdot\right)$ to denote the trace in the defining representation of $\mathcal{A}_\alpha$ and $\Tr\left(\cdot\right)$ to denote that of $\mathcal{H}^N\left(\mathbb{C}\right)$, we have:
\begin{equation}\label{eq:inner_product_equiv_inf}
    \Tr\left(\bm{A}\bm{B}\right)=I_\alpha\Tr_\alpha\left(\bm{A}\bm{B}\right)
\end{equation}
for some $I_\alpha>0$ that is independent of $A,B$. In what follows we will continue to use $\Tr\left(\cdot\right)$ to denote the trace in the defining representation of $\mathcal{H}^N\left(\mathbb{C}\right)$ and $\Tr_\alpha\left(\cdot\right)$ to denote that of $\mathcal{A}_\alpha$. More details on Jordan algebras and some subtleties in their classification are provided in Appendix~\ref{sec:jordan_algebra_deets}.

Finally, we define the \emph{projection} of an algebra element $A\in\mathcal{A}\cong\bigoplus_\alpha\mathcal{A}_\alpha$ into one of the $\mathcal{A}_\alpha$; to represent this, we use the notation $A^\alpha$. Mathematically, letting $\Tr_\alpha\left(\cdot\right)$ denote the trace in the defining representation of $\mathcal{A}_\alpha$ and considering an orthonormal\footnote{Orthonormal with respect to the Frobenius inner product in the defining representation of $\mathcal{A}_\alpha$.} basis $\left\{B_{\alpha,i}\right\}_i$ of $\mathcal{A}_\alpha\subseteq\mathcal{A}$, we have:
\begin{equation}\label{eq:proj_alg_intro}
    A^\alpha\equiv\sum_i\Tr_\alpha\left(\bm{B}_{\alpha,i}\bm{A}\right)B_{\alpha,i}.
\end{equation}
We use the same notation for the equivalent projection of a Lie group element $g\in G=\bigtimes_\alpha G_\alpha$ into one of the $G_\alpha$. Using $\left\{B_{\alpha,i}\right\}_i$ to denote an orthonormal basis of the Lie algebra generating $G_\alpha$, and using the observation that $g=\exp\left(\sum_{\alpha,i}c_{\alpha,i}B_{\alpha,i}\right)$ for some $\left\{c_{\alpha,i}\in\mathbb{R}\right\}_{\alpha,i}$, we define:
\begin{equation}
    g^\alpha\equiv\exp\left(\sum_i c_{\alpha,i}B_{\alpha,i}\right).
\end{equation}

\section{Jordan Algebraic Descriptions of Quantum Neural Networks}\label{sec:jasa_mt}

Recall the general form of a QNN loss function:
\begin{equation}\label{eq:loss_as_jordan_algebra}
    \ell\left(\bm{\rho};\bm{\theta}\right)=\Tr\left(\bm{\rho} \bm{U}_{\bm{\theta}}^\dagger \bm{O}\bm{U}_{\bm{\theta}}\right),
\end{equation}
where
\begin{equation}
    \bm{U}_{\bm{\theta}}=\bm{g}_{p+1}\prod_{i=p}^1\exp\left(-\ci\theta_i \bm{A}_i\right)\bm{g}_i
\end{equation}
generally may be such that the $\bm{g}_i$ and $\exp\left(-\ci\theta_i \bm{A}_i\right)$ belong to some path-connected Lie subgroup $G\subseteq\operatorname{U}\left(N\right)$.

The $\bm{U}_{\bm{\theta}}^\dagger \bm{O}\bm{U}_{\bm{\theta}}$ generate under the anticommutator a Jordan subalgebra $\mathcal{A}$ of the $N\times N$ Hermitian matrices $\mathcal{H}^N\left(\mathbb{C}\right)$. Due to this fact, as well as the classification results discussed in Sec.~\ref{sec:jordan_algebra_classifications}, we may equivalently consider Eq.~\eqref{eq:loss_as_jordan_algebra} in the following way:
\begin{enumerate}
    \item $\bm{O}\in\mathcal{A}\cong\bigoplus_\alpha\mathcal{H}^{N_\alpha}\left(\mathbb{F}_\alpha\right)$.
    \item $\bm{U}_{\bm{\theta}}\in G=\bigtimes_\alpha G_\alpha$, where each $G_\alpha$ is isomorphic to a subgroup of $\operatorname{Aut}\left(\mathcal{A}_\alpha\right)$ that is one of $\operatorname{SO}\left(N_\alpha\right)$, $\operatorname{SU}\left(N_\alpha\right)$, or $\operatorname{Sp}\left(N_\alpha\right)$.
    \item For $I_\alpha$ the proportionality constant of Eq.~\eqref{eq:inner_product_equiv_inf},
    \begin{equation}\label{eq:i_alpha_def}
        \ell\left(\bm{\rho};\bm{\theta}\right)=\sum_\alpha I_\alpha\Tr_\alpha\left(\bm{\rho}^\alpha\bm{U}_{\bm{\theta}}^{\alpha\dagger}\bm{O}^\alpha\bm{U}_{\bm{\theta}}^\alpha\right).
    \end{equation}
\end{enumerate}
We discuss the universality of this Jordan algebraic description in more detail in Appendix~\ref{sec:jasa}.

This description of QNN loss landscapes in terms of Jordan algebraic properties allows us to give a simple way to classify various QNN architectures through the $N_\alpha$, $\mathbb{F}_\alpha$, $I_\alpha$, and $\bm{O}^\alpha$; we call this collection of algebraic objects a \emph{Jordan algebraic Wishart system} (JAWS). We will next tie this classification to the asymptotic properties of the loss landscape through the use of Wishart random matrices, justifying the use of ``Wishart'' in the nomenclature.

\section{Quantum Neural Networks Are Wishart Processes}\label{sec:qnns_are_wishart_mt}

Our main result is proving that the loss functions of wide QNNs---that is, as $\dim\left(\mathcal{A}\right)\to\infty$---form \emph{Wishart processes}. Such processes can be written exactly in terms of the matrix elements of Wishart-distributed positive semidefinite random matrices $\bm{W}$, which are distributed as:
\begin{equation}
    \bm{W}=\bm{X}\bm{X}^\dagger
\end{equation}
for $\bm{X}$ a rectangular matrix with i.i.d.\ standard normal entries over a field $\mathbb{F}$ (assumed $\mathbb{R}$ if not otherwise stated). The number of columns $r$ of $\bm{X}$ is called the \emph{degrees of freedom} of $\bm{W}$ in analogy with the degrees of freedom of a $\chi^2$-distributed random variable; indeed, the diagonal entries of $\bm{W}$ are (up to a constant scaling) i.i.d.\ $\chi^2$-distributed with $\beta r$ degrees of freedom.

Our main result is as follows, stated informally here with a formal statement deferred to Appendix~\ref{sec:main_results_jaws}:
\begin{theorem}[Quantum neural networks are Wishart processes, informal]\label{thm:loss_conv_inf}
    Consider a QNN with associated JAWS as in Sec.~\ref{sec:jasa_mt}, initialized approximately uniformly at random. Let $\ell^\ast$ be the optimum of the loss and $\overline{o}_\alpha$ the mean eigenvalue of $\bm{O}^\alpha$. Then, as $\dim\left(\mathcal{A}\right)\to\infty$, there is a convergence in joint distribution over $\bm{\rho}$ at any $\bm{\theta}$:
    \begin{equation}
        \ell\left(\bm{\rho};\bm{\theta}\right)-\ell^\ast\rightsquigarrow\sum_\alpha\frac{I_\alpha\overline{o}_\alpha}{r_\alpha}\Tr_\alpha\left(\bm{\rho}^\alpha\bm{W}_\alpha\right).
    \end{equation}
    The $\bm{W}_\alpha$ are each independent Wishart-distributed random matrices in the defining representation of $\mathcal{A}_\alpha$ with
    \begin{equation}\label{eq:dof_def_intro}
        r_\alpha=\left\lfloor\frac{\Tr_\alpha\left(\bm{O}^\alpha\right)^2}{\Tr_\alpha\left(\left(\bm{O}^\alpha\right)^2\right)}\right\rceil
    \end{equation}
    degrees of freedom, where $\left\lfloor\cdot\right\rceil$ denotes rounding to the nearest integer.
\end{theorem}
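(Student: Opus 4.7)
The plan is to use the Jordan algebraic decomposition from Sec.~\ref{sec:jasa_mt} to reduce the theorem to a per-summand claim. Since $\ell(\bm{\rho};\bm{\theta}) = \sum_\alpha I_\alpha \Tr_\alpha(\bm{\rho}^\alpha \bm{U}_{\bm{\theta}}^{\alpha\dagger} \bm{O}^\alpha \bm{U}_{\bm{\theta}}^\alpha)$ and $G = \bigtimes_\alpha G_\alpha$, approximately uniform initialization on $G$ makes the $\bm{U}_{\bm{\theta}}^\alpha$ approximately independent and Haar-distributed in each classical group $\operatorname{SO}(N_\alpha)$, $\operatorname{SU}(N_\alpha)$, or $\operatorname{Sp}(N_\alpha)$. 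This reduces the problem to showing that, for Haar $\bm{V}$ on a single such group, the random linear functional $\bm{\rho}^\alpha \mapsto \Tr_\alpha(\bm{\rho}^\alpha \bm{V}^\dagger \bm{O}^\alpha \bm{V})$ converges, after an appropriate recentering that will be absorbed into $\ell^\ast$, in joint distribution over $\bm{\rho}^\alpha$ to $(\overline{o}_\alpha/r_\alpha)\Tr_\alpha(\bm{\rho}^\alpha \bm{W}_\alpha)$ as $N_\alpha \to \infty$.

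My next step would be to pin down $r_\alpha$ and the shift by matching the first two moments. The Haar mean is $\mathbb{E}[\bm{V}^\dagger \bm{O}^\alpha \bm{V}] = \overline{o}_\alpha \bm{I}$, and $\mathbb{E}[(\overline{o}_\alpha/r_\alpha) \bm{W}_\alpha] = \overline{o}_\alpha \bm{I}$, so the means automatically agree and the difference of shifts yields the $\ell^\ast$ in the statement. Applying standard Weingarten calculus for each of the three classical groups, the variance of $\Tr_\alpha(\bm{\rho}^\alpha \bm{V}^\dagger \bm{O}^\alpha \bm{V})$ is, to leading order, a product of centered Frobenius inner products proportional to $\Tr_\alpha((\bm{\rho}_0^\alpha)^2)\,\Tr_\alpha((\bm{O}_0^\alpha)^2)/N_\alpha^2$, where $(\cdot)_0$ denotes the traceless part. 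The variance of the Wishart target is proportional to $\overline{o}_\alpha^2\,\Tr_\alpha((\bm{\rho}^\alpha)^2)/r_\alpha$. Equating these, and using $\overline{o}_\alpha = \Tr_\alpha(\bm{O}^\alpha)/N_\alpha$ together with $\Tr_\alpha((\bm{O}_0^\alpha)^2) = \Tr_\alpha((\bm{O}^\alpha)^2) - \Tr_\alpha(\bm{O}^\alpha)^2/N_\alpha$, forces exactly Eq.~\ref{eq:dof_def_intro} up to corrections vanishing in the $\dim(\mathcal{A})\to\infty$ limit. Rounding $r_\alpha$ to the nearest integer is needed because $\bm{W}_\alpha$ is defined for integer column count, and the resulting discrepancy is of lower order.

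To upgrade this moment match to convergence in distribution I would invoke the moment method. Expanding the spectral decomposition of $\bm{O}^\alpha$ and evaluating higher mixed moments of $\Tr_\alpha(\bm{\rho}^\alpha \bm{V}^\dagger \bm{O}^\alpha \bm{V})$ via Weingarten integration, only the planar (non-crossing) contractions survive in the large-$N_\alpha$ limit; these are in bijection with the Isserlis/Wick contractions computing the Wishart moments of $(\overline{o}_\alpha/r_\alpha)\Tr_\alpha(\bm{\rho}^\alpha \bm{W}_\alpha)$ once $r_\alpha$ is fixed by the variance match. That the full spectral fingerprint of $\bm{O}^\alpha$ collapses to this single scalar is a universality statement analogous to the large-$N$ free-probabilistic equivalence between Haar-conjugated ensembles and Gaussian ensembles. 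Independence across $\alpha$ in the limit then follows from the product structure of $G$ together with a cross-moment estimate, and joint convergence over finite tuples of $\bm{\rho}^\alpha$ follows because each is a deterministic linear functional of the single random matrix $\bm{W}_\alpha$.

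The main obstacle, I expect, is twofold. First, the moment-method universality must be established uniformly across the three fields $\mathbb{F}_\alpha \in \{\mathbb{R},\mathbb{C},\mathbb{H}\}$, each with its own Weingarten combinatorics for $\operatorname{SO}$, $\operatorname{SU}$, and $\operatorname{Sp}$; the field-dependent constants must line up so that a single formula for $r_\alpha$ suffices in every case. Second, the hypothesis of \emph{approximately uniform} initialization must be made precise enough to guarantee control of the relevant moments: a finite-design approximation to Haar only matches the first few moments exactly, so one must either allow the design order to grow with the problem size or argue that the limiting Wishart description is determined by a bounded number of moments --- a rigidity property of the Wishart ensemble that ultimately makes the JAWS description robust to the detailed choice of ansatz.
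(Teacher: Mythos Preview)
Your decomposition into simple sectors and the two-moment match pinning down $r_\alpha$ are correct and parallel the paper. The crucial third step, however, contains a genuine gap. The claim that the surviving Weingarten terms are in bijection with the Wishart Wick contractions is not true for generic $\bm{O}^\alpha$: after the Haar-to-Gaussian reduction, the $k$th cumulant of $\Tr_\alpha(\bm{\rho}^\alpha\bm{V}^\dagger\bm{O}^\alpha\bm{V})$ for rank-one $\bm{\rho}^\alpha$ is proportional to $N_\alpha^{-k}\Tr_\alpha\bigl((\bm{O}^\alpha)^k\bigr)$, whereas the $k$th cumulant of the Wishart target is proportional to $\overline{o}_\alpha^{\,k}/r_\alpha^{k-1}$. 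These agree at $k=1,2$ by construction, but for $k\geq 3$ equality would force $\Tr_\alpha\bigl((\bm{O}^\alpha)^k\bigr)\Tr_\alpha(\bm{O}^\alpha)^{k-2}=\Tr_\alpha\bigl((\bm{O}^\alpha)^2\bigr)^{k-1}$, which by Cauchy--Schwarz on the spectrum holds with equality only when $\bm{O}^\alpha$ is already a scaled projector. The free-probability analogy you invoke governs asymptotics of \emph{normalized} traces $N_\alpha^{-1}\Tr_\alpha(\cdot)$, not fixed-rank projections; the scaling regime here is different, and no order-by-order bijection survives. So the collapse of the full spectrum of $\bm{O}^\alpha$ to the single scalar $r_\alpha$ cannot come from moment combinatorics alone.

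The paper supplies this missing step by a probabilistic rather than combinatorial argument, and the overall route is genuinely different from yours. It first reduces the approximate $t$-design to exact Haar by bounding the discrepancy of the joint characteristic functions in terms of the moment errors (this is where your second obstacle is handled, with an explicit L\'evy--Prokhorov rate in $\epsilon$ and $t$). It then replaces the Haar unitary by a Gaussian matrix, using the quantitative entrywise approximation of a low-rank block of a Haar matrix by i.i.d.\ Gaussians over $\mathbb{F}_\alpha$. Finally --- and this is what replaces your universality claim --- it proves a Welch--Satterthwaite-type concentration lemma: a Bernstein inequality for sub-exponential sums shows that $\Tr_\alpha\bigl(\bm{\rho}^\alpha\bm{X}(\bm{O}^\alpha-\bm{\tilde{O}}^\alpha)\bm{X}^\dagger\bigr)$ is small in Ky Fan metric, where $\bm{\tilde{O}}^\alpha$ is the rank-$r_\alpha$ multiple of a projector sharing the first two spectral moments of $\bm{O}^\alpha$. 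Once $\bm{O}^\alpha$ is swapped for $\bm{\tilde{O}}^\alpha$, the quantity $\bm{X}\bm{\tilde{O}}^\alpha\bm{X}^\dagger$ is \emph{exactly} $\beta_\alpha$-Wishart with $r_\alpha$ degrees of freedom, and the result follows. The three fields are handled uniformly because neither the Gaussian approximation nor the Bernstein bound depends on $\mathbb{F}_\alpha$; the field-specific constants enter only through the Bartlett decomposition of the $\beta_\alpha$-Wishart law.
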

An illustration of this distribution is provided in Figure~\ref{fig:loss_density}(a). The proof follows by using the fact that the marginal distributions of matrix elements of the ansatz are approximately Gaussian distributed; this is made quantitative by bounding the errors in the associated joint characteristic functions over the inputs $\bm{\rho}$. This is enough to demonstrate convergence of $\ell\left(\bm{\rho};\bm{\theta}\right)-\ell^\ast$ to a convolution of Wishart processes. We then prove a Welch--Satterthwaite-like result to demonstrate an asymptotic convergence of this convolution to a Wishart process with degrees of freedom parameters given by Eq.~\eqref{eq:dof_def_intro}. The full proof is given in Appendix~\ref{sec:proof_main_res}. We there also prove a strengthened result---convergence \emph{pointwise} in the corresponding probability densities---under certain additional assumptions on the initialization of the network.
\begin{figure}
    \begin{center}
        \includegraphics[width=\linewidth]{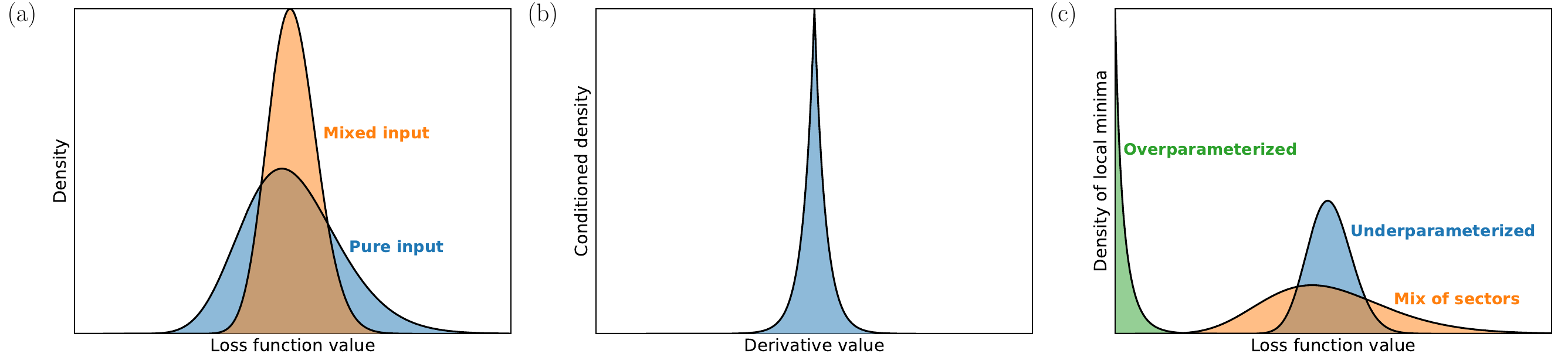}
        \caption{\textbf{Loss and derivative densities.} (a) The loss density when the quantum neural network has a \emph{pure input} (i.e., rank-$1$) and when it has a \emph{mixed input} (i.e., of rank greater than $1$). The distributions are centered at the mean eigenvalue of the objective observable. The mixed input density also illustrates when the input is mixed when projected into any simple component of the Jordan algebra associated with the network. (b) The gradient density conditioned on a nonzero loss function value. The distribution is centered at zero. (c) The density of local minima when the quantum neural network is \emph{underparameterized}, \emph{overparameterized}, and when some simple components of the associated Jordan algebra are underparameterized and some are overparameterized (\emph{mix of sectors}).\label{fig:loss_density}}
    \end{center}
\end{figure}

We not only prove the convergence of the QNN loss function $\ell$ to that of a Wishart process, but also $\ell$, its gradient, and Hessian jointly. In doing so we are able to show that the \emph{loss landscape} of QNNs---not just the landscape at a single point in parameter space, or averaged over it---converges to that of a Wishart process. We here only give an informal statement of the result in the special case when each $\bm{\rho}^\alpha$ is rank-$1$, but give the full distribution and formal statement in Appendix~\ref{sec:main_results_jaws} with proof in Appendix~\ref{sec:proof_main_res}. This distribution when conditioned on a nonzero value for the loss is illustrated in Figure~\ref{fig:loss_density}(b).
\begin{theorem}[Gradient distribution, informal]\label{thm:grad_conv_inf}
    Consider the setting of Theorem~\ref{thm:loss_conv_inf}. Assume each $\bm{\rho}^\alpha$ is rank-$1$ in its defining representation. Let $\sigma_\alpha$ be the standard deviation of the eigenvalues of $\bm{O}^\alpha$. Then, as $\dim\left(\mathcal{A}\right)\to\infty$, the joint distribution of $\partial_{\theta_i}\ell\left(\bm{\rho};\bm{\theta}\right)$ conditioned on all $\ell\left(\bm{\rho}^\alpha;\bm{\theta}\right)=z_\alpha$ asymptotically converges to the joint distribution of
    \begin{equation}
        \hat{\ell}_i\equiv\sum_\alpha\frac{2I_\alpha\sigma_\alpha\Tr_\alpha\left(\bm{\rho}^\alpha\right)}{N_\alpha}\sqrt{\frac{\beta_\alpha z_\alpha}{I_\alpha\overline{o}_\alpha}}G_{\alpha,i} \chi_{\alpha,i}.
    \end{equation}
    Here, the $\chi_{\alpha,i}$ are independent $\chi$-distributed random variables with $\max\left(2,\beta_\alpha\right)$ degrees of freedom and the $G_{\alpha,i}$ are i.i.d.\ standard normal random variables.
\end{theorem}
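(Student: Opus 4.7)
The plan is to extract the gradient's conditional distribution from a joint convergence of $(\ell, \partial_{\theta_i}\ell)$ to a Wishart-plus-tangent process, which extends Theorem~\ref{thm:loss_conv_inf} via the same moment / characteristic-function machinery used in Appendix~\ref{sec:proof_main_res}. Since $\partial_{\theta_i}\bm{U}_{\bm{\theta}}^\alpha = -\ci\bm{U}_R^\alpha \bm{A}_i^\alpha \bm{U}_L^\alpha$ is polynomial in the same Haar-random ingredients as $\bm{U}_{\bm{\theta}}^\alpha$ itself, joint Weingarten estimates deliver approximate joint Gaussianity of the matrix elements of the ansatz and its derivative. In the limiting Wishart picture this yields $\partial_{\theta_i}\bm{W}_\alpha = \bm{Y}_{\alpha,i}\bm{X}_\alpha^\dagger + \bm{X}_\alpha\bm{Y}_{\alpha,i}^\dagger$, with $\bm{Y}_{\alpha,i}$ an $N_\alpha\times r_\alpha$ Gaussian matrix asymptotically independent of $\bm{X}_\alpha$ (and mutually independent across $i$), whose per-entry variance Haar-averages to $\sigma_\alpha^2/N_\alpha$ after absorbing the ansatz rotation into $\bm{A}_i^\alpha$.

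Specializing to rank-$1$ $\bm{\rho}^\alpha = \Tr_\alpha(\bm{\rho}^\alpha)\ket{\psi_\alpha}\bra{\psi_\alpha}$, I introduce the $\mathbb{F}_\alpha^{r_\alpha}$-valued Gaussian vectors $\bm{v}_\alpha \equiv \bm{X}_\alpha^\dagger\ket{\psi_\alpha}$ and $\bm{y}_{\alpha,i} \equiv \bm{Y}_{\alpha,i}^\dagger\ket{\psi_\alpha}$. By rotational invariance $\bm{v}_\alpha$ has i.i.d.\ standard entries, and by the independence above $\bm{y}_{\alpha,i}$ is an independent Gaussian vector of per-entry variance $\propto \sigma_\alpha^2/N_\alpha$. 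The loss becomes $\ell(\bm{\rho}^\alpha;\bm{\theta}) - \ell^*_\alpha = c_\alpha \|\bm{v}_\alpha\|^2$ with $c_\alpha = I_\alpha\overline{o}_\alpha\Tr_\alpha(\bm{\rho}^\alpha)/r_\alpha$, and the gradient becomes $\partial_{\theta_i}\ell = 2\sum_\alpha c_\alpha \operatorname{Re}\langle\bm{y}_{\alpha,i}, \bm{v}_\alpha\rangle$.

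Conditioning on $\ell(\bm{\rho}^\alpha;\bm{\theta}) = z_\alpha$ fixes $\|\bm{v}_\alpha\|^2 = z_\alpha/c_\alpha$ but leaves $\bm{y}_{\alpha,i}$ distributionally unchanged. It remains to identify the conditional distribution of $\operatorname{Re}\langle\bm{y}_{\alpha,i}, \bm{v}_\alpha\rangle$ given a fixed-norm $\bm{v}_\alpha$. Decomposing $\bm{y}_{\alpha,i}$ into its projection onto the $\mathbb{F}_\alpha$-line through $\bm{v}_\alpha$ plus an orthogonal remainder, only the projection contributes to the inner product; this projection is an $\mathbb{F}_\alpha$-scalar Gaussian whose squared magnitude is $\chi^2$-distributed. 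Combining the symmetric tangent structure $\bm{Y}_{\alpha,i}\bm{X}_\alpha^\dagger + \bm{X}_\alpha\bm{Y}_{\alpha,i}^\dagger$ with the extraction of the real part of the resulting scalar forces the effective real dimensionality of the magnitude to be $\max(2,\beta_\alpha)$, producing the $\chi_{\alpha,i}$ factor; the remaining phase or sign degree of freedom contributes the independent standard normal $G_{\alpha,i}$. Collecting the prefactors from $c_\alpha$, $z_\alpha$, $\sigma_\alpha^2/N_\alpha$, and the $\beta_\alpha$-dependent $\mathbb{F}_\alpha$-Gaussian normalizations recovers the stated expression for $\hat{\ell}_i$.

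The main obstacle is the last step: correctly reading off the $G\chi$ product structure and the $\max(2,\beta_\alpha)$ degrees of freedom requires a case-by-case accounting across $\mathbb{F}_\alpha \in \{\mathbb{R},\mathbb{C},\mathbb{H}\}$ of how the two symmetric tangent terms interact with the real-part extraction against a rank-$1$ $\mathbb{F}_\alpha$-vector, with the quaternionic case $\beta_\alpha = 4$ being especially delicate due to the non-commutativity of $\mathbb{H}$.
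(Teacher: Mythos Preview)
Your tangent model is the wrong object, and it cannot produce the $G\chi$ product. With $\bm{Y}_{\alpha,i}$ an $N_\alpha\times r_\alpha$ Gaussian independent of $\bm{X}_\alpha$, the quantity $2c_\alpha\Re\langle\bm{y}_{\alpha,i},\bm{v}_\alpha\rangle$ conditioned on fixed $\bm{v}_\alpha$ is simply a \emph{real Gaussian}; your attempt to extract a $\chi$--times--Gaussian structure from a single $\mathbb{F}_\alpha$-scalar Gaussian fails because $\Re Z=|Z|\cos(\arg Z)$ does not have the law of $\chi\cdot G$ (the angular factor is arcsine-type, not normal). A quick fourth-moment check already distinguishes the two: for $\beta_\alpha=2$, the target $\chi_2\cdot G$ has excess kurtosis, while your model gives a Gaussian.

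The missing ingredient is the \emph{complementary} Wishart. After the semi-isotropic reduction $\bm{O}^\alpha\to\bm{\tilde O}^\alpha$ (a rank-$r_\alpha$ projector), the commutator $[\bm{g}_i\bm{A}_i\bm{g}_i^\dagger,\bm{\tilde O}^\alpha]$ couples the support of $\bm{\tilde O}^\alpha$ to its kernel; in the paper's proof this is encoded by the pair of \emph{independent} Wisharts $\bm{W}=\bm{X}^\dagger\bm{\tilde O}\bm{X}$ and $\bm{V}=\bm{X}^\dagger(\bm{I}-\bm{\tilde O})\bm{X}$, and the gradient reduces to $\Re\{\ci\,\langle 0|\bm{W}|i\rangle\langle i|\bm{V}|0\rangle\}$. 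The product of \emph{two} independent $\mathbb{F}_\alpha$-Gaussian Bartlett entries---one from $\bm{W}$, one from $\bm{V}$---is what generates $\chi_{\alpha,i}G_{\alpha,i}$ after absorbing the phase of $N_{W,0,i}$ into $N_{V,0,i}$ by circular symmetry. Your $N_\alpha\times r_\alpha$ tangent lives entirely on the support side and never sees $\bm{V}$, so no second independent scalar is available. Relatedly, the $\max(2,\beta_\alpha)$ degrees of freedom do not emerge from an abstract ``real dimensionality'' count: for $\beta_\alpha=1$ they come from Assumption~\ref{ass:rank_1} forcing rank-$2$ generators $\bm{A}_i^\alpha$, so that $\chi_{\alpha,i}^2=N_{W,0,2i}^2+N_{W,0,2i+1}^2$ picks up two real Bartlett entries rather than one.
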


We also give an expression for the Hessian distribution at critical points. We informally report the result when each $\bm{\rho}^\alpha$ is rank-$1$ and the spectrum of each $\bm{O}^\alpha$ is sufficiently concentrated around its mean; this latter condition is typical of sums of low-weight fermionic~\citep{feng2019spectrum} and local spin operators~\citep{erdHos2014phase}. The full distribution and formal statement is once again given in Appendix~\ref{sec:main_results_jaws} with proof in Appendix~\ref{sec:proof_main_res}.
\begin{theorem}[Hessian distribution, informal]\label{thm:hess_inf}
    Consider the setting of Theorem~\ref{thm:grad_conv_inf}. Assume as well that $\frac{\sigma_\alpha}{\overline{o}_\alpha}\to 0$ for all $\alpha$. Then, as $\dim\left(\mathcal{A}\right)\to\infty$, the joint distribution of $\partial_{\theta_i}\partial_{\theta_j}\ell\left(\bm{\rho};\bm{\theta}\right)$ ($i\geq j$) conditioned on all $\ell\left(\bm{\rho}^\alpha;\bm{\theta}\right)=z_\alpha$ and all $\partial_{\theta_i}\ell\left(\bm{\rho}^\alpha;\bm{\theta}\right)=0$ asymptotically converges to the joint distribution of
    \begin{equation}
        \hat{\ell}_{i,j}\equiv\sum_\alpha\frac{2I_\alpha\sigma_\alpha\Tr_\alpha\left(\bm{\rho}^\alpha\right)}{N_\alpha^2}\sqrt{\frac{z_\alpha}{I_\alpha\overline{o}_\alpha}}G_{\alpha,i}\chi_{\alpha,j} W_{\alpha,i,j}.
    \end{equation}
    Here, the $\chi_{\alpha,i}$ are independent $\chi$-distributed random variables with $\max\left(2,\beta_\alpha\right)$ degrees of freedom and the $G_{\alpha,i}$ are i.i.d.\ standard normal random variables. The $\bm{W}_\alpha$ are independent Wishart-distributed random matrices with $\beta_\alpha r_\alpha$ degrees of freedom.
\end{theorem}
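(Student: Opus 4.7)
The approach is to lift the joint characteristic function argument underlying Theorems~\ref{thm:loss_conv_inf} and~\ref{thm:grad_conv_inf} one order higher, then perform the critical-point conditioning carefully. First, differentiating Eq.~\eqref{eq:i_alpha_def} twice produces a sum of Jordan-algebra traces of the form $I_\alpha\Tr_\alpha\bigl(\bm{\rho}^\alpha\bm{U}_{\bm{\theta}}^{\alpha\dagger}\bm{M}_{i,j}^\alpha\bm{U}_{\bm{\theta}}^\alpha\bigr)$, where each $\bm{M}_{i,j}^\alpha$ is a double commutator of the generators $\bm{A}_i^\alpha,\bm{A}_j^\alpha$ with $\bm{O}^\alpha$, restricted to $\mathcal{A}_\alpha$. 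Because these traces involve exactly the same random matrix elements of $\bm{U}_{\bm{\theta}}^\alpha$ that already govern $\ell$ and $\partial_i\ell$, the marginal Gaussianization used in the proof of Theorem~\ref{thm:loss_conv_inf} immediately yields joint convergence of $(\ell,\partial_i\ell,\partial_i\partial_j\ell)$ to a limiting distribution built out of the same Wishart matrices $\bm{W}_\alpha$ that appear in Theorem~\ref{thm:loss_conv_inf}, with cross-covariances between loss, gradient, and Hessian bookkept through the same $N_\alpha,\beta_\alpha,I_\alpha$ combinatorics.

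The conditioning then proceeds in two stages per simple component $\alpha$. Conditioning on $\ell(\bm{\rho}^\alpha;\bm{\theta})=z_\alpha$ is equivalent, since $\bm{\rho}^\alpha$ is rank-$1$, to fixing the squared $\mathbb{F}_\alpha$-norm of a single column of the Gaussian matrix $\bm{X}_\alpha$ with $\bm{W}_\alpha=\bm{X}_\alpha\bm{X}_\alpha^\dagger$; this pins that column's length to a constant multiple of a $\chi$ variate with $\max(2,\beta_\alpha)$ degrees of freedom (the $\max$ reflecting the real dimension of the unit sphere in $\mathbb{F}_\alpha^{N_\alpha}$) while leaving the angular direction uniform. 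Imposing next the critical-point condition $\partial_{\theta_i}\ell(\bm{\rho}^\alpha;\bm{\theta})=0$ forces the Gaussian factor $G_{\alpha,i}$ inherited from Theorem~\ref{thm:grad_conv_inf} to lie orthogonal to this pinned direction. The residual fluctuations that survive in $\partial_{\theta_i}\partial_{\theta_j}\ell$ come from the orthogonal complement of the gradient direction and, in the $\dim(\mathcal{A})\to\infty$ limit, assemble into independent Wishart-distributed entries $W_{\alpha,i,j}$; the dependence on the conditioning data then enters only through the scalar prefactor $\sqrt{z_\alpha/(I_\alpha\overline{o}_\alpha)}$ and through the product $G_{\alpha,i}\chi_{\alpha,j}$ carried over from Theorem~\ref{thm:grad_conv_inf}.

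Finally I would apply the concentration assumption $\sigma_\alpha/\overline{o}_\alpha\to 0$. Writing $\bm{O}^\alpha=\overline{o}_\alpha\bm{1}+\sigma_\alpha\bm{\Delta}_\alpha$ with $\bm{\Delta}_\alpha$ a unit-variance centered Hermitian piece, the identity part commutes with every generator and contributes nothing to the Hessian, while the leading nonvanishing contribution is linear in $\sigma_\alpha$; this produces the overall prefactor $2I_\alpha\sigma_\alpha\Tr_\alpha(\bm{\rho}^\alpha)/N_\alpha^2$ and kills subleading cross-terms that would otherwise couple the $G_{\alpha,i}\chi_{\alpha,j}$ factor to $W_{\alpha,i,j}$, recovering the stated product form. \textbf{The main obstacle} I expect is the conditioning step itself: $\{\partial_{\theta_i}\ell=0\}$ is a measure-zero event and cannot be handled by convergence in distribution alone, so one must invoke the pointwise density strengthening of Theorem~\ref{thm:loss_conv_inf} promised in Appendix~\ref{sec:proof_main_res}, together with a Kac--Rice-style disintegration that exchanges the $\dim(\mathcal{A})\to\infty$ limit with the conditioning. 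Verifying that the Wishart factor $W_{\alpha,i,j}$ remains asymptotically independent of both the $\chi_{\alpha,j}$ and $G_{\alpha,i}$ factors after this disintegration, rather than picking up a lower-order correlation that would spoil the clean product form of $\hat{\ell}_{i,j}$, is the technically delicate step.
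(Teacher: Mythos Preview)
Your outline matches the paper at the highest level---prove joint convergence of $(\ell,\partial_i\ell,\partial_i\partial_j\ell)$ via the same Gaussianization, then condition in the limit---but you are missing the concrete device that makes both the Hessian algebra and the conditioning tractable. The paper does \emph{not} work with a single Wishart per sector. After reducing $\bm{O}^\alpha$ to the rank-$r_\alpha$ projector $\bm{\tilde{O}}$ it introduces a second, independent Wishart $\bm{V}=\bm{X}^\dagger(\bm{I}-\bm{\tilde{O}})\bm{X}$ with $N_\alpha-r_\alpha$ degrees of freedom, so that $\bm{X}^\dagger\bm{X}=\bm{W}+\bm{V}$. The double commutator then expands as a polynomial in matrix elements of $\bm{W}$ and $\bm{V}$; the cubic-in-$\bm{W}$ terms cancel identically, and the assumption $\sigma_\alpha/\overline{o}_\alpha\to 0$---equivalently $(N_\alpha-r_\alpha)/N_\alpha\to 0$ by Lemma~\ref{lem:sigma_o_simp}---is used to drop monomials whose $\bm{V}$-scaling is subleading in Ky Fan metric, not via your $\bm{O}=\overline{o}\bm{1}+\sigma\bm{\Delta}$ splitting. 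The $\sigma_\alpha$ prefactor arises because the single surviving term carries one factor of $\chi_{V,0}\approx\sqrt{\beta_\alpha(N_\alpha-r_\alpha)}$.

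The conditioning is carried out entirely through the Bartlett decompositions $\bm{W}=\bm{L}_W\bm{L}_W^\dagger$ and $\bm{V}=\bm{L}_V\bm{L}_V^\dagger$: fixing $\ell=z_\alpha$ pins the single diagonal entry $\chi_{W,0}$, and fixing $\partial_i\ell=0$ forces specific off-diagonal Gaussians in $\bm{L}_V$ to vanish (the $\chi$ factor has zero density at the origin, so it is the Gaussian that must be zero). No Kac--Rice disintegration is needed here; the limiting joint law is explicit in Bartlett coordinates and the conditional law is read off directly. The independence of $W_{\alpha,i,j}$ from $G_{\alpha,i},\chi_{\alpha,j}$ that you correctly flag as delicate is obtained by a separate perturbation lemma: deleting the first column of $\bm{L}_W$ costs $O(r_\alpha^{-1/2})$ in Ky Fan metric and leaves a $\bm{\tilde{W}}$ genuinely independent of the conditioning variables $N_{W,0,i}$. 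Without the $\bm{W}/\bm{V}$ split and Bartlett bookkeeping, the raw double commutator does not factor cleanly, and your proposed route would not produce the product form without substantial additional work.
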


Both of these derivative results follow by carefully considering the joint distribution of elements of a Wishart-distributed random matrix via the use of the Bartlett decomposition of Wishart-distributed matrices. Further simplifications follow by bounding the error in probability between various products of these marginal elements to simpler distributions.

\section{New Results in Landscape Theory From the JAWS Framework}\label{sec:new_results_mt}

With the full asymptotic distribution of QNN loss landscapes in hand, we are now able to prove several novel results in QNN loss landscape theory.

\subsection{Barren Plateaus}

The first implication of our results that we will discuss is the unification of barren plateau results~\citep{fontana2023adjoint,ragone2023unified} in the limit of large Jordan algebra dimension.
\begin{corollary}[General expression for the loss function variance, informal]\label{cor:barren_plateaus_inf}
    Let $\ell$ be as in Theorem~\ref{thm:loss_conv_inf}. The variance of the loss over the initialization is:
    \begin{equation}
        \operatorname{Var}_{\bm{\theta}}\left[\ell\left(\bm{\rho};\bm{\theta}\right)\right]=\sum_\alpha\frac{\Tr\left(\left(\bm{O}^\alpha\right)^2\right)\Tr\left(\left(\bm{\rho}^\alpha\right)^2\right)}{\dim_\mathbb{R}\left(\operatorname{Aut}\left(\mathcal{A}_\alpha\right)\right)}.
    \end{equation}
\end{corollary}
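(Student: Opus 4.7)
The plan is to read off the variance directly from the asymptotic distribution of Theorem~\ref{thm:loss_conv_inf} and propagate it through the independent Wishart factors of the JAWS. Applying Theorem~\ref{thm:loss_conv_inf} replaces $\ell-\ell^\ast$ in distribution by $\sum_\alpha\left(I_\alpha\overline{o}_\alpha/r_\alpha\right)\Tr_\alpha\left(\bm{\rho}^\alpha\bm{W}_\alpha\right)$, and the independence of the $\bm{W}_\alpha$ across simple components decomposes the variance as
\begin{equation*}
    \operatorname{Var}_{\bm{\theta}}\left[\ell\right]=\sum_\alpha\left(\frac{I_\alpha\overline{o}_\alpha}{r_\alpha}\right)^2\operatorname{Var}\left[\Tr_\alpha\left(\bm{\rho}^\alpha\bm{W}_\alpha\right)\right].
\end{equation*}
Converting convergence in distribution to convergence of the second moment requires a uniform integrability argument, which follows from polynomial moment bounds on Wishart matrices.

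Next I would evaluate the variance of each Wishart trace. Writing $\bm{W}_\alpha=\bm{X}_\alpha\bm{X}_\alpha^\dagger$ and using the unitary invariance of $\bm{W}_\alpha$ over $\mathbb{F}_\alpha$ to diagonalize $\bm{\rho}^\alpha$ reduces the calculation to the second moment of a weighted sum of (scaled) $\chi^2_{\beta_\alpha r_\alpha}$-distributed diagonal entries, together with their covariances. A direct Wick/Isserlis computation carried out separately for $\mathbb{F}_\alpha\in\left\{\mathbb{R},\mathbb{C},\mathbb{H}\right\}$ yields the uniform formula
\begin{equation*}
    \operatorname{Var}\left[\Tr_\alpha\left(\bm{\rho}^\alpha\bm{W}_\alpha\right)\right]=\frac{2r_\alpha}{\beta_\alpha}\Tr_\alpha\left(\left(\bm{\rho}^\alpha\right)^2\right).
\end{equation*}
Substituting the JAWS definitions $r_\alpha=\Tr_\alpha\left(\bm{O}^\alpha\right)^2/\Tr_\alpha\left(\left(\bm{O}^\alpha\right)^2\right)$ and $\overline{o}_\alpha=\Tr_\alpha\left(\bm{O}^\alpha\right)/N_\alpha$ collapses the coefficient $I_\alpha^2\overline{o}_\alpha^2/r_\alpha$ to $I_\alpha^2\Tr_\alpha\left(\left(\bm{O}^\alpha\right)^2\right)/N_\alpha^2$, and converting to the ambient trace via $\Tr=I_\alpha\Tr_\alpha$ on $\mathcal{A}_\alpha$ (the specialization of Eq.~\eqref{eq:inner_product_equiv_inf}) then yields
\begin{equation*}
    \operatorname{Var}_{\bm{\theta}}\left[\ell\right]=\sum_\alpha\frac{2\Tr\left(\left(\bm{O}^\alpha\right)^2\right)\Tr\left(\left(\bm{\rho}^\alpha\right)^2\right)}{\beta_\alpha N_\alpha^2}.
\end{equation*}

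Finally I would recognize the denominator in terms of $\operatorname{Aut}\left(\mathcal{A}_\alpha\right)$. Since the connected component of $\operatorname{Aut}\left(\mathcal{A}_\alpha\right)$ is $\operatorname{SO}\left(N_\alpha\right)$, $\operatorname{SU}\left(N_\alpha\right)$, or $\operatorname{Sp}\left(N_\alpha\right)$ for $\beta_\alpha=1,2,4$ with real dimensions $N_\alpha\left(N_\alpha-1\right)/2$, $N_\alpha^2-1$, and $N_\alpha\left(2N_\alpha+1\right)$ respectively, in every case $\dim_{\mathbb{R}}\left(\operatorname{Aut}\left(\mathcal{A}_\alpha\right)\right)=\beta_\alpha N_\alpha^2/2+\operatorname{O}\left(N_\alpha\right)$; at leading order in the $\dim\left(\mathcal{A}\right)\to\infty$ regime of Theorem~\ref{thm:loss_conv_inf} this identifies the denominator with $\dim_{\mathbb{R}}\left(\operatorname{Aut}\left(\mathcal{A}_\alpha\right)\right)$ and gives the claim.

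The main obstacle will be the uniform bookkeeping across the three Jordan fields. The $\mathbb{H}$ case in particular requires careful application of Wick's rule to quaternionic Gaussian entries, and the factors of $\beta_\alpha$ appearing in the Wishart second moments must line up exactly with those implicit in $r_\alpha$, $\overline{o}_\alpha$, and $I_\alpha$ for the denominator to collapse cleanly to $\beta_\alpha N_\alpha^2$. A secondary subtlety is that the identification $\dim_{\mathbb{R}}\left(\operatorname{Aut}\left(\mathcal{A}_\alpha\right)\right)=\beta_\alpha N_\alpha^2/2$ is only asymptotic, so the stated equality must be read as a leading-order statement in the same limit as Theorem~\ref{thm:loss_conv_inf}.
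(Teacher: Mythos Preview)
Your proposal is correct and follows essentially the same route as the paper: invoke the Wishart asymptotic of Theorem~\ref{thm:loss_conv_inf}, compute $\operatorname{Var}\left[\Tr_\alpha\left(\bm{\rho}^\alpha\bm{W}_\alpha\right)\right]=\tfrac{2r_\alpha}{\beta_\alpha}\Tr_\alpha\left(\left(\bm{\rho}^\alpha\right)^2\right)$, then collapse via $r_\alpha,\overline{o}_\alpha,I_\alpha$ and identify $\beta_\alpha N_\alpha^2/2$ with $\dim_{\mathbb{R}}\left(\operatorname{Aut}\left(\mathcal{A}_\alpha\right)\right)$ to leading order. The one place the paper is slightly slicker is that it bypasses your anticipated quaternionic Wick bookkeeping entirely: after diagonalizing $\bm{\rho}^\alpha$ by unitary invariance, the diagonal entries of $\beta_\alpha\bm{W}_\alpha$ are i.i.d.\ $\chi^2$ with $\beta_\alpha r_\alpha$ degrees of freedom uniformly in $\mathbb{F}_\alpha$, so no field-by-field Isserlis computation is needed and the ``covariances'' you mention are simply zero.
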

This follows immediately from the variance of elements of Wishart-distributed random matrices; see Appendix~\ref{sec:barren_plateaus} for an explicit calculation. Despite the simplicity of the proof (given Theorem~\ref{thm:loss_conv_inf}), this result generalizes Theorem~1 of \citet{ragone2023unified}, which only considered when either $\ci \bm{O}$ or $\ci\bm{\rho}$ was in the algebra generating $\operatorname{Aut}\left(\mathcal{A}_\alpha\right)$. However, our result holds for \emph{all} variational ansatzes due to our general Jordan algebraic formulation of QNNs. Intriguingly, $\Tr\left(\left(\bm{\rho}^\alpha\right)^2\right)$ can be thought of as probing the \emph{generalized entanglement}~\citep{PhysRevLett.92.107902} of $\bm{\rho}$ with respect to the Jordan algebraic structure of $\mathcal{A}$. That is, entanglement induces barren plateaus, as previously seen in a nonalgebraic setting by \citet{PRXQuantum.2.040316}. A similar phenomenon was also previously noted by \citet{ragone2023unified} with respect to the Lie algebraic structure of LASAs.

\subsection{The Quantum Neural Tangent Kernel}

We now connect our results to the quantum neural tangent kernel (QNTK) literature~\citep{PRXQuantum.3.030323,PhysRevLett.130.150601,you2022convergence,garciamartin2023deep,girardi2024trained,garciamartin2024architectures}. The landmark result in this field is that, in certain settings, QNN loss functions are asymptotically Gaussian processes. However, this same body of work has noted that such a Gaussian process description cannot generally hold. For instance, if the objective observable $\bm{O}$ is rank-$1$ and the algebra $\mathcal{A}$ associated with the QNN is the space of complex Hermitian matrices, it is known that the loss is proportional to an exponentially-distributed random variable~\citep{boixo2018characterizing}.

Our results can be seen as a unifying model of neural network loss landscapes, including both when convergence to a Gaussian process is achieved and when it is not. Recall that Theorem~\ref{thm:loss_conv_inf} demonstrated the asymptotic expression for the QNN loss at any $\bm{\theta}$ (left implicit) to be:
\begin{equation}\label{eq:wishart_process_qntk_inf}
    \hat{\ell}\left(\bm{\rho}\right)=\sum_\alpha\frac{I_\alpha\overline{o}_\alpha}{r_\alpha}\Tr_\alpha\left(\bm{\rho}^\alpha\bm{W}_\alpha\right).
\end{equation}
This correctly captures the exponential behavior when $\mathcal{A}$ is the space of complex Hermitian matrices and $\rank\left(\bm{O}\right)=r=1$. This is because the diagonal entries of such a complex Wishart matrix are (up to a constant) $\chi^2$-distributed with two degrees of freedom each, which is identical to an exponential distribution.

Indeed, our more general result can be used to \emph{exactly} characterize when the loss functions of QNNs asymptotically form Gaussian processes. First, we consider normalizing $\hat{\ell}$ by some $\mathcal{N}\geq 1$ such that $\mathcal{N}\hat{\ell}$ has nonvanishing variance asymptotically. Under this normalization, convergence to a Gaussian process occurs when $\mathcal{N}\hat{\ell}$ has higher-order cumulants asymptotically vanishing. In other words, using $\kappa_i\left(\cdot\right)$ to denote the $i$th cumulant and $\mathcal{R}$ to denote the data set, this occurs when (see Appendix~\ref{sec:qntk_formal} for an explicit calculation):
\begin{equation}\label{eq:vanishing_higher_order_cums}
    \max_{\substack{\bm{\rho}\in\mathcal{R}\\i>2}}\kappa_i\left(\mathcal{N}\hat{\ell}\left(\bm{\rho}\right)\right)\sim\max\limits_{\substack{\bm{\rho}\in\mathcal{R}\\\alpha}}\frac{\mathcal{N}^3 I_\alpha^3\overline{o}_\alpha^3\Tr_\alpha\left(\left(\bm{\rho}^\alpha\right)^3\right)}{r_\alpha^2}=\operatorname{o}\left(1\right).
\end{equation}
We thus can state this pair of conditions as follows.
\begin{corollary}[Exact conditions for convergence to a Gaussian process, informal]\label{cor:exact_gp_inf}
    Let $\ell$ be as in Theorem~\ref{thm:loss_conv_inf}. $\mathcal{N}\ell\left(\bm{\rho}\right)$ is asymptotically a Gaussian process over $\bm{\rho}\in\mathcal{R}$ if and only if Eq.~\eqref{eq:vanishing_higher_order_cums} is satisfied and $\mathcal{N}^2\operatorname{Var}\left[\ell\left(\bm{\rho}\right)\right]$ (as in Corollary~\ref{cor:barren_plateaus_inf}) is nonvanishing for all $\bm{\rho}\in\mathcal{R}$.

    When these conditions are met, the covariance is given by:
    \begin{equation}\label{eq:cov_exp_intro}
        \begin{aligned}
            \mathcal{K}&\left(\bm{\rho},\bm{\rho'}\right)=\lim_{N\to\infty}\mathcal{K}_N\left(\bm{\rho},\bm{\rho'}\right)\\
            &=\lim_{N\to\infty}\mathcal{N}^2\sum_\alpha\frac{\beta_\alpha\Tr\left(\left(\bm{O}^\alpha\right)^2\right)}{2\dim_\mathbb{R}\left(\operatorname{Aut}\left(\mathcal{A}_\alpha\right)\right)r_\alpha}I_\alpha\mathbb{E}\left[\Tr_\alpha\left(\bm{\rho}^\alpha\left(\bm{W}_\alpha-r_\alpha\right)\right)\Tr_\alpha\left(\bm{\rho'}^\alpha\left(\bm{W}_\alpha-r_\alpha\right)\right)\right].
        \end{aligned}
    \end{equation}
\end{corollary}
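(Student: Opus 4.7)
The plan is to invoke Theorem~\ref{thm:loss_conv_inf} to replace $\mathcal{N}\ell\left(\bm{\rho}\right)$ by $\mathcal{N}\ell^\ast+\mathcal{N}\hat{\ell}\left(\bm{\rho}\right)$ in finite-dimensional distribution, and then exploit the classical fact that a family of random variables with finite moments is jointly Gaussian if and only if all mixed cumulants of order at least $3$ vanish. Combined with the hypothesis of nondegenerate (nonzero, finite) variance (needed to rule out a degenerate limit), this reduces the problem to computing the joint cumulants of the Wishart-trace random variables $\Tr_\alpha\left(\bm{\rho}^\alpha\bm{W}_\alpha\right)$ appearing in $\hat{\ell}$ and showing that the condition of Eq.~\eqref{eq:vanishing_higher_order_cums} is asymptotically equivalent to their vanishing.

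First I would use the identity $\Tr_\alpha\left(\bm{\rho}^\alpha\bm{W}_\alpha\right)=\sum_{j=1}^{r_\alpha}\bm{x}_j^\dagger\bm{\rho}^\alpha\bm{x}_j$, where the $\bm{x}_j$ are i.i.d.\ standard $\mathbb{F}_\alpha$-Gaussian vectors, together with independence of the $\bm{W}_\alpha$ across $\alpha$, to write the joint $k$th cumulant of $\left(\mathcal{N}\hat{\ell}\left(\bm{\rho}_{i_1}\right),\ldots,\mathcal{N}\hat{\ell}\left(\bm{\rho}_{i_k}\right)\right)$ as a sum over $\alpha$ of terms proportional to $\mathcal{N}^k I_\alpha^k\overline{o}_\alpha^k/r_\alpha^{k-1}$ times symmetrized traces $\Tr_\alpha\left(\bm{\rho}_{\pi\left(i_1\right)}^\alpha\cdots\bm{\rho}_{\pi\left(i_k\right)}^\alpha\right)$ (the classical cumulants of a Gaussian quadratic form, with field-dependent constants scaling as $\left(2/\beta_\alpha\right)^{k-1}\left(k-1\right)!$). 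A noncommutative H\"older inequality bounds each such symmetrized trace by $\max_j\Tr_\alpha\left(\left(\bm{\rho}_j^\alpha\right)^k\right)$, and the monotonicity $\Tr_\alpha\left(\left(\bm{\rho}^\alpha\right)^k\right)\leq\Tr_\alpha\left(\left(\bm{\rho}^\alpha\right)^3\right)$ for $k\geq 3$ (which holds since $\left\lVert\bm{\rho}^\alpha\right\rVert_{\mathrm{op}}\leq 1$) then shows that every joint cumulant of order $k\geq 4$ is controlled by the $k=3$ scale in Eq.~\eqref{eq:vanishing_higher_order_cums} times a nonnegative power of $\mathcal{N}^2 I_\alpha^2\overline{o}_\alpha^2/r_\alpha$, which is bounded by the nondegenerate-variance hypothesis. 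Hence Eq.~\eqref{eq:vanishing_higher_order_cums} is both necessary (the $k=3$ diagonal cumulant must itself vanish for Gaussianity) and sufficient (by the bound) for the asymptotic vanishing of all higher joint cumulants, proving the iff. The covariance formula~\eqref{eq:cov_exp_intro} then falls out by writing the second cumulant explicitly using independence across $\alpha$ and the centering $\mathbb{E}\left[\bm{W}_\alpha\right]=r_\alpha\bm{1}_{N_\alpha}$; the prefactor $\beta_\alpha\Tr\left(\left(\bm{O}^\alpha\right)^2\right)/\left(2\dim_\mathbb{R}\left(\operatorname{Aut}\left(\mathcal{A}_\alpha\right)\right)r_\alpha\right)$ arises from reorganizing $\left(\mathcal{N} I_\alpha\overline{o}_\alpha/r_\alpha\right)^2$ via Eq.~\eqref{eq:inner_product_equiv_inf} and the same identification between $I_\alpha$, $\Tr\left(\left(\bm{O}^\alpha\right)^2\right)$, and $\dim_\mathbb{R}\left(\operatorname{Aut}\left(\mathcal{A}_\alpha\right)\right)$ implicit in the calculation behind Corollary~\ref{cor:barren_plateaus_inf}.

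The main obstacle will be the combinatorial bookkeeping of the symmetrized traces appearing in the joint cumulants of sums of Gaussian quadratic forms, and in particular verifying uniformly over cyclic orderings that each such trace is dominated by a diagonal $\Tr_\alpha\left(\left(\bm{\rho}^\alpha\right)^k\right)$; the quaternionic case requires some additional care due to noncommutativity of $\mathbb{F}_\alpha$. A secondary subtlety is upgrading the finite-dimensional-distribution convergence supplied by Theorem~\ref{thm:loss_conv_inf} to genuine Gaussian process convergence on $\mathcal{R}$ when $\mathcal{R}$ is infinite; this should follow from a tightness argument using $\left\lVert\bm{\rho}^\alpha\right\rVert_{\mathrm{op}}\leq 1$, although in the typical QNN training setting $\mathcal{R}$ is finite and no such step is needed.
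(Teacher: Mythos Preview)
Your proposal is correct and follows essentially the same route as the paper: reduce to the Wishart process via Theorem~\ref{thm:loss_conv_inf}, invoke the cumulant characterization of Gaussianity, and read off the condition from the third cumulant together with the nondegeneracy of the second. If anything, your treatment is more detailed than the paper's informal argument---the paper simply computes $\kappa_3$ and asserts via the ``$\sim$'' in Eq.~\eqref{eq:vanishing_higher_order_cums} that it dominates all higher cumulants, whereas you sketch the Gaussian-quadratic-form expansion and the H\"older/monotonicity bound on the symmetrized traces; just be aware that the specific claim ``$\kappa_k\lesssim\kappa_3\cdot\left(\mathcal{N}^2 I_\alpha^2\overline{o}_\alpha^2/r_\alpha\right)^m$ with the bracket bounded by the variance hypothesis'' is not quite the right bookkeeping (the variance carries an extra $\Tr_\alpha\left(\left(\bm{\rho}^\alpha\right)^2\right)$ factor), though this does not affect the informal corollary.
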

It is now easy to see why (for simple $\mathcal{A}$) the $r=1$ case does not converge to a Gaussian process: in this setting $\mathcal{N}$ must be $\operatorname{o}\left(N\right)$ such that Eq.~\eqref{eq:vanishing_higher_order_cums} is satisfied and higher-order cumulants vanish, but for such a choice the variance is vanishing.

In Appendix~\ref{sec:qntk_formal} we are also more explicit on the form of Eq.~\eqref{eq:cov_exp_intro}. We use known results for the mixed cumulants of Wishart matrix elements to show that the covariance can be written as an explicit, quadratic expression in $\bm{\rho}^\alpha$ and $\bm{\rho'}^\alpha$, depending only on $\mathcal{A}$. We further show in Appendix~\ref{sec:qntk_formal} that, when the $\bm{\rho}\in\mathcal{R}$ can be mutually diagonalized,
\begin{equation}
    \mathcal{K}_N\left(\bm{\rho},\bm{\rho'}\right)=\mathcal{N}^2\sum\limits_\alpha\frac{\Tr\left(\left(\bm{O}^\alpha\right)^2\right)}{\dim_\mathbb{R}\left(\operatorname{Aut}\left(\mathcal{A}_\alpha\right)\right)}\Tr\left(\bm{\rho}^\alpha\bm{\rho'}^\alpha\right).
\end{equation}
This form of the covariance can then be immediately fed into neural tangent kernel results to reason about the training behavior~\citep{Neal1996,pmlr-v38-choromanska15,chaudhari2017energy,46760} and generalization ability~\citep{10.5555/3327757.3327948,pmlr-v162-wei22a} of such networks. For instance, by the results of \citet{girardi2024trained}, this covariance suggests that gradient descent does not reach the global minimum in time polynomial in the model size for unstructured QNNs. This is discussed in more detail in Appendix~\ref{sec:qntk_formal}.

\subsection{Local Minima}\label{sec:local_minima_inf}

We end by examining the distribution of local minima of the loss landscape. This has been done previously in the more restricted setting where no structure is imposed on the ansatz or loss function outside of locality~\citep{anschuetz2021critical,anschuetz2022barren}. In this section we consider a general QNN with $p$ trained parameters under the conditions of Theorem~\ref{thm:hess_inf}, as well as assume that all
\begin{equation}
    \gamma_\alpha\equiv\frac{p}{\beta_\alpha r_\alpha}
\end{equation}
are held constant as we take the asymptotic limit $\dim\left(\mathcal{A}\right)\to\infty$. We call the $\gamma_\alpha$ the \emph{overparameterization ratios} of a given QNN architecture, associated with the various simple components of the Jordan algebra $\mathcal{A}$ in the JAWS description of the network.

We calculate the distribution of local minima using the \emph{Kac--Rice formula}~\citep{Adler2007}, which gives the expected density of local minima of a random field $\ell$ at a function value $z$. \citet{anschuetz2021critical} demonstrated that the assumptions for the Kac--Rice formula are satisfied for variational loss landscapes, and when rotationally invariant on the $p$-torus the formula takes the form:
\begin{equation}
    \mathbb{E}\left[\operatorname{Crt}_0\left(z\right)\right]=\left(2\cpi\right)^p\mathbb{E}\left[\det\left(\bm{H}_z\right)\bm{1}\left\{\bm{H}_z\succeq\bm{0}\right\}\right]\mathbb{P}\left[\bm{G}_z=\bm{0}\right]\mathbb{P}\left[\ell=z\right].
\end{equation}
Here, $\mathbb{E}\left[\operatorname{Crt}_0\left(z\right)\right]$ is the expected density of local minima at a function value $z$, $\bm{G}_z$ is the gradient conditioned on $\ell=z$, and $\bm{H}_z$ is the Hessian conditioned on $\ell=z$ and $\bm{G}=\bm{0}$. In a slight abuse of notation, here $\mathbb{P}\left[\cdot\right]$ denotes the probability density associated with the event $\cdot$.

We can evaluate this expression using Theorems~\ref{thm:loss_conv_inf},~\ref{thm:grad_conv_inf}, and~\ref{thm:hess_inf}. To simplify the expression here, we assume the addition of a regularization term of the form:
\begin{equation}\label{eq:reg_term_int}
    R_L\left(\bm{\theta}\right)=L\left\lVert\bm{\theta}-\bm{C}\right\rVert_2^2
\end{equation}
to the loss; we also present here only the relative density rather than the total number of minima at a loss function value $z$. The full expression counting local minima---both with and without the regularization of Eq.~\eqref{eq:reg_term_int}---are described in detail in Appendix~\ref{sec:loc_min_disc}.
\begin{corollary}[Density of local minima, informal]\label{cor:loc_min_gen_inf}
    Consider the setting of Theorem~\ref{thm:hess_inf} with an additional regularization term of the form of Eq.~\eqref{eq:reg_term_int}. The density of local minima at a loss function value $z>0$ is, to multiplicative leading order, given by the convolution over all $\alpha$ with $\gamma_\alpha<1$:
    \begin{equation}
        \kappa\left(z\right)=\bigast_{\alpha:\gamma_\alpha<1}f_\Gamma\left(\frac{z}{\overline{o}_\alpha\Tr\left(\bm{\rho}^\alpha\right)};\frac{\beta_\alpha r_\alpha}{2},\frac{2}{\beta_\alpha r_\alpha}\right).
    \end{equation}
    Here, $f_\Gamma\left(\cdot;k,\theta\right)$ denotes the gamma distribution with shape $k$ and scale $\theta$ parameters:
    \begin{equation}
        f_\Gamma\left(x;k,\theta\right)=\frac{1}{\operatorname{\Gamma}\left(k\right)\theta^k}x^{k-1}\exp\left(-\frac{x}{\theta}\right).
    \end{equation}
\end{corollary}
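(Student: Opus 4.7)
The plan is to instantiate the rotationally-invariant Kac--Rice formula
$$\mathbb{E}\left[\operatorname{Crt}_0\left(z\right)\right]=\left(2\cpi\right)^p\mathbb{E}\left[\det\left(\bm{H}_z\right)\bm{1}\left\{\bm{H}_z\succeq\bm{0}\right\}\right]\mathbb{P}\left[\bm{G}_z=\bm{0}\right]\mathbb{P}\left[\ell=z\right]$$
by substituting the asymptotic joint distributions from Theorems~\ref{thm:loss_conv_inf},~\ref{thm:grad_conv_inf}, and~\ref{thm:hess_inf} and extracting leading-order behavior as $\dim(\mathcal{A})\to\infty$. Since all three distributions are built from the same mutually independent random primitives --- the Wishart matrices $\bm{W}_\alpha$, Gaussians $G_{\alpha,i}$, and chis $\chi_{\alpha,i}$, one family per simple component --- conditioning on $\ell=z$ is most naturally expressed as conditioning on each per-sector loss $\ell(\bm{\rho}^\alpha;\bm{\theta})=z_\alpha$ and then integrating over the simplex $\sum_\alpha z_\alpha = z$. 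That simplex integration is exactly the convolution structure claimed in the corollary, so the content of the proof is showing that the per-sector integrand is the claimed gamma density.

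Next I would evaluate each factor component by component. By Theorem~\ref{thm:loss_conv_inf}, for rank-$1$ $\bm{\rho}^\alpha$ the quantity $\Tr_\alpha(\bm{\rho}^\alpha\bm{W}_\alpha)$ is (up to the prefactors of Theorem~\ref{thm:loss_conv_inf}) a scaled $\chi^2$-variable with $\beta_\alpha r_\alpha$ degrees of freedom, so $\mathbb{P}[\ell(\bm{\rho}^\alpha;\bm{\theta})=z_\alpha]$ already has the gamma density form in the statement. By Theorem~\ref{thm:grad_conv_inf}, the gradient density at the origin conditioned on $\ell(\bm{\rho}^\alpha;\bm{\theta})=z_\alpha$ reduces to a product of Gaussian densities in the $G_{\alpha,i}$ scaled by the $\chi_{\alpha,i}$, producing an explicit $z_\alpha$-dependent prefactor for each of the $\gamma_\alpha\beta_\alpha r_\alpha$ gradient components in sector $\alpha$. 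The Hessian factor is where the regularization matters: adding $R_L$ shifts $\bm{H}_z$ to $\bm{H}_z+2L\bm{I}$, so for sufficiently large $L$ the positive-semidefiniteness indicator is automatic and the determinant expands about the constant $(2L)^p$ term, with corrections involving the Wishart-valued entries of Theorem~\ref{thm:hess_inf} contributing the remaining $z_\alpha$-dependence.

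Multiplying the three factors and absorbing constants into the ``multiplicative leading order'' of the statement, the loss density's gamma shape survives unchanged in each sector where the gradient and Hessian factors contribute balanced powers of $z_\alpha$. The main obstacle --- and the conceptual content of the proof --- is establishing the dichotomy between the two parameterization regimes: the combined $z_\alpha$-exponent coming from the gradient density at the origin and the Hessian determinant must be shown to cancel the gamma density in any sector with $\gamma_\alpha\geq 1$, concentrating that sector's mass onto $z_\alpha=0$ and effectively removing it from the convolution, while leaving the density intact for $\gamma_\alpha<1$. Tracking these exponents requires separating $p$ into the per-sector counts $p = \gamma_\alpha\beta_\alpha r_\alpha$ rather than treating $\gamma_\alpha$ as a single abstract ratio, and is where the phase transition at $\gamma_\alpha=1$ becomes visible analytically. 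Once this bookkeeping is done, the leading-order integrand is literally $\prod_{\alpha:\gamma_\alpha<1} f_\Gamma(z_\alpha/(\overline{o}_\alpha\Tr(\bm{\rho}^\alpha));\beta_\alpha r_\alpha/2,2/(\beta_\alpha r_\alpha))$, and integrating against $\sum_\alpha z_\alpha = z$ yields the asserted convolution.
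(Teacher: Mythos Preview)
Your overall structure (Kac--Rice, sector-by-sector factorization, convolution over $\sum_\alpha z_\alpha = z$) matches the paper. But your treatment of the Hessian factor, and hence your phase-transition mechanism, is off.

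First, the regularization is not ``add $2L\bm{I}$ to the Hessian and expand $\det$ about $(2L)^p$.'' If you do that, the leading-order determinant is $z$-independent, the gradient density's $z_\alpha^{-p_\alpha/2}$ factor never gets cancelled, and you end up with a gamma density of shape $\tfrac{\beta_\alpha r_\alpha}{2}-\tfrac{p_\alpha}{2}$ rather than the claimed $\tfrac{\beta_\alpha r_\alpha}{2}$. In the paper's formal argument the regularization is a conditioning trick (on $|G_{\alpha,i}|=\chi_{\alpha,i}$) whose only purpose is to make the Hadamard-product Hessian of Theorem~\ref{thm:hess_inf} collapse to the tractable form $N_\alpha^{-1}\sqrt{\bm{\varSigma}^\alpha}\bm{W}_\alpha\sqrt{\bm{\varSigma}^\alpha}$. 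The Hessian entries still scale as $\sqrt{z_\alpha}$, so $\det\bm{H}_z$ contributes $z_\alpha^{p_\alpha/2}$, which is exactly what cancels the gradient factor and leaves the bare loss-density gamma shape intact.

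Second, the phase transition is not an exponent-counting argument. What actually kills the $\gamma_\alpha\geq 1$ sectors is that the Wishart factor $\bm{W}_\alpha$ inside the Hessian is $p_\alpha\times p_\alpha$ with only $\beta_\alpha r_\alpha$ degrees of freedom; when $p_\alpha>\beta_\alpha r_\alpha$ it is almost surely singular, so $\det\bm{H}_z=0$ and $\bm{1}\{\bm{H}_z\succeq\bm{0}\}$ is irrelevant. The paper makes this rigorous via the Mar\v{c}enko--Pastur law (atom at zero iff $\gamma_\alpha>1$) together with a large-deviations argument showing that exponentially-rare spectral fluctuations do not rescue the expected determinant. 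That spectral/rank mechanism is the actual content of the $\gamma_\alpha<1$ restriction, and it is absent from your sketch.
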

Examples of this distribution are plotted in Figure~\ref{fig:loss_density}(c), where we give a finite width to the density in the \emph{overparameterized regime}---where all $\gamma_\alpha\geq 1$---to represent potential finite-size effects. Intriguingly, the variance of this distribution when all $\gamma_\alpha<1$---the \emph{underparameterized regime}---corresponds to the variance of the loss function itself, i.e., as given in Corollary~\ref{cor:barren_plateaus_inf}. However, due to the exponential tails of the gamma distribution, even when there are no barren plateaus in the loss landscape there is only an exponentially small fraction of local minima in the vicinity of the global minimum in this regime. That is, both barren plateaus and poor local minima are potential obstructions to efficient trainability that must be taken into account in the design of practical QNNs.

\section{Conclusion}\label{sec:conc}

Taken together, our results show that the natural model for quantum neural networks is not one of Gaussian processes obeying a quantum neural tangent kernel, but rather one of \emph{Wishart processes}. This Wishart process model unifies all of the recent major thrusts in calculating properties of quantum neural network loss landscapes. Indeed, our results allow us to propose a simple operational definition for the ``trainability'' of quantum neural networks, which to date has been a term used heuristically without any formal definition.\footnote{The concept of ``efficient learning'' has been previously studied~\citep{gilfuster2024relation}, but we give the first \emph{operational} definition in terms of the structural properties of a given QNN architecture.}
\begin{definition}[Trainability of quantum neural networks]\label{def:trainability_formal}
    Consider a QNN with $p$ trained parameters composed of $N\times N$ unitary matrices. Let $\mathcal{A}\cong\bigoplus_\alpha\mathcal{A}_\alpha$ be the corresponding Jordan algebra as in Sec.~\ref{sec:jasa_mt}. Define the \emph{degrees of freedom} parameters:
    \begin{equation}
        r_\alpha\equiv\frac{\Tr_\alpha\left(\bm{O}^\alpha\right)^2}{\Tr_\alpha\left(\left(\bm{O}^\alpha\right)^2\right)}.
    \end{equation}
    We say that the QNN is \emph{trainable} if and only if, as $N\to\infty$, the QNN satisfies:
    \begin{enumerate}
        \item \emph{Absence of barren plateaus} (Corollary~\ref{cor:barren_plateaus_inf}):
        \begin{equation}
            \sum_\alpha\frac{\Tr\left(\left(\bm{O}^\alpha\right)^2\right)\Tr\left(\left(\bm{\rho}^\alpha\right)^2\right)}{\dim_\mathbb{R}\left(\operatorname{Aut}\left(\mathcal{A}_\alpha\right)\right)}=\operatorname{\Omega}\left(\operatorname{poly}\left(\log\left(N\right)\right)^{-1}\right).
        \end{equation}
        \item \emph{Absence of poor local minima} (Corollary~\ref{cor:loc_min_gen_inf}):
        \begin{equation}
            p\geq\max_\alpha \beta_\alpha r_\alpha.
        \end{equation}
    \end{enumerate}
\end{definition}
Given knowledge of $\mathcal{A}$, this yields a quantum algorithm for determining the asymptotic trainability of a QNN architecture: one need only measure $\Tr\left(\left(\bm{O}^\alpha\right)^2\right)$ and $\Tr\left(\left(\bm{\rho}^\alpha\right)^2\right)$ on a quantum computer. This can be done through the measurement of basis elements of $\mathcal{A}$ given copies of $\bm{\rho}$ and, for the former, block encodings of $\bm{O}$ as prepared by the standard linear combination of unitaries subroutine~\citep{10.5555/2481569.2481570}. We provide explicit error bounds for this procedure in Appendix~\ref{sec:meas_trainability}.

Practically, where does this leave quantum neural networks? For one, it seems unlikely that there exists any computational quantum advantage during the training of QNNs outside of HHL-like speedups~\citep{PhysRevLett.103.150502,biamonte2017quantum} and training algorithms that leverage some existing knowledge of the data to be learned~\citep{liu2021rigorous,10.1145/3519935.3519960,huang2024learning}. This is due to efficient classical simulation algorithms for quantum systems algebraically constrained to explore a low-dimensional subspace~\citep{anschuetz2022efficient,goh2023liealgebraicclassicalsimulationsvariational}, as is required for efficient trainability per Definition~\ref{def:trainability_formal}. This was noted in \citet{cerezo2023does} for deep QNNs, where the authors postulated that deep QNNs exhibiting no barren plateaus are classically simulable (outside of certain special cases). Our results demonstrate a similar phenomenon for \emph{shallow} QNNs: poor local minima in polynomially-sized circuits can only be avoided when the effective Hilbert space dimension grows at most polynomially quickly with the system size. This leaves the space for a practical, superpolynomial quantum advantage when training with a problem-agnostic algorithm such as gradient descent even narrower than previously believed: a veritable Amity Island in a sea of negative results. One ray of hope is the known existence of \emph{polynomial} quantum advantages during inference in such a setting~\citep{anschuetzgao2022,anschuetz2024arbitrary}. Intelligent warm-starting of the optimization procedure may be another way to circumvent poor training~\citep{puig2024variationalquantumsimulationcase}, though more must be done to fully understand how precise warm-starting must be for training to be efficient. Given the exact asymptotic form of the loss landscape we give here, such an analysis may be possible in the future.

Our work gives a unified understanding of quantum neural networks as Wishart processes. Great strides have been made in the classical machine learning literature in understanding the training dynamics~\citep{pmlr-v38-choromanska15,chaudhari2017energy,10.5555/3327757.3327948,pmlr-v97-allen-zhu19a} and generalization behavior~\citep{10.5555/3327757.3327948,pmlr-v162-wei22a} of classical neural networks via their connections to Gaussian processes, which unfortunately only port over in the specific settings where the Wishart process itself approaches a Gaussian process. Our results encourage an understanding of how specific properties of Wishart processes, not just Gaussian processes, influence the learning behavior of quantum networks in order to more fully grasp how quantum neural networks learn.

\subsubsection*{Reproducibility Statement}

The main results discussed in Secs.~\ref{sec:jasa_mt} and~\ref{sec:qnns_are_wishart_mt} are formally stated---with an explicit listing of formal assumptions---in Appendix~\ref{sec:main_results_app}. Definitions and background necessary for these formal theorem and assumption statements are laid out in Appendix~\ref{sec:formal_prelims}. Formal proofs of the main results are given in Appendix~\ref{sec:proof_main_res}. Formal statements and proofs of the corollaries discussed in Sec.~\ref{sec:new_results_mt} are given in Appendix~\ref{sec:cons_res_app}. A proof of the claim made in Sec.~\ref{sec:conc} that one can efficiently estimate the quantities in Definition~\ref{def:trainability_formal} using an LCU block encoding is given in Appendix~\ref{sec:meas_trainability}. A discussion of the settings where our results do not hold due to a violation of our formal assumptions is given in Appendix~\ref{sec:fut_direcs}. The remainder of the Appendices prove claims and helper lemmas used in Appendices~\ref{sec:main_results_app},~\ref{sec:cons_res_app}, and~\ref{sec:proof_main_res}.

\subsubsection*{Acknowledgments}

% \begin{acknowledgments}
E.R.A. is grateful to Pablo Bermejo, Marco Cerezo, Diego Garc\'{i}a-Mart\'{i}n, Bobak T.\ Kiani, Mart\'{i}n Larocca, Thomas Schuster, and Alissa Wilms for enlightening discussion and suggestions that aided in the preparation of this manuscript. E.R.A. also thanks Nathan Wiebe for advocating for a formal definition of trainability in quantum neural networks at the PennyLane Research Retreat of 2023, which inspired parts of this work. E.R.A. was funded in part by the Walter Burke Institute for Theoretical Physics at Caltech.
% \end{acknowledgments}

\bibliographystyle{iclr2025_conference}
\bibliography{main}

\appendix

\section{Preliminaries for Formal Discussion of Results}\label{sec:formal_prelims}

We begin by reviewing concepts that we will use in proving our results. We also give a summary of the notation we use throughout in Table~\ref{tab:notation}.

\subsection{Quantum Neural Networks}

We first review \emph{quantum neural networks} (QNNs). These are defined by a parameterized \emph{ansatz}
\begin{equation}
    \bm{U}\left(\bm{\theta}\right)=\prod_{i=p}^1\exp\left(-\ci\theta_i \bm{A}_i\right),
\end{equation}
which belongs to some path-connected subgroup of the unitary group $\operatorname{U}\left(N\right)$. Though $\bm{U}\left(\bm{\theta}\right)$ can in principle parameterize all of $\operatorname{U}\left(N\right)$, it is often taken to instead parameterize some path-connected subgroup $G\subseteq\operatorname{U}\left(N\right)$. This might be due to enforcing some global structure on the model~\citep{equivariant}, or can be a model of the finiteness of the reverse light cones of shallow quantum circuits~\citep{anschuetz2022barren}.

Given a choice of ansatz, the goal of a QNN is to minimize an empirical risk of the form:
\begin{equation}\label{eq:typ_vqe_loss}
    f\left(\bm{\theta}\right)=\frac{1}{\left\lvert\mathcal{R}\right\rvert}\sum_{\bm{\rho}\in\mathcal{R}}\ell\left(\bm{\theta};\bm{\rho}\right)=\frac{1}{\left\lvert\mathcal{R}\right\rvert}\sum_{\bm{\rho}\in\mathcal{R}}\Tr\left(\bm{U}\left(\bm{\theta}\right)\bm{\rho} \bm{U}\left(\bm{\theta}\right)^\dagger \bm{O}\right).
\end{equation}
Here, $\mathcal{R}$ can be thought of a data set comprising multiple input states $\bm{\rho}$, and $\ell$ the loss function. Historically, when $\left\lvert\mathcal{R}\right\rvert=1$ QNNs have been referred to as \emph{variational quantum algorithms} (VQAs)~\citep{peruzzo2014variational} due to their connection to finding variational approximations to the ground states of quantum Hamiltonians. There are known quantum-classical separations for the expressivity of quantum neural networks even when taking into account the requirement that the training procedure is efficient~\citep{liu2021rigorous,10.1145/3519935.3519960,huang2024learning}, though they require very specific training algorithms that take advantage of the structure of the data.

There has been recent hope that, with enough ansatz structure, efficient training may follow just via a simple application of gradient descent. This follows from the \emph{Lie algebra-supported ansatz} (LASA) literature, where it has been shown that if the generators $\ci \bm{A}_i$ of the ansatz $\bm{U}$ as well as the (scaled) objective observable $\ci \bm{O}$ belong to the same Lie algebra $\mathfrak{g}$---called the \emph{dynamical Lie algebra}~\citep{larocca2022diagnosingbarren}---generating $G$, gradients scale inversely with the dimension of $\mathfrak{g}$~\citep{fontana2023adjoint,ragone2023unified}. It is also conjectured that when $\dim\left(\mathfrak{g}\right)$ scales polynomially with the system size there exist polynomial-depth ansatzes that do not have poor local minima, which together with the large gradients would imply efficient trainability of these loss functions via gradient descent. Conditioned on this conjecture there have been results demonstrating expressivity separations in quantum machine learning where the QNN is efficiently trainable through a simple application of gradient descent~\citep{anschuetzgao2022,anschuetz2024arbitrary}.

Though the LASA framework gives sufficient conditions for loss functions to have large gradients, it is known that they are not necessary. One example of this was demonstrated in \citet{diaz2023showcasing}, where it was shown that parameterized matchgate circuits with an objective observable given by constant-degree polynomials in Majorana fermions are efficiently trainable though they are not a part of the LASA setting. We claim that both settings are special cases of a Jordan algebraic understanding of variational loss functions. In preparation of discussing this connection we now review Jordan algebras.

\subsection{Jordan Algebras}\label{sec:jordan_algebra_deets}

\begin{table*}
    \begin{center}
        \begin{tabular}{c|c}
            $\mathbb{N}_0$ & Natural numbers including $0$\\\hline
            $\mathbb{N}_1$ & Natural numbers excluding $0$\\\hline
            $\left[n\right]$ & Natural numbers from $1$ through $n$\\\hline
            $\mathcal{H}^n\left(\mathbb{F}\right)$ & Jordan algebra of $n\times n$ Hermitian matrices over $\mathbb{F}$\\\hline
            $\mathbb{F}$ & Field, here one of $\mathbb{R},\mathbb{C},\mathbb{H}$\\\hline
            $\beta$ & $\beta=1,2,4$ when associated field $\mathbb{F}=\mathbb{R},\mathbb{C},\mathbb{H}$, respectively\\\hline
            $\dim_{\mathbb{F}}\left(\cdot\right)$ & Dimension of $\cdot$ as a vector space over $\mathbb{F}$\\\hline
            $\mathcal{N}^\beta\left(0,1\right)$ & Standard normal distribution over $\mathbb{F}$ (given by $\beta$)\\\hline
            $\mathcal{W}_n^\beta\left(r,\bm{\varSigma}\right)$ & \shortstack{$\beta$-Wishart distribution of $n\times n$ matrices with $r$ degrees of freedom\\and scale matrix $\bm{\varSigma}$}\\\hline
            $\rightsquigarrow$ & Convergence in distribution\\\hline
            $\xrightarrow{p}$ & Convergence in probability\\\hline
            $\odot$ & Hadamard product\\\hline
            $\mathcal{A}_\alpha$ & Simple components of semisimple Jordan algebra $\mathcal{A}$\\\hline
            $G_\alpha$ & Lie group isomorphic to a connected component of $\operatorname{Aut}\left(\mathcal{A}_\alpha\right)$\\\hline
            $\mathfrak{g}_\alpha$ & Lie algebra generating $G_\alpha$\\\hline
            $\cdot^\alpha$ & Defining representation of the projection of $\cdot$ into $\mathcal{A}_\alpha$\\\hline
            $N_\alpha$ & Dimension of vector space on which the defining representation of $\mathcal{A}$ acts\\\hline
            $N$ & Sum of $N_\alpha$\\\hline
            $\Tr_\alpha\left(\cdot\right)$ & Trace of $\cdot$ in the defining representation of $\mathcal{A}_\alpha$\\\hline
            $\Tr\left(\cdot\right)$ & Trace of $\cdot$ according to its representation in $\mathcal{H}^N\left(\mathbb{C}\right)\supseteq\mathcal{A}$\\\hline
            $\left\lVert\cdot\right\rVert_{\text{op}}$ & Operator norm of $\cdot\in\mathcal{A}_\alpha$ in its defining representation\\\hline
            $\left\lVert\cdot\right\rVert_\ast$ & Trace (nuclear) norm of $\cdot\in\mathcal{A}_\alpha$ in its defining representation\\\hline
            $\left\lVert\cdot\right\rVert_{\textrm{F}}$ & Frobenius norm of $\cdot\in\mathcal{A}_\alpha$ in its defining representation\\\hline
            $\operatorname{O}\left(\operatorname{poly}\left(N\right)\right)$ & $\operatorname{O}\left(N^\kappa\right)$ for some constant $\kappa>0$\\\hline
            WLOG & without loss of generality\\\hline
            w.h.p. & with high probability
        \end{tabular}
        \caption{\textbf{Table of notation.} Notation used in the presentation of our results are given in the left column with corresponding meaning given in the right column. We also note here that exponents written as decimals throughout our results are chosen somewhat arbitrarily, i.e., can be improved by any sufficiently small constant $\epsilon>0$ in the exponent. Finally, we note that we will often forgo writing ``\ldots a sequence of [ansatzes, loss functions, observables, QNNs, \ldots]\ldots'' when describing our asymptotic convergence results for brevity.\label{tab:notation}}
    \end{center}
\end{table*}
A \emph{Jordan algebra} over the reals is formally a real vector space $V$ with a commutative multiplication operation $\circ$ acting on $u,v\in V$ satisfying the \emph{Jordan identity}:
\begin{equation}
    u\circ\left(\left(u\circ u\right)\circ v\right)=\left(u\circ u\right)\circ\left(u\circ v\right),
\end{equation}
which ensures the associativity of the power. A simple example of a Jordan algebra over the reals is the real algebra $\mathcal{H}^N\left(\mathbb{C}\right)$ of $N\times N$ complex-valued Hermitian matrices with $\circ$ given by half the anticommutator. In particular, for any finite-dimensional quantum system both $\bm{O}$ and the inputs $\bm{\rho}$ in Eq.~\eqref{eq:typ_vqe_loss} belong to $\mathcal{H}^N\left(\mathbb{C}\right)$. We emphasize that though this algebra is typically written in terms of complex matrices it is still a \emph{real} Jordan algebra. This is because, for instance, $\ci\not\in\mathbb{R}$ multiplying a Hermitian matrix is no longer Hermitian. In fact, the Jordan algebra of Hermitian matrices is \emph{Euclidean} (or \emph{formally real}) as it satisfies the defining property~\citep{Koecher19996}:
\begin{equation}
    \forall A,B\in\mathcal{A},\,A\circ A+B\circ B=0\iff A=B=0.
\end{equation}
It is apparent that all subalgebras of a formally real algebra are also formally real.

The real vector space $V$ with which a Jordan algebra is associated also has a natural linear transformation $L\left(u\right)$:
\begin{equation}\label{eq:l_rep}
    L\left(u\right)v\equiv u\circ v,
\end{equation}
which can be viewed as the Jordan algebraic analogue of the adjoint representation of Lie algebras. This linear transformation gives rise to the \emph{canonical trace form}:
\begin{equation}
    \tau\left(u,v\right)\equiv\Tr\left(L\left(u\circ v\right)\right).
\end{equation}
When $\tau$ is nonsingular we call the associated Jordan algebra $\mathcal{A}$ \emph{semisimple}. As an example, for the algebra $\mathcal{H}^N\left(\mathbb{F}\right)$ of Hermitian matrices over the field $\mathbb{F}$, $\tau\left(u,v\right)$ is just the Frobenius inner product between $u$ and $v$. Just as is the case for semisimple Lie algebras, all symmetric bilinear forms on a semisimple Jordan algebra are identical up to an overall scaling. Specifically, for any symmetric bilinear form $\sigma$, there exists an element $z$ in the center of $\mathcal{A}$ such that (Theorem~10, \citet{Koecher19993}):
\begin{equation}
    \sigma\left(u,v\right)=\tau\left(z\circ u,v\right).
\end{equation}
For $\mathcal{A}=\mathcal{H}^N\left(\mathbb{C}\right)$ in the defining representation the center is just real multiples of the identity matrix, i.e., there exists a real $I\geq 0$ such that:
\begin{equation}\label{eq:universality_trace_form}
    I\sigma\left(u,v\right)=\tau\left(u,v\right).
\end{equation}
We call $I$ the \emph{index} of $\sigma$ in analogy with \citet{fontana2023adjoint} for Lie algebras.

Though perhaps not as famous as the classification of compact Lie groups, the semisimple Euclidean Jordan algebras have also been classified.
\begin{theorem}[Classification of semisimple Euclidean Jordan algebras~\citep{Koecher19997}]\label{thm:class_euc_jas}
    Any semisimple Euclidean Jordan algebra is isomorphic to a direct sum of the simple Euclidean Jordan algebras:
    \begin{itemize}
        \item $\mathcal{L}_N$ for $N\geq 1$: the \emph{spin factor}, with vector space equal to $\mathbb{R}^N$ and $\circ$ given by the operation
        \begin{equation}
            \bm{x}\circ\bm{y}=\left(\bm{x}\cdot\bm{y},x_1\bm{\overline{y}}+y_1\bm{\overline{x}}\right),
        \end{equation}
        where $\overline{\cdot}$ denotes $\cdot$ with the first coordinate projected out;
        \item $\mathcal{H}^N\left(\mathbb{R}\right)$ for $N\geq 3$: $N\times N$ symmetric matrices over $\mathbb{R}$, with $\circ$ half the anticommutator;
        \item $\mathcal{H}^N\left(\mathbb{C}\right)$ for $N\geq 3$: $N\times N$ Hermitian matrices over $\mathbb{C}$, with $\circ$ half the anticommutator;
        \item $\mathcal{H}^N\left(\mathbb{H}\right)$ for $N\geq 3$: $N\times N$ Hermitian matrices over $\mathbb{H}$, with $\circ$ half the anticommutator;
        \item $\mathcal{H}^3\left(\mathbb{O}\right)$: $3\times 3$ Hermitian matrices over $\mathbb{O}$, with $\circ$ half the anticommutator.
    \end{itemize}
\end{theorem}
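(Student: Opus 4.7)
The plan is to follow the classical Jordan--von Neumann--Wigner strategy, reducing the semisimple case to the simple one via a Peirce decomposition, and then classifying the simple Euclidean algebras by a ``coordinatization'' over a normed division algebra. First I would show that in any Euclidean Jordan algebra $\mathcal{A}$, every self-adjoint element has a spectral decomposition into mutually orthogonal primitive idempotents (this uses the formally real property essentially: $a \circ a + b \circ b = 0 \Rightarrow a = b = 0$ is exactly what makes the Jordan spectral theorem well-posed). The unit $e$ therefore decomposes as $e = e_1 + \cdots + e_r$, a \emph{Jordan frame}, and all such frames have the same cardinality $r$, called the rank. Using semisimplicity one shows that the center of $\mathcal{A}$ is spanned by central idempotents, and grouping the $e_i$ according to these yields $\mathcal{A} \cong \bigoplus_\alpha \mathcal{A}_\alpha$ with each $\mathcal{A}_\alpha$ simple.

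For a simple $\mathcal{A}$, I would invoke the Peirce decomposition with respect to a Jordan frame: $\mathcal{A} = \bigoplus_{i \leq j} \mathcal{A}_{ij}$, where $\mathcal{A}_{ii} = \mathbb{R} e_i$ by primitivity and $\mathcal{A}_{ij}$ for $i < j$ is the simultaneous $\tfrac{1}{2}$-eigenspace of $L(e_i)$ and $L(e_j)$. A standard symmetry argument using simplicity shows $\dim(\mathcal{A}_{ij}) = d$ is the same for all $i \neq j$, defining the \emph{degree}. One then extracts a bilinear product on $\mathcal{A}_{12}$ by setting $x \cdot y \in \mathcal{A}_{11}$ equal to the projection of $x \circ y$, turning $\mathcal{A}_{12}$ into a real algebra with involution, and the canonical trace form $\tau$ restricted there becomes positive definite precisely because $\mathcal{A}$ is Euclidean.

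The heart of the argument is then to show that this induced product is a composition algebra and apply Hurwitz's theorem, which forces $\mathbb{F} \in \{\mathbb{R}, \mathbb{C}, \mathbb{H}, \mathbb{O}\}$, i.e., $d \in \{1, 2, 4, 8\}$. For $r \geq 3$, using a third idempotent $e_3$ and the Jordan identity on triples in $\mathcal{A}_{12}, \mathcal{A}_{23}, \mathcal{A}_{13}$ one shows the product is associative, so $\mathbb{F} \in \{\mathbb{R}, \mathbb{C}, \mathbb{H}\}$; a coordinatization map $e_i \mapsto E_{ii}$, $\mathcal{A}_{ij} \mapsto$ off-diagonal blocks then yields an explicit isomorphism $\mathcal{A} \cong \mathcal{H}^r(\mathbb{F})$. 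For $r = 3$ no associativity constraint among three independent indices is required, so the octonionic case $\mathcal{H}^3(\mathbb{O})$ also appears; for $r \geq 4$ the required associativity rules it out. For $r = 2$ no triangulation is available at all, and direct computation shows $\mathcal{A}$ is a spin factor $\mathcal{L}_N$ with $N = d + 1$, where the Euclidean condition again ensures the defining bilinear form is positive definite.

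The main obstacle will be the Hurwitz step: rigorously verifying that the induced multiplication on $\mathcal{A}_{12}$, together with the trace-form norm, satisfies the composition identity $\|xy\|^2 = \|x\|^2 \|y\|^2$, so that the classification of real normed division algebras applies cleanly. The positivity of this norm --- which distinguishes the Euclidean classification from the more permissive classification of semisimple Jordan algebras over $\mathbb{R}$ where pseudo-Euclidean cousins appear --- is the one place where the formally real hypothesis is indispensable rather than cosmetic. Once this is in hand, the Jacobson coordinatization theorem assembles the pieces into the stated list, and completeness of the list follows by exhaustion of $(r, d)$ pairs.
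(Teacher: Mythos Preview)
The paper does not prove this theorem at all: it is stated as a classical result and attributed to \citet{Koecher19997}, with no argument given. Your sketch is the standard Jordan--von Neumann--Wigner coordinatization proof and is correct in outline, so there is simply no ``paper's own proof'' to compare it against.

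One small slip: in the rank-$2$ case you write $N = d+1$ for the resulting spin factor $\mathcal{L}_N$, but the Peirce decomposition gives $\mathcal{A} = \mathbb{R}e_1 \oplus \mathbb{R}e_2 \oplus \mathcal{A}_{12}$ with $\dim \mathcal{A}_{12} = d$, so $\dim \mathcal{A} = d+2$ and hence $N = d+2$ under the paper's convention that $\mathcal{L}_N$ has underlying vector space $\mathbb{R}^N$. This does not affect the logic of the argument.
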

We use the term \emph{defining representation} to speak of the described matrix representations of these algebras. As the Hermitian octonion case is not an infinite family it is often called \emph{exceptional}. We here are interested in asymptotic sequences of Jordan algebras so we will not be considering the exceptional case.

Every Jordan algebra $\mathcal{A}$ has associated with it an automorphism group $\operatorname{Aut}\left(\mathcal{A}\right)$. As an example, for $\mathcal{H}^N\left(\mathbb{R}\right)$ in the defining representation this is just given by the action of conjugation (under the usual matrix multiplication) by orthogonal matrices. More generally, the nonexceptional simple cases have automorphism groups~\citep{orlitzky2023}:
\begin{itemize}
    \item $\operatorname{Aut}\left(\mathcal{L}_N\right)$: left-action of $\left\{1\right\}\times\operatorname{O}\left(N-1\right)$;
    \item $\operatorname{Aut}\left(\mathcal{H}^N\left(\mathbb{R}\right)\right)$: conjugation action of $\operatorname{PO}\left(N\right)$;
    \item $\operatorname{Aut}\left(\mathcal{H}^N\left(\mathbb{C}\right)\right)$: disjoint union of the conjugation action of $\operatorname{PU}\left(N\right)$, and the transpose followed by the conjugation action of $\operatorname{PU}\left(N\right)$;
    \item $\operatorname{Aut}\left(\mathcal{H}^N\left(\mathbb{H}\right)\right)$: conjugation action of $\operatorname{PSp}\left(N\right)$.
\end{itemize}
We are here primarily interested in the connected component $\operatorname{Aut}_1\left(\cdot\right)\subseteq\operatorname{Aut}\left(\cdot\right)$ containing the identity transformation, so we will only concern ourselves with the path-connected automorphism subgroups:
\begin{itemize}
    \item $\operatorname{SO}\left(N-1\right)\cong\operatorname{Aut}_1\left(\mathcal{L}_N\right)\subset\operatorname{Aut}\left(\mathcal{L}_N\right)$;
    \item $\operatorname{SO}\left(N\right)\cong\operatorname{Aut}_1\left(\mathcal{H}^N\left(\mathbb{R}\right)\right)\subseteq\operatorname{Aut}\left(\mathcal{H}^N\left(\mathbb{R}\right)\right)$;
    \item $\operatorname{SU}\left(N\right)\cong\operatorname{Aut}_1\left(\mathcal{H}^N\left(\mathbb{C}\right)\right)\subset\operatorname{Aut}\left(\mathcal{H}^N\left(\mathbb{C}\right)\right)$;
    \item $\operatorname{Sp}\left(N\right)\cong\operatorname{Aut}_1\left(\mathcal{H}^N\left(\mathbb{H}\right)\right)\subset\operatorname{Aut}\left(\mathcal{H}^N\left(\mathbb{H}\right)\right)$.
\end{itemize}
Surprisingly, the automorphism groups of semisimple Jordan algebras are also classified. Given a decomposition of a semisimple Euclidean Jordan algebra $\mathcal{A}$ into simple components:
\begin{equation}
    \mathcal{A}\cong\bigoplus_\alpha\mathcal{A}_\alpha,
\end{equation}
the connected component of $\operatorname{Aut}_1\left(\mathcal{A}\right)\subseteq\operatorname{Aut}\left(\mathcal{A}\right)$ containing the identity is isomorphic to the direct product (Theorem~10, \citet{Koecher19994}):
\begin{equation}\label{eq:aut_1_semisimple}
    \operatorname{Aut}_1\left(\mathcal{A}\right)\cong\bigtimes_\alpha\operatorname{Aut}_1\left(\mathcal{A}_\alpha\right).
\end{equation}

\subsection{\texorpdfstring{$\epsilon$-Approximate $t$-Designs Over $\operatorname{Aut}_1\left(\mathcal{A}\right)$}{Epsilon-Approximate t-Designs Over Aut\_1(A)}}

In order to discuss the structure of the loss function in any detail we will need to consider some choice of randomness over loss functions. To achieve this we will use \emph{$\epsilon$-approximate $t$-designs} over $\operatorname{Aut}_1\left(\mathcal{A}\right)$. Our use of these designs formalizes the notion of approximate ``independence'' or ``uniform initialization'' of the ansatz with respect to the eigenbasis of a given objective observable, as when $\epsilon\to 0$ and $t\to\infty$ the ansatz is chosen in a completely group-invariant way over $\operatorname{Aut}_1\left(\mathcal{A}\right)$. By Eq.~\eqref{eq:aut_1_semisimple} these designs are (up to isomorphism) direct products of $\epsilon$-approximate $t$-designs over $\operatorname{SO}\left(N\right)$, $\operatorname{SU}\left(N\right)$, and $\operatorname{Sp}\left(N\right)$.

Before defining $\epsilon$-approximate $t$-designs we define the Haar measure on compact Lie groups such as $\operatorname{Aut}_1\left(\mathcal{A}\right)$. This is the unique group-invariant normalized measure on these compact Lie groups.
\begin{definition}[Haar measure on $G$~\citep{9006cc9e-2dcc-3fd8-aada-e4af19b6e225}]
    Let $G$ be a compact Lie group. The unique measure satisfying:
    \begin{equation}
        \int_G\dd{\mu}=1
    \end{equation}
    as well as
    \begin{equation}
        \dd{\mu}=\dd{\left(g\mu\right)}
    \end{equation}
    for all $g\in G$ is the \emph{Haar measure on $G$}. The existence and uniqueness of this measure follows from \citet{9006cc9e-2dcc-3fd8-aada-e4af19b6e225}.
\end{definition}

We now define $\epsilon$-approximate $t$-designs. We will here use the ``trace norm definition'' out of convenience, though all of the commonly used definitions are roughly equivalent~\citep{harrow2023approximate}.
\begin{definition}[$\epsilon$-approximate $t$-designs over $G$]
    Let $G$ be a compact Lie group with defining representation over an $N$-dimensional space, and $\mu$ the Haar measure over $G$. A measure $\nu$ satisfying:
    \begin{equation}
        \sum_m\left\lvert\mathbb{E}_{U\sim\mu}\left[m\left(U\right)\right]-\mathbb{E}_{U\sim\nu}\left[m\left(U\right)\right]\right\rvert\leq\epsilon,
    \end{equation}
    where the sum is over all degree-$\left(t,t\right)$ monomials $m\left(U\right)$ in the entries of $U$ and $U^\ast$, is an \emph{$\epsilon$-approximate $t$-design over $G$}.
\end{definition}

\subsection{Wishart Matrices}\label{sec:wishart_background}

Our main results will be given in terms of Wishart-distributed random matrices, so before proceeding we give a brief review of this distribution. We will use $\mathcal{W}_n^\beta\left(r,\bm{\Sigma}\right)$ to denote the \emph{$\beta$-Wishart distribution}~\citep{10.1063/1.4818304}, where:
\begin{itemize}
    \item $\beta=1,2,4$ indicates it is over a field $\mathbb{R},\mathbb{C},\mathbb{H}$, respectively;
    \item $r\in\mathbb{N}_1$ is the \emph{degrees of freedom} parameter of the distribution;
    \item $\bm{\varSigma}\in\mathcal{H}^n\left(\mathbb{R}\right)$ is the symmetric, real-valued, positive-definite \emph{scale matrix} parameter of the distribution.
\end{itemize}
When we do not specify $\beta$ we implicitly are referring to the usual Wishart distribution where $\beta=1$. The $\beta$-Wishart distribution is over positive semidefinite matrices of rank $r$, constructed by drawing i.i.d.\ standard Gaussian entries over $\mathbb{F}$:
\begin{equation}
    X_{i,j}\sim\mathcal{N}^\beta\left(0,1\right)
\end{equation}
for $i\in\left[n\right]$ and $j\in\left[r\right]$, and considering the distribution of:
\begin{equation}
    \bm{W}=\sqrt{\bm{\varSigma}}\bm{X}\bm{X}^\dagger\sqrt{\bm{\varSigma}}.
\end{equation}
Here there exists the so-called \emph{Bartlett decomposition} for $\bm{W}\sim\mathcal{W}_n^\beta\left(r,\bm{I}_n\right)$:
\begin{equation}
    \bm{W}\sim\sqrt{\bm{\varSigma}}\bm{L}\bm{L}^\dagger\sqrt{\bm{\varSigma}},
\end{equation}
where $\bm{L}$ is lower-triangular. When $r\geq n$, the the entries below the diagonal of $\bm{L}$ are i.i.d.\ $\mathcal{N}^\beta\left(0,1\right)$-distributed and the diagonal entries are i.i.d.\ distributed as~\citep{rouault2007asymptotic}:
\begin{equation}
    \beta L_{i,i}^2\sim\chi^2\left(\beta\left(r-i+1\right)\right),
\end{equation}
i.e., the $L_{i,i}$ are $\chi$-distributed up to an overall scaling of $\beta^{-\frac{1}{2}}$. When $r<n$, $\bm{L}$ is $n\times r$ with first $r$ rows as above and all other entries i.i.d.\ according to $\mathcal{N}^\beta\left(0,1\right)$~\citep{10.1214/aos/1065705118,doi:10.1080/03610920903259666,YU2014121}. This completely characterizes the marginal distribution of the entries of any $\bm{W}\sim\mathcal{W}_n^\beta\left(r,\bm{I}_n\right)$.

Our final results will be written in terms of real Wishart matrices $\bm{W}$, i.e., those with $\beta=1$. When we analyze the asymptotic density of local minima of a JAWS it will turn out we will need to consider the asymptotic spectrum of a real Wishart matrix, which we now discuss. When the scale matrix $\bm{\varSigma}$ of $\bm{W}$ is the identity, as $n,r\to\infty$ with $\frac{n}{r}=\gamma$ held constant, the spectrum of $r^{-1}\bm{W}$ is known to almost surely converge weakly to a fixed distribution~\citep{marchenko_1967}. This fixed distribution is the \emph{Marčenko--Pastur distribution}, given by
\begin{equation}\label{eq:marc_past}
    \dv{\mu_{\text{MP}}^\gamma}{\lambda}=\left(1-\gamma^{-1}\right)\delta\left(\lambda\right)\bm{1}\left\{\gamma\geq 1\right\}+\frac{\sqrt{\left(\gamma_+-\lambda\right)\left(\lambda-\gamma_-\right)}}{2\cpi\gamma\lambda}\bm{1}\left\{\gamma_-\leq\lambda\leq\gamma_+\right\},
\end{equation}
where
\begin{equation}
    \gamma_\pm\equiv\left(1\pm\sqrt{\gamma}\right)^2
\end{equation}
and $\bm{1}$ is the indicator function. This distribution is illustrated in Figure~\ref{fig:mp_dist}.
\begin{figure}
    \begin{center}
        \includegraphics[width=0.5\linewidth]{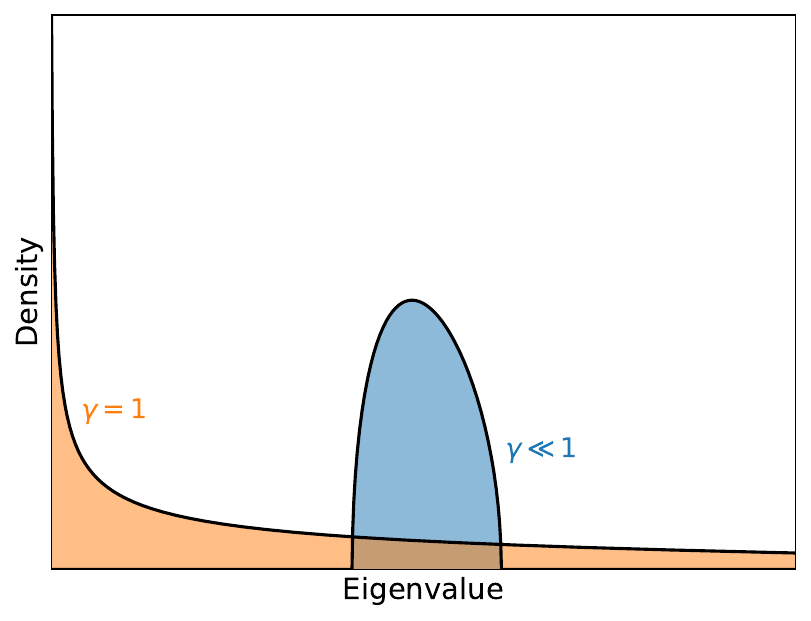}
        \caption{\textbf{Marčenko--Pastur distribution.} The density of the Marčenko--Pastur distribution---the asymptotic empirical eigenvalue distribution of normalized Wishart matrices---in the regime where \emph{$\gamma\ll 1$} and where \emph{$\gamma=1$}. At $\gamma=1$ the associated Wishart matrix transitions from being full-rank to being low-rank.\label{fig:mp_dist}}
    \end{center}
\end{figure}

\subsection{Convergence of Random Variables}\label{sec:conv_rvs}

We finally discuss in more detail the various of notions of convergence of random variables that we use throughout our paper. Given a sequence of a set $\left(X_i^N\right)_i$ of real-valued random variables as $N\to\infty$, we say $\left(X_i^N\right)_i$ \emph{weakly converges} (or \emph{converges in distribution}) to $\left(X_i^\infty\right)_i$ if the joint cumulative distribution function $F^N$ of the $X_i^N$ converges to the joint cumulative distribution function $F^\infty$ of the $X_i^\infty$ at every point at which $F^\infty$ is continuous. That is,
\begin{equation}
    \lim_{N\to\infty}F^N\left(\bm{x}\right)=F^\infty\left(\bm{x}\right)
\end{equation}
for all points $\bm{x}$ at which $F^\infty$ is continuous. We denote this at the level of random variables with the notation:
\begin{equation}
    \left(X_i^N\right)_i\rightsquigarrow\left(X_i^\infty\right)_i.
\end{equation}
One way to quantify the rate of this convergence is through the use of the \emph{L\'{e}vy--Prokhorov metric}. Letting $p^N$ and $p^\infty$ be the densities associated with $F^N$ and $F^\infty$, respectively, the L\'{e}vy--Prokhorov metric is given by:
\begin{equation}
    \pi\left(p^N,p^\infty\right)=\inf\left\{\epsilon>0\mid\forall A,\,F^N\left(A\right)\leq F^\infty\left(A^\epsilon\right)+\epsilon\right\},
\end{equation}
where here $A^\epsilon$ is an $\epsilon$-neighborhood in $\infty$-norm of $A$. $\pi\left(p^N,p^\infty\right)\to 0$ if and only if $\left(X_i^N\right)_i\rightsquigarrow\left(X_i^\infty\right)_i$.

Convergence in distribution is closely related to the pointwise convergence of probability densities. Indeed, the latter implies the former by Scheff\'{e}'s theorem~\citep{10.1214/aoms/1177730390}. The former implies the latter if the $p^N$ are equicontinuous and uniformly bounded as $N\to\infty$~\citep{10.1214/aos/1176346604}.

We finally discuss \emph{convergence in probability}. This is the statement that, for all $\epsilon>0$,
\begin{equation}
    \lim_{N\to\infty}\mathbb{P}\left(\left\lVert\bm{X}^N-\bm{X}^\infty\right\rVert_\infty>\epsilon\right)=0.
\end{equation}
We denote this convergence using the notation:
\begin{equation}
    \left(X_i^N\right)_i\xrightarrow{p}\left(X_i^\infty\right)_i.
\end{equation}
One way to quantify convergence in probability is through the \emph{Ky Fan metric}, which is given by:
\begin{equation}
    \alpha\left(p^N,p^\infty\right)=\inf\left\{\epsilon>0\mid\mathbb{P}\left(\left\lVert\bm{X}^N-\bm{X}^\infty\right\rVert_\infty>\epsilon\right)\leq\epsilon\right\}.
\end{equation}
$\alpha\left(p^N,p^\infty\right)\to 0$ if and only if $\left(X_i^N\right)_i\xrightarrow{p}\left(X_i^\infty\right)_i$. The Ky Fan metric upper bounds the L\'{e}vy--Prokhorov metric~\citep{10.1214/aoms/1177700153}, consistent with the notion of convergence in probability implying convergence in distribution.

% As the number of parameters may increase asymptotically there may not even be a well-defined asymptotic joint density. However, all of our asymptotic calculations rely only on the joint distribution of three things: the loss, gradient norm, and Hessian determinant, so this distinction is somewhat irrelevant.
We will in what follows often be loose with language and say two sequences of distributions converge to one another at a rate $K$, either weakly or in probability, rather than state convergence to a fixed distribution. This should be understood to mean that their distance in associated metric asymptotically vanishes as $K^{-1}$.

\section{Formal Discussion of the Main Results}\label{sec:main_results_app}

\subsection{The Jordan Algebraic Structure of Variational Loss Landscapes}\label{sec:jasa}

With these preliminaries in place we can write the variational loss function given in Eq.~\eqref{eq:typ_vqe_loss} as the canonical trace form on $\mathcal{H}^N\left(\mathbb{C}\right)$:
\begin{equation}\label{eq:alg_loss}
    \ell\left(\bm{\theta};\bm{\rho}\right)=\tau\left(\bm{\rho},T\left(\bm{\theta}\right)\bm{O}\right),
\end{equation}
% To see why $\mathcal{T}\subseteq\operatorname{Aut}_1\left(\mathcal{A}\right)$, note that all elements of $\bm{A}$ are of the form $\bm{U}_{\bm{\theta}}\bm{P}\bm{U}_{\bm{\theta}^\dagger}\bm{U}_{\bm{\theta}'}\bm{Q}\bm{U}_{\bm{\theta'}^\dagger}$, and an action under an element of $\mathcal{T}$ brings this to $\bm{W}\bm{U}_{\bm{\theta}}\bm{P}\bm{U}_{\bm{\theta}^\dagger}\bm{U}_{\bm{\theta}'}\bm{Q}\bm{U}_{\bm{\theta'}^\dagger}\bm{W}^\dagger=\bm{W}\bm{U}_{\bm{\theta}}\bm{P}\bm{U}_{\bm{\theta}^\dagger}\bm{W}^\dagger\bm{W}\bm{U}_{\bm{\theta}'}\bm{Q}\bm{U}_{\bm{\theta'}^\dagger}\bm{W}^\dagger$ for some $\bm{W}$. The result then follows from the closure of $\bm{G}$.
where here $\bm{\rho},\bm{O}\in\mathcal{H}^N\left(\mathbb{C}\right)$, and $T\left(\bm{\theta}\right)$ parameterizes some path-connected compact Lie group $\mathcal{T}$. Note that $\left\{T\left(\bm{\theta}\right)\bm{O}\right\}_{\bm{\theta}}$ generates a Jordan subalgebra $\mathcal{A}\subseteq\mathcal{H}^N\left(\mathbb{C}\right)$ such that $\bm{O}\in\mathcal{A}$ and $\mathcal{T}\subseteq\operatorname{Aut}_1\left(\mathcal{A}\right)$. As $T\left(\bm{\theta}\right)\bm{O}\in\mathcal{A}$ for all $\bm{\theta}$, the trace form is zero for components of $\bm{\rho}$ orthogonal to $\mathcal{A}$. Because of this we will often also consider $\bm{\rho}$ an element of $\mathcal{A}$ in a slight abuse of notation.

We call a variational loss function of this form a \emph{Jordan algebra-supported ansatz} (JASA), in analogy with the term \emph{Lie-algebra supported ansatz} (LASA) introduced in \citet{fontana2023adjoint}. However, whereas variational loss functions are LASAs only if $\ci \bm{O}$ belongs to the dynamical Lie algebra generating the ansatz, \emph{all} variational loss functions are JASAs (up to assuming the path-connectedness of $\mathcal{T}$). Indeed, in Appendix~\ref{sec:red_prev_cases_jordan_algebra} we give a direct mapping from both LASAs and the variational matchgate formalism of \citet{diaz2023showcasing} to JASAs and show that a mapping in the other direction is generally not possible.

We are often interested in the case where our variational loss landscape has some sort of structure, i.e., when $\mathcal{A}\cong\bigoplus_\alpha\mathcal{A}_\alpha\neq\mathcal{H}^N\left(\mathbb{C}\right)$. In Appendix~\ref{sec:spin_factor_same_loss_symm} we demonstrate that in the context of examining variational loss landscapes, the spin factor sectors ($\mathcal{A}_\alpha=\mathcal{L}_{N_\alpha}$) effectively reduce to real symmetric ($\mathcal{A}_\alpha=\mathcal{H}^{N_\alpha}\left(\mathbb{R}\right)$) sectors. Because of this, we will from here on out focus on the case when $\mathcal{A}_\alpha$ is the Jordan algebra of Hermitian matrices over a field, i.e., $\mathcal{A}_\alpha\cong\mathcal{H}^{N_\alpha}\left(\mathbb{F}_\alpha\right)$ for $\mathbb{F}_\alpha=\mathbb{R},\mathbb{C},\mathbb{H}$. We will also throughout use $\beta_\alpha=1,2,4$ to label these three cases, respectively, as is commonly done in the physics literature. We use $\Tr_\alpha\left(\cdot\right)$ to denote the trace of $\cdot$ in the defining representation of $\mathcal{A}_\alpha$ and $\Tr\left(\cdot\right)$ to denote the trace of $\cdot$ according to its representation in $\mathcal{H}^N\left(\mathbb{C}\right)\supseteq\mathcal{A}$.

Recalling from Appendix~\ref{sec:jordan_algebra_deets} the universality of the trace form and the direct product decomposition of $\operatorname{Aut}_1\left(\mathcal{A}\right)$, we can rewrite the loss function of Eq.~\eqref{eq:alg_loss} as:
\begin{equation}
    \ell\left(\bm{\theta};\bm{\rho}\right)=\sum_\alpha\ell^\alpha\left(\bm{\theta};\bm{\rho}\right)\equiv\sum_\alpha I_\alpha\tau^\alpha\left(\bm{\rho}^\alpha,T^\alpha\left(\bm{\theta}\right)\bm{O}^\alpha\right),
\end{equation}
where here $\cdot^\alpha$ is used to denote the component of $\cdot$ in $\mathcal{A}_\alpha$, $\tau^\alpha$ is the canonical trace form on $\mathcal{A}_\alpha$, and $I_\alpha>0$ is the constant due to Eq.~\eqref{eq:universality_trace_form}.

We now give an explicit form for $T^\alpha\left(\bm{\theta}\right)$. Recall that elements of $\operatorname{Aut}_1\left(\mathcal{A}_\alpha\right)$ are, in the defining representation of $\mathcal{A}_\alpha$, given by the conjugation action by elements of $\operatorname{SO}\left(N_\alpha\right)$, $\operatorname{SU}\left(N_\alpha\right)$, or $\operatorname{Sp}\left(N_\alpha\right)$ when $\mathbb{F}_\alpha=\mathbb{R},\mathbb{C},\mathbb{H}$, respectively. We let $G_\alpha$ denote the corresponding Lie group and $\mathfrak{g}_\alpha$ the associated Lie algebra. We also denote $G$ as the direct product over $G_\alpha$, and $\mathfrak{g}$ the corresponding Lie algebra. Following the typical structure of variational quantum algorithms~\citep{peruzzo2014variational} we will assume $T^\alpha\left(\bm{\theta}\right)$ in the defining representation corresponds to conjugation by:
\begin{equation}\label{eq:unitary_form}
    U_{\bm{g},h}^\alpha\left(\bm{\theta}\right)=g_0^{\alpha\dagger}\left(\prod_{i=1}^p g_i^\alpha\exp\left(\theta_i A_i^\alpha\right)g_i^{\alpha\dagger}\right)h^\alpha\in G_\alpha
\end{equation}
for some $g_i,h\in G$ and $A_i\in\mathfrak{g}$, where $\cdot^\alpha$ denotes the projection onto $G_\alpha$ or $\mathfrak{g}_\alpha$ appropriately. With this choice of ansatz it is also easy to see the derivatives of $\ell$ have algebraic interpretations. For instance, at $\bm{\theta}=\bm{0}$, $i\geq j$,
\begin{align}
    \partial_i\ell\left(\bm{\theta};\bm{\rho}\right)&=\sum_\alpha I_\alpha\tau^\alpha\left(g_0^\alpha\bm{\rho}^\alpha g_0^{\alpha\dagger},\left[g_i^\alpha A_i^\alpha g_i^{\alpha\dagger},h^\alpha \bm{O}^\alpha h^{\alpha\dagger}\right]\right),\\
    \partial_i\partial_j\ell\left(\bm{\theta};\bm{\rho}\right)&=\sum_\alpha I_\alpha\tau^\alpha\left(g_0^\alpha\bm{\rho}^\alpha g_0^{\alpha\dagger},\left[g_j^\alpha A_j^\alpha g_j^{\alpha\dagger},\left[g_i^\alpha A_i^\alpha g_i^{\alpha\dagger},h^\alpha \bm{O}^\alpha h^{\alpha\dagger}\right]\right]\right),
\end{align}
where the commutator action of the automorphism group of a Jordan algebra is canonically defined through its representation $L$ defined in Eq.~\eqref{eq:l_rep} (see Lemma~7 of \citet{Koecher19994}).

\subsection{Jordan Algebraic Wishart Systems}\label{sec:main_results_jaws}

We have demonstrated that all variational loss landscapes are Jordan algebra-supported ansatzes (JASAs). We are finally ready to discuss our main results, which give an explicit expression for the loss landscape of a JASA when its ansatz takes the form of Eq.~\eqref{eq:unitary_form}. To do this we first define a \emph{Jordan algebraic Wishart system} (JAWS), leaving for now ambiguous the connection to Wishart matrices.
\begin{definition}[Jordan algebraic Wishart system]
    Let $O$ belong to a Jordan algebra $\mathcal{A}\subseteq\mathcal{H}^N\left(\mathbb{C}\right)$ and let $\mathcal{T}$ be a path-connected subspace $\mathcal{T}\subseteq\operatorname{Aut}_1\left(\mathcal{A}\right)$. Let
    \begin{equation}
        \mathcal{A}\cong\bigoplus_\alpha\mathcal{A}_\alpha
    \end{equation}
    be the decomposition of $\mathcal{A}$ into simple Euclidean components as in Theorem~\ref{thm:class_euc_jas} with associated decomposition $\mathcal{T}\cong\bigtimes_\alpha\mathcal{T}_\alpha$ as in Eq.~\eqref{eq:aut_1_semisimple}. Let $\mathcal{G}_i^\alpha,\mathcal{H}^\alpha$ be independent distributions over $\mathcal{T}_\alpha$, and $\bm{A}=\left\{A_i\right\}_{i=1}^p$ elements of the Lie algebra of $\operatorname{Aut}_1\left(\mathcal{A}\right)$. We call
    \begin{equation}
        \mathcal{J}=\left(\mathcal{A},\mathcal{T},\bm{\mathcal{G}},\bm{\mathcal{H}},\bm{A},O\right)
    \end{equation}
    a \emph{Jordan algebraic Wishart system} (JAWS).
\end{definition}
Every JAWS has associated with it a loss function as discussed in Appendix~\ref{sec:jasa}. Using $\cdot^\alpha$ to label the projection of $\cdot$ into $\mathcal{A}_\alpha$ in its defining representation, the associated loss function takes the form:
\begin{equation}
    \ell\left(\bm{\theta};\bm{\rho}\right)=\sum_\alpha\ell^\alpha\left(\bm{\theta};\bm{\rho}\right)=\sum_\alpha I_\alpha\Tr_\alpha\left(\bm{\rho}^\alpha \bm{U}^\alpha\left(\bm{\theta}\right)\bm{O}^\alpha \bm{U}^\alpha\left(\bm{\theta}\right)^\dagger\right),
\end{equation}
where $\bm{U}^\alpha\left(\bm{\theta}\right)\in G_\alpha$ for all $\bm{\theta}$ and is of the form:
\begin{equation}
    \bm{U}^\alpha\left(\bm{\theta}\right)\equiv g_0^{\alpha\dagger}\left(\prod_{i=1}^p g_i^\alpha\exp\left(\theta_i A_i^\alpha\right)g_i^{\alpha\dagger}\right)h^\alpha.
\end{equation}
Here, $g_i^\alpha$ is the defining representation of an element drawn from from the distribution $\mathcal{G}_i^\alpha$, and $h^\alpha$ is similar for $\mathcal{H}^\alpha$.

Our main result is a concise, asymptotic description of $\ell\left(\bm{\theta};\bm{\rho}\right)$ when the ansatz is chosen ``sufficiently independently'' from $\bm{O}$ (up to respecting the algebra $\mathcal{A}$). More formally, the rate at which $\ell\left(\bm{\theta};\bm{\rho}\right)$ converges to its asymptotic limit depends on the parameters defined by the following assumption.
\begin{assumption}[$\epsilon$-approximate $t$-design of observable basis]\label{ass:t_design}
    Each $\mathcal{H}^\alpha$ is an $\epsilon$-approximate $t$-design over $\mathcal{T}_\alpha=\operatorname{Aut}_1\left(\mathcal{A}_\alpha\right)$.
\end{assumption}
If there is a sense of geometric locality in the system, i.e., if the reverse lightcone of $\bm{O}^\alpha$ is of the form:
\begin{equation}
    \bm{U}^\alpha\left(\bm{\theta}\right)\bm{O}^\alpha \bm{U}^\alpha\left(\bm{\theta}\right)^\dagger=\bm{\tilde{O}}^\alpha\otimes\bm{I}
\end{equation}
for some $\bm{\tilde{O}}^\alpha$ acting on a space of dimension at most $D\equiv\exp\left(\operatorname{O}\left(p^d\right)\right)$ for constant geometric dimension $d$, one can explicitly enforce this with only a modest overhead in circuit depth. Indeed, one may embed $\mathcal{A}_\alpha$ into qubits and use the construction of \citet{harrow2023approximate} to give a $d$-dimensional random circuit of depth $\operatorname{O}\left(\operatorname{poly}\left(t\right)p\right)$ that achieves this over $\operatorname{Aut}_1\left(\mathcal{H}^{2^n}\left(\mathbb{C}\right)\right)\supset\operatorname{Aut}_1\left(\mathcal{A}_\alpha\right)$ for some $n=\operatorname{O}\left(p^d\right)$ and $\epsilon=\exp\left(-\operatorname{\Omega}\left(p\right)\right)$.

Given Assumption~\ref{ass:t_design} we are able to prove our main result on the convergence of loss functions. We are able to prove our convergence for any $N$-dependent overall normalization of the loss function $\mathcal{N}$, the only requirement being that central moments of sufficiently large (constant) order have a finite limit as $N\to\infty$. This is always true for $\mathcal{N}=1$ and, depending on the specific choice of $t$-design, potentially holds for any $\mathcal{N}\leq\operatorname{O}\left(N^{1-\delta}\right)$ where $\delta>0$ is constant.
\begin{theorem}[Loss function distribution]\label{thm:loss_conv}
    Let $\mathcal{J}$ be a JAWS satisfying Assumption~\ref{ass:t_design} with loss function $\ell\left(\bm{\theta};\bm{\rho}\right)$ for $\bm{\rho}\in\mathcal{R}$. Further assume that
    \begin{equation}\label{eq:dim_inputs}
        \Tr\left(\left(\bm{\rho}^\alpha\right)^2\right)=\operatorname{\Omega}\left(\frac{1}{N_\alpha^{0.999}}\right).
    \end{equation}
    Fix $\bm{\theta}$. Assume $\mathcal{N}\leq\operatorname{O}\left(\min_\alpha N_\alpha^{0.99}\right)$ is such that there exists a constant $k^\ast$ such that the central moments of order $k>k^\ast$ of $\left(\mathcal{N}\ell\left(\bm{\theta};\bm{\rho}\right)\right)_{\bm{\rho}\in\mathcal{R}}$ are finite as all $\epsilon^{-1},t,N_\alpha\to\infty$. We have the convergence in joint distributions:
    \begin{equation}
        \left(\mathcal{N}\ell\left(\bm{\theta};\bm{\rho}\right)\right)_{\bm{\rho}\in\mathcal{R}}\rightsquigarrow\left(\mathcal{N}\hat{\ell}\left(\bm{\theta};\bm{\rho}\right)\right)_{\bm{\rho}\in\mathcal{R}}
    \end{equation}
    as all $\epsilon^{-1},t,N_\alpha\to\infty$, where
    \begin{equation}
        \hat{\ell}\left(\bm{\theta};\bm{\rho}\right)\equiv\sum_\alpha\hat{\ell}^\alpha\left(\bm{\theta};\bm{\rho}\right)\equiv\sum_\alpha\frac{I_\alpha\overline{o}^\alpha}{r_\alpha}\Tr_\alpha\left(\bm{\rho}^\alpha\bm{W}^\alpha\right),
    \end{equation}
    $\overline{o}^\alpha$ is the arithmetic mean eigenvalue of $\bm{O}^\alpha$ in the defining representation:
    \begin{equation}
        \overline{o}^\alpha=N_\alpha^{-1}\Tr_\alpha\left(\bm{O}^\alpha\right),
    \end{equation}
    and the $\bm{W}^\alpha$ are independent Wishart-distributed random matrices with $\beta_\alpha r_\alpha$ degrees of freedom, where
    \begin{equation}
        r_\alpha\equiv\frac{\left\lVert \bm{O}^\alpha\right\rVert_\ast^2}{\left\lVert \bm{O}^\alpha\right\rVert_{\textrm{F}}^2}.
    \end{equation}
    In particular, the distributions differ in L\'{e}vy--Prokhorov metric by
    \begin{equation}\label{eq:pi_def}
        \pi=\operatorname{O}\left(\frac{\log\log\left(\mathcal{N}^{-t}\epsilon^{-1}\right)}{\log\left(\mathcal{N}^{-t}\epsilon^{-1}\right)}+\frac{\log\left(t\right)}{\sqrt{t}}+\sum_\alpha\sqrt{\frac{\mathcal{N}\log\left(N_\alpha\right)}{N_\alpha^{1.001}}}\right)=\operatorname{o}\left(1\right).
    \end{equation}
\end{theorem}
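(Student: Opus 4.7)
The plan is to analyze each simple component $\ell^\alpha$ separately, approximate the random group elements by Gaussian matrix ensembles, and then invoke a Welch--Satterthwaite--type reduction to consolidate the resulting weighted Wishart structure into a single Wishart of the advertised rank. Since the distributions $\mathcal{H}^\alpha$ on the direct factors $\mathcal{T}_\alpha$ are independent, $\ell=\sum_\alpha \ell^\alpha$ is an independent sum, so joint convergence over $\mathcal{R}$ reduces to per-sector convergence plus a sum of Lévy--Prokhorov errors. Diagonalizing $\bm{O}^\alpha=\sum_k o_k^\alpha \bm{e}_k\bm{e}_k^\dagger$ in the defining representation rewrites
\begin{equation}
    \ell^\alpha(\bm{\theta};\bm{\rho})=I_\alpha\sum_k o_k^\alpha \bm{u}_k^\dagger \bm{\rho}^\alpha \bm{u}_k,\qquad \bm{u}_k=\bm{U}^\alpha(\bm{\theta})\bm{e}_k,
\end{equation}
as a weighted sum of Hermitian quadratic forms in the columns of the random group element.

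To quantify the Gaussian approximation of the $\bm{u}_k$, I would work in characteristic-function space. Taylor-expanding $\varphi(\bm{s})=\mathbb{E}\exp(\ci\sum_{\bm{\rho}} s_{\bm{\rho}} \mathcal{N}\ell^\alpha(\bm{\theta};\bm{\rho}))$ to order $t$ produces only monomials of degree $\leq t$ in the entries of $\bm{U}^\alpha$ and $\bm{U}^{\alpha\dagger}$, so Assumption~\ref{ass:t_design} lets me replace $\mathcal{H}^\alpha$ by Haar measure at cost $\epsilon\mathcal{N}^t$ per coefficient. Weingarten calculus then shows that Haar moments over each of $\operatorname{SO},\operatorname{SU},\operatorname{Sp}$ agree with those of a standard $\mathcal{N}^{\beta_\alpha}(0,N_\alpha^{-1})$ Gaussian ensemble up to factors polynomial in $t$ and $N_\alpha^{-1}$. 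The tail of the Taylor expansion is controlled using the hypothesized finiteness of central moments of order $\geq k^\ast$, and converting the total characteristic-function error to a Lévy--Prokhorov bound via a standard smoothing inequality produces the first and third terms in Eq.~\eqref{eq:pi_def}.

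After this Gaussian replacement, $\ell^\alpha$ takes the form of a rank-$N_\alpha$ weighted Wishart $\tfrac{I_\alpha}{N_\alpha}\Tr_\alpha(\bm{\rho}^\alpha \bm{Y}^\alpha \operatorname{diag}(o_k^\alpha)(\bm{Y}^\alpha)^\dagger)$, which must be collapsed to a rank-$r_\alpha$ unweighted Wishart. The choices $r_\alpha=\Tr_\alpha(\bm{O}^\alpha)^2/\Tr_\alpha((\bm{O}^\alpha)^2)$ and amplitude $I_\alpha\overline{o}^\alpha/r_\alpha$ are forced by matching the first two joint cumulants in $\bm{\rho}$, and higher joint cumulants can be bounded using the Bartlett decomposition from Appendix~\ref{sec:wishart_background} together with a Berry--Esseen--type inequality applied jointly over the family $\{\ell^\alpha(\bm{\theta};\bm{\rho})\}_{\bm{\rho}\in\mathcal{R}}$; this is where the $\log(t)/\sqrt{t}$ term of Eq.~\eqref{eq:pi_def} comes from. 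The main obstacle is this final matrix-valued Welch--Satterthwaite step: in contrast to its scalar analogue, the remainder depends on joint cumulants across the entire data set $\mathcal{R}$, which must be controlled uniformly using the hypothesis $\Tr((\bm{\rho}^\alpha)^2)=\operatorname{\Omega}(N_\alpha^{-0.999})$ to prevent the small-$\bm{\rho}^\alpha$ directions from destabilizing the approximation as $N_\alpha\to\infty$. Summing the three sources of error across the $\operatorname{O}(1)$-many sectors then recovers the Lévy--Prokhorov bound in Eq.~\eqref{eq:pi_def}.
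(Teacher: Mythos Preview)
Your three-step outline---(i) $\epsilon$-approximate $t$-design $\to$ Haar, (ii) Haar $\to$ Gaussian, (iii) weighted Wishart $\to$ single Wishart---matches the paper's structure exactly, and step~(i) is essentially how the paper proceeds (via moment-matching and a characteristic-function smoothing inequality). However, you have misattributed where the error terms and hypotheses enter, and steps~(ii) and~(iii) differ from the paper in ways that would need more care.

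First, the $\log(t)/\sqrt{t}$ term does \emph{not} come from the Welch--Satterthwaite step; it comes from step~(i), the $t$-design reduction, where Taylor-truncating the characteristic function at order~$t$ leaves a tail controlled by the sub-Gaussian bound on the central moments (this is the paper's Corollary~\ref{cor:lp_bound_moments}). Second, the purity hypothesis $\Tr((\bm{\rho}^\alpha)^2)=\Omega(N_\alpha^{-0.999})$ is not used in the Welch--Satterthwaite reduction at all: it enters in step~(ii), where the paper does \emph{not} use Weingarten calculus but instead invokes a Jiang-type entrywise coupling of Haar to Gaussian (Lemma~\ref{lem:conv_haar_to_gaussian}). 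That coupling only controls at most $D=o(N/\log N)$ columns of the Haar matrix, so one first needs $\bm{\rho}^\alpha$ to have a rank-$D$ approximation to trace-norm error $O(1/D)$---which is exactly what the purity lower bound guarantees. This is the origin of the $\sqrt{\mathcal{N}\log(N_\alpha)/N_\alpha^{1.001}}$ term, and it is a Ky Fan (probability) bound, not a moment bound; your Weingarten route would not produce the $\log(N_\alpha)$ factor in that form.

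For step~(iii), the paper does not match cumulants or use a Berry--Esseen argument. It constructs the explicit surrogate $\bm{\tilde{O}}=k\sum_{\mu<r}\ket{\mu}\bra{\mu}$ with the same first two traces as $\bm{O}$, and then shows directly via a Bernstein-type concentration inequality for sub-exponential sums (Lemma~\ref{lem:iso_o}) that $\Tr(\bm{M}\bm{X}(\bm{O}-\bm{\tilde{O}})\bm{X}^\dagger)$ is small in Ky Fan metric. This contributes an $\alpha_4$ which is subleading to the three displayed terms in~\eqref{eq:pi_def}. Your Berry--Esseen approach to this step is not obviously wrong, but since the degrees-of-freedom parameter $r_\alpha$ is a property of $\bm{O}^\alpha$ and not of $t$, it cannot be the source of the $\log(t)/\sqrt{t}$ term.
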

In Appendix~\ref{sec:red_low_purity_inps} we show that Eq.~\eqref{eq:dim_inputs} can effectively be taken WLOG. This is because all $\bm{\rho}^\alpha$ for which this is not true are such that $\ell^\alpha\left(\bm{\theta};\bm{\rho}\right)$ have asymptotically vanishing variance even when rescaled by the potentially exponentially large $\mathcal{N}=\operatorname{O}\left(N\right)$. This is why we do not take Eq.~\eqref{eq:dim_inputs} as a numbered Assumption. Examples of the loss density are illustrated in Figure~\ref{fig:loss_density}(a).

We can strengthen the statement of weak convergence in Theorem~\ref{thm:loss_conv} to one of pointwise convergence of densities given equicontinuity and boundedness of the appropriately normalized loss function density. Whether this is true depends on the specifics of the distributions $\bm{\mathcal{G}},\bm{\mathcal{H}}$, particularly whether the loss is equicontinuous and bounded at the same $\mathcal{N}=\sqrt{r_\alpha}$ scale as $\hat{\ell}^\alpha$. We give some standard examples where this is true in Appendix~\ref{sec:eq_of_dens}.
\begin{corollary}[Convergence of loss function densities]\label{cor:loss_dens_conv}
    Let $\mathcal{J},\ell,\hat{\ell}$ be as in Theorem~\ref{thm:loss_conv}. Assume the density of $\left(\sqrt{r_\alpha}\ell^\alpha\left(\bm{\theta};\bm{\rho}\right)\right)_{\bm{\rho}\in\mathcal{R},\alpha}$ is equicontinuous and bounded as all $\epsilon^{-1},t,N_\alpha\to\infty$. Then the joint density of $\left(\sqrt{r_\alpha}\ell^\alpha\left(\bm{\theta};\bm{\rho}\right)\right)_{\bm{\rho}\in\mathcal{R},\alpha}$ is pointwise equal to that of $\left(\sqrt{r_\alpha}\hat{\ell}^\alpha\left(\bm{\theta};\bm{\rho}\right)\right)_{\bm{\rho}\in\mathcal{R},\alpha}$ up to an additive error $\operatorname{O}\left(\pi\right)$.
\end{corollary}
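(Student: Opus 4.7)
The plan is to lift the weak convergence supplied by Theorem~\ref{thm:loss_conv} to pointwise density convergence via a standard smoothing (mollification) argument, exploiting the equicontinuity and uniform boundedness hypothesis. First, I apply Theorem~\ref{thm:loss_conv} sector-wise with the per-sector normalization $\mathcal{N}=\sqrt{r_\alpha}$ (noting $\sqrt{r_\alpha}\leq\sqrt{N_\alpha}$ is within the allowed range); the required uniform bound on central moments of $\left(\sqrt{r_\alpha}\ell^\alpha\left(\bm{\theta};\bm{\rho}\right)\right)_{\alpha,\bm{\rho}}$ of any fixed order follows from the hypothesis, since a uniformly bounded density combined with the a priori support bound on the rescaled loss (scaling as $\sqrt{r_\alpha}\left\lVert\bm{O}^\alpha\right\rVert_{\text{op}}$) controls moments of every fixed order uniformly as $N_\alpha\to\infty$. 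This yields weak joint convergence of $\left(\sqrt{r_\alpha}\ell^\alpha\right)_{\alpha,\bm{\rho}}$ to $\left(\sqrt{r_\alpha}\hat{\ell}^\alpha\right)_{\alpha,\bm{\rho}}$ at L\'{e}vy--Prokhorov distance $\operatorname{O}\left(\pi\right)$.

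Next, denote the joint densities by $p^N$ and $p^\infty$, and let $\phi_h$ be a smooth, compactly supported mollifier of width $h$. The equicontinuity of $p^N$ (with modulus $\omega$) and the known smoothness of the Wishart limit density $p^\infty$ give $\left\lvert p^N\left(\bm{x}\right)-\left(p^N\ast\phi_h\right)\left(\bm{x}\right)\right\rvert=\operatorname{O}\left(\omega\left(h\right)\right)$ pointwise, and similarly for $p^\infty$. For the difference between the convolved densities, one writes $\left(p^N\ast\phi_h\right)\left(\bm{x}\right)$ as an expectation of the bounded continuous test function $\phi_h\left(\bm{x}-\cdot\right)$ against the underlying law; the L\'{e}vy--Prokhorov bound at scale $\pi$, via a Strassen coupling together with the uniform bound $M$ on both densities, yields $\left\lvert\left(p^N\ast\phi_h\right)\left(\bm{x}\right)-\left(p^\infty\ast\phi_h\right)\left(\bm{x}\right)\right\rvert=\operatorname{O}\left(M\pi/h+\pi\right)$. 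Combining via the triangle inequality and optimizing $h$ against $\omega$ yields pointwise density closeness controlled by $\pi$.

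The main obstacle is making the rate match $\operatorname{O}\left(\pi\right)$ rather than, say, $\operatorname{O}\left(\sqrt{\pi}\right)$, which is what the naive smoothing trade-off with Lipschitz densities would give. Two routes are available: (i) use the higher smoothness of the Wishart limit (its characteristic function has polynomial decay), which allows mollifiers of arbitrary order and iteration of the smoothing estimate, or (ii) bypass L\'{e}vy--Prokhorov and work directly with the characteristic-function error bounds already established in the proof of Theorem~\ref{thm:loss_conv}, applying Fourier inversion together with the equicontinuity hypothesis to convert characteristic-function closeness into pointwise density closeness. Route (ii) is cleaner because the underlying proof of Theorem~\ref{thm:loss_conv} is itself stated to proceed by characteristic-function bounds, and the equicontinuity hypothesis is precisely what is needed to make the Fourier-inversion integral absolutely convergent at a rate matching the original error $\pi$.
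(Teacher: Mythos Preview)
The paper does not give a standalone proof of this corollary. It simply invokes, in the preliminaries (Appendix~\ref{sec:conv_rvs}), the classical fact that weak convergence together with equicontinuity and uniform boundedness of the densities implies pointwise convergence of densities, citing \citet{10.1214/aos/1176346604}. The corollary is then stated immediately after Theorem~\ref{thm:loss_conv} as a direct consequence, with no further argument supplied.

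Your mollification/smoothing argument is essentially the content of the cited Boos result, so in that sense you are reconstructing what the paper takes as a black box. Where you go beyond the paper is in worrying about the quantitative rate: the paper asserts the error is $\operatorname{O}\left(\pi\right)$ but does not justify this, whereas you correctly observe that the naive mollification trade-off with a Lipschitz modulus yields only $\operatorname{O}\left(\sqrt{\pi}\right)$ and that additional input (either higher smoothness of the Wishart density or direct use of the characteristic-function bounds from the proof of Theorem~\ref{thm:loss_conv}) is needed to recover $\operatorname{O}\left(\pi\right)$. This is a genuine gap in the paper's presentation that your proposal identifies and addresses; route~(ii) via Fourier inversion is indeed the cleaner fix, since the intermediate characteristic-function estimates are already available from Lemma~\ref{lem:char_func_bound_moments} and Corollary~\ref{cor:lp_bound_moments}.

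One minor point: your justification for the moment hypothesis of Theorem~\ref{thm:loss_conv} at $\mathcal{N}=\sqrt{r_\alpha}$ via ``bounded density plus support bound'' is not quite right as stated, since the support of $\sqrt{r_\alpha}\ell^\alpha$ can grow with $N_\alpha$. In practice the paper handles this by assuming what is needed on $\mathcal{N}$ directly in the hypotheses of Theorem~\ref{thm:loss_conv}; the equicontinuity and boundedness assumption in the corollary is an additional hypothesis on top of that, not a replacement for it.
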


As-written our result only gives the loss function distribution at a single point $\bm{\theta}$ in parameter space. However, any finite set $\varTheta$ of points in parameter space can be considered by taking
\begin{equation}
    \bm{\rho}\left(\bm{\theta}\right)=\bm{U}^\alpha\left(\bm{\theta}\right)\bm{\rho} \bm{U}^{\alpha\dagger}\left(\bm{\theta}\right)
\end{equation}
to be elements in an augmented set of input states:
\begin{equation}
    \mathcal{R}'=\mathcal{R}\times\varTheta.
\end{equation}
We can give a more concrete form for the parameter-dependence of the loss landscape by considering the joint distribution of $\ell$ with its derivatives. We begin with the gradient. In order to consider the joint distribution over what can potentially be many gradient components---a number growing with the number of parameters $p$---we take the following additional assumption on the growth of $t$ with $p$ so we are able to fully capture correlations between the derivatives.
\begin{assumption}[Scaling of parameter space dimension with $t$-design]\label{ass:asymp_growth_p}
    The number of trained parameters $p$ satisfies:\footnote{We here use a bound on $p^2$ so that later we can reuse this assumption for the Hessian; at this stage we really only need this bound for $p$.}
    \begin{equation}
        p^2\leq\operatorname{o}\left(\min\left(\frac{\log\left(\epsilon^{-1}\right)}{\log\log\left(\epsilon^{-1}\right)},\frac{\sqrt{t}}{\log\left(t\right)}\right)\right),
    \end{equation}
    where the $\mathcal{G}_i^\alpha$ are i.i.d.\ $\epsilon$-approximate $t$-designs over $\mathcal{T}_\alpha=\operatorname{Aut}_1\left(\mathcal{A}_\alpha\right)$, and
    \begin{equation}\label{eq:p_requirement}
        p\leq\operatorname{O}\left(\operatorname{poly}\left(\min_\alpha N_\alpha\right)\right).
    \end{equation}
\end{assumption}
It is possible to weaken this assumption and instead assume that the $\mathcal{G}_i^\alpha$ are $\epsilon$-approximate $2$-designs rather than $t$-designs, but this is at the expense of requiring $\mathcal{N}\leq\operatorname{o}\left(N^{\frac{2}{3}}\right)$ and only considering jointly at most $\sim\log\left(N\right)$ components of the gradient. We leave further details of this alternative setting to Appendix~\ref{sec:two_design_suffices}.

It will also be convenient for the rest of our discussion to consider a concrete set of $\bm{A}_i$.
\begin{assumption}[Concrete choice of $\bm{A}_i$]\label{ass:rank_1}
    For $\mathbb{F}_\alpha=\mathbb{C},\mathbb{H}$, the $\bm{A}_i^\alpha$ are rank-$1$. For $\mathbb{F}_\alpha=\mathbb{R}$, the $\bm{A}_i^\alpha$ are rank-$2$.
\end{assumption}
The distinction between the cases $\mathbb{F}_\alpha=\mathbb{C},\mathbb{H}$ and $\mathbb{F}_\alpha=\mathbb{R}$ is due to there being no rank $1$ operators in the defining representation of $\mathfrak{so}\left(N_\alpha\right)$. This choice of $\bm{A}_i^\alpha$ may seem unphysical due to their nonlocality, but one can emulate the behavior of high-rank (e.g., Pauli) rotations by considering a factor of $N_\alpha$ more layers $p_\alpha$ in a given simple sector---each, for instance, performing a rotation under each eigenvector of $\bm{A}_i^\alpha$---and then tying the associated parameters together. This breaks no other assumptions as taking $p_\alpha\to N_\alpha p_\alpha$ maintains Eq.~\eqref{eq:p_requirement}.

We then have the following theorem. It is stated assuming only a single input $\bm{\rho}$ which is rank-$1$ when projected to the defining representation of each simple sector to simplify the final result. However, in Appendix~\ref{sec:first_deriv_exp} we give a full, exact expression of the joint distribution in terms of Wishart matrix elements for any choice of $\mathcal{R}\ni\bm{\rho}$ and independent of the scaling of $N_\alpha-r_\alpha$.
\begin{theorem}[Gradient distribution]\label{thm:grad_conv}
    Consider the setting of Theorem~\ref{thm:loss_conv}, with the additional Assumptions~\ref{ass:asymp_growth_p} and~\ref{ass:rank_1}. Assume $\frac{\sigma_o^\alpha}{\overline{o}}$ is bounded---where $\sigma_o^\alpha$ is the standard deviation of the eigenvalues of $\bm{O}^\alpha$---and assume $\left\lvert\mathcal{R}\right\rvert=1$ with element $\bm{\rho}$ such that each $\bm{\rho}^\alpha$ is rank-$1$ in its defining representation. We have the convergence in joint distributions:
    \begin{equation}
        \left(\mathcal{N}\ell\left(\bm{\theta};\bm{\rho}\right),\mathcal{N}\partial_i\ell\left(\bm{\theta};\bm{\rho}\right)\right)_{i\in\left[p\right]}\rightsquigarrow\left(\mathcal{N}\hat{\ell}\left(\bm{\theta};\bm{\rho}\right),\mathcal{N}\hat{\ell}_{;i}\left(\bm{\theta};\bm{\rho}\right)\right)_{i\in\left[p\right]}
    \end{equation}
    as all $\epsilon^{-1},t,N_\alpha\to\infty$, where conditioned on
    \begin{equation}
        \hat{\ell}^\alpha\left(\bm{\theta};\bm{\rho}\right)=z^\alpha
    \end{equation}
    we have that
    \begin{equation}\label{eq:cond_dist_grad}
        \hat{\ell}_{;i\mid z^\alpha}^\alpha\left(\bm{\theta};\bm{\rho}\right)=\frac{2I_\alpha\sigma_o^\alpha\Tr_\alpha\left(\bm{\rho}^\alpha\right)}{N_\alpha}\sqrt{\frac{\beta_\alpha z^\alpha}{I_\alpha\overline{o}^\alpha}}\chi_i^\alpha G_i^\alpha,
    \end{equation}
    where
    \begin{equation}
        \hat{\ell}_{;i}\left(\bm{\theta};\bm{\rho}\right)\mid\left\{\hat{\ell}^\alpha\left(\bm{\theta},\bm{\rho}\right)=z^\alpha\right\}_\alpha\equiv\hat{\ell}_{;i\mid\bm{z}}\left(\bm{\theta};\bm{\rho}\right)\equiv\sum_\alpha\hat{\ell}_{;i\mid z^\alpha}^\alpha\left(\bm{\theta};\bm{\rho}\right).
    \end{equation}
    Here, the $\chi_i^\alpha$ are independent $\chi$-distributed random variables with $\max\left(2,\beta_\alpha\right)$ degrees of freedom and the $G_i^\alpha$ are i.i.d.\ standard normal random variables. In particular, the distributions differ in L\'{e}vy--Prokhorov metric by
    \begin{equation}
        \pi'=\operatorname{O}\left(\frac{p\log\log\left(\mathcal{N}^{-t}\epsilon^{-1}\right)}{\log\left(\mathcal{N}^{-t}\epsilon^{-1}\right)}+\frac{p\log\left(t\right)}{\sqrt{t}}+\sum_\alpha\sqrt{\frac{\mathcal{N}\log\left(N_\alpha\right)}{N_\alpha^{1.001}}}+\frac{\mathcal{N}^{\frac{1}{3}}\log\left(N\right)}{N^{\frac{5}{12}}}\right)=\operatorname{o}\left(1\right).
    \end{equation}
\end{theorem}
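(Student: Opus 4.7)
The plan is to bootstrap Theorem~\ref{thm:loss_conv} into a joint statement for the loss and its gradient. The key observation is that, writing $\bm{O}_{\bm{\theta}}^\alpha \equiv \bm{U}^\alpha\!\left(\bm{\theta}\right)\bm{O}^\alpha \bm{U}^\alpha\!\left(\bm{\theta}\right)^\dagger$, the $i$th gradient component decomposes as
\begin{equation}
\partial_i \ell^\alpha\!\left(\bm{\theta};\bm{\rho}\right) = I_\alpha \Tr_\alpha\!\left(\bm{\rho}^\alpha \bigl[g_i^\alpha A_i^\alpha g_i^{\alpha\dagger},\, \bm{O}_{\bm{\theta},i}^\alpha\bigr]\right),
\end{equation}
where $\bm{O}_{\bm{\theta},i}^\alpha$ is the observable conjugated by the factors of $\bm{U}^\alpha\!\left(\bm{\theta}\right)$ to the left of the $i$th gate. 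Since the $\mathcal{G}_i^\alpha$ are independent approximate $t$-designs (Assumption~\ref{ass:asymp_growth_p}) and independent of $\mathcal{H}^\alpha$, the random element $g_i^\alpha A_i^\alpha g_i^{\alpha\dagger}$ is, up to $t$-design error, uniformly random in its $\operatorname{Aut}_1(\mathcal{A}_\alpha)$-orbit, and independent of the Wishart-like randomness furnishing $\bm{O}_{\bm{\theta},i}^\alpha$.

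First, I would apply Theorem~\ref{thm:loss_conv} to an augmented set $\mathcal{R}'$ of ``virtual inputs'' comprising $\bm{\rho}^\alpha$ together with the commutators $[g_i^\alpha A_i^\alpha g_i^{\alpha\dagger}, \bm{\rho}^\alpha]$ for each $i$. Provided these virtual inputs also satisfy the purity lower bound Eq.~\eqref{eq:dim_inputs}, the theorem implies that $\partial_i \ell^\alpha$ is asymptotically distributed as $(I_\alpha \overline{o}^\alpha/r_\alpha)\Tr_\alpha(\bm{V}_i^\alpha \bm{W}^\alpha)$, where $\bm{V}_i^\alpha \equiv [g_i^\alpha A_i^\alpha g_i^{\alpha\dagger}, \bm{\rho}^\alpha]$ and $\bm{W}^\alpha$ is the same Wishart matrix governing $\ell^\alpha$. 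Assumption~\ref{ass:asymp_growth_p} is exactly what is needed to extend the bound Eq.~\eqref{eq:pi_def} to $\mathcal{R}'$ with only a linear-in-$p$ degradation in the L\'{e}vy--Prokhorov distance.

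Next, I would reduce the quadratic form $\Tr_\alpha(\bm{V}_i^\alpha \bm{W}^\alpha)$ to the claimed $\chi \cdot G$ product via the Bartlett decomposition. Under Assumption~\ref{ass:rank_1}, the rank-$1$ (respectively rank-$2$) structure of $A_i^\alpha$ makes $\bm{V}_i^\alpha$ a sum of two rank-$1$ terms involving $\bm{\rho}^\alpha$, which under the rank-$1$ $\bm{\rho}^\alpha$ assumption collapses to a single off-diagonal block matrix element of $\bm{W}^\alpha$ in the basis adapted to $\bm{\rho}^\alpha$. The Bartlett form of $\bm{W}^\alpha = \bm{L}\bm{L}^\dagger$ then expresses this element as a product of one diagonal entry $L_{1,1}$ and one off-diagonal entry $L_{2,1}$ (or an analogous pair); after conditioning on $\ell^\alpha = z^\alpha$, the diagonal entry is fixed up to a multiplicative $\chi$ of $\max(2,\beta_\alpha)$ degrees of freedom (the value $2$ in the $\mathbb{F}_\alpha=\mathbb{R}$ case coming from the enforced rank-$2$ generator), while the off-diagonal entry remains standard Gaussian over $\mathbb{F}_\alpha$. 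The factor $\sqrt{\beta_\alpha z^\alpha / (I_\alpha \overline{o}^\alpha)}$ and the prefactor $2I_\alpha \sigma_o^\alpha \Tr_\alpha(\bm{\rho}^\alpha)/N_\alpha$ drop out of matching the scales in Eq.~\eqref{eq:cond_dist_grad} with the Bartlett parameters, using that $\sigma_o^\alpha$ controls the scale matrix of $\bm{W}^\alpha$ and that $\overline{o}^\alpha \Tr_\alpha(\bm{\rho}^\alpha) r_\alpha^{-1}$ sets the unconditioned mean of $\ell^\alpha$.

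Independence across $i$ asymptotically follows because, by Assumption~\ref{ass:asymp_growth_p}, the $g_i^\alpha$ are i.i.d.\ approximate $t$-designs for $t$ polynomially large in $p$: the joint $t$-design property controls all mixed moments up to degree $2p$ between the $\bm{V}_i^\alpha$ across $i$, so Gaussian and $\chi$ factors drawn from distinct Bartlett entries decorrelate at rate $\mathcal{N}^{-t}\epsilon^{-1}$, giving the claimed $p$-dependence in $\pi'$. The residual $\mathcal{N}^{1/3}\log(N)/N^{5/12}$ term in the bound comes from the standard Berry--Esseen-type control needed when passing from joint $t$-design moments to joint distributions via characteristic functions. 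The main obstacle I anticipate is the bookkeeping around conditioning: propagating the conditioning event $\{\hat{\ell}^\alpha = z^\alpha\}$ through the Bartlett decomposition without breaking the approximate independence between gradient components, and in particular verifying that the virtual inputs $\bm{V}_i^\alpha$ satisfy the purity bound Eq.~\eqref{eq:dim_inputs} so that Theorem~\ref{thm:loss_conv} applies uniformly over all $i\in[p]$.
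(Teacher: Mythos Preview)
Your high-level plan—treat $\partial_i\ell$ as the loss evaluated at a ``virtual input'' $\bm{V}_i=[g_i^\alpha A_i^\alpha g_i^{\alpha\dagger},\bm{\rho}^\alpha]$ and then invoke Theorem~\ref{thm:loss_conv}—is natural, but it misses the structural ingredient that actually produces the form~\eqref{eq:cond_dist_grad}. In the paper's proof, after reducing \emph{both} $\bm{h}$ and the $\bm{g}_i$ to columns of a single Gaussian matrix $\bm{X}$ of size $N_\alpha\times(p+1)$, two independent Wishart matrices appear:
\[
\bm{W}=\bm{X}^\dagger\bm{\tilde O}\bm{X}\quad(\text{d.o.f.\ }r_\alpha),\qquad \bm{V}=\bm{X}^\dagger(\bm{I}-\bm{\tilde O})\bm{X}\quad(\text{d.o.f.\ }N_\alpha-r_\alpha),
\]
and the gradient is not an off-diagonal element of $\bm{W}$ alone but the \emph{product}
\[
\hat\ell_{;i}\;\propto\;\Re\bigl\{\ci\,\langle 0|\bm{W}|i\rangle\,\langle i|\bm{V}|0\rangle\bigr\}.
\]
The $\sigma_o^\alpha$ in~\eqref{eq:cond_dist_grad} comes precisely from the $\bm{V}$-factor: the Bartlett diagonal $\chi_{V,0}\sim\chi(\beta_\alpha(N_\alpha-r_\alpha))$ concentrates at $\sqrt{\beta_\alpha(N_\alpha-r_\alpha)}$, and Lemma~\ref{lem:sigma_o_simp} converts $\sqrt{N_\alpha/r_\alpha-1}$ into $\sigma_o^\alpha/\overline{o}^\alpha$. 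Your proposal never introduces $\bm{V}$, and your statement that ``$\sigma_o^\alpha$ controls the scale matrix of $\bm{W}^\alpha$'' is incorrect—the $\bm{W}^\alpha$ of Theorem~\ref{thm:loss_conv} has identity scale. Consequently, following your route literally (condition on $g_i$, apply Theorem~\ref{thm:loss_conv}, then Bartlett on the $N_\alpha\times N_\alpha$ Wishart) produces a conditional gradient of the wrong scale whenever $N_\alpha-r_\alpha\not\sim N_\alpha$, i.e.\ whenever $\sigma_o^\alpha/\overline{o}^\alpha$ is not $\Theta(1)$.

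Relatedly, your attribution of the $\mathcal{N}^{1/3}\log(N)/N^{5/12}$ term to ``Berry--Esseen-type control'' is off: in the paper this is the error $\alpha_6$ incurred when replacing $\chi_{V,0}$ by its mean, which is a $\chi$-concentration estimate (Lemma~\ref{lem:chi_square_tail_bounds}) and has nothing to do with moment-to-distribution conversion. Finally, note that your anticipated obstacle is real but in a sharper form than you state: $\Tr_\alpha(\bm{V}_i^2)\asymp|\langle 0|g_i|0\rangle|^2\asymp 1/N_\alpha$, which \emph{fails} the purity bound~\eqref{eq:dim_inputs} as written; one can rescue this by working with the rank-$2$ Hermitian pieces $|0\rangle\langle 1|\pm|1\rangle\langle 0|$ directly (these do satisfy the bound), but then the $g_i$-randomness must still be handled separately, which is exactly where the missing $\bm{V}$ enters.
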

This result is stronger than typical barren plateau results as it gives the full distributional form of the gradient---even when conditioned on the loss contribution $z^\alpha$---rather than just its variance~\citep{mcclean2018barren,larocca2022diagnosingbarren,fontana2023adjoint,ragone2023unified}. The conditional distribution as in Eq.~\eqref{eq:cond_dist_grad} is illustrated in Figure~\ref{fig:loss_density}(b).

Just as with the loss function, we can strengthen Theorem~\ref{thm:grad_conv} to show pointwise convergence in probability densities assuming equicontinuity of the original distribution.
\begin{corollary}[Convergence of gradient densities]\label{cor:grad_dens_conv}
    Let $\mathcal{J},\ell,\hat{\ell}$ be as in Corollary~\ref{cor:loss_dens_conv}. Assume the density of $\left(\sqrt{r_\alpha}\ell_{;i}^\alpha\left(\bm{\theta};\bm{\rho}\right)\right)_\alpha$ is equicontinuous and bounded for all $z^\alpha\in\mathbb{R}_{\geq 0}$ as all $\epsilon^{-1},t,N_\alpha\to\infty$. Then the joint density of $\left(\sqrt{r_\alpha}\ell_{;i}^\alpha\left(\bm{\theta};\bm{\rho}\right)\right)_\alpha$ is pointwise equal to that of $\left(\sqrt{r_\alpha}\hat{\ell}_{;i}^\alpha\left(\bm{\theta};\bm{\rho}\right)\right)_\alpha$ up to an additive error $\operatorname{O}\left(\pi'\right)$.
\end{corollary}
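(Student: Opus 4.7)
The plan is to mirror the argument used for Corollary~\ref{cor:loss_dens_conv}, but applied to the joint loss/gradient distribution of Theorem~\ref{thm:grad_conv} rather than the pure loss distribution of Theorem~\ref{thm:loss_conv}. First I would invoke Theorem~\ref{thm:grad_conv} to obtain weak convergence of the joint distribution $\left(\mathcal{N}\ell(\bm{\theta};\bm{\rho}),\mathcal{N}\partial_i\ell(\bm{\theta};\bm{\rho})\right)_i$ in L\'{e}vy--Prokhorov metric at rate $\pi'$. Projecting onto coordinates corresponding to a single gradient index $i$ and rescaling by $\sqrt{r_\alpha}$ in each sector is a continuous mapping that preserves L\'{e}vy--Prokhorov distance up to constants; conditioning on $\hat{\ell}^\alpha = z^\alpha$ then likewise yields a conditional version of the weak convergence at the same rate $\pi'$, at least on conditioning values of positive asymptotic density (which is where the hypothesis over all $z^\alpha \in \mathbb{R}_{\geq 0}$ enters).

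Second, I would upgrade weak convergence to pointwise convergence of densities via the standard mollification argument as used for Corollary~\ref{cor:loss_dens_conv}. Letting $\phi_h$ be a smooth bump of bandwidth $h$, write
\begin{equation}
    \lvert p^N(x) - p^\infty(x)\rvert \leq \lvert p^N(x) - (p^N * \phi_h)(x)\rvert + \lvert (p^N - p^\infty) * \phi_h(x)\rvert + \lvert (p^\infty * \phi_h)(x) - p^\infty(x)\rvert.
\end{equation}
Equicontinuity of the density sequence bounds the outer two mollification errors by the modulus of continuity at scale $h$, while the middle term is controlled by the L\'{e}vy--Prokhorov distance $\pi'$ multiplied by $\lVert\phi_h\rVert_{\textrm{Lip}}$, via integration by parts against the CDF. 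Optimizing $h$ balances these contributions; since the joint distribution in question has a constant number of coordinates (one per simple component $\alpha$, which is finite in the asymptotic regime of interest), the balance yields the stated additive error of $\operatorname{O}(\pi')$.

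The main technical obstacle is ensuring that the equicontinuity and uniform boundedness of the joint density hold \emph{uniformly} in the conditioning values $z^\alpha \in \mathbb{R}_{\geq 0}$: without such uniformity, the mollification bandwidth and Lipschitz constants would depend on $z^\alpha$ and the resulting rate could degrade. This is precisely the point of stating the hypothesis for all $z^\alpha \in \mathbb{R}_{\geq 0}$; in practice, the same classes of ansatz distributions for which Corollary~\ref{cor:loss_dens_conv} applies also satisfy this stronger assumption (as discussed in Appendix~\ref{sec:eq_of_dens}), since the conditional density of the gradient is a smooth function of $z^\alpha$ via the Wishart--$\chi$ product structure exposed in Eq.~\eqref{eq:cond_dist_grad}. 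A minor secondary subtlety is that $\lVert\phi_h\rVert_{\textrm{Lip}} \sim h^{-(d+1)}$ scales with the joint dimension $d$, but this contributes only constant factors after optimization.
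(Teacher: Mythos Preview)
Your proposal is correct and aligns with the paper's approach: the paper gives no explicit proof of this corollary, treating it as an immediate consequence of the weak convergence in Theorem~\ref{thm:grad_conv} together with the standard fact (cited in Appendix~\ref{sec:conv_rvs}, \citet{10.1214/aos/1176346604}) that equicontinuity and uniform boundedness of a sequence of densities upgrades weak convergence to pointwise convergence of densities. Your mollification sketch is essentially the mechanism underlying that cited result, so you are unpacking what the paper leaves implicit; the only caveat is that a careful balancing of the bandwidth $h$ against $\pi'$ typically yields a rate that is a power of $\pi'$ rather than $\operatorname{O}(\pi')$ itself, and the paper is equally informal on this point.
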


We now give our final result, which specifies the joint distribution of not only the loss and gradient but also the Hessian at critical points. This is required to reason about the critical point distribution of the loss landscape using the so-called Kac--Rice formula~\citep{Adler2007} that we will discuss in detail in Appendix~\ref{sec:loc_min_disc}. We state our result assuming $\sigma_o\ll\overline{o}$ for simplicity, giving as we did for the gradient the full expression in terms of Wishart matrix elements in Appendix~\ref{sec:first_deriv_exp}.
\begin{theorem}[Hessian distribution]\label{thm:hess_conv}
    Let $\mathcal{J},\ell,\hat{\ell}$ be as in Theorem~\ref{thm:grad_conv}. Assume all $\frac{\sigma_o^\alpha}{\overline{o}^\alpha}=\operatorname{o}\left(1\right)$ as $N_\alpha\to\infty$. We have the convergence in joint distributions:
    \begin{equation}
        \left(\ell\left(\bm{\theta};\bm{\rho}\right),\partial_i\ell\left(\bm{\theta};\bm{\rho}\right),\partial_i\partial_j\ell\left(\bm{\theta};\bm{\rho}\right)\right)_{\bm{\rho}\in\mathcal{R},i\geq j\in\left[p\right]}\rightsquigarrow\left(\hat{\ell}\left(\bm{\theta};\bm{\rho}\right),\hat{\ell}_{;i}\left(\bm{\theta};\bm{\rho}\right),\hat{\ell}_{;i,j}\left(\bm{\theta};\bm{\rho}\right)\right)_{\bm{\rho}\in\mathcal{R},i\geq j\in\left[p\right]}
    \end{equation}
    as all $\epsilon^{-1},t,N_\alpha\to\infty$, where conditioned on
    \begin{equation}
        \hat{\ell}^\alpha\left(\bm{\theta};\bm{\rho}\right)=z^\alpha
    \end{equation}
    and
    \begin{equation}
        \hat{\ell}_{;i\mid z^\alpha}^\alpha\left(\bm{\theta};\bm{\rho}\right)=0
    \end{equation}
    we have that
    \begin{equation}
        \hat{\ell}_{;i,j\mid z^\alpha,0}^\alpha\left(\bm{\theta};\bm{\rho}\right)=\frac{2I_\alpha\sigma_o^\alpha\Tr_\alpha\left(\bm{\rho}^\alpha\right)}{N_\alpha^2}\sqrt{\frac{z^\alpha}{I_\alpha\overline{o}_\alpha}}G_i^\alpha\chi_j^\alpha\bra{i}\bm{W}^\alpha\ket{j},
    \end{equation}
    where
    \begin{equation}
        \hat{\ell}_{;i,j}\left(\bm{\theta};\bm{\rho}\right)\mid\left\{\hat{\ell}^\alpha\left(\bm{\theta},\bm{\rho}\right)=z^\alpha,\left(\hat{\ell}_{;i}^\alpha\left(\bm{\theta};\bm{\rho}\right)\right)_{i\in\left[p\right]}=\bm{0}\right\}_\alpha\equiv\hat{\ell}_{;i,j\mid\bm{z},\bm{0}}\left(\bm{\theta};\bm{\rho}\right)\equiv\sum_\alpha\hat{\ell}_{;i,j\mid z^\alpha,0}^\alpha\left(\bm{\theta};\bm{\rho}\right).
    \end{equation}
    Here, the $G_i^\alpha$ are i.i.d.\ standard normal random variables, the $\chi_i^\alpha$ are independent $\chi$-distributed random variables with $\max\left(2,\beta_\alpha\right)$ degrees of freedom, and the $\bm{W}^\alpha$ are independent Wishart-distributed random matrices with $\beta_\alpha r_\alpha$ degrees of freedom. In particular, the distributions differ in L\'{e}vy--Prokhorov metric by
    \begin{equation}
        \pi''=\operatorname{O}\left(\frac{p^2\log\log\left(\mathcal{N}^{-t}\epsilon^{-1}\right)}{\log\left(\mathcal{N}^{-t}\epsilon^{-1}\right)}+\frac{p^2\log\left(t\right)}{\sqrt{t}}+\sum_\alpha\frac{\mathcal{N}^{\frac{1}{3}}\log\left(N_\alpha\right)}{N_\alpha^{\frac{1}{3}}}\right)=\operatorname{o}\left(1\right).
    \end{equation}
\end{theorem}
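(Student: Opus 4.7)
My plan is to follow the same general strategy as the proofs of Theorems~\ref{thm:loss_conv} and~\ref{thm:grad_conv}, extended to the second-order case. First, I would work at $\bm{\theta}=\bm{0}$ without loss of generality by absorbing any fixed parameter values into the distributions of the $\mathcal{G}_i^\alpha$ and $\mathcal{H}^\alpha$. Using the algebraic expression recorded in Appendix~\ref{sec:jasa}, the Hessian then takes the form $\partial_i\partial_j\ell^\alpha(\bm{0};\bm{\rho})=I_\alpha\Tr_\alpha(g_0^\alpha\bm{\rho}^\alpha g_0^{\alpha\dagger}[g_j^\alpha A_j^\alpha g_j^{\alpha\dagger},[g_i^\alpha A_i^\alpha g_i^{\alpha\dagger},h^\alpha\bm{O}^\alpha h^{\alpha\dagger}]])$. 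The central object is the rotated observable $h^\alpha\bm{O}^\alpha h^{\alpha\dagger}$, which by Theorem~\ref{thm:loss_conv} is close in joint distribution (against any polynomially many test inputs) to $\overline{o}^\alpha\bm{I}+(\sigma_o^\alpha/\sqrt{r_\alpha})\bm{W}^\alpha$ for a Wishart matrix $\bm{W}^\alpha$ with $\beta_\alpha r_\alpha$ degrees of freedom.

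Next, I would expand all three quantities---loss, gradient, Hessian---in a basis adapted to $\bm{\rho}^\alpha$ (which is rank-$1$ by assumption) and to each $g_i^\alpha A_i^\alpha g_i^{\alpha\dagger}$. In this basis the loss identifies with the $(1,1)$ entry of $\bm{W}^\alpha$, the $i$th gradient component with a row-$1$ off-diagonal entry (dressed by the direction vector of $g_i^\alpha A_i^\alpha g_i^{\alpha\dagger}$), and the $(i,j)$ Hessian entry with the corresponding off-diagonal block entry. This is where the Bartlett decomposition reviewed in Appendix~\ref{sec:wishart_background} is essential: under the exact Haar measure, conditioning on the value of $W^\alpha_{1,1}$ (the loss) and on the vanishing of the row-$1$ off-diagonals (the gradient) leaves the residual $(n{-}1)\times(n{-}1)$ block of $\bm{W}^\alpha$ Wishart-distributed with the same $\beta_\alpha r_\alpha$ degrees of freedom. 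Together with the asymptotic uniformity of the directions of the $g_i^\alpha A_i^\alpha g_i^{\alpha\dagger}$ (enforced by the independent $t$-design structure of the $\mathcal{G}_i^\alpha$ under Assumption~\ref{ass:asymp_growth_p}), this immediately produces the $G_i^\alpha\chi_j^\alpha\bra{i}\bm{W}^\alpha\ket{j}$ factorization in the claim. The hypothesis $\sigma_o^\alpha/\overline{o}^\alpha\to 0$ is then used to discard subleading contributions arising from the mean-subtracted part of $\bm{O}^\alpha$ inside the double commutator, leaving only the $\sqrt{z^\alpha/(I_\alpha\overline{o}^\alpha)}$ prefactor.

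For the quantitative error bound, I would track the same three sources as in the earlier theorems but now at second order: the $t$-design error of $\mathcal{H}^\alpha$, controlled via the joint characteristic function over test operators as in the proof of Theorem~\ref{thm:loss_conv}; the $t$-design errors of the $p$ independent $\mathcal{G}_i^\alpha$, which accumulate as $p^2$ because we need simultaneous control of all $\binom{p+1}{2}$ pairs of Hessian entries (this is precisely the reason Assumption~\ref{ass:asymp_growth_p} was stated with a $p^2$ bound rather than the $p$ that sufficed in Theorem~\ref{thm:grad_conv}); and the $N_\alpha^{-1/3}$-type spectral fluctuation of $\bm{W}^\alpha$, which now enters more sharply because the Hessian probes an off-diagonal block rather than a single row. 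Summing these pieces and converting from characteristic-function distance to L\'{e}vy--Prokhorov distance yields the stated $\pi''$.

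The main obstacle I anticipate is the joint conditioning step: verifying that conditioning on all $p$ gradient components being exactly zero does not significantly disturb the marginal distribution of the Hessian block, and that this conditional independence---which is clean under the exact Haar measure via Bartlett---is preserved up to $\operatorname{o}(1)$ error under only an $\epsilon$-approximate $t$-design. In the single-derivative setting of Theorem~\ref{thm:grad_conv} the Bartlett decomposition handed over the conditional law essentially for free, whereas here I will need to moment-match up to order $\operatorname{O}(t)$ jointly against the direction vectors from all $p$ layers, then invoke Assumption~\ref{ass:asymp_growth_p} to ensure the accumulated errors remain $\operatorname{o}(1)$. A secondary difficulty is cleanly separating the deterministic $\overline{o}^\alpha\bm{I}$ piece of $h^\alpha\bm{O}^\alpha h^{\alpha\dagger}$ from the fluctuating Wishart piece inside the nested commutator; the $\sigma_o^\alpha/\overline{o}^\alpha\to 0$ hypothesis makes this separation quantitative and is what permits the simplified closed-form statement.
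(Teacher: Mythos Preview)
Your high-level architecture (work at $\bm{\theta}=\bm{0}$, reduce to Wishart via the loss theorem, then condition via Bartlett) matches the paper, but there is a genuine structural gap in your mental model: you are working with \emph{one} Wishart matrix $\bm{W}^\alpha$, whereas the paper's proof runs on \emph{two} independent Wishart matrices, $\bm{W}=\bm{X}^\dagger\bm{\tilde O}\bm{X}$ and $\bm{V}=\bm{X}^\dagger\bm{C}\bm{X}$ with $\bm{C}=\bm{I}-\bm{\tilde O}$. The commutator $[\bm{X}\ket{i}\bra{i}\bm{X}^\dagger,\bm{\tilde O}]$ applied inside the loss produces terms of the form $\bm{V}\ket{i}\bra{i}\bm{W}-\bm{W}\ket{i}\bra{i}\bm{V}$, so the gradient is \emph{not} a row-$1$ off-diagonal of $\bm{W}$ but rather $\propto\Re\{\ci\,\bra{0}\bm{W}\ket{i}\bra{i}\bm{V}\ket{0}\}$. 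Consequently, your description ``conditioning on the vanishing of the row-$1$ off-diagonals of $\bm{W}$ leaves the residual block Wishart'' is not the mechanism: conditioning on $\hat\ell_{;i}=0$ in fact forces $\Re\{\ci N_{V,0,i}\}=0$ (a constraint on $\bm{V}$, not $\bm{W}$), and the $G_i$ in the final formula are the surviving \emph{real} parts of those $\bm{V}$ Bartlett entries, while the $\chi_j$ are $\sqrt{\beta}\,\lvert N_{W,0,j}\rvert$ from $\bm{W}$.

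A second ingredient you are missing is the explicit double-commutator algebra. The paper expands $\hat\ell_{;i,j}$ fully in $\bm{W},\bm{V}$, observes that all terms cubic in $\bm{W}$ cancel, and then uses the hypothesis $\sigma_o/\overline{o}\to 0$ (equivalently $(N-r)/N\to 0$ via Lemma~\ref{lem:sigma_o_simp}) together with off-diagonal Wishart tail bounds to discard the $\bm{V}$-heavy terms, leaving only $\Re\{\bra{0}\bm{V}\ket{i}\bra{i}\bm{\tilde W}\ket{j}\bra{j}\bm{W}\ket{0}\}$. Here $\bm{\tilde W}$ is $\bm{W}$ with the first Bartlett column deleted, introduced precisely to make it independent of $N_{W,0,j}$; this low-rank perturbation is controlled by a separate lemma and is the source of the $N^{-1/3}$ error term. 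None of this falls out of ``the off-diagonal block of a single $\bm{W}$,'' and without the $\bm{V}$ matrix you would have no source for the independent Gaussian $G_i$ that survives the gradient conditioning.
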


\section{Formal Discussion of the Consequences of Our Results}\label{sec:cons_res_app}

Before proceeding with proofs of our results we examine their implications. These are summarized in Figure~\ref{fig:flowchart}.
\begin{figure*}
    \begin{center}
        \includegraphics[trim={3cm 3.5cm 3cm 0},clip,width=0.6\linewidth]{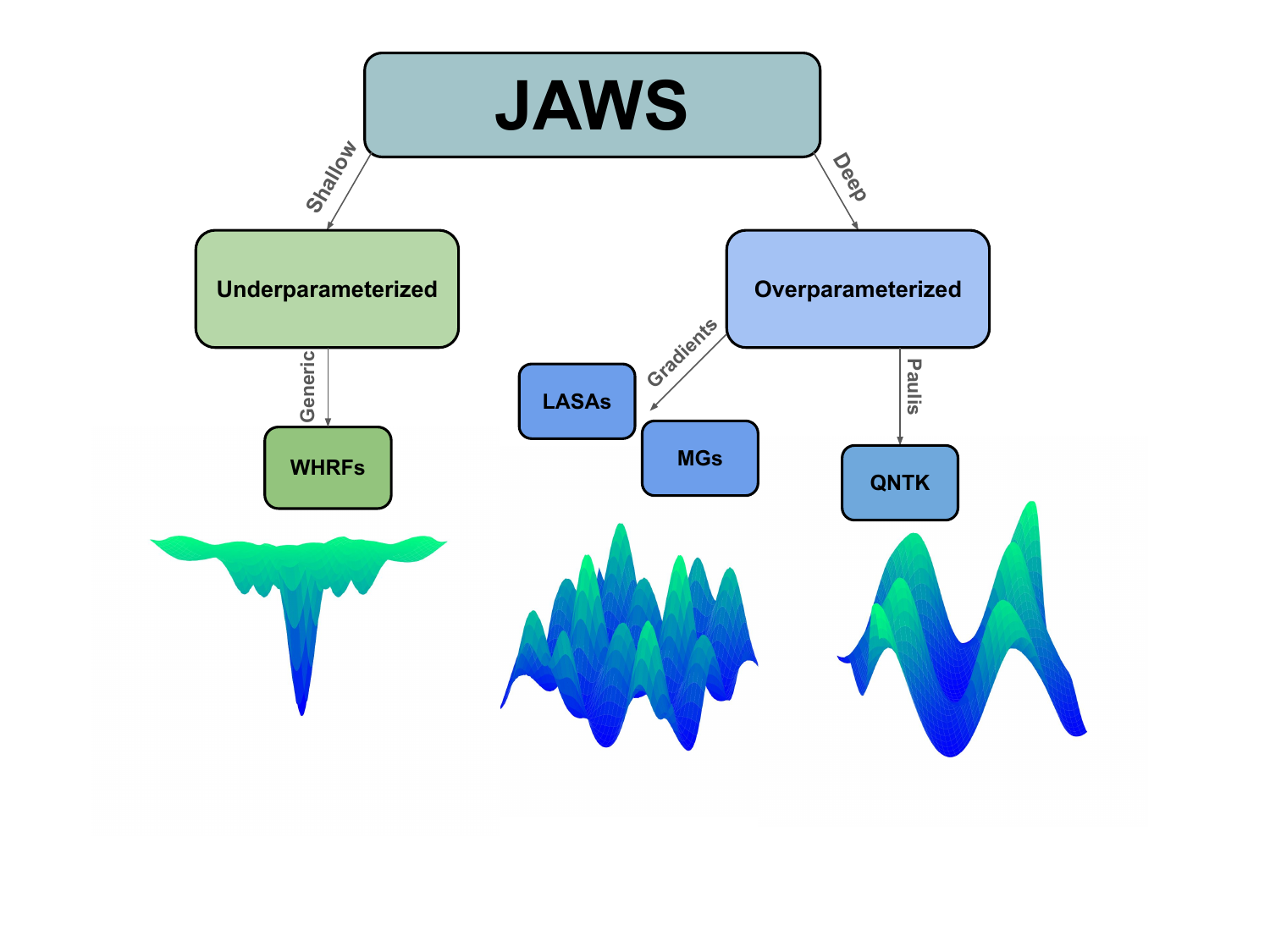}
        \caption{\textbf{Relation between models for quantum variational loss landscapes.} Our introduced theory for the loss landscapes of quantum neural networks (QNNs) are \emph{Jordan algebraic Wishart systems} (JAWS), which relate the algebraic structure of a given QNN architecture to an asymptotic random process description of the loss landscape. This JAWS description reduces to the previously-studied Wishart hypertoroidal random fields (WHRFs) and quantum neural tangent kernel (QNTK) in different settings, and also reproduces the loss function variance known for Lie algebra-supported ansatzes (LASAs) and matchgate (MG) networks. Cartoons of the loss landscapes associated with previously studied models are shown.\label{fig:flowchart}}
    \end{center}
\end{figure*}

\subsection{Barren Plateaus}\label{sec:barren_plateaus}

The first implication of our results that we will discuss is the unification of barren plateau results~\citep{fontana2023adjoint,ragone2023unified} in the large $N_\alpha$ limit.
\begin{corollary}[General expression for the loss function variance]\label{cor:barren_plateaus}
    Let $\ell$ be as in Theorem~\ref{thm:loss_conv}. The variance is:
    \begin{equation}
        \operatorname{Var}\left[\ell\left(\bm{\theta};\bm{\rho}\right)\right]=\sum_\alpha\frac{\Tr\left(\left(\bm{O}^\alpha\right)^2\right)\Tr\left(\left(\bm{\rho}^\alpha\right)^2\right)}{\dim_\mathbb{R}\left(\mathfrak{g}_\alpha\right)}+\operatorname{O}\left(\pi\right).
    \end{equation}
    Here, $\dim_\mathbb{R}\left(\mathfrak{g}_\alpha\right)$ is the dimension of the automorphism group of $\mathcal{A}_\alpha$ and $\pi$ is as in Eq.~\eqref{eq:pi_def}.\footnote{Through careful examination of our proof this error is actually identically zero when $t\geq 2$ and $\epsilon=0$ in Assumption~\ref{ass:t_design} as here we are only taking a second moment.}
\end{corollary}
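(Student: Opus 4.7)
The plan is to invoke Theorem~\ref{thm:loss_conv} to reduce the computation to the variance of the limiting Wishart process $\hat{\ell}$, and then apply elementary second-moment identities for $\beta$-Wishart matrices. Since variance is translation invariant, $\operatorname{Var}[\ell] = \operatorname{Var}[\ell - \ell^*]$, and Theorem~\ref{thm:loss_conv} (applied with $\mathcal{N}=1$, for which all higher moments remain finite) gives weak convergence $\ell - \ell^* \rightsquigarrow \hat{\ell}$ at L\'{e}vy--Prokhorov rate $\pi$; together with the bounded-moments hypothesis this upgrades to convergence of second moments at the same rate, so $\operatorname{Var}[\ell] = \operatorname{Var}[\hat{\ell}] + \operatorname{O}(\pi)$.

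Since the $\bm{W}^\alpha$ are independent across $\alpha$, the variance of $\hat{\ell}$ splits as $\sum_\alpha (I_\alpha \overline{o}^\alpha/r_\alpha)^2 \operatorname{Var}[\Tr_\alpha(\bm{\rho}^\alpha \bm{W}^\alpha)]$. Applying the standard second-moment identity $\operatorname{Var}[\Tr(\bm{A}\bm{W})] = (2r/\beta) \Tr(\bm{A}^2)$ for $\bm{W} \sim \mathcal{W}_n^\beta(r, \bm{I})$ and Hermitian $\bm{A}$, specialized to $r = \beta_\alpha r_\alpha$, yields $\operatorname{Var}[\Tr_\alpha(\bm{\rho}^\alpha \bm{W}^\alpha)] = 2 r_\alpha \Tr_\alpha((\bm{\rho}^\alpha)^2)$. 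Substituting the definitions $\overline{o}^\alpha = \Tr_\alpha(\bm{O}^\alpha)/N_\alpha$ and $r_\alpha = \Tr_\alpha(\bm{O}^\alpha)^2/\Tr_\alpha((\bm{O}^\alpha)^2)$, together with the universality relation $\Tr(\bm{A}) = I_\alpha \Tr_\alpha(\bm{A})$ for $\bm{A} \in \mathcal{A}_\alpha$ from Eq.~\eqref{eq:inner_product_equiv_inf}, collapses each summand to an expression proportional to $\Tr((\bm{O}^\alpha)^2)\Tr((\bm{\rho}^\alpha)^2)/N_\alpha^2$; it remains to match this prefactor to $1/\dim_\mathbb{R}(\mathfrak{g}_\alpha)$ using $\dim_\mathbb{R}(\mathfrak{so}(N)) = N(N-1)/2$, $\dim_\mathbb{R}(\mathfrak{su}(N)) = N^2 - 1$, and $\dim_\mathbb{R}(\mathfrak{sp}(N)) = N(2N+1)$.

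For the sharpened footnote claim that the error vanishes identically when $t \geq 2$ and $\epsilon = 0$, I would bypass the asymptotic Wishart approximation and compute $\operatorname{Var}[\ell^\alpha]$ directly by Weingarten integration over $\operatorname{Aut}_1(\mathcal{A}_\alpha) \in \{\operatorname{SO}(N_\alpha), \operatorname{SU}(N_\alpha), \operatorname{Sp}(N_\alpha)\}$: the variance depends only on degree-$2$ moments of $\bm{U}^\alpha$, which any $2$-design reproduces exactly. The main obstacle throughout is the case-by-case normalization bookkeeping across the three classical groups, confirming that the prefactor collapses uniformly to $1/\dim_\mathbb{R}(\mathfrak{g}_\alpha)$ despite the factors of $\beta_\alpha$ conspiring differently in each sector---but this reduces to finite, routine arithmetic once the general $\beta$-Wishart variance formula is in hand.
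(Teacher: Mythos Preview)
Your approach is essentially identical to the paper's: both invoke Theorem~\ref{thm:loss_conv} to pass to the Wishart process $\hat{\ell}$, then compute $\operatorname{Var}[\hat{\ell}^\alpha]$ via the second moments of $\bm{W}^\alpha$ (the paper phrases this as ``diagonal entries of $\beta_\alpha\bm{W}^\alpha$ are $\chi^2(\beta_\alpha r_\alpha)$'' rather than citing a trace-variance identity, but these are equivalent), and finally match the prefactor to $\dim_{\mathbb{R}}(\mathfrak{g}_\alpha)=\tfrac{\beta_\alpha N_\alpha^2}{2}+\operatorname{O}(N_\alpha)$ using $I_\alpha\Tr_\alpha=\Tr$ and $(\overline{o}^\alpha)^2/r_\alpha=\|\bm{O}^\alpha\|_{\textrm F}^2/N_\alpha^2$.

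One bookkeeping slip to flag: your line ``specialized to $r=\beta_\alpha r_\alpha$, yields $\operatorname{Var}[\Tr_\alpha(\bm{\rho}^\alpha\bm{W}^\alpha)]=2r_\alpha\Tr_\alpha((\bm{\rho}^\alpha)^2)$'' is off by a factor of $\beta_\alpha$. In the paper's conventions $\bm{W}^\alpha\sim\mathcal{W}_{N_\alpha}^{\beta_\alpha}(r_\alpha,\bm{I})$, so your own identity $\operatorname{Var}[\Tr(\bm{A}\bm{W})]=(2r/\beta)\Tr(\bm{A}^2)$ gives $\tfrac{2r_\alpha}{\beta_\alpha}\Tr_\alpha((\bm{\rho}^\alpha)^2)$, and hence $\operatorname{Var}[\hat{\ell}^\alpha]=\tfrac{2I_\alpha^2(\overline{o}^\alpha)^2}{\beta_\alpha r_\alpha}\Tr_\alpha((\bm{\rho}^\alpha)^2)$, which is exactly what the paper obtains. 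Without that $\beta_\alpha$ in the denominator the match to $\dim_{\mathbb{R}}(\mathfrak{g}_\alpha)\sim\beta_\alpha N_\alpha^2/2$ fails for $\beta_\alpha\neq 1$. You already anticipate this as ``case-by-case normalization bookkeeping,'' so this is a slip rather than a gap; just be sure the $\beta_\alpha$ lands in the right place when you carry out the arithmetic.
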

\begin{proof}
    Note that
    \begin{equation}
        \dim_{\mathbb{R}}\left(\mathfrak{g}_\alpha\right)=\left(\beta_\alpha-1\right)N_\alpha+\frac{\beta_\alpha N_\alpha\left(N_\alpha-1\right)}{2}=\frac{\beta_\alpha N_\alpha^2}{2}+\operatorname{O}\left(N_\alpha\right).
    \end{equation}
    Similarly,
    \begin{equation}
        \frac{\left(\overline{o}^\alpha\right)^2}{r_\alpha}=\frac{\left\lVert \bm{O}^\alpha\right\rVert_{\text{F}}^2}{N_\alpha^2}.
    \end{equation}
    By Eq.~\eqref{eq:universality_trace_form}, we also have that
    \begin{equation}
        I_\alpha\left\lVert \bm{O}^\alpha\right\rVert_{\text{F}}^2=\Tr\left(\left(\bm{O}^\alpha\right)^2\right),
    \end{equation}
    with $\Tr\left(\cdot\right)$ the trace on the full $N$-dimensional Hilbert space. Using the fact that the diagonal entries of $\beta_\alpha\bm{W}^\alpha$ are i.i.d.\ $\chi^2$-distributed with $\beta_\alpha r_\alpha$ degrees of freedom (see Appendix~\ref{sec:wishart_background}), we then have from Theorem~\ref{thm:loss_conv} that as $N_\alpha\to\infty$:
    \begin{equation}
        \begin{aligned}
            \operatorname{Var}\left[\ell\left(\bm{\theta};\bm{\rho}\right)\right]&=\sum_\alpha\operatorname{Var}\left[\hat{\ell}^\alpha\left(\bm{\theta};\bm{\rho}\right)\right]+\operatorname{O}\left(\pi\right)\\
            &=\sum_\alpha\frac{2I_\alpha^2\left(\overline{o}^\alpha\right)^2}{\beta_\alpha r_\alpha}\Tr_\alpha\left(\left(\bm{\rho}^\alpha\right)^2\right)+\operatorname{O}\left(\pi\right)\\
            &=\sum_\alpha\frac{2I_\alpha\left(\overline{o}^\alpha\right)^2}{\beta_\alpha r_\alpha}\Tr\left(\left(\bm{\rho}^\alpha\right)^2\right)+\operatorname{O}\left(\pi\right)\\
            &=\sum_\alpha\frac{\Tr\left(\left(\bm{O}^\alpha\right)^2\right)\Tr\left(\left(\bm{\rho}^\alpha\right)^2\right)}{\dim_{\mathbb{R}}\left(\mathfrak{g}_\alpha\right)}+\operatorname{O}\left(\pi\right).
        \end{aligned}
    \end{equation}
\end{proof}
This result immediately implies Theorem~1 of \citet{ragone2023unified} in the case of Lie algebra-supported ansatzes. However, our result holds for \emph{all} variational ansatzes, i.e., it does not rely on either $\ci \bm{O}$ or $\ci\bm{\rho}$ being a member of the dynamical Lie algebra generating the ansatz.

In the language of \citet{ragone2023unified}, $\Tr\left(\left(\bm{\rho}^\alpha\right)^2\right)$ is the $\mathcal{A}_\alpha$-purity $\mathcal{P}_{\mathcal{A}_\alpha}$ of $\bm{O}$, and be thought of as measuring the \emph{generalized entanglement}~\citep{PhysRevLett.92.107902} of $\bm{\rho}$ with respect to the Jordan algebraic structure of $\mathcal{A}$. In this sense, this barren plateau result can also be thought of as generalizing the entanglement-induceed barren plateaus previously studied by \citet{PRXQuantum.2.040316}.

\subsection{The Quantum Neural Tangent Kernel}\label{sec:qntk_formal}

We now connect our results to the quantum neural tangent kernel (QNTK) literature~\citep{PRXQuantum.3.030323,PhysRevLett.130.150601,you2022convergence,garciamartin2023deep,girardi2024trained,garciamartin2024architectures}. The landmark result in this field is that, under certain conditions on $\mathcal{R}$, $\bm{O}$, and $\mathcal{A}$, variational loss functions are asymptotically a Gaussian process when $\epsilon^{-1},t,N\to\infty$. However, this same body of work has noted that such a Gaussian process description cannot generally be true: for instance, if $\bm{O}$ is rank-$1$ and $\mathcal{A}$ is the space of complex Hermitian matrices, the loss should be Porter--Thomas distributed as this reduces to a random circuit sampling setting~\citep{boixo2018characterizing}.

Our results can be seen as a unifying model of neural network loss landscapes, including both when convergence to a Gaussian process is achieved and when it is not. Recall that Theorem~\ref{thm:loss_conv} demonstrated that the asymptotic expression for the variational loss with objective observable $\bm{O}$ is distributed as a Wishart process:
\begin{equation}\label{eq:wishart_process_qntk}
    \hat{\ell}\left(\bm{\rho}\right)=\sum_\alpha\frac{I_\alpha\overline{o}^\alpha}{r_\alpha}\Tr_\alpha\left(\bm{\rho}^\alpha\bm{W}^\alpha\right),
\end{equation}
even when rescaled by a quantity $\mathcal{N}$ exponentially large in the problem size. This correctly captures the Porter--Thomas behavior when $\mathcal{A}$ is the space of complex Hermitian matrices and $\rank\left(\bm{O}\right)=r=1$. This is because the diagonal entries of a complex Wishart matrix are $\chi^2$-distributed with two degrees of freedom each, which is identical to the exponential distribution.\footnote{It is also easy to check when $r=1$ and the ansatz unitaries are Haar random that the assumptions we use in proving our results hold for any $\mathcal{N}\leq\operatorname{O}\left(N\right)$.}

Indeed, our more general result can be used to \emph{exactly} characterize when QNNs asymptotically form Gaussian processes. This occurs when the loss is scaled by some $\mathcal{N}$ such that Eq.~\eqref{eq:wishart_process_qntk} has nonvanishing variance yet the higher-order cumulants vanish. We compute the third-order cumulant:
\begin{equation}
    \kappa_3\left(\mathcal{N}\hat{\ell}\left(\bm{\rho}\right)\right)=\operatorname{O}\left(\mathcal{N}^3\sum\limits_\alpha\frac{I_\alpha^3\left(\overline{o}^\alpha\right)^3}{r_\alpha^2}\Tr_\alpha\left(\left(\bm{\rho}^\alpha\right)^3\right)\right).
\end{equation}
We thus can state this pair of conditions formally as follows.
\begin{corollary}[Exact conditions for convergence to a Gaussian process]
    Let $\mathcal{J},\ell$ be as in Theorem~\ref{thm:loss_conv}. $\mathcal{N}\ell\left(\bm{\theta};\bm{\rho}\right)$ is asymptotically a Gaussian process as $\epsilon^{-1},t,N\to\infty$ if and only if
    \begin{equation}
        \mathcal{N}\leq\operatorname{o}\left(\min\limits_\alpha\frac{r_\alpha^{\frac{2}{3}}}{I_\alpha\overline{o}^\alpha\Tr_\alpha\left(\left(\bm{\rho}^\alpha\right)^3\right)^{\frac{1}{3}}}\right)
    \end{equation}
    and $\mathcal{N}^2\operatorname{Var}\left[\ell\left(\bm{\theta};\bm{\rho}\right)\right]$ (with the variance as given in Corollary~\ref{cor:barren_plateaus}) is nonvanishing for all $\bm{\rho}\in\mathcal{R}$.

    When these conditions are met, the covariance is given by:
    \begin{equation}
        \begin{aligned}
            \mathcal{K}\left(\bm{\rho},\bm{\rho'}\right)&=\lim_{N\to\infty}\mathcal{K}_N\left(\bm{\rho},\bm{\rho'}\right)\\
            &=\lim_{N\to\infty}\mathcal{N}^2\sum_\alpha\frac{\beta_\alpha\Tr\left(\left(\bm{O}^\alpha\right)^2\right)}{2\dim_{\mathbb{R}}\left(\mathfrak{g}_\alpha\right)r_\alpha}I_\alpha\mathbb{E}\left[\Tr_\alpha\left(\bm{\rho}^\alpha\left(\bm{W}^\alpha-r_\alpha\right)\right)\Tr_\alpha\left(\bm{\rho'}^\alpha\left(\bm{W}^\alpha-r_\alpha\right)\right)\right].
        \end{aligned}
    \end{equation}
\end{corollary}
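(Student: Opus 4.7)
The plan is to reduce the statement to a cumulant computation for the Wishart process $\hat{\ell}$ of Theorem~\ref{thm:loss_conv} and then invoke the standard fact that a law is Gaussian if and only if its cumulants of order $\geq 3$ vanish and its second cumulant is nondegenerate. The weak convergence in Theorem~\ref{thm:loss_conv} is not by itself enough to promote distributional convergence to convergence of cumulants, but the hypothesis in Theorem~\ref{thm:loss_conv} that central moments of $\mathcal{N}\ell$ of order $>k^\ast$ remain finite in the limit supplies uniform integrability, so all cumulants of $\mathcal{N}\ell(\bm{\theta};\bm{\rho})$ converge to those of $\mathcal{N}\hat{\ell}(\bm{\theta};\bm{\rho})$. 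From here, the problem is purely one of computing cumulants of a Wishart process.

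Next I would exploit the direct-sum structure $\hat{\ell}=\sum_\alpha\hat{\ell}^\alpha$ together with the independence of the $\bm{W}^\alpha$ to get $\kappa_k(\mathcal{N}\hat{\ell}(\bm{\rho}))=\sum_\alpha\kappa_k(\mathcal{N}\hat{\ell}^\alpha(\bm{\rho}))$. Within each sector, write $\Tr_\alpha(\bm{\rho}^\alpha\bm{W}^\alpha)=\sum_{j=1}^{r_\alpha}\bm{X}_j^\dagger\bm{\rho}^\alpha\bm{X}_j$ as a sum of $r_\alpha$ i.i.d.\ quadratic forms in $\mathbb{F}_\alpha$-Gaussian vectors $\bm{X}_j$, which by additivity of cumulants of independent sums and the standard formula $\kappa_k(\bm{X}^\dagger\bm{A}\bm{X})=c_k^{\beta_\alpha}\Tr_\alpha(\bm{A}^k)$ (with $c_k^{\beta_\alpha}$ a known combinatorial constant depending only on $\beta_\alpha$ and $k$) yields
\begin{equation}
    \kappa_k(\mathcal{N}\hat{\ell}^\alpha(\bm{\rho}))=c_k^{\beta_\alpha}\frac{\mathcal{N}^k(I_\alpha\overline{o}^\alpha)^k}{r_\alpha^{k-1}}\Tr_\alpha\left(\left(\bm{\rho}^\alpha\right)^k\right).
\end{equation}
For $k=3$ this recovers the formula cited in the informal statement and shows that the third cumulant vanishes in the limit if and only if $\mathcal{N}^3I_\alpha^3(\overline{o}^\alpha)^3\Tr_\alpha((\bm{\rho}^\alpha)^3)/r_\alpha^2\to 0$ for every $\alpha$ and every $\bm{\rho}\in\mathcal{R}$, i.e.\ exactly the stated bound on $\mathcal{N}$. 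A short monotonicity check in $k$ (bounding $\Tr_\alpha((\bm{\rho}^\alpha)^k)\leq\|\bm{\rho}^\alpha\|_{\mathrm{op}}^{k-3}\Tr_\alpha((\bm{\rho}^\alpha)^3)$ and using that the map $k\mapsto\mathcal{N}^k/r_\alpha^{k-1}$ is shrinking once $\mathcal{N}=o(r_\alpha^{2/3})$) then shows that the $k=3$ bound forces all $k\geq 3$ cumulants to vanish. Combined with the nondegeneracy condition on $\mathcal{N}^2\operatorname{Var}[\ell]$, this gives the equivalence via the fact that a sequence of random vectors converging in distribution and in all moments to a limit with vanishing higher-order cumulants and nondegenerate covariance is Gaussian.

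For the covariance formula, I would compute $\operatorname{Cov}(\hat{\ell}(\bm{\rho}),\hat{\ell}(\bm{\rho}'))$ sector-wise, again using independence of the $\bm{W}^\alpha$ to reduce to $\sum_\alpha(I_\alpha\overline{o}^\alpha/r_\alpha)^2\mathbb{E}[\Tr_\alpha(\bm{\rho}^\alpha(\bm{W}^\alpha-r_\alpha))\Tr_\alpha(\bm{\rho}'^\alpha(\bm{W}^\alpha-r_\alpha))]$, where the centering $\bm{W}^\alpha-r_\alpha$ comes from $\mathbb{E}[\bm{W}^\alpha]=r_\alpha\bm{I}$ in the defining representation. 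Matching the prefactor to the form in the statement is then a bookkeeping identity: using $I_\alpha\|\bm{O}^\alpha\|_{\mathrm{F}}^2=\Tr((\bm{O}^\alpha)^2)$, $(\overline{o}^\alpha)^2/r_\alpha=\|\bm{O}^\alpha\|_{\mathrm{F}}^2/N_\alpha^2$, and $\dim_{\mathbb{R}}(\mathfrak{g}_\alpha)=\beta_\alpha N_\alpha^2/2+\operatorname{O}(N_\alpha)$, one rewrites $(I_\alpha\overline{o}^\alpha/r_\alpha)^2$ as $\beta_\alpha I_\alpha\Tr((\bm{O}^\alpha)^2)/(2\dim_{\mathbb{R}}(\mathfrak{g}_\alpha)r_\alpha)$ up to the vanishing correction absorbed into the $N\to\infty$ limit.

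The main obstacle is the passage from weak convergence to convergence of cumulants of all orders in the ``only if'' direction: one must verify that the moment bound assumption built into Theorem~\ref{thm:loss_conv} actually implies uniform integrability of every power of $\mathcal{N}\ell$, so that the cumulant of $\mathcal{N}\ell$ genuinely equals that of $\mathcal{N}\hat{\ell}$ in the limit and the ``$\mathrm{o}(1)$ third cumulant'' criterion truly characterizes the Gaussian limit rather than being merely sufficient. Once this is secured, the remainder is a direct combinatorial computation with Wishart cumulants and an algebraic rewriting of the prefactors using the identities of Sec.~\ref{sec:jordan_algebra_classifications}.
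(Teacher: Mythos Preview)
Your proposal is correct and follows essentially the same route as the paper: reduce to the Wishart limit $\hat{\ell}$ via Theorem~\ref{thm:loss_conv}, compute the cumulants of $\mathcal{N}\hat{\ell}$ sector-by-sector using the i.i.d.\ quadratic-form decomposition of $\Tr_\alpha(\bm{\rho}^\alpha\bm{W}^\alpha)$, and then rewrite the covariance prefactor using the same identities $(\overline{o}^\alpha)^2/r_\alpha=\|\bm{O}^\alpha\|_{\mathrm{F}}^2/N_\alpha^2$, $I_\alpha\|\bm{O}^\alpha\|_{\mathrm{F}}^2=\Tr((\bm{O}^\alpha)^2)$, and $\dim_{\mathbb{R}}(\mathfrak{g}_\alpha)=\beta_\alpha N_\alpha^2/2+\operatorname{O}(N_\alpha)$ that appear in the proof of Corollary~\ref{cor:barren_plateaus}. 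In fact you supply more detail than the paper does: the paper simply writes down the third cumulant and states the corollary, whereas you explicitly flag the uniform-integrability issue needed to pass cumulants through the weak limit and sketch the monotonicity-in-$k$ argument showing the $k=3$ bound dominates the higher ones.

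One small loose end in your monotonicity step: the parenthetical ``once $\mathcal{N}=o(r_\alpha^{2/3})$'' is not literally the hypothesis of the corollary, so you should instead combine the $k=3$ vanishing with the nondegeneracy of $\mathcal{N}^2\operatorname{Var}[\ell]$ to deduce $\mathcal{N} I_\alpha\overline{o}^\alpha\|\bm{\rho}^\alpha\|_{\mathrm{op}}/r_\alpha\to 0$ (using $\|\bm{\rho}^\alpha\|_{\mathrm{op}}^2\leq\Tr_\alpha((\bm{\rho}^\alpha)^2)$), which is what actually controls the ratio $\kappa_k/\kappa_3$ in your bound. This is a minor bookkeeping fix; the argument goes through.
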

It is now easy to see why (for simple $\mathcal{A}$) the $r=1$ case does not converge to a Gaussian process: when $r=1$ we may only choose a normalization $\mathcal{N}\leq\operatorname{o}\left(N\right)$ such that higher-order cumulants vanish asymptotically, but then the variance vanishes asymptotically. In contrast, our results demonstrate convergence to a Wishart process through \emph{any} finite number of cumulants, not just the first two.

We can also be more concrete on the form of $\mathcal{K}\left(\bm{\rho},\bm{\rho'}\right)$. Indeed, we are able to show that each algebraic sector $\mathcal{A}_\alpha$ contributes to the covariance an explicit, quadratic expression in $\bm{\rho}^\alpha$ and $\bm{\rho'}^\alpha$, depending only on the field $\mathbb{F}_\alpha$ associated with its defining representation. In other words, whether or not a given quantum neural network is asymptotically a Gaussian process just depends on how SWAP tests of the inputs scale. As the expression is complicated we do not reproduce it here, instead giving references to where it may be found for each of $\mathbb{F}_\alpha=\mathbb{R},\mathbb{C},\mathbb{H}$.
\begin{theorem}[$\mathcal{K}$ is quadratic in $\bm{\rho},\bm{\rho'}$]\label{thm:k_quad_rho}
    $\mathcal{K}\left(\bm{\rho},\bm{\rho'}\right)$ can be written as a closed-form, explicit function of only the $\Tr\left(\bm{\rho}^\alpha\right)$, $\Tr\left(\bm{\rho'}^\alpha\right)$, $\Tr\left(\left(\bm{\rho}^\alpha\right)^2\right)$, $\Tr\left(\left(\bm{\rho'}^\alpha\right)^2\right)$, and $\Tr\left(\bm{\rho}^\alpha\bm{\rho'}^\alpha\right)$.
\end{theorem}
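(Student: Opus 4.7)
The plan is to reduce the covariance formula to an explicit second-moment calculation on each Wishart sector, using the definition $\bm{W}^\alpha = \bm{X}^\alpha (\bm{X}^\alpha)^\dagger$ with i.i.d.\ standard $\mathbb{F}_\alpha$-Gaussian entries of $\bm{X}^\alpha$. Since the sum over $\alpha$ already decouples (the $\bm{W}^\alpha$ are independent), it suffices to show that for each fixed $\alpha$ the inner expectation
\begin{equation}
    E_\alpha(\bm{\rho},\bm{\rho'}) \;\equiv\; \mathbb{E}\bigl[\Tr_\alpha\bigl(\bm{\rho}^\alpha(\bm{W}^\alpha - r_\alpha\bm{I})\bigr)\Tr_\alpha\bigl(\bm{\rho'}^\alpha(\bm{W}^\alpha - r_\alpha\bm{I})\bigr)\bigr]
\end{equation}
is a closed-form polynomial in the listed invariants. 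Every other factor in the definition of $\mathcal{K}_N$ (namely $\mathcal{N}$, $\beta_\alpha$, $\Tr((\bm{O}^\alpha)^2)$, $\dim_\mathbb{R}(\mathfrak{g}_\alpha)$, $r_\alpha$, $I_\alpha$) depends only on the JAWS data and not on $\bm{\rho},\bm{\rho'}$.

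The first step is to expand $E_\alpha$ as a degree-four Gaussian moment in the entries of $\bm{X}^\alpha$. Writing $W^\alpha_{ij}=\sum_a X_{ia}X^{\dagger}_{aj}$ and noting $\mathbb{E}[W^\alpha] = r_\alpha \bm{I}$, the centering $\bm{W}^\alpha-r_\alpha\bm{I}$ kills all ``disconnected'' Wick pairings, leaving only those pairings of the four $\bm{X}^\alpha$ factors that tie the two trace factors together. Applying Isserlis'/Wick's theorem for $\mathbb{F}_\alpha$-Gaussians (real for $\beta_\alpha=1$, complex for $\beta_\alpha=2$, and quaternionic for $\beta_\alpha=4$), each surviving pairing produces Kronecker deltas that, once contracted against the matrix elements of $\bm{\rho}^\alpha$ and $\bm{\rho'}^\alpha$, yield one of only a handful of scalar invariants: $\Tr_\alpha(\bm{\rho}^\alpha\bm{\rho'}^\alpha)$ in all three cases, plus (for $\beta_\alpha\in\{1,4\}$) a term $\Tr_\alpha\bigl(\bm{\rho}^\alpha(\bm{\rho'}^\alpha)^{\top_\alpha}\bigr)$ coming from the additional $\operatorname{O}(\cdot)$ or $\operatorname{Sp}(\cdot)$ symmetry of the $\mathbb{F}_\alpha$-Gaussian measure.

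The second step is to translate these ``transposed'' terms back to invariants on the allowed list. Because $\bm{\rho}^\alpha$ is Hermitian in the defining representation of $\mathcal{A}_\alpha$ (a real-symmetric matrix for $\beta_\alpha=1$, and an $\mathbb{H}$-Hermitian matrix realized as a symplectic-structured $2N_\alpha\times 2N_\alpha$ complex matrix for $\beta_\alpha=4$), the relevant conjugation/transpose operation can be rewritten using the Hermiticity relation so that $\Tr_\alpha(\bm{\rho}^\alpha(\bm{\rho'}^\alpha)^{\top_\alpha})$ collapses onto $\Tr_\alpha(\bm{\rho}^\alpha\bm{\rho'}^\alpha)$ (and, where needed, the ``diagonal'' terms $\Tr_\alpha(\bm{\rho}^\alpha)\Tr_\alpha(\bm{\rho'}^\alpha)$, $\Tr_\alpha((\bm{\rho}^\alpha)^2)$, $\Tr_\alpha((\bm{\rho'}^\alpha)^2)$ arising from contracted cycles on the symplectic structure). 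The universality relation $\Tr(\bm{A}\bm{B}) = I_\alpha \Tr_\alpha(\bm{A}\bm{B})$ from Eq.~\eqref{eq:inner_product_equiv_inf} then converts $\Tr_\alpha$ to $\Tr$ with a sector-dependent constant, producing the desired closed form.

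The main obstacle is the quaternionic sector. There the standard complex Wick calculus has to be replaced by the quaternionic version (equivalently, a symplectic-structured complex Wick calculus on $2N_\alpha\times 2N_\alpha$ matrices), and one must verify carefully that the extra invariants arising from the non-commutative conjugation do not introduce new trace structures beyond the listed ones. I would handle this by passing to the standard complex $2N_\alpha\times 2N_\alpha$ representation of $\mathbb{H}$-Hermitian matrices, where $\bm{\rho}^\alpha$ acquires an explicit commutation with the symplectic form $\bm{J}$; any factor of $\bm{\rho}^{\alpha\top}$ can then be rewritten as $\bm{J}\bm{\rho}^\alpha\bm{J}^{-1}$, and the trace invariance cyclically absorbs $\bm{J}$, reducing the expression to $\Tr_\alpha(\bm{\rho}^\alpha\bm{\rho'}^\alpha)$ plus at worst the diagonal invariants. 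Once this bookkeeping is done for each of the three fields, assembling the sector contributions and recording the coefficients gives the closed-form expression, and in particular recovers the diagonal case $\bm{\rho}=\bm{\rho'}$ consistently with Corollary~\ref{cor:barren_plateaus}.
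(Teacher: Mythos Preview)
Your approach is correct but takes a genuinely different route from the paper. The paper does not compute the Wishart covariance from scratch: it simply observes that, by the Haar (orthogonal/unitary/symplectic) invariance of $\bm{W}^\alpha$, the quantity $\mathbb{E}\bigl[\Tr_\alpha(\bm{\rho}^\alpha(\bm{W}^\alpha-r_\alpha))\Tr_\alpha(\bm{\rho'}^\alpha(\bm{W}^\alpha-r_\alpha))\bigr]$ is a second mixed cumulant of the type already evaluated in the free-probability literature, and then cites one reference per field (\citet{redelmeier2011genus} for $\mathbb{R}$, \citet{MINGO2006226} for $\mathbb{C}$, \citet{doi:10.1142/S2010326321500179} for $\mathbb{H}$) for the closed forms. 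The conversion $\Tr_\alpha\to\Tr$ via $I_\alpha$ is the only remaining step. Your route---expanding $\bm{W}^\alpha=\bm{X}^\alpha(\bm{X}^\alpha)^\dagger$ and applying Isserlis/Wick directly---is more elementary and self-contained, and it makes transparent \emph{why} only the listed invariants can appear: the centered product is a connected fourth Gaussian moment, so only pairings linking the two traces survive, and Hermiticity of $\bm{\rho}^\alpha,\bm{\rho'}^\alpha$ in the defining representation collapses the transpose-type contractions. One minor remark: since $E_\alpha$ is bilinear in $(\bm{\rho}^\alpha,\bm{\rho'}^\alpha)$, the invariants $\Tr_\alpha((\bm{\rho}^\alpha)^2)$ and $\Tr_\alpha((\bm{\rho'}^\alpha)^2)$ cannot actually appear with nonzero coefficient---only $\Tr_\alpha(\bm{\rho}^\alpha\bm{\rho'}^\alpha)$ and $\Tr_\alpha(\bm{\rho}^\alpha)\Tr_\alpha(\bm{\rho'}^\alpha)$ can---so your concern about the symplectic bookkeeping producing them is moot, and your $\bm{J}$-conjugation argument for $\beta_\alpha=4$ already suffices. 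The paper's approach buys brevity and a link to the broader random-matrix toolkit; yours buys independence from external citations and would actually exhibit the coefficients, which the paper deliberately defers to the references.
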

\begin{proof}
    By the Haar invariance of $\bm{W}^\alpha$ there are explicit formulas for covariances of the form of $\mathcal{K}\left(\bm{\rho},\bm{\rho'}\right)$ as degree-$2$ polynomials in $\bm{\rho}$ and $\bm{\rho'}$. These formulas are given in:
    \begin{enumerate}
        \item Theorem~1 of \citet{redelmeier2011genus} when $\mathbb{F}=\mathbb{R}$;
        \item Theorem~3.5 of \citet{MINGO2006226} when $\mathbb{F}=\mathbb{C}$;
        \item Sec.~5.3 of \citet{doi:10.1142/S2010326321500179} when $\mathbb{F}=\mathbb{H}$.
    \end{enumerate}
    The result then follows by noting that $I_\alpha\Tr_\alpha\left(\cdot^\alpha\right)=\Tr\left(\cdot^\alpha\right)$.
\end{proof}

We give a simple example of the calculation of $\mathcal{K}$ when all inputs $\bm{\rho_0}$ can be mutually diagonalized. By the unitary invariance of Wishart matrices we can assume WLOG that all inputs are diagonal. Inputs are then completely parameterized by the eigenvalues $\bm{x}\in\mathbb{R}_{\geq 0}^{N_\alpha}$ of the input states, where $\left\lVert\bm{x}\right\rVert_1=1$. By the independence of diagonal entries of a Wishart matrix we then have that (when $\mathcal{N}=1$):
\begin{equation}
    \begin{aligned}
        \mathcal{K}_N\left(\bm{x},\bm{x'}\right)&=\mathcal{N}^2\sum\limits_\alpha\frac{\beta_\alpha\Tr\left(\left(\bm{O}^\alpha\right)^2\right)}{2\dim_{\mathbb{R}}\left(\mathfrak{g}_\alpha\right)r_\alpha}\sum_{i=1}^N x_i^\alpha x_i^{\prime\alpha}\operatorname{Var}\left[\bra{i}\bm{W}^\alpha\ket{i}\right]+\operatorname{o}\left(1\right)\\
        &=\mathcal{N}^2\sum\limits_\alpha\frac{\Tr\left(\left(\bm{O}^\alpha\right)^2\right)}{\dim_\mathbb{R}\left(\mathfrak{g}_\alpha\right)}\bm{x}^\alpha\cdot\bm{x}^{\prime\alpha}+\operatorname{o}\left(1\right),
    \end{aligned}
\end{equation}
where in the final line we used similar simplifications as in Appendix~\ref{sec:barren_plateaus}. Noting that $\bm{x}^\alpha\cdot\bm{x}^{\prime\alpha}$ is just the overlap of $\bm{\rho}^\alpha$ and $\bm{\rho'}^\alpha$ yields an equivalent formulation:
\begin{equation}
    \mathcal{K}_N\left(\bm{\rho},\bm{\rho'}\right)=\mathcal{N}^2\sum\limits_\alpha\frac{\Tr\left(\left(\bm{O}^\alpha\right)^2\right)}{\dim_\mathbb{R}\left(\mathfrak{g}_\alpha\right)}\Tr\left(\bm{\rho}^\alpha\bm{\rho'}^\alpha\right)+\operatorname{o}\left(1\right)=\sum\limits_\alpha\operatorname{O}\left(\frac{\mathcal{N}^2r_\alpha}{N_\alpha^2}\right)\Tr\left(\bm{\rho}^\alpha\bm{\rho'}^\alpha\right)+\operatorname{o}\left(1\right).
\end{equation}
In particular, when $\mathcal{A}$ is simple, a $\mathcal{N}=\operatorname{\Omega}\left(\frac{N}{\sqrt{r}}\right)$ normalization is required for the network to asymptotically form a Gaussian process over pure inputs.

While previous results have shown that Gaussian processes efficiently train, we argue that this may paint an overly optimistic picture for generic QNNs. Focusing on the case where the ansatz is an $\operatorname{\Theta}\left(\log\left(n\right)\right)$-depth, $2$-dimensional circuit on $n$ qubits, we have that the QNN forms a Gaussian process when:
\begin{equation}
    \mathcal{N}=\exp\left(\operatorname{\Omega}\left(\log\left(n\right)^2\right)\right),
\end{equation}
i.e., at a normalization superpolynomial in $n$. Thus while it is known that such networks can achieve a constant improvement in $\mathcal{N}\hat{\ell}$ in time polynomial in $n$ via gradient descent~\citep{girardi2024trained}, this translates to a superpolynomially vanishing improvement in the loss function value when in the physical normalization of $\mathcal{N}=1$. More generally, the covariance we here derive can be used with results in the classical neural tangent kernel literature to reason about the training behavior~\citep{Neal1996,pmlr-v38-choromanska15,chaudhari2017energy,46760} and generalization ability~\citep{10.5555/3327757.3327948,pmlr-v162-wei22a} of this class of quantum neural networks, which we hope to analyze in more detail in the future. 

\subsection{Local Minima}\label{sec:loc_min_disc}

We end by examining the distribution of local minima of the loss landscape. This has been done previously in the more restricted setting where no structure is imposed on the loss function~\citep{anschuetz2021critical,anschuetz2022barren}. In this section we assume the assumptions of Theorem~\ref{thm:hess_conv} and that as all $\epsilon^{-1},t,N_\alpha\to\infty$,
\begin{equation}
    \gamma_\alpha\equiv\frac{p_\alpha}{\beta_\alpha r_\alpha}
\end{equation}
is held constant. Here, $p_\alpha$ is the number of $\bm{A}_i$ that are nontrivial on the simple component $\mathcal{A}_\alpha$ of $\mathcal{A}$. $\gamma_\alpha$ is the so-called \emph{overparameterization ratio} discussed in \citet{anschuetz2021critical,anschuetz2022barren}. We also assume for the simplicity of our expressions that each simple sector is fully controllable, i.e., that each ansatz generator $\bm{A}_i$ is only nontrivial on a single simple component of $\mathcal{A}$. In principle a more general expression could be written as well, though the gradient density would be a convolution over complicated probability densities, and the Hessian distributed as a free convolution of multiple Wishart matrices.

We calculate the distribution of local minima via the \emph{Kac--Rice formula}~\citep{Adler2007}, which gives the expected density of local minima of a function $\ell$ at a function value $z$. \citet{anschuetz2021critical} demonstrated that the assumptions for the Kac--Rice formula are satisfied for variational loss landscapes, and when rotationally invariant on the $p$-torus takes the form:
\begin{equation}\label{eq:kac_rice_formula}
    \mathbb{E}\left[\operatorname{Crt}_0\left(z\right)\right]=\left(2\cpi\right)^p\mathbb{E}\left[\det\left(\bm{H}_z\right)\bm{1}\left\{\bm{H}_z\succeq\bm{0}\right\}\right]\mathbb{P}\left[\bm{G}_z=\bm{0}\right]\mathbb{P}\left[\ell=z\right].
\end{equation}
Here, $\mathbb{E}\left[\operatorname{Crt}_0\left(z\right)\right]$ is the expected density of local minima at a function value $z$, $\bm{G}_z$ is the gradient conditioned on $\ell=z$, and $\bm{H}_z$ is the Hessian conditioned on $\ell=z$ and $\bm{G}=\bm{0}$. In a slight abuse of notation, here $\mathbb{P}\left[\cdot\right]$ denotes the probability density associated with the event $\cdot$.

We can evaluate this expression using Theorems~\ref{thm:loss_conv},~\ref{thm:grad_conv}, and~\ref{thm:hess_conv}.
\begin{corollary}[Density of local minima to multiplicative leading order]\label{cor:exact_loc_min_dens}
    Let $\mathcal{J},\hat{\ell}$ be as in Corollary~\ref{cor:grad_dens_conv} and Theorem~\ref{thm:hess_conv}. Assume each ansatz generator $\bm{A}_i$ is only nontrivial on a single simple component of $\mathcal{A}$, and further assume that $\bm{\rho}^\alpha$ is rank-$1$ in its defining representation. Assume as well that the overparameterization ratios:
    \begin{equation}
        \gamma_\alpha\equiv\frac{p_\alpha}{\beta_\alpha r_\alpha}
    \end{equation}
    remain fixed as $N\to\infty$. Let $\mu_{\bm{H}_z^\alpha}$ be the empirical spectral measure of
    \begin{equation}\label{eq:rescaled_hess}
        \bm{\tilde{H}}_z^\alpha\equiv N_\alpha^{-1}\bm{S}^\alpha\odot\bm{W}^\alpha,
    \end{equation}
    where (for $i\geq j$) $S_{i,j}^\alpha=\left\lvert G_i^\alpha\right\rvert\chi_j^\alpha$ with random variables defined as in Theorem~\ref{thm:hess_conv} and $\odot$ denotes the Hadamard product. Let $\lambda_{z,\alpha}^\ast$ denote the infinimum of the support of $\mu_{\bm{H}_z^\alpha}$. Then the expected density of local minima of $\hat{\ell}$ at a loss function value $z$ is:
    \begin{equation}
        \mathbb{E}\left[\operatorname{Crt}\left(z\right)\right]=\bigast_\alpha\mathbb{E}\left[\operatorname{Crt}_0^\alpha\right]\left(z\right),
    \end{equation}
    where
    \begin{equation}\label{eq:single_simp_comp_exact}
        \begin{aligned}
            p^{-1}\ln\left(\mathbb{E}\left[\operatorname{Crt}_0^\alpha\left(z\right)\right]\right)=&\ln\left(\frac{\cpi\max\left(2,\beta_\alpha\right)}{4\sqrt{\beta_\alpha}}\right)+\frac{1}{2\gamma_\alpha}\left(1-\frac{z}{\Tr\left(\bm{\rho}^\alpha\right)\overline{o}^\alpha}+\ln\left(\frac{z}{\Tr\left(\bm{\rho}^\alpha\right)\overline{o}^\alpha}\right)\right)\\
            &+p^{-1}\ln\mathbb{E}\left[\exp\left(p\int\dd{\mu_{\bm{\tilde{H}}_z^\alpha}\left(\lambda\right)}\ln\left(\lambda\right)\right)\bm{1}\left\{\lambda_{z,\alpha}^\ast\geq 0\right\}\right]+\operatorname{o}\left(1\right).
        \end{aligned}
    \end{equation}
\end{corollary}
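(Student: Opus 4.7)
The plan is to apply the Kac--Rice formula of Eq.~\eqref{eq:kac_rice_formula} with the asymptotic joint distribution of $(\hat\ell, \hat\ell_{;i}, \hat\ell_{;i,j})$ supplied by Theorems~\ref{thm:loss_conv},~\ref{thm:grad_conv}, and~\ref{thm:hess_conv}. The key structural observation is that because each $\bm{A}_i$ is nontrivial on only one simple sector, the three random objects entering Kac--Rice split as independent direct sums across $\alpha$: the joint loss density factors as a convolution over $\alpha$ in the loss variable, while $\mathbb{P}[\bm{G}_z=\bm{0}]$ and $\mathbb{E}[\det(\bm{H}_z)\bm{1}\{\bm{H}_z\succeq\bm{0}\}]$ factor as ordinary products over $\alpha$ (restricted to the $p_\alpha$ parameters in each sector). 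Taking logarithms immediately produces the convolution structure $\operatorname{Crt}_0=\bigast_\alpha\operatorname{Crt}_0^\alpha$ with per-sector density obeying Eq.~\eqref{eq:single_simp_comp_exact}.

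Next I would evaluate each single-sector factor. For the loss, Theorem~\ref{thm:loss_conv} with rank-$1$ $\bm{\rho}^\alpha$ reduces $\hat\ell^\alpha$ via Haar invariance of $\bm{W}^\alpha$ to $(\overline{o}^\alpha\Tr(\bm{\rho}^\alpha)/r_\alpha)W_{11}^\alpha$, whose Bartlett-decomposition marginal is $\operatorname{Gamma}(\beta_\alpha r_\alpha/2,\,2/\beta_\alpha)$-distributed. Evaluating this density at $z$, applying $\ln\Gamma(k)=k\ln k-k+\operatorname{o}(k)$, and dividing by $p_\alpha=\gamma_\alpha\beta_\alpha r_\alpha$ yields
\begin{equation}
    p_\alpha^{-1}\ln\mathbb{P}\left[\hat\ell^\alpha=z\right]=\frac{1}{2\gamma_\alpha}\left(1-\frac{z}{\overline{o}^\alpha\Tr(\bm{\rho}^\alpha)}+\ln\frac{z}{\overline{o}^\alpha\Tr(\bm{\rho}^\alpha)}\right)+\operatorname{o}(1),
\end{equation}
matching the shape-parameter term in Eq.~\eqref{eq:single_simp_comp_exact} exactly.

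For the gradient-at-zero factor, Theorem~\ref{thm:grad_conv} writes $\hat\ell_{;i\mid z^\alpha}^\alpha=c_{\alpha,z}\chi_i^\alpha G_i^\alpha$ with components independent in $i$, so the density at zero equals $(2\cpi c_{\alpha,z}^2)^{-1/2}\mathbb{E}[(\chi^\alpha)^{-1}]$. The inverse moment $\mathbb{E}[(\chi^\alpha)^{-1}]=\Gamma((k-1)/2)/(\sqrt{2}\,\Gamma(k/2))$ for $k=\max(2,\beta_\alpha)$ is explicit, and raising to the $p_\alpha$-th power and combining with the $(2\cpi)^{p_\alpha}$ Kac--Rice prefactor produces, after $p_\alpha^{-1}\ln$, the sector-independent constant $\ln(\cpi\max(2,\beta_\alpha)/(4\sqrt{\beta_\alpha}))$ in Eq.~\eqref{eq:single_simp_comp_exact}, with all remaining $c_{\alpha,z}$-dependent prefactors absorbed into the Hessian normalization.

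Finally, for the Hessian factor, Theorem~\ref{thm:hess_conv} identifies $(\hat\ell_{;i,j\mid z^\alpha,\bm{0}}^\alpha)$ with an overall scalar times $\bm{S}^\alpha\odot\bm{W}^\alpha$. Absorbing that overall scalar into the $N_\alpha^{-1}$ normalization defining $\bm{\tilde H}_z^\alpha$ in Eq.~\eqref{eq:rescaled_hess} and using $\det(\bm{\tilde H}_z^\alpha)=\exp(p_\alpha\int\ln(\lambda)\dd{\mu_{\bm{\tilde H}_z^\alpha}(\lambda)})$ (with the PSD indicator translating to $\lambda_{z,\alpha}^\ast\geq 0$) reproduces the final line of Eq.~\eqref{eq:single_simp_comp_exact}. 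The main technical obstacle is the bookkeeping of dimensional prefactors in the previous paragraph: the explicit $I_\alpha$, $\overline{o}^\alpha$, $\sigma_o^\alpha$, and $N_\alpha$ scalings arising from Theorems~\ref{thm:loss_conv}--\ref{thm:hess_conv} appear at different powers in each of the three Kac--Rice factors, and one must verify they cancel cleanly against the Stirling correction from the loss density and the Gaussian/chi normalizations from the gradient density to leave exactly the stated constant together with the intensive spectral integral. This cancellation works at leading order because $\hat\ell^\alpha$, $\hat\ell_{;i}^\alpha$, and $\hat\ell_{;i,j}^\alpha$ are all constructed from the single underlying Wishart matrix $\bm{W}^\alpha$, so that the sector-specific units are a priori dimensionally consistent; verifying this at the level of constants is the tedious but routine piece of the argument.
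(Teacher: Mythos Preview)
Your overall strategy---factor Kac--Rice across simple sectors, then evaluate the loss, gradient-at-zero, and Hessian factors using Theorems~\ref{thm:loss_conv}--\ref{thm:hess_conv}---is exactly the paper's, and your treatment of the loss and gradient pieces is correct.

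There is, however, a genuine gap in your Hessian step. You write that ``Theorem~\ref{thm:hess_conv} identifies $(\hat\ell_{;i,j\mid z^\alpha,\bm 0}^\alpha)$ with an overall scalar times $\bm S^\alpha\odot\bm W^\alpha$,'' but this is not what Theorem~\ref{thm:hess_conv} says: the Hessian there is a scalar times $G_i^\alpha\chi_j^\alpha W_{i,j}^\alpha$ with \emph{signed} Gaussians $G_i^\alpha$, whereas $\bm S^\alpha$ in the corollary statement has entries $\lvert G_i^\alpha\rvert\chi_j^\alpha$. The passage from one to the other is not cosmetic. The paper observes that the diagonal entries of the Hessian are (scalar)$\cdot G_i^\alpha\chi_i^\alpha W_{i,i}^\alpha$ with $\chi_i^\alpha W_{i,i}^\alpha\geq 0$ almost surely, so the Hessian can be positive semidefinite only on the event $\{G_i^\alpha\geq 0\ \forall i\}$. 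On that event one may replace $G_i^\alpha$ by $\lvert G_i^\alpha\rvert$, and the event itself has probability $2^{-p_\alpha}$ (independent of everything else), contributing an explicit factor of $2^{-p_\alpha}$ to $\mathbb E[\det(\bm H_z)\bm 1\{\bm H_z\succeq\bm 0\}]$.

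This missing $2^{-p_\alpha}$ is precisely what turns the constant you would otherwise obtain, $\ln\bigl(\cpi\max(2,\beta_\alpha)/(2\sqrt{\beta_\alpha})\bigr)$, into the stated $\ln\bigl(\cpi\max(2,\beta_\alpha)/(4\sqrt{\beta_\alpha})\bigr)$. Concretely: combining $(2\cpi)^{p_\alpha}$, the gradient density $\bigl(\max(2,\beta_\alpha)/(4\sigma_z)\bigr)^{p_\alpha}$, the Hessian prefactor $(N_\alpha c_h)^{p_\alpha}=(\sigma_z/\sqrt{\beta_\alpha})^{p_\alpha}$, and the $2^{-p_\alpha}$ gives $\bigl(\cpi\max(2,\beta_\alpha)/(4\sqrt{\beta_\alpha})\bigr)^{p_\alpha}$; without the last factor you are off by $2^{p_\alpha}$. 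So the ``tedious but routine'' bookkeeping you defer cannot close without first making this sign argument explicit.
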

\begin{proof}
    We will first consider a single simple component $\mathcal{A}_\alpha$ with pure $\bm{\rho}^\alpha$ (and with $\alpha$ labels implicit for clarity of notation), and describe in the end how a full distribution of local minima can be determined from this via a convolution. From Corollary~\ref{cor:loss_dens_conv} it follows that the density of the loss is:\footnote{Technically we need to consider the density of the loss rescaled by $\sqrt{N}$ to achieve pointwise convergence in densities, but the results are equivalent after rescaling $z$.}
    \begin{equation}
        \mathbb{P}\left[\hat{\ell}=z\right]=\frac{\left(\frac{\beta rz}{2I\overline{o}}\right)^{\frac{\beta r}{2}-1}\exp\left(-\frac{\beta rz}{2I\overline{o}}\right)}{2\operatorname{\Gamma}\left(\frac{\beta r}{2}\right)}.
    \end{equation}
    Similarly, using Corollary~\ref{cor:grad_dens_conv} and recalling that $\beta$ is one of $1$, $2$, or $4$, we can evaluate the density of the gradient at zero:
    \begin{equation}
        % See, for instance, the Mathematica code \texttt{Integrate[PDF[ChiDistribution[b],x]/x,{x,0,\[Infinity]}]}.
        \mathbb{P}\left[\bm{G}_z=\bm{0}\right]=\left(\frac{1}{\sqrt{2\cpi}\sigma_z}\int_0^\infty\dd{x}\frac{f_{\chi^{\max\left(2,\beta\right)}}\left(x\right)}{x}\right)^p=\left(\frac{\max\left(2,\beta\right)}{4\sigma_z}\right)^p,
    \end{equation}
    where $f_{\chi^{\max\left(2,\beta\right)}}$ is the density of a $\chi$-distributed random variable with $\max\left(2,\beta\right)$ degrees of freedom and
    \begin{equation}
        \sigma_z=\frac{2I\sigma_o}{N}\sqrt{\frac{\beta z}{I\overline{o}}}.
    \end{equation}
    Finally, we consider the Hessian determinant. Recall Theorem~\ref{thm:hess_conv} for the Hessian components, which gives the Hessian as:
    \begin{equation}
        \frac{N}{2I\sigma_o}\sqrt{\frac{I\overline{o}}{z}}\bm{H}_z=N^{-1}\bm{\tilde{S}}\odot\bm{W},
    \end{equation}
    where $\tilde{S}_{i,j}=G_i\chi_j$. We now claim that $\bm{H}_z$ can only be positive semidefinite if all $G_i\geq 0$. To see this, note that $\chi_i W_{i,i}$ is always nonnegative. When $G_i<0$, then, the $\left(i,i\right)$ entry of $\bm{\tilde{S}}\odot\bm{W}$ is negative and thus is not positive semidefinite. We therefore can consider $\left\lvert G_i\right\rvert$ rather than $G_i$ up to pulling out a factor of $2^{-p}$ from the expectation. Putting everything together, to multiplicative leading order as $N\to\infty$,
    \begin{equation}\label{eq:full_crt_pt_exp}
        \begin{aligned}
            p^{-1}\ln\left(\mathbb{E}\left[\operatorname{Crt}_0\left(z\right)\right]\right)=&\ln\left(\frac{\cpi\max\left(2,\beta\right)}{4\sqrt{\beta}}\right)+\frac{1}{2\gamma}\left(1-\frac{z}{I\overline{o}}+\ln\left(\frac{z}{I\overline{o}}\right)\right)\\
            &+p^{-1}\ln\mathbb{E}\left[\exp\left(p\int\dd{\mu_{\bm{\tilde{H}}_z}\left(\lambda\right)}\ln\left(\lambda\right)\right)\bm{1}\left\{\lambda_z^\ast\geq 0\right\}\right]+\operatorname{o}\left(1\right).
        \end{aligned}
    \end{equation}

    This completes our proof for a single simple component $\mathcal{A}_\alpha$ with pure $\bm{\rho}^\alpha$. To calculate the loss landscape of a JAWS associated with a nonsimple Jordan algebra, note that:
    \begin{equation}
        \mathbb{P}\left[\ell=z\right]=\bigast_\alpha\mathbb{P}\left[\ell^\alpha=z^\alpha\right],
    \end{equation}
    where $\bigast$ denotes convolution. The relative weights of the various sectors introduced by $\Tr_\alpha\left(\bm{\rho}^\alpha\right)$ can be accounted for by taking:
    \begin{equation}
        I_\alpha\to I_\alpha\Tr_\alpha\left(\bm{\rho}^\alpha\right)=\Tr\left(\bm{\rho}^\alpha\right).
    \end{equation}
    From Eq.~\eqref{eq:kac_rice_formula} and our simplifying assumptions, then,
    \begin{equation}
        \mathbb{E}\left[\operatorname{Crt}_0\left(z\right)\right]=\bigast_\alpha\mathbb{E}\left[\operatorname{Crt}_0^\alpha\right]_{I_\alpha\to\Tr\left(\bm{\rho}^\alpha\right)}\left(z\right),
    \end{equation}
    where $\mathbb{E}\left[\operatorname{Crt}_0^\alpha\right]$ denotes Eq.~\eqref{eq:full_crt_pt_exp} associated with the algebraic sector $\mathcal{A}_\alpha$.
\end{proof}

While Corollary~\ref{cor:exact_loc_min_dens} is exact, it is obtuse almost to the point of obscurity, particularly due to the expectation over the Hessian. The obstruction to further simplification is the presence of the Hadamard product between $\bm{S}^\alpha$ and $\bm{W}$ in $\tilde{\bm{H}}_z^\alpha$, which is difficult to handle analytically. To get around this, we consider a slightly modified quantity where we condition both sides of Eq.~\eqref{eq:kac_rice_formula} on the events:
\begin{equation}
    \left\lvert G_i\right\rvert=\chi_i.
\end{equation}
In effect this can be considered as a regularization scheme, where new parameters $\tilde{\theta}_i$ are introduced as Lagrange multipliers with associated derivatives:\footnote{The Kac--Rice formula as stated in \citet{Adler2007} allows one to consider this modified gradient jointly with the original loss.}
\begin{equation}
    \tilde{\ell}_{;i}\equiv\tilde{\theta}_i+\left\lvert G_i\right\rvert-\chi_i
\end{equation}
and we consider a sufficiently small neighborhood of $\tilde{\theta}_i=0$. In this setting the nontrivial components of $\bm{\tilde{H}}_z^\alpha$ take the much more manageable form:
\begin{equation}\label{eq:reg_hess}
        \bm{\tilde{H}}_z^\alpha=N_\alpha^{-1}\sqrt{\bm{\varSigma}^\alpha}\bm{W}^\alpha\sqrt{\bm{\varSigma}^\alpha},
\end{equation}
where $\bm{\varSigma}^\alpha$ is a diagonal matrix with entries i.i.d.\ $\chi^2$-distributed with $\max\left(2,\beta_\alpha\right)$ degrees of freedom. Analyzing the expected determinant of this random matrix leads us to prove the following.
\begin{corollary}[Density of local minima to multiplicative leading order, regularized]\label{cor:reg_loc_min_dens}
    Consider the setting of Corollary~\ref{cor:exact_loc_min_dens} conditioned on $\left\lvert G_i\right\rvert=\chi_i$ such that $\bm{\tilde{H}}_z^\alpha$ is as in Eq.~\eqref{eq:reg_hess}. Let $\mu_{\text{MP}}^\gamma$ be the Marčenko--Pastur distribution with parameter $\gamma$. Then the expected density of local minima of $\hat{\ell}$ at a loss function value $z$ is:
    \begin{equation}
        \mathbb{E}\left[\operatorname{Crt}\left(z\right)\right]=\bigast_{\alpha:\gamma_\alpha<1}\mathbb{E}\left[\operatorname{Crt}_0^\alpha\right]\left(z\right),
    \end{equation}
    where:
    \begin{equation}\label{eq:single_simp_comp_reg}
        \begin{aligned}
            p^{-1}\ln\left(\mathbb{E}\left[\operatorname{Crt}_0^\alpha\left(z\right)\right]\right)=&\ln\left(\frac{\cpi\max\left(2,\beta_\alpha\right)}{2\sqrt{\beta_\alpha}}\right)+\frac{1}{2\gamma_\alpha}\left(1-\frac{z}{\Tr\left(\bm{\rho}^\alpha\right)\overline{o}_\alpha}+\ln\left(\frac{z}{\Tr\left(\bm{\rho}^\alpha\right)\overline{o}_\alpha}\right)\right)\\
            &+\frac{\max\left(2,\beta_\alpha\right)}{2}-1-\upgamma+\int\dd{\mu_{\text{MP}}^{\gamma_\alpha}\left(\lambda\right)}\ln\left(\lambda\right)+\operatorname{o}\left(1\right).
        \end{aligned}
    \end{equation}
\end{corollary}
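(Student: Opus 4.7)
The plan is to adapt the proof of Corollary~\ref{cor:exact_loc_min_dens} by starting from the Kac--Rice formula in Eq.~\eqref{eq:kac_rice_formula} applied to the augmented, regularized system described in the text just before the corollary statement. Substituting $\left\lvert G_i^\alpha\right\rvert = \chi_i^\alpha$ into the definition of $\bm{S}^\alpha$ collapses its Hadamard product with $\bm{W}^\alpha$ to the conjugation form of Eq.~\eqref{eq:reg_hess}, where $\bm{\varSigma}^\alpha$ is diagonal with i.i.d.\ $\chi^2$-distributed entries of $\max\left(2,\beta_\alpha\right)$ degrees of freedom. Crucially, $\tilde{\bm{H}}_z^\alpha$ is then positive semidefinite almost surely, so the indicator $\bm{1}\left\{\tilde{\bm{H}}_z^\alpha\succeq\bm{0}\right\}$ appearing in Corollary~\ref{cor:exact_loc_min_dens} evaluates to $1$. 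This removes the factor of $2^{-p}$ used in that proof to restrict to $G_i\geq 0$, shifting the prefactor from $\ln\left(\cpi\max\left(2,\beta_\alpha\right)/(4\sqrt{\beta_\alpha})\right)$ to $\ln\left(\cpi\max\left(2,\beta_\alpha\right)/(2\sqrt{\beta_\alpha})\right)$.

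I would next evaluate $p^{-1}\ln\mathbb{E}\left[\det\tilde{\bm{H}}_z^\alpha\right]$ via the multiplicative decomposition $\det\tilde{\bm{H}}_z^\alpha = N_\alpha^{-p_\alpha}\det\bm{\varSigma}^\alpha\det\bm{W}^\alpha$; by independence, the log-expectation factorizes into two pieces. For the diagonal part, concentration of the sum of $p_\alpha$ i.i.d.\ copies of $\ln\chi^2_{\max\left(2,\beta_\alpha\right)}$ combined with the classical identity $\mathbb{E}\left[\ln\chi^2_k\right] = \psi\left(k/2\right) + \ln 2$ and $\psi\left(n\right) = -\upgamma + \sum_{j=1}^{n-1}j^{-1}$ for positive integer $n$ yields
\begin{equation*}
    p^{-1}\ln\mathbb{E}\left[\det\bm{\varSigma}^\alpha\right] = \frac{\max\left(2,\beta_\alpha\right)}{2} - 1 - \upgamma + \ln 2 + \operatorname{o}\left(1\right),
\end{equation*}
with the spurious $\ln 2$ absorbed into the shifted prefactor identified above. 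For the Wishart factor when $\gamma_\alpha<1$, the empirical spectral distribution of $N_\alpha^{-1}\bm{W}^\alpha$ converges weakly to $\mu_{\text{MP}}^{\gamma_\alpha}$ as recalled in Appendix~\ref{sec:wishart_background}, whose support is bounded away from zero, giving $p^{-1}\ln\det\left(N_\alpha^{-1}\bm{W}^\alpha\right)\to\int\dd{\mu_{\text{MP}}^{\gamma_\alpha}\left(\lambda\right)}\ln\lambda$ in probability and, via uniform integrability from Bai--Yin control on $\lambda_{\min}$, in expectation. When $\gamma_\alpha\geq 1$, $\bm{W}^\alpha$ is almost surely rank-deficient and $\det\bm{W}^\alpha = 0$, so that sector contributes nothing at leading order and is dropped from the convolution. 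Combining these pieces sector by sector recovers Eq.~\eqref{eq:single_simp_comp_reg}; the convolution structure over simple components follows from their independence exactly as in Corollary~\ref{cor:exact_loc_min_dens}.

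The main obstacle will be promoting the weak convergence of $p^{-1}\ln\det\left(N_\alpha^{-1}\bm{W}^\alpha\right)$ to convergence in expectation, because $\ln\lambda$ is unbounded near the hard edge of $\mu_{\text{MP}}^{\gamma_\alpha}$ as $\gamma_\alpha\to 1^-$. Rigorous treatment requires Bai--Yin-type tail bounds on $\lambda_{\min}\left(N_\alpha^{-1}\bm{W}^\alpha\right)$ together with the closed-form $\int\dd{\mu_{\text{MP}}^{\gamma_\alpha}\left(\lambda\right)}\ln\lambda = -1 + \left(\gamma_\alpha^{-1}-1\right)\ln\left(1-\gamma_\alpha\right)$ to control finite-$N_\alpha$ corrections uniformly on compact subsets of $\gamma_\alpha\in\left(0,1\right)$; outside such sets additional care is needed to show the contribution from near-zero eigenvalues is benign so that the $\operatorname{o}\left(1\right)$ error bound in Eq.~\eqref{eq:single_simp_comp_reg} holds on the nose.
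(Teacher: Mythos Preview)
Your proposal tracks the paper's structure closely: removing the indicator (and with it the $2^{-p}$ factor) because the regularized Hessian is almost surely positive semidefinite, factoring $\det\tilde{\bm{H}}_z^\alpha$ into $\det\bm{\varSigma}^\alpha$ and $\det(N_\alpha^{-1}\bm{W}^\alpha)$ via independence, and computing the $\bm{\varSigma}^\alpha$ contribution through $\mathbb{E}[\ln\chi^2_k]$. The sector-by-sector convolution and the $\gamma_\alpha\geq 1$ degeneracy also match the paper.

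The substantive divergence---and the genuine gap---is in the Wishart determinant. The Kac--Rice expression in Corollary~\ref{cor:exact_loc_min_dens} requires the \emph{annealed} quantity $p^{-1}\ln\mathbb{E}\bigl[\det(N_\alpha^{-1}\bm{W}^\alpha)\bigr]$, but your uniform-integrability argument via Bai--Yin only delivers the \emph{quenched} limit $\mathbb{E}\bigl[p^{-1}\ln\det(N_\alpha^{-1}\bm{W}^\alpha)\bigr]\to\int\ln\lambda\,\dd{\mu_{\text{MP}}^{\gamma_\alpha}}$. These are not interchangeable: since $\det=\exp\bigl(p\cdot p^{-1}\ln\det\bigr)$ lives at exponential scale in $p$, uniform integrability of $p^{-1}\ln\det$ says nothing about $\mathbb{E}\bigl[\exp(p\cdot p^{-1}\ln\det)\bigr]$. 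What is actually needed is that fluctuations of $p^{-1}\ln\det$ away from its limit are suppressed faster than $\exp(-cp)$ for every fixed $c$.

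The paper supplies precisely this via Appendix~\ref{sec:large_devs}: the empirical spectral measure of $\bm{W}^\alpha$ satisfies a large-deviations principle at speed $p^2$, whereas the log-determinant functional is linear in that measure at speed $p$. Varadhan's lemma then forces $p^{-1}\ln\mathbb{E}\bigl[\exp(p\int\ln\lambda\,\dd{\mu})\bigr]$ to concentrate at the rate-function minimizer $\mu_{\text{MP}}^{\gamma_\alpha}$, so the annealed and quenched limits coincide. Your Bai--Yin route could in principle be salvaged---those bounds do give exponential-in-$p$ tail control on extreme eigenvalues---but the argument must be reframed around tail probabilities at the $\exp(-cp)$ scale rather than uniform integrability of $p^{-1}\ln\det$, and that reframing is essentially the LDP argument the paper invokes.
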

\begin{proof}
    As in Corollary~\ref{cor:exact_loc_min_dens} we focus on a single simple component $\mathcal{A}_\alpha$. $\bm{\varSigma}$ is positive definite with probability $1$. As it is diagonal with i.i.d.\ $\chi^2$-distributed random variables each with $\max\left(2,\beta\right)$ degrees of freedom,
    \begin{equation}
        \exp\left(p\int\dd{\mu_{\bm{\varSigma}}}\left(\lambda\right)\ln\left(\lambda\right)\right)=\exp\left(p\left(\frac{\max\left(2,\beta\right)}{2}-1-\upgamma+\ln\left(2\right)\right)\right),
    \end{equation}
    where $\dd{\mu_{\bm{\varSigma}}}$ is the empirical spectral distribution of $\bm{\varSigma}$ and $\upgamma$ is the Euler--Mascheroni constant. What remains to be considered is the spectrum of
    \begin{equation}\label{eq:d_def}
        \bm{D}_z\equiv\bm{\varSigma}^{-\frac{1}{2}}\bm{\tilde{H}}_z\bm{\varSigma}^{-\frac{1}{2}}=N^{-1}\bm{W}=\left(\frac{\beta r}{N}\right)\left(\beta r\right)^{-1}\bm{W}.
    \end{equation}
    Recall that by assumption $\frac{\sigma_o}{\overline{o}}=\operatorname{o}\left(1\right)$ so by, e.g., Lemma~\ref{lem:sigma_o_simp} (proved in Appendix~\ref{sec:proof_main_res_prelim}):
    \begin{equation}
        \ln\left(\frac{\beta r}{N}\right)=\operatorname{o}\left(1\right).
    \end{equation}
    We need only focus on $\left(\beta r\right)^{-1}\bm{W}$, then. As discussed in Appendix~\ref{sec:wishart_background}, this random matrix has empirical eigenvalue spectrum weakly converging almost surely to the Marčenko--Pastur distribution with parameter $\gamma$. In principle, large deviations in this convergence---even if they occur with exponentially small probability---can contribute corrections to the expected determinant due to its exponential sensitivity on the eigenvalues of $\bm{D}_z$. However, we show in Appendix~\ref{sec:large_devs} that these large deviations are dominated in the expectation by the Marčenko--Pastur distribution. Noting that the Marčenko--Pastur distribution with parameter $\gamma$ has support at the origin if and only if $\gamma>1$ and taking convolutions over simple algebraic sectors as in the proof of Corollary~\ref{cor:exact_loc_min_dens} then yields the final result.
\end{proof}
Dropping multiplicatively subleading factors from Eq.~\eqref{eq:single_simp_comp_reg}, we effectively have demonstrated that the density $\kappa_\alpha\left(z\right)$ of local minima for a given simple component $\mathcal{A}_\alpha$ is asymptotically given by:
\begin{equation}\label{eq:leading_crt_pt_simp}
    \kappa_\alpha\left(z\right)=f_\Gamma\left(\frac{z}{\Tr\left(\bm{\rho}^\alpha\right)\overline{o}^\alpha};\frac{p_\alpha}{2\gamma_\alpha},\frac{2\gamma_\alpha}{p_\alpha}\right)
\end{equation}
if $\gamma_\alpha<1$, and otherwise:
\begin{equation}
    \kappa_\alpha\left(z\right)=\operatorname{\delta}\left(z\right).
\end{equation}
We here have used the expression for the density of the gamma distribution:
\begin{equation}
    f_\Gamma\left(x;k,\theta\right)=\frac{1}{\operatorname{\Gamma}\left(k\right)\theta^k}x^{k-1}\exp\left(-\frac{x}{\theta}\right).
\end{equation}
Convolving $\kappa_\alpha$ over many simple sectors thus yields the final density for $z>0$:
\begin{equation}
    \kappa\left(z\right)=\bigast_{\alpha:\gamma_\alpha<1}f_\Gamma\left(\frac{z}{\overline{o}^\alpha\Tr\left(\bm{\rho}^\alpha\right)};\frac{\beta_\alpha r_\alpha}{2},\frac{2}{\beta_\alpha r_\alpha}\right).
\end{equation}
This distribution is illustrated in Figure~\ref{fig:loss_density}(C) for various parameter regimes. To multiplicative leading order in $\gamma$ this agrees exactly with the asymptotic local minima distribution studied in \citet{anschuetz2021critical,anschuetz2022barren}. See for instance Eq.~(1) of \citet{anschuetz2021critical}, which studies the case $\mathcal{A}=\mathcal{H}^N\left(\mathbb{C}\right)$; i.e., one takes $\beta_\alpha\to 2$, $r_\alpha\to m$, and $\frac{z}{\Tr\left(\bm{\rho}^\alpha\right)\overline{o}_\alpha}\to x$ to translate from our setting to their setting.

We can simplify this expression even further by noting that, asymptotically, the convolution of many gamma distributions is also gamma-distributed by the Welch--Satterthwaite equation~\citep{64768ed2-5a9f-38a1-bd23-4e118811f875,10.1093/biomet/34.1-2.28}. This yields:
\begin{equation}
    \kappa\left(z\right)=f_\Gamma\left(z;k_{\text{eff}},\theta_{\text{eff}}\right),
\end{equation}
where
\begin{align}
    k_{\text{eff}}&=\frac{\left(\sum_{\alpha:\gamma_\alpha<1}\overline{o}^\alpha\Tr\left(\bm{\rho}^\alpha\right)\right)^2}{\sum_{\alpha:\gamma_\alpha<1}\frac{2\left(\overline{o}^\alpha\right)^2\Tr\left(\bm{\rho}^\alpha\right)^2}{\beta_\alpha r_\alpha}}\xrightarrow{\left\{N_\alpha\to\infty\right\}_\alpha}\overline{\ell}_{\text{u.p.}}^2\left(\sum_{\alpha:\gamma_\alpha<1}\frac{\Tr\left(\left(\bm{O}^\alpha\right)^2\right)\Tr\left(\left(\bm{\rho}^\alpha\right)^2\right)}{\dim_{\mathbb{R}}\left(\mathfrak{g}_\alpha\right)}\right)^{-1},\\
    \theta_{\text{eff}}&=\frac{\sum_{\alpha:\gamma_\alpha<1}\frac{2\left(\overline{o}^\alpha\right)^2\Tr\left(\bm{\rho}^\alpha\right)^2}{\beta_\alpha r_\alpha}}{\sum_{\alpha:\gamma_\alpha<1}\overline{o}^\alpha\Tr\left(\bm{\rho}^\alpha\right)}\xrightarrow{\left\{N_\alpha\to\infty\right\}_\alpha}\overline{\ell}_{\text{u.p.}}^{-1}\sum_{\alpha:\gamma_\alpha<1}\frac{\Tr\left(\left(\bm{O}^\alpha\right)^2\right)\Tr\left(\left(\bm{\rho}^\alpha\right)^2\right)}{\dim_{\mathbb{R}}\left(\mathfrak{g}_\alpha\right)}.
\end{align}
Here,
\begin{equation}
    \overline{\ell}_{\text{u.p.}}\equiv\sum_{\alpha:\gamma_\alpha<1}\overline{o}^\alpha\Tr\left(\bm{\rho}^\alpha\right)
\end{equation}
is the mean loss function value over the underparameterized sectors, and the limit in each line is due to the identities (recalling that here we assume that $\bm{\rho}^\alpha$ is rank-$1$ in its defining representation):
\begin{align}
    \left(\overline{o}^\alpha\right)^2\Tr\left(\bm{\rho}^\alpha\right)^2=I_\alpha^2\left(\overline{o}^\alpha\right)^2\Tr_\alpha\left(\left(\bm{\rho}^\alpha\right)^2\right)&=I_\alpha\left(\overline{o}^\alpha\right)^2\Tr\left(\left(\bm{\rho}^\alpha\right)^2\right),\\
    I_\alpha\left(\overline{o}^\alpha\right)^2=\frac{I_\alpha}{N_\alpha^2}\Tr_\alpha\left(\bm{O}^\alpha\right)^2=\frac{I_\alpha r_\alpha}{N_\alpha^2}\Tr_\alpha\left(\left(\bm{O}^\alpha\right)^2\right)&=\frac{r_\alpha}{N_\alpha^2}\Tr\left(\left(\bm{O}^\alpha\right)^2\right),
\end{align}
as well as the identities considered in Appendix~\ref{sec:barren_plateaus}.

Intriguingly, the relevant features of this density in the underparameterized regime are controlled by the $\mathcal{A}_\alpha$-purities of both $\bm{O}$ and $\bm{\rho}$ in the sectors in which they are underparameterized. This is the same quantity which controls the variance of the loss function (see Appendix~\ref{sec:barren_plateaus}). However, even when there are no barren plateaus in the loss landscape---for instance, if the variance of the loss function are polynomially vanishing in $N$---the density $\kappa\left(z\right)$ may still have exponentially small measure near $z=0$ as the gamma distribution has exponential tails.

We reemphasize that our calculation of the local minima density was performed assuming the variance of the spectral distribution of each $\bm{O}^\alpha$ (in units of the mean eigenvalue) asymptotically vanishes. This was also the setting studied in previous work on the local minima of QNNs~\citep{anschuetz2021critical,anschuetz2022barren}. This assumption allows us to dramatically simplify the Hessian to the form given in Theorem~\ref{thm:hess_conv}. Though it holds for low-weight fermionic~\citep{feng2019spectrum} and local spin Hamiltonians~\citep{erdHos2014phase}, it does \emph{not} hold for the Gaussian unitary ensemble (GUE) or nonlocal spin systems; these systems are also known to have efficient quantum algorithms that prepare their low-energy states, unlike their local cousins~\citep{chifang2023sparse}. We hope in the future to analyze whether this property also has an impact on the behavior of local minima of QNNs.

We finally note an interesting connection between our results and algorithmic hardness. We here only calculate the expected density of local minima at a given function value. If the second moment of the local minima density is also sufficiently well-behaved, then it could be the case that the asymptotic density of local minima has a fixed distributional form; this is the case in (classical) spherical spin glass models~\citep{10.1214/16-AOP1139}. The function value at which these local minima proliferate w.h.p. is conjectured to also hold as a \emph{general} (i.e., beyond gradient descent) \emph{algorithmic threshold}, that is, it is generically believed to be the function value at which better approximations to the ground state become algorithmically intractable to find. This conjecture is known to be true for the specific case of pure spherical spin glass models~\citep{huang2023algorithmicthresholdmultispeciesspherical}. Studying second moments of quantum spin glass local minima distributions may thus be a tractable avenue for studying the algorithmic hardness of quantum problems.

\section{Proofs of the Main Results}\label{sec:proof_main_res}

\subsection{Preliminaries}\label{sec:proof_main_res_prelim}

We now give in full detail the proofs of the main results discussed in Appendix~\ref{sec:main_results_jaws}. We begin by giving definitions and notational conventions that we will use throughout our proofs. Recall that, given a JAWS $\mathcal{J}=\left(\bigoplus_\alpha\mathcal{A}_\alpha,\mathcal{T},\bm{\mathcal{G}},\bm{\mathcal{H}},\bm{A},\bm{O}\right)$, we are interested in the joint distribution of the loss $\ell\left(\bm{\theta};\bm{\rho}\right)$ and its first two derivatives over a set of input states $\bm{\rho}\in\mathcal{R}$, where:
\begin{equation}
    \ell\left(\bm{\theta};\bm{\rho}\right)=\sum_\alpha\ell^\alpha\left(\bm{\theta};\bm{\rho}\right)\equiv\sum_\alpha I_\alpha\Tr_\alpha\left(\bm{\rho}^\alpha \bm{U}^\alpha\left(\bm{\theta}\right)\bm{O}^\alpha \bm{U}^{\alpha\dagger}\left(\bm{\theta}\right)\right)
\end{equation}
and
\begin{equation}\label{eq:param_ansatz}
    \bm{U}^\alpha\left(\bm{\theta}\right)\equiv \bm{g}_0^{\alpha\dagger}\left(\prod_{i=1}^p \bm{g}_i^\alpha\exp\left(\theta_i \bm{A}_i^\alpha\right)\bm{g}_i^{\alpha\dagger}\right)\bm{h}^\alpha
\end{equation}
with $\bm{g}_i^\alpha,\bm{h}^\alpha\in G_\alpha$, and $\bm{A}_i^\alpha\in\mathfrak{g}_\alpha$. As previously discussed we will use $\cdot^\alpha$ to denote the defining representation of the projection of $\cdot$ into $\mathcal{A}_\alpha$. As each sector labeled by $\alpha$ is independent we will here only consider a single $\alpha$ WLOG, with nonzero $\bm{\rho}^\alpha$ and $\bm{O}^\alpha$. This will also allow us to remove the cumbersome notation of labeling all objects with the index $\alpha$ for the remainder of this section. To further simplify the language, we will use the term ``unitary'' to refer to ``orthogonal,'' ``unitary,'' or ``hyperunitary'' in the context of $\mathbb{F}=\mathbb{R},\mathbb{C},\mathbb{H}$, respectively, unless otherwise explicitly stated. Similarly, we will later see that our results hold for any $\bm{\theta}$; we will thus leave the $\bm{\theta}$-dependence of $\ell$ implicit from here on out to save on notation. In the following we will use $\ket{\mu}$ (i.e., Greek letters), $0\leq\mu\leq N_\alpha-1$ to denote basis vectors in the vector space on which the defining representation of $\mathcal{H}^{N_\alpha}\left(\mathbb{F}_\alpha\right)$ acts. We will later use $\ket{i}$ (i.e., Latin letters), $0\leq i\leq p$ to denote vectors in the vector space on which the defining representation of $\mathcal{H}^{p+1}\left(\mathbb{F}_\alpha\right)$ acts.

We now detail our choice of $\bm{A}_i$ given Assumption~\ref{ass:rank_1}. Assumption~\ref{ass:rank_1} has a nice interpretation as taking the $\bm{A}_i$ to be low-rank (representations of) basis elements of the Cartan subalgebra $\mathfrak{h}\subseteq\mathfrak{g}$. Up to the adjoint action of $G$ and an overall normalization (which can be absorbed into $\theta_i$), then, we can take WLOG:
\begin{equation}
    \bm{A}_i=\ket{0}\bra{1}-\ket{1}\bra{0}
\end{equation}
when $\mathbb{F}=\mathbb{R}$ and
\begin{equation}
    \bm{A}_i=\ci\ket{0}\bra{0}
\end{equation}
when $\mathbb{F}=\mathbb{C}$. When $\mathbb{F}=\mathbb{H}$ we will take a parameterization of the form:
\begin{equation}\label{eq:symp_param}
    \exp\left(\theta_i \bm{A}_i\right)=\exp\left(\ci\theta_i^{\left(\ci\right)}\ket{0}\bra{0}\right)\exp\left(\cj\theta_i^{\left(\cj\right)}\ket{0}\bra{0}\right)\exp\left(\ck\theta_i^{\left(\ck\right)}\ket{0}\bra{0}\right);
\end{equation}
that is, we will assume we have full control over the quaternionic phase. We have chosen here for the $\bm{A}_i$ to be $i$-independent for convenience, moving any $i$-dependence to the conjugating unitaries of each layer $\bm{g}_i^\alpha\in G_\alpha$.

Finally, before continuing we give a convenient relation between
\begin{equation}\label{eq:r_def}
    r=\frac{\left\lVert\bm{O}\right\rVert_\ast^2}{\left\lVert\bm{O}\right\rVert_{\text{F}}^2}
\end{equation}
and the standard deviation $\sigma_o$ of the eigenvalues of $\bm{O}$. This relation will be used to simplify some of our later expressions.
\begin{lemma}[$r$ and $\sigma_o$ relation]\label{lem:sigma_o_simp}
    Let $\bm{O}$ be an $N\times N$ Hermitian operator and $r$ be as in Eq.~\eqref{eq:r_def}. Let $\sigma_o$ be the standard deviation of the eigenvalues of $\bm{O}$, and $\overline{o}$ the arithmetic mean. Then:
    \begin{equation}
        \sqrt{\frac{N}{r}-1}=\frac{\sigma_o}{\overline{o}}.
    \end{equation}
\end{lemma}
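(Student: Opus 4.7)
The plan is a direct eigenvalue calculation, since both sides of the claimed identity are purely spectral invariants of $\bm{O}$. First I would diagonalize $\bm{O}$ with eigenvalues $o_1,\ldots,o_N$, so that $\overline{o}=N^{-1}\sum_\mu o_\mu$ and $\sigma_o^2 = N^{-1}\sum_\mu o_\mu^2 - \overline{o}^2$. The Frobenius norm is then immediately $\lVert\bm{O}\rVert_{\mathrm{F}}^2 = \sum_\mu o_\mu^2 = N(\sigma_o^2+\overline{o}^2)$.

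The one delicate point is the trace norm $\lVert\bm{O}\rVert_\ast = \sum_\mu |o_\mu|$. In the setting of the paper, $\bm{O}^\alpha$ arises as the projection of an objective observable and the degrees-of-freedom parameter is defined via $\Tr_\alpha(\bm{O}^\alpha)^2/\Tr_\alpha((\bm{O}^\alpha)^2)$ (cf.\ Eq.~\eqref{eq:dof_def_intro}); for this to agree with $\lVert\bm{O}\rVert_\ast^2/\lVert\bm{O}\rVert_{\mathrm{F}}^2$ one needs $|\Tr(\bm{O})| = \lVert\bm{O}\rVert_\ast$, i.e.\ that all nonzero eigenvalues of $\bm{O}$ share a common sign (WLOG $\bm{O}\succeq\bm{0}$ up to a global shift which does not affect the loss landscape up to a constant, since $\overline{o}$ appears on the right). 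Under this convention, $\lVert\bm{O}\rVert_\ast = \sum_\mu o_\mu = N\overline{o}$.

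Substituting yields
\begin{equation}
    r = \frac{\lVert\bm{O}\rVert_\ast^2}{\lVert\bm{O}\rVert_{\mathrm{F}}^2} = \frac{N^2\overline{o}^2}{N(\sigma_o^2+\overline{o}^2)} = \frac{N\overline{o}^2}{\sigma_o^2+\overline{o}^2},
\end{equation}
so that $N/r = 1 + \sigma_o^2/\overline{o}^2$, and rearranging and taking square roots gives the claim. The only ``step'' of any substance is the trace-norm identification in the previous paragraph; everything else is routine algebra on the eigenvalue list. I do not anticipate any genuine obstacle — the lemma is essentially a reformulation of the variance identity $\mathbb{E}[X^2] = \mathrm{Var}(X) + \mathbb{E}[X]^2$ applied to the empirical distribution of eigenvalues of $\bm{O}$.
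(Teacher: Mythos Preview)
Your proposal is correct and follows essentially the same route as the paper: both reduce to the variance identity $\overline{o}_{\mathrm{RMS}}^2=\overline{o}^2+\sigma_o^2$ together with the identification $\lVert\bm{O}\rVert_\ast=N\overline{o}$. In fact you are more explicit than the paper about the sign assumption needed for that trace-norm identification, which the paper leaves implicit.
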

\begin{proof}
    Let $\overline{o}_{\textrm{RMS}}$ be the root mean square of the eigenvalues of $\bm{O}$. Then:
    \begin{equation}
        \begin{aligned}
            \sqrt{\frac{N}{r}-1}&=\sqrt{\frac{N\left\lVert \bm{O}\right\rVert_{\textrm{F}}^2}{\left\lVert \bm{O}\right\rVert_{\ast}^2}-1}\\
            &=\sqrt{\frac{\overline{o}_{\textrm{RMS}}^2}{\overline{o}^2}-1}\\
            &=\frac{\sigma_o}{\overline{o}}.
        \end{aligned}
    \end{equation}
\end{proof}

\subsection{Asymptotic Expression for the Loss}\label{sec:loss_exp_deriv}

We now proceed with the proofs of our main results. This subsection is devoted to a series of reductions that will allow us to consider the $\bm{h}\bm{O}\bm{h}^\dagger$ as $\beta$-Wishart matrices up to a controlled error in L\'{e}vy--Prokhorov metric, thus proving Theorem~\ref{thm:loss_conv}. Along the way we will also prove a reduction to taking the $\bm{g}_i$ to be Haar random, once again up to some bounded error in L\'{e}vy--Prokhorov metric. In proving these results we will heavily rely on various lemmas on convergence in distribution given in Appendix~\ref{sec:lem_weak_conv}.

\subsubsection{Reduction to Haar Random \texorpdfstring{$\bm{h},\bm{g}_i$}{h,gi}}

We first argue that, under Assumption~\ref{ass:t_design}, the $\bm{h}$ in Eq.~\eqref{eq:param_ansatz} can be assumed to be Haar random over $G$ up to some bounded error in L\'{e}vy--Prokhorov metric. We also show that this remains true when scaling by any choice of $N$-dependent normalization $\mathcal{N}$ with the only requirement being that central moments of large (constant) order have a well-defined, finite limit as $N\to\infty$; for which maximal choice of $\mathcal{N}$ this is true depends on the exact form of the $\epsilon$-approximate $t$-design, but is always true for $\mathcal{N}=1$. In the following we implicitly consider a sequence of objects as $\epsilon^{-1},t,N\to\infty$.
\begin{lemma}[Weak convergence to Haar random $\bm{h}$]\label{lem:haar_random_h}
    Assume $\bm{h}$ forms an $\epsilon$-approximate $t$-design. Let $L_i$, $i\in\left[d\right]$ be multilinear functions of the form:
    \begin{equation}
        L_i=\Tr\left(\bm{M}_i\bm{h}\bm{O}\bm{h}^\dagger\right),
    \end{equation}
    where the $\bm{M}_i$ have bounded operator norm and $\bm{h}\sim\mathcal{H}$. Let $\tilde{L}_i$ be the same with $\bm{h}\to\bm{\tilde{h}}$, where $\bm{\tilde{h}}$ is Haar random. Assume any $N$-dependent normalization $\mathcal{N}\leq\operatorname{O}\left(N\right)$, and assume that there exists a constant $k^\ast$ such that all central moments of constant order $k>k^\ast$ of the $\mathcal{N}L_i$ have a finite limit as $\epsilon^{-1},t,N\to\infty$. Then the joint distribution of $\mathcal{N}L_i$ differs from that of $\mathcal{N}\tilde{L}_i$ by an error at most
    \begin{equation}\label{eq:pi_1}
        \pi_1=\operatorname{O}\left(\frac{d\log\log\left(\mathcal{N}^{-t}\epsilon^{-1}\right)}{\log\left(\mathcal{N}^{-t}\epsilon^{-1}\right)}+\frac{d\log\left(t\right)}{\sqrt{t}}\right)
    \end{equation}
    in L\'{e}vy--Prokhorov metric as $\epsilon^{-1},t,N\to\infty$.
\end{lemma}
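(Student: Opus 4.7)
The plan is to compare the joint characteristic functions
\[
\phi(\bm s)=\mathbb E_{\bm h\sim\mathcal H}\left[\exp\left(\ci\sum_{j=1}^d s_j \mathcal N L_j\right)\right],\quad
\tilde\phi(\bm s)=\mathbb E_{\bm{\tilde h}\sim\mathrm{Haar}}\left[\exp\left(\ci\sum_{j=1}^d s_j \mathcal N \tilde L_j\right)\right],
\]
and then convert a uniform bound on $|\phi-\tilde\phi|$ into a Lévy--Prokhorov estimate via an Esseen-type smoothing inequality. The starting observation is that each $L_j=\Tr(\bm M_j\bm h\bm O\bm h^\dagger)$ is of degree $(1,1)$ in the entries of $\bm h$ and $\bm h^\dagger$, so any mixed product $L_{j_1}\cdots L_{j_k}$ is a polynomial of degree $(k,k)$. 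For $k\le t$ the $\epsilon$-approximate $t$-design hypothesis therefore gives $\bigl|\mathbb E_{\mathcal H}[L_{j_1}\cdots L_{j_k}]-\mathbb E_{\mathrm{Haar}}[L_{j_1}\cdots L_{j_k}]\bigr|\le C^k\epsilon$, with $C$ depending only on the uniform operator-norm bounds on $\bm M_j$ and $\bm O$.

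I would then Taylor-expand $\phi$ and $\tilde\phi$ to some cutoff order $k^\star\le t$. The difference of the truncated polynomial parts is bounded termwise using the moment-matching estimate, giving $O\bigl(\sum_k(d|\bm s|\mathcal N)^k\epsilon/k!\bigr)=O(\ce^{d|\bm s|\mathcal N}\epsilon)$; the Taylor remainder is controlled by the hypothesis that central moments of order $k>k^\ast$ are uniformly bounded, producing a term of size $(C'd|\bm s|/k^\star)^{k^\star}$ after Stirling. Balancing the two competing errors by choosing $k^\star\sim\log(\mathcal N^{-t}\epsilon^{-1})/\log\log(\mathcal N^{-t}\epsilon^{-1})$ yields the first summand of $\pi_1$ once one passes from the characteristic-function bound to Lévy--Prokhorov through the multivariate smoothing inequality, with the overall factor of $d$ coming from the fact that the smoothing integral is over $\bm s\in\mathbb R^d$.

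The second summand $d\log(t)/\sqrt t$ arises from the tail region $|\bm s|\gtrsim\sqrt t/\log t$, where the Taylor expansion is useless and one must instead exploit decay of $\tilde\phi$ directly. For Haar $\bm{\tilde h}$, standard concentration results give that $\tilde L_j$ is approximately Gaussian, so $\tilde\phi$ enjoys CLT-scale decay; the $t$-design inherits this decay up to a Berry--Esseen-type rate $\log(t)/\sqrt t$ coming from matching the first $t$ moments exactly, and this error scales linearly in $d$ through the smoothing step.

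The main obstacle is the careful bookkeeping of coefficient magnitudes when expanding $L_{j_1}\cdots L_{j_k}$ as a polynomial in the entries of $\bm h,\bm h^\dagger$. Naively this polynomial has $\sim N^{2k}$ monomials and its aggregate $\ell^1$ coefficient norm blows up with $N$; only by using the operator-norm hypothesis to bound each monomial coefficient individually---rather than summing them---does one recover the clean $C^k\epsilon$ moment-matching estimate, which in turn is what permits $\mathcal N$ to be as large as $O(N)$. A secondary subtlety is choosing the smoothing inequality so that $d$ enters linearly (rather than exponentially) in both summands of $\pi_1$; this requires applying the smoothing dimension by dimension, treating the $d$-fold product structure of the characteristic function explicitly.
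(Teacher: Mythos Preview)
Your overall architecture---compare characteristic functions via moment control, then pass to L\'{e}vy--Prokhorov through a smoothing inequality---is exactly the route the paper takes (it packages this into a helper corollary that feeds moment bounds into an Esseen-type inequality). But your accounting of where the two summands of $\pi_1$ come from is off in a way that matters.

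The first summand does \emph{not} arise from optimizing a Taylor cutoff $k^\star$. In the paper's argument one simply uses all moments up to order $t$ for the difference estimate and bounds moments of order $k>t$ by $(C\sqrt{k})^k$ (this comes from boundedness of $L_i$ as a random variable and, on the Haar side, sub-Gaussianity when $\mathcal{N}=\operatorname{O}(N)$). The resulting characteristic-function bound is only valid for $\lVert\bm{s}\rVert_1\lesssim\sqrt{t}$; plugging this into the smoothing inequality one must take $T=\min\bigl(\tfrac{c}{d}\log(\mathcal{N}^{-t}\epsilon^{-1}),\,\tfrac{c'\sqrt{t}}{d}\bigr)$, and the smoothing error $d\log(T)/T$ evaluated at each branch of the minimum produces the two summands of $\pi_1$.

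Your proposed mechanism for the second summand is a genuine gap. You say it comes from CLT-scale decay of $\tilde\phi$ in a tail region, with the $t$-design ``inheriting'' this decay up to a Berry--Esseen rate $\log(t)/\sqrt{t}$. But moment matching to order $t$ gives you control on $|\phi-\tilde\phi|$, not on $|\phi|$ itself, so you cannot bootstrap decay of $\phi$ from decay of $\tilde\phi$ without already knowing the difference is small---which is circular. No CLT or Berry--Esseen input is needed: the $\sqrt{t}$ enters purely as the radius of validity of the Taylor-remainder estimate (the series $\sum_{k>t}(C\lVert\bm{s}\rVert_1)^k k^{k/2}/k!$ converges only for $\lVert\bm{s}\rVert_1\lesssim\sqrt{t}$), and $d\log(t)/\sqrt{t}$ is just the smoothing error at that cutoff.
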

\begin{proof}
    As the $L_i-\mathbb{E}\left[L_i\right]$ are bounded random variables, there exists some constant $C>0$ such that:
    \begin{equation}
        \mathbb{E}\left[\prod_{i\in s_k}\left(L_i-\mathbb{E}\left[L_i\right]\right)\right]\leq\left(C\sqrt{k}\right)^k
    \end{equation}
    for all index multisets $s_k$ of cardinality $k$. By the given assumption on $\mathcal{N}$ a similar bound also holds for central moments of the $\mathcal{N}L_i$ of sufficiently high order. The bound $\mathcal{N}=\operatorname{O}\left(N\right)$ ensures that this central moment bound also holds true for the $\mathcal{N}\tilde{L}_i$ by sub-Gaussianity~\citep{meckes2019,vershynin2018}, and also ensures that the $\mu$ of Corollary~\ref{cor:lp_bound_moments} is subleading to Eq.~\eqref{eq:pi_1}. The conditions of Corollary~\ref{cor:lp_bound_moments} are then satisfied and the final result yielded.
\end{proof}
By incurring this error in L\'{e}vy--Prokhorov metric we now can assume that the $\bm{h}$ (given Assumption~\ref{ass:t_design}) are Haar random when considering just the distribution of the loss (so $d=1$ in Lemma~\ref{lem:haar_random_h}). When considering the loss, gradient, and Hessian jointly, the addition of Assumption~\ref{ass:asymp_growth_p} means that we can assume that both the $\bm{h}$ and the $\bm{g}_i$ are Haar random. We note, however, that up to requiring a worse bound on $\mathcal{N}$ (i.e., $\mathcal{N}\leq\operatorname{o}\left(N^{\frac{2}{3}}\right)$) the $\bm{g}_i$ being drawn from an $\epsilon$-approximate $2$-design suffices. This argument is given in detail in Appendix~\ref{sec:two_design_suffices}.

When the $\bm{h}$ and $\bm{g}_i$ are reduced to being drawn i.i.d.\ from the Haar distribution it is apparent that the joint distribution of the loss with its first two derivatives is invariant under translations of the parameters. This justifies us fixing $\bm{\theta}=\bm{0}$ in the sequel.

\subsubsection{Reduction to Gaussian \texorpdfstring{$\bm{h},\bm{g}_i$}{h,gi}}

We now show that certain marginal distributions of the the entries of $\sqrt{\mathcal{N}}\bm{h},\sqrt{\mathcal{N}}\bm{g}_i$ can be approximated as random Gaussian matrices over $\mathbb{F}$ up to a small error in Ky Fan metric whenever $\mathcal{N}\leq\operatorname{O}\left(N\right)$. This will allow us to replace $\sqrt{\mathcal{N}}\bm{h},\sqrt{\mathcal{N}}\bm{g}_i$ with random matrices with i.i.d.\ Gaussian entries, simplifying our results further.
\begin{lemma}[Convergence in probability of Haar marginals to Gaussian matrices]\label{lem:conv_haar_to_gaussian}
    Let $L_i$, $i\in\left[d\right]$ be uniformly bounded multilinear functions of the form:
    \begin{equation}
        L_i=\Tr\left(\bm{M}_i \bm{h}\bm{\rho} \bm{h}^\dagger\right),
    \end{equation}
    where there exists some $D$ such that
    \begin{equation}\label{eq:purity_ass}
        \frac{1}{\Tr\left(\bm{\rho}^2\right)}\leq D=\operatorname{o}\left(\frac{N}{\log\left(N\right)}\right).
    \end{equation}
    Assume as well that $d\leq\operatorname{O}\left(\operatorname{poly}\left(N\right)\right)$ and the $\bm{M}_i$ are independent from the $\bm{h}$. Let $\tilde{L}_i$ be the same, where now the $\bm{h}$ are random matrices $\bm{G}$ with i.i.d.\ standard Gaussian entries over $\mathbb{F}$. For sufficiently large $N$, the joint distribution of $\mathcal{N}L_i$ differs from the joint distribution of $\frac{\mathcal{N}}{N}\tilde{L}_i$ by an error in Ky Fan metric of at most:
    \begin{equation}
        \alpha_3=\operatorname{O}\left(\frac{\sqrt{\mathcal{N}D\log\left(N\right)}}{N}+\frac{\mathcal{N}}{D}\right)=\operatorname{o}\left(1\right)
    \end{equation}
    for all $\mathcal{N}\leq\operatorname{O}\left(N\right)$. Alternatively, if $\rank\left(\bm{\rho}\right)\leq D$,
    \begin{equation}
        \alpha_3=\operatorname{O}\left(\frac{\sqrt{\mathcal{N}D\log\left(N\right)}}{N}\right)=\operatorname{o}\left(1\right).
    \end{equation}
\end{lemma}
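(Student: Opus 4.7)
The plan is a coupling argument based on the Gram--Schmidt (QR) factorization of a Gaussian matrix. I would sample $\bm{G}$ with i.i.d.\ standard Gaussian entries over $\mathbb{F}$ and factor $\bm{G} = \bm{h}\bm{R}$, where $\bm{h}$ is Haar distributed on the relevant classical group and $\bm{R}$ is upper triangular with positive real diagonal, using the classical fact that $\bm{h}$ and $\bm{R}$ are independent. Under this coupling,
\begin{equation}
\frac{\mathcal{N}}{N}\tilde{L}_i - \mathcal{N}L_i = \mathcal{N}\Tr\left(\bm{M}_i\bm{h}\bm{E}\bm{h}^\dagger\right),\qquad \bm{E}\equiv\tfrac{1}{N}\bm{R}\bm{\rho}\bm{R}^\dagger - \bm{\rho},
\end{equation}
so the two distributions coincide up to the single coupled random variable $\mathcal{N}\Tr(\bm{M}_i\bm{h}\bm{E}\bm{h}^\dagger)$, which it suffices to bound by $\alpha_3$ with probability $1-\alpha_3$ jointly over all $i\in[d]$.

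Writing $\bm{R} = \sqrt{N}(\bm{I}+\bm{\Delta})$, the entries of $\bm{\Delta}$ are explicit: strict upper-triangular entries are i.i.d.\ $\mathcal{N}^\beta(0,1)/\sqrt{N}$, and the diagonal entries are rescaled $\chi$-variables with $\beta(N-k+1)$ degrees of freedom, with fluctuations of size $O(1/\sqrt{N})$ and a deterministic shift of order $-k/N$. This yields $\bm{E} = \bm{\Delta}\bm{\rho} + \bm{\rho}\bm{\Delta}^\dagger + \bm{\Delta}\bm{\rho}\bm{\Delta}^\dagger$. Conditional on $\bm{h}$, the centered linear-in-$\bm{\Delta}$ piece is a Gaussian linear functional whose variance is bounded by $\|\bm{h}^\dagger\bm{M}_i\bm{h}\bm{\rho}\|_{\mathrm{F}}^2/N \lesssim \Tr(\bm{\rho}^2)/N \leq D/N$, so standard Gaussian concentration, followed by a union bound over the polynomially many $L_i$ costing a factor $\sqrt{\log N}$, gives a deviation of size $\mathcal{N}\sqrt{D\log N}/N$. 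A Hanson--Wright-type inequality controls the quadratic-in-$\bm{\Delta}$ piece at the same or smaller order, producing the first term in $\alpha_3$.

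The remaining contribution is the systematic bias from $\mathbb{E}[\Delta_{kk}]\neq 0$; tracing with $\bm{M}_i$ gives a deterministic shift that, combined with the purity constraint $\Tr(\bm{\rho}^2)^{-1}\leq D$ and the normalization $\mathcal{N}$, scales as $\mathcal{N}/D$ and supplies the second term. In the special case $\rank(\bm{\rho})\leq D$, the upper triangular structure of $\bm{R}$ couples nontrivially to only a $D$-dimensional subspace of $\bm{\rho}$, and the combinatorial cancellation behind $\mathbb{E}[\Tr(\bm{R}\bm{\rho}\bm{R}^\dagger)]=N\Tr(\bm{\rho})$ removes the bias altogether, eliminating the $\mathcal{N}/D$ contribution and leaving only the Gaussian-deviation term. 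The total deviation translates directly into the claimed Ky Fan distance by definition of that metric, and the condition $D = \operatorname{o}(N/\log N)$ together with $\mathcal{N}\leq\operatorname{O}(N)$ ensures $\alpha_3 = \operatorname{o}(1)$.

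The main obstacle is tracking the misalignment between the basis-fixed upper triangular structure of $\bm{R}$ and the generic eigenbasis of $\bm{\rho}$, since the Gram--Schmidt procedure is not rotationally invariant. The key simplification is that concentration of the dominant term requires only Hilbert--Schmidt control of the perturbation, for which the purity assumption $\Tr(\bm{\rho}^2)\geq D^{-1}$ is precisely sufficient; no further spectral hypothesis on $\bm{\rho}$ is needed.
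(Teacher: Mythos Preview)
Your coupling via the QR factorization $\bm{G}=\bm{h}\bm{R}$ is the right starting point---it is exactly the mechanism behind the result of \citet{jiang2010entries} that the paper invokes---but the way you extract the two error terms does not go through, and the paper's route is both different and simpler.

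First, the variance bound. You write $\lVert\bm{h}^\dagger\bm{M}_i\bm{h}\bm{\rho}\rVert_{\mathrm{F}}^2/N \lesssim \Tr(\bm{\rho}^2)/N \leq D/N$, but the hypothesis is $\Tr(\bm{\rho}^2)\geq 1/D$, not $\leq D$; the upper bound you can use is $\Tr(\bm{\rho}^2)\leq 1$. Either way the resulting Gaussian deviation, after multiplying by $\mathcal{N}$ and the $\sqrt{\log N}$ from the union bound, is $\mathcal{N}\sqrt{D\log N}/\sqrt{N}$ (not $/N$ as you wrote), which for $\mathcal{N}=\Theta(N)$ diverges and in any case misses the target $\sqrt{\mathcal{N}D\log N}/N$ by a factor $\sqrt{\mathcal{N}N}$.

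Second, the bias term. The diagonal shift $\mathbb{E}[\Delta_{kk}]\sim -k/(2N)$ depends on \emph{which} coordinates $\bm{\rho}$ occupies in the QR basis, and the scalar identity $\mathbb{E}[\Tr(\bm{R}\bm{\rho}\bm{R}^\dagger)]=N\Tr(\bm{\rho})$ does not imply $\mathbb{E}[\bm{R}\bm{\rho}\bm{R}^\dagger]=N\bm{\rho}$; the bias in $\Tr(\bm{M}_i\bm{h}\bm{E}\bm{h}^\dagger)$ is therefore not removed by that cancellation, and there is no mechanism in your argument by which the purity lower bound $\Tr(\bm{\rho}^2)\geq 1/D$ converts it into an $\mathcal{N}/D$ contribution.

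The paper sidesteps both issues by doing the reductions in the opposite order. It first shows that the purity hypothesis forces $\bm{\rho}$ to be within $O(1/D)$ in trace norm of a rank-$D$ operator (this is where the $\mathcal{N}/D$ term comes from, cleanly). Once $\rank(\bm{\rho})\leq D$, only $D$ columns of $\bm{h}$ enter $L_i$, and the paper then applies Jiang's entrywise coupling $\max_{i\in[N],\,j\in[D]}\lvert\sqrt{N}h_{i,j}-G_{i,j}\rvert$ directly, which packages the concentration you attempt and delivers the $\sqrt{\mathcal{N}D\log N}/N$ term without any separate bias analysis. The basis-alignment obstacle you flag simply disappears, because after the rank reduction one may work in the eigenbasis of $\bm{\rho}$ and only the first $D$ columns of the Haar matrix are ever touched.
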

\begin{proof}
    We assume $\bm{\rho}$ has trace norm $1$ by absorbing $\Tr\left(\bm{\rho}\right)$ into $\bm{M}_i$. We first argue that if the high-purity condition is assumed then $\bm{\rho}$ has an approximation to small error in trace norm that is low-rank. To see this, let $r_i$ be the eigenvalues of $\bm{\rho}$ in non-increasing order. If
    \begin{equation}
        \sum_{i=D+1}^{N_\alpha}r_i\geq\frac{1}{D+1}
    \end{equation}
    then it must be that $\bm{\rho}$ has low purity, i.e.,
    \begin{equation}
        % This follows since the purity is maximized when all of the weight in the final $N_\alpha-D$ eigenvalues is concentrated in $r_{D+1}$, and similarly for the first $D$ eigenvalues is concentrated in the largest eigenvalue. However, the $2$nd through $D$th eigenvalues are constrained to be at least $r_{D+1}$.
        \Tr\left(\bm{\rho}^2\right)\leq\frac{D+1}{\left(D+1\right)^2}=\frac{1}{D+1}<\frac{1}{D},
    \end{equation}
    breaking our assumption on the purities of the inputs (Eq.~\eqref{eq:purity_ass}). Thus,
    \begin{equation}
        \sum_{i=D+1}^{N_\alpha}r_i<\frac{1}{D+1}.
    \end{equation}
    In particular, $\bm{\rho}$ has a rank-$D$ approximation that agrees up to an $\operatorname{O}\left(\frac{1}{D}\right)$ additive error in trace distance.

    Consider now the low-rank case. Let
    \begin{equation}
        \delta\equiv\frac{D\ln\left(N\right)}{N}.
    \end{equation}
    $\delta<\frac{1}{4}$ for sufficiently large $N$ by the assumed scaling of $D$. Following the proof of Corollary~1.1 in \citet{jiang2010entries}, then, for sufficiently large $N$ and $t>10\sqrt{\frac{\mathcal{N}\delta}{N}}$,
    \begin{equation}
        \mathbb{P}\left[\max_{\substack{i\in\left[N\right]\\j\in \left[D\right]}}\left\lvert\sqrt{N}h_{i,j}-G_{i,j}\right\rvert\geq\sqrt{\frac{N}{\mathcal{N}}}t\right]=\exp\left(-\operatorname{\Omega}\left(\log\left(N\right)^{\frac{3}{2}}\right)\right)+\exp\left(-\operatorname{\Omega}\left(\frac{N^2}{\mathcal{N}D}\right)\right).
    \end{equation}
    The right-hand side decays superpolynomially with $N$ whenever $\mathcal{N}\leq\operatorname{O}\left(N\right)$. This implies the error in Ky Fan metric is dominated by the cutoff at $t=\operatorname{\Theta}\left(\sqrt{\frac{\mathcal{N}\delta}{N}}\right)$. This in conjunction with the error from the initial low-rank approximation yields the final result.
\end{proof}
We will mostly be concerned with the specific case when $D=N^{0.999}$, yielding for $\mathcal{N}\leq\operatorname{O}\left(N^{0.99}\right)$:
\begin{equation}
    \alpha_3=\operatorname{O}\left(\sqrt{\frac{\mathcal{N}\log\left(N\right)}{N^{1.001}}}\right)=\operatorname{o}\left(1\right).
\end{equation}

\subsubsection{Reduction to Semi-Isotropic \texorpdfstring{$\bm{O}$}{O}}

We end with a reduction to a more convenient form for the spectrum of $\bm{O}$. Let $\ket{\mu}$ be the eigenbasis of $\bm{O}$. Consider the Hermitian $\bm{\tilde{O}}$:
\begin{equation}\label{eq:tilde_o_def}
    \bm{\tilde{O}}\equiv k\sum_{\mu=0}^{r-1}\ket{\mu}\bra{\mu},
\end{equation}
where
\begin{equation}
    r\equiv\frac{\left\lVert \bm{O}\right\rVert_\ast^2}{\left\lVert \bm{O}\right\rVert_{\textrm{F}}^2}
\end{equation}
is assumed to be an integer and (for $\overline{o}$ the mean eigenvalue of $o$)
\begin{equation}
    k\equiv\frac{\left\lVert \bm{O}\right\rVert_\ast}{r}=\frac{N\overline{o}}{r}.
\end{equation}
By construction
\begin{align}
    \Tr\left(\bm{O}\right)&=\Tr\left(\bm{\tilde{O}}\right),\\
    \Tr\left(\bm{O}^2\right)&=\Tr\left(\bm{\tilde{O}}^2\right).
\end{align}
We claim that replacing $\bm{O}$ with $\bm{\tilde{O}}$ incurs only a vanishingly small error in Ky Fan metric. Intuitively this follows from the Welch--Satterthwaite approximation of a weighted sum of Wishart random matrices as a single Wishart random matrix~\citep{KHURI1994201}. We formally state this as the following lemma.
\begin{lemma}[Reduction to semi-isotropic $\bm{O}$]\label{lem:iso_o}
    Let $L_i$, $i\in\left[d\right]$ be of the form:
    \begin{equation}
        L_i=\Tr\left(\bm{M}_i \bm{X}\left(\bm{O}-\bm{\tilde{O}}\right)\bm{X}^\dagger\right),
    \end{equation}
    where $\sqrt{N}\bm{X}$ has i.i.d.\ standard Gaussian entries over $\mathbb{F}$, $\bm{M}_i$ is independent from $\bm{X}$ with bounded trace norm, and $d=\operatorname{O}\left(\operatorname{poly}\left(N\right)\right)$. Let $\tilde{L}_i$ be the same multilinear functions, where instead of $\bm{O}$ one has $\bm{\tilde{O}}$ as defined in Eq.~\eqref{eq:tilde_o_def}. The joint distribution of $\mathcal{N}L_i$ over $\bm{X}$ differs from $\bm{0}$ by an error at most
    \begin{equation}
        \alpha_4=\operatorname{O}\left(\frac{\mathcal{N}\varDelta\log\left(\frac{dN}{\mathcal{N}\varDelta}\right)}{N}\right)
    \end{equation}
    in Ky Fan metric, where
    \begin{equation}
        \varDelta\equiv 1-\frac{\overline{o}}{\left\lVert \bm{O}\right\rVert_{\text{op}}}-\frac{\sigma_o^2}{\left\lVert \bm{O}\right\rVert_{\text{op}}\overline{o}}.
    \end{equation}
    Here, $o_i$ is the $i$th eigenvalue of $\bm{O}$ in non-increasing order, $\sigma_o$ the standard deviation of the eigenvalues of $\bm{O}$, and $\overline{o}$ the mean.
\end{lemma}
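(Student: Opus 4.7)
My strategy is to use the rotational invariance of the standard Gaussian law of $\bm{X}$ to diagonalize $\bm{B} \equiv \bm{O} - \bm{\tilde{O}}$ (this costs nothing distributionally because the law of $\bm{X}$ is invariant under right multiplication by the unitary that diagonalizes $\bm{B}$), and thereby to express each $L_i$ as a weighted sum of independent sub-exponential quadratic forms. After this rotation,
\begin{equation*}
    L_i \;=\; \frac{1}{N}\sum_{\mu=0}^{N-1} B_{\mu\mu}\,\bm{y}_\mu^\dagger \bm{M}_i \bm{y}_\mu,
\end{equation*}
where the $\bm{y}_\mu \in \mathbb{F}^N$ are i.i.d.\ standard Gaussian vectors. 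The matched-trace property $\Tr(\bm{\tilde{O}}) = \Tr(\bm{O})$ built into the construction of $\bm{\tilde{O}}$ forces $\Tr(\bm{B}) = 0$, so subtracting $\Tr(\bm{M}_i)\sum_\mu B_{\mu\mu}/N = 0$ leaves $L_i$ invariant and rewrites it as a centered linear combination of independent centered sub-exponential variables, whose sub-Gaussian and sub-exponential norms are controlled by $\|\bm{M}_i\|_{\text{F}}$ and $\|\bm{M}_i\|_{\text{op}}$ respectively---both $\operatorname{O}\left(1\right)$ by the hypothesis $\|\bm{M}_i\|_\ast = \operatorname{O}\left(1\right)$ and the chain $\|\bm{M}_i\|_{\text{op}} \leq \|\bm{M}_i\|_{\text{F}} \leq \|\bm{M}_i\|_\ast$.

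The next step is to apply a Hanson--Wright / Bernstein inequality for Gaussian quadratic forms to this centered sum, which yields a tail estimate of the form
\begin{equation*}
    \mathbb{P}\bigl(|L_i| > t\bigr) \;\leq\; 2\exp\!\left(-c\min\!\left(\frac{t^2 N^2}{\Tr(\bm{B}^2)},\;\frac{t N}{\|\bm{B}\|_{\text{op}}}\right)\right).
\end{equation*}
Setting $t \asymp \alpha_4/\mathcal{N}$ and combining across $i$ via a union bound over the $d = \operatorname{O}\left(\operatorname{poly}\left(N\right)\right)$ statistics produces the Ky Fan metric estimate claimed, precisely when both $\|\bm{B}\|_{\text{op}}$ and $N^{-1/2}\sqrt{\Tr(\bm{B}^2)}$ are suitably controlled by $\Delta$. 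Balancing the two branches of the Bernstein exponent is what generates the logarithmic factor $\log\left(dN/\mathcal{N}\Delta\right)$; standard conversion of a high-probability tail bound at level $\alpha_4$ into a Ky Fan metric bound of the same size closes out this part of the argument.

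The main obstacle is the final spectral estimate on $\bm{B}$. The geometric observation needed is that, if one labels the eigenbasis $\{\ket{\mu}\}$ so that the support of $\bm{\tilde{O}}$ is the top-$r$ eigenspace of $\bm{O}$, then $\bm{\tilde{O}}$ is the Welch--Satterthwaite-style rank-$r$ semi-isotropic proxy matching both $\Tr(\bm{O})$ and $\Tr(\bm{O}^2)$; the identification $k = \overline{o^2}/\overline{o}$ furnished by Lemma~\ref{lem:sigma_o_simp} then recasts $\Delta = 1 - k/\|\bm{O}\|_{\text{op}}$ as an extremal measure of how far $\bm{O}$ deviates from this semi-isotropic profile. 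Using this characterization together with extremal eigenvalue estimates constrained by the two matched moments, one shows $\|\bm{B}\|_{\text{op}} \lesssim \Delta\,\|\bm{O}\|_{\text{op}}$ and a corresponding bound $\Tr(\bm{B}^2)/N \lesssim \Delta\,\|\bm{O}\|_{\text{op}}\,\overline{o}$; feeding these into the Bernstein bound above yields the stated $\alpha_4$. The sharpness of these extremal estimates--and in particular the observation that the optimal ordering of the eigenbasis is the one that also minimizes $\|\bm{B}\|_{\text{F}}^2 = 2\|\bm{O}\|_{\text{F}}^2 - 2\Tr(\bm{O}\bm{\tilde{O}})$--is the crux of the technical work.
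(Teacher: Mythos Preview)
Your proposal is correct and follows essentially the same route as the paper: diagonalize $\bm{B}=\bm{O}-\bm{\tilde{O}}$ via the unitary invariance of $\bm{X}$, express the resulting quantity as a centered weighted sum of independent Gaussian quadratic forms (using $\Tr(\bm{B})=0$), apply a Bernstein-type inequality with parameters $\Tr(\bm{B}^2)$ and $\lVert\bm{B}\rVert_{\text{op}}$, and then bound these two spectral quantities by $\Delta$ using the matched first and second moments together with the identification $k=\overline{o}+\sigma_o^2/\overline{o}$. The only cosmetic difference is that the paper applies Bernstein to the rank-one slices $\langle\mu|\bm{X}\bm{B}\bm{X}^\dagger|\mu\rangle$ and then recombines via the trace-norm bound on $\bm{M}_i$, whereas you apply the concentration inequality directly to $L_i$ using $\lVert\bm{M}_i\rVert_{\text{op}}\leq\lVert\bm{M}_i\rVert_{\text{F}}\leq\lVert\bm{M}_i\rVert_\ast$; your version is arguably cleaner since it avoids the implicit union bound over eigenvectors of $\bm{M}_i$.
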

\begin{proof}
    $\bm{O}$ and $\bm{\tilde{O}}$ can be assumed to be diagonal WLOG as $\bm{X}$ is unitarily invariant. Therefore, $\bm{X}\left(\bm{O}-\bm{\tilde{O}}\right)\bm{X}^\dagger$ can be written as a weighted sum of standard $\beta$-Wishart matrices each with a single degree of freedom:
    \begin{equation}
        \bm{X}\left(\bm{O}-\bm{\tilde{O}}\right)\bm{X}^\dagger=\frac{1}{N}\sum_{i=1}^N\left(o_i-\tilde{o}_i\right)\ket{x_i}\bra{x_i},
    \end{equation}
    where $\ket{x_i}$ is the (unnormalized) $i$th column of $\sqrt{N}\bm{X}$ and $o_i,\tilde{o}_i$ are the eigenvalues of $\bm{O},\bm{\tilde{O}}$, respectively, in nonincreasing order. Note that for any normalized $\ket{\mu}$, $\left\lvert\bra{x_i}\ket{\mu}\right\rvert$ is gamma-distributed. Thus, from a generalization of Bernstein's inequality (see, e.g., Theorem~2.8.1 of \citet{vershynin2018}), for any $\ket{\mu}$,
    \begin{equation}
        \begin{aligned}
            % We have also used the fact that $\sum_{i=1}^N\left(o_i-\tilde{o}_i\right)=0$, i.e., the left-hand side is mean zero.
            \mathbb{P}&\left[\left\lvert\frac{1}{N}\sum_{i=1}^N\left(o_i-\tilde{o}_i\right)\left\lvert\bra{x_i}\ket{\mu}\right\rvert^2\right\rvert\geq\epsilon\right]\\
            &\leq 2\exp\left(-cN\min\left(\frac{\epsilon}{\max_i\left\lvert o_i-\tilde{o}_i\right\rvert},\frac{\epsilon^2}{\frac{1}{N}\sum_{i=1}^N\left(o_i-\tilde{o}_i\right)^2}\right)\right)
        \end{aligned}
    \end{equation}
    for some constant $c>0$. Due to the equality of traces of $\bm{O}$ and $\bm{\tilde{O}}$ we calculate:
    \begin{equation}
        \begin{aligned}
            \frac{1}{N}\sum_{i=1}^N\left(o_i-\tilde{o}_i\right)^2&=\frac{1}{N}\Tr\left(\left(\bm{O}-\bm{\tilde{O}}\right)^2\right)\\
            &=\frac{2}{N}\left(\Tr\left(\bm{O}^2\right)-\Tr\left(\bm{O}\bm{\tilde{O}}\right)\right)\\
            &\leq\frac{2}{N}\left(\Tr\left(\bm{O}^2\right)-\Tr\left(\bm{O}\right)k+ko_{r+1}\right)\\
            &=\frac{2}{r}\overline{o}o_{r+1}\\
            &=\operatorname{O}\left(\frac{o_{r+1}}{N}\right).
        \end{aligned}
    \end{equation}
    Furthermore,
    \begin{equation}
        \begin{aligned}
            \max_i\left\lvert o_i-\tilde{o}_i\right\rvert&=\operatorname{O}\left(o_1-k\right)+\operatorname{O}\left(o_{r+1}\right)\\
            &=\operatorname{O}\left(o_1-\overline{o}-\frac{\sigma_o^2}{\overline{o}}+o_{r+1}\right)\\
            &=\operatorname{O}\left(1-\frac{\overline{o}}{\left\lVert \bm{O}\right\rVert_{\text{op}}}-\frac{\sigma_o^2}{\left\lVert \bm{O}\right\rVert_{\text{op}}\overline{o}}\right).
        \end{aligned}
    \end{equation}
    The desired convergence follows by the union bound and taking $\epsilon\to\frac{\epsilon}{\mathcal{N}}$.
\end{proof}

We now claim these suffice to prove Theorem~\ref{thm:loss_conv}.
\begin{proof}[Proof of Theorem~\ref{thm:loss_conv}]
    For $\bm{X}$ with i.i.d.\ standard Gaussian entries over $\mathbb{F}$ and $\bm{\tilde{O}}$ as in Lemma~\ref{lem:iso_o}, $k^{-1}\bm{X}\bm{\tilde{O}}\bm{X}^\dagger$ is $\beta$-Wishart-distributed with $r$ degrees of freedom. The result then follows from Lemmas~\ref{lem:haar_random_h},~\ref{lem:conv_haar_to_gaussian}, and~\ref{lem:iso_o} applied in sequence, with a final error in L\'{e}vy--Prokhorov metric of:
    \begin{equation}
        \pi=\pi_1+\alpha_3+\alpha_4=\operatorname{O}\left(\frac{\log\log\left(\mathcal{N}^{-t}\epsilon^{-1}\right)}{\log\left(\mathcal{N}^{-t}\epsilon^{-1}\right)}+\frac{\log\left(t\right)}{\sqrt{t}}+\sqrt{\frac{\mathcal{N}\log\left(N\right)}{N^{1.001}}}\right).
    \end{equation}
\end{proof}

\subsection{Asymptotic Expressions for the First Two Derivatives}\label{sec:first_deriv_exp}

We will now move on and construct an explicit asymptotic expression for the first two derivatives of the loss function. As previously mentioned, we need only consider $\bm{\theta}=\bm{0}$ by Lemma~\ref{lem:haar_random_h}. Here we have that the first derivatives are distributed as:
\begin{equation}
    \partial_i\ell=I\Tr\left(\bm{\rho}\left[\bm{g}_i \bm{A}_i \bm{g}_i^\dagger,\bm{h}\bm{O}\bm{h}^\dagger\right]\right).
\end{equation}
The second derivatives are of the form (for $i\geq j$):
\begin{equation}
    \partial_i\partial_j\ell=I\Tr\left(\bm{\rho}\left[\bm{g}_j \bm{A}_j \bm{g}_j^\dagger\left[\bm{g}_i \bm{A}_i \bm{g}_i^\dagger,\bm{h}\bm{O}\bm{h}^\dagger\right]\right]\right).
\end{equation}
Furthermore, using the lemmas from Appendix~\ref{sec:loss_exp_deriv}, up to a cost:
\begin{equation}
    \pi_1+\alpha_3+\alpha_4=\operatorname{O}\left(\frac{d\log\log\left(\mathcal{N}^{-t}\epsilon^{-1}\right)}{\log\left(\mathcal{N}^{-t}\epsilon^{-1}\right)}+\frac{d\log\left(t\right)}{\sqrt{t}}+\sqrt{\frac{\mathcal{N}\log\left(N\right)}{N}}\right)
\end{equation}
in L\'{e}vy--Prokhorov metric---where $d$ is either $p$ or $p^2$---we may reduce to considering:
\begin{align}
    \hat{\ell}&=\frac{I}{N}\Tr\left(\bm{\rho} \bm{X}_0^\dagger \bm{O}\bm{X}_0\right),\\
    \hat{\ell}_{;i}&=\frac{I}{N^2}\Tr\left(\bm{\rho} \bm{X}_0^\dagger\left[\bm{X}_i \bm{A}_i \bm{X}_i^\dagger,\bm{O}\right]\bm{X}_0\right),\\
    \hat{\ell}_{;i,j}&=\frac{I}{N^3}\Tr\left(\bm{\rho} \bm{X}_0^\dagger\left[\bm{X}_j \bm{A}_j \bm{X}_j^\dagger,\left[\bm{X}_i \bm{A}_i \bm{X}_i^\dagger,\bm{O}\right]\right]\bm{X}_0\right),
\end{align}
where:
\begin{enumerate}
    \item $\bm{X}_0$ is an $N\times\rank\left(\bm{\rho}\right)$ random matrix with i.i.d.\ standard Gaussian entries over $\mathbb{F}$, and the $\bm{X}_i$ are $N\times p$ or $N\times 2p$ (depending on the rank of the $\bm{A}_i$) random matrices with i.i.d.\ standard Gaussian entries over $\mathbb{F}$.
    \item The initial states $\bm{\rho}$ are of rank at most $\operatorname{O}\left(N^{0.999}\right)$.
\end{enumerate}
Note that these quantities depend only on $M=\operatorname{O}\left(\rank\left(\bm{\rho}\right)\right)$ rows of $\bm{X}_0$ and either $1$ ($\mathbb{F}=\mathbb{C},\mathbb{H}$) or $2$ ($\mathbb{F}=\mathbb{R}$) rows of $\bm{X}_i$. We can therefore further simplify this by collecting all of the relevant rows of $\bm{X}_0,\bm{X}_i$ into a single $\bm{X}$ of dimensions $N\times\left(p+M\right)$ (or $N\times\left(2p+M\right)$ when $\mathbb{F}=\mathbb{R}$), writing these expressions as:\footnote{We have slightly abused notation here; the $\ket{i},\ket{j}$ here are taken to be orthogonal to the nonzero eigenvectors of $\bm{\rho}$.}
\begin{align}
    \hat{\ell}&=\frac{I\overline{o}}{r}\Tr\left(\bm{\rho} \bm{X}^\dagger\bm{\tilde{O}}\bm{X}\right),\\
    \hat{\ell}_{;i}&=\frac{I\overline{o}}{rN}\Tr\left(\bm{\rho} \bm{X}^\dagger\left[\bm{X} \ket{i}\bra{i} \bm{X}^\dagger,\bm{\tilde{O}}\right]\bm{X}\right),\\
    \hat{\ell}_{;i,j}&=\frac{I\overline{o}}{rN^2}\Tr\left(\bm{\rho} \bm{X}^\dagger\left[\bm{X} \ket{j}\bra{j} \bm{X}^\dagger,\left[\bm{X} \ket{i}\bra{i} \bm{X}^\dagger,\bm{\tilde{O}}\right]\right]\bm{X}\right).
\end{align}
Here, we took another error of $\alpha_4$ in L\'{e}vy--Prokhorov metric to change $\bm{O}$ to
\begin{equation}
    \bm{\tilde{O}}=\sum_{\mu=0}^{r-1}\ket{\mu}\bra{\mu}
\end{equation}
as in Eq.~\eqref{eq:tilde_o_def}. Though this expression for the full distribution is unwieldy, it completely characterizes the joint distribution of the loss and first two derivatives in terms of a single Gaussian random matrix $\bm{X}$. To simplify further we will assume $\bm{\rho}$ is a single state which is rank-$1$ when projected into the simple sector we are considering, taken WLOG to be $\ket{0}\bra{0}$. We will also assume in the following that $\frac{\sigma_o}{\overline{o}}$ is asymptotically bounded.

Details of further simplifications vary depending on whether $\mathbb{F}$ is algebraically closed or not (i.e., whether or not we are working in $\mathbb{R}$). We will thus consider the two cases separately.

\subsubsection{\texorpdfstring{$\mathbb{F}=\mathbb{C},\mathbb{H}$}{F=C,H}}

When $\mathbb{F}=\mathbb{C},\mathbb{H}$, we are considering the joint distribution of:
\begin{align}
    \hat{\ell}&=\frac{I\overline{o}}{r}\Tr\left(\ket{0}\bra{0}\bm{X}^\dagger\bm{\tilde{O}}\bm{X}\right),\\
    \hat{\ell}_{;i}&=\frac{\ci I\overline{o}}{rN}\Tr\left(\ket{0}\bra{0}\bm{X}^\dagger\left[\bm{X}\ket{i}\bra{i}\bm{X}^\dagger,\bm{\tilde{O}}\right]\bm{X}\right),\\
    \hat{\ell}_{;i,j}&=-\frac{I\overline{o}}{rN^2}\Tr\left(\ket{0}\bra{0}\bm{X}^\dagger\left[\bm{X}\ket{j}\bra{j}\bm{X}^\dagger,\left[\bm{X}\ket{i}\bra{i}\bm{X}^\dagger,\bm{\tilde{O}}\right]\right]\bm{X}\right),
\end{align}
% In particular, taking a Hermitian conjugate equals equality up to, e.g., $\ci\cj$ instead becoming $\cj\ci=-\ci\cj$. As the derivative of a real-valued function is real, it must therefore be zero. 
where now $\bm{X}$ is a $N\times\left(p+1\right)$ random matrix with i.i.d.\ standard Gaussian entries over $\mathbb{F}$. Here, when $\mathbb{F}=\mathbb{H}$, ``$\ci$'' should be taken to mean one of $\ci,\cj,\ck$, depending on which parameter of Eq.~\eqref{eq:symp_param} the derivative is being taken with respect to. We take the two contributions of quaternionic phase to be equal in the second derivative expression as, by the anticommutivity of quaternions, this expression is otherwise identically zero. In particular, this abuse of notation will not change any of the following analysis unless specifically stated otherwise.

Recall that
\begin{equation}
    \bm{\tilde{O}}=\sum_{\mu=0}^{r-1}\ket{\mu}\bra{\mu}.
\end{equation}
Let $\bm{C}$ be the complement of $\bm{\tilde{O}}$, i.e.,
\begin{equation}
    \bm{C}=\sum_{\mu=r}^{N-1}\ket{\mu}\bra{\mu}.
\end{equation}
In particular, $\bm{C}$ and $\bm{\tilde{O}}$ sum to the identity: 
\begin{equation}
    \bm{C}+\bm{\tilde{O}}=\bm{I}_N.
\end{equation}
Furthermore,
\begin{align}
    \bm{W}&\equiv \bm{X}^\dagger\bm{\tilde{O}}\bm{X},\\
    \bm{V}&\equiv \bm{X}^\dagger \bm{C}\bm{X},
\end{align}
are each $\beta$-Wishart-distributed matrices; the former with $r$ degrees of freedom, and the latter with $N-r$. The independence of $\bm{W}$ and $\bm{V}$ can be checked from the associated Bartlett decompositions. We thus have that the distribution we are interested in is the joint distribution of:
\begin{align}
    \hat{\ell}&=\frac{I\overline{o}}{r}\bra{0}\bm{W}\ket{0},\label{eq:hat_loss_alg_closed}\\
    \hat{\ell}_{;i}&=\frac{2I\overline{o}}{rN}\Re\left\{\ci\bra{0}\bm{W}\ket{i}\bra{i}\bm{X}^\dagger \bm{X}\ket{0}\right\}=\frac{2I\overline{o}}{rN}\Re\left\{\ci\bra{0}\bm{W}\ket{i}\bra{i}\bm{V}\ket{0}\right\},
\end{align}
and:
\begin{equation}
    \begin{aligned}
        \hat{\ell}_{;i,j}=&-\frac{I\overline{o}}{rN^2}\bra{0}\bm{X}^\dagger\left[\bm{X}\ket{j}\bra{j}\bm{X}^\dagger,\left[\bm{X}\ket{i}\bra{i}\bm{X}^\dagger,\bm{O}\right]\right]\bm{X}\ket{0}\\
        =&-\frac{I\overline{o}}{rN^2}\bra{0}\bm{X}^\dagger \bm{X}\ket{j}\bra{j}\bm{X}^\dagger \bm{X}\ket{i}\bra{i}\bm{X}^\dagger \bm{O}\bm{X}\ket{0}\\
        &-\frac{I\overline{o}}{rN^2}\bra{0}\bm{X}^\dagger \bm{O}\bm{X}\ket{i}\bra{i}\bm{X}^\dagger \bm{X}\ket{j}\bra{j}\bm{X}^\dagger \bm{X}\ket{0}\\
        &+\frac{I\overline{o}}{rN^2}\bra{0}\bm{X}^\dagger \bm{X}\ket{i}\bra{i}\bm{X}^\dagger \bm{O} \bm{X}\ket{j}\bra{j}\bm{X}^\dagger \bm{X}\ket{0}\\
        &+\frac{I\overline{o}}{rN^2}\bra{0}\bm{X}^\dagger \bm{X}\ket{j}\bra{j}\bm{X}^\dagger \bm{O} \bm{X}\ket{i}\bra{i}\bm{X}^\dagger \bm{X}\ket{0}\\
        =&-\frac{I\overline{o}}{rN^2}\bra{0}\left(\bm{W}+\bm{V}\right)\ket{j}\bra{j}\left(\bm{W}+\bm{V}\right)\ket{i}\bra{i}\bm{W}\ket{0}\\
        &-\frac{I\overline{o}}{rN^2}\bra{0}\bm{W}\ket{i}\bra{i}\left(\bm{W}+\bm{V}\right)\ket{j}\bra{j}\left(\bm{W}+\bm{V}\right)\ket{0}\\
        &+\frac{I\overline{o}}{rN^2}\bra{0}\left(\bm{W}+\bm{V}\right)\ket{i}\bra{i}\bm{W}\ket{j}\bra{j}\left(\bm{W}+\bm{V}\right)\ket{0}\\
        &+\frac{I\overline{o}}{rN^2}\bra{0}\left(\bm{W}+\bm{V}\right)\ket{j}\bra{j}\bm{W}\ket{i}\bra{i}\left(\bm{W}+\bm{V}\right)\ket{0}.
    \end{aligned}
\end{equation}

We can simplify the expression for the second derivative by grouping terms by their order in $\bm{W}$. Note first that all terms cubic in matrix elements of $\bm{W}$ sum to zero. For the quadratic terms, note that for the outer two terms:
\begin{equation}
    \begin{aligned}
        &-\bra{0}\bm{V}\ket{j}\bra{j}\bm{W}\ket{i}\bra{i}\bm{W}\ket{0}-\bra{0}\bm{W}\ket{j}\bra{j}\bm{V}\ket{i}\bra{i}\bm{W}\ket{0}\\
        &+\bra{0}\bm{V}\ket{j}\bra{j}\bm{W}\ket{i}\bra{i}\bm{W}\ket{0}+\bra{0}\bm{W}\ket{j}\bra{j}\bm{W}\ket{i}\bra{i}\bm{V}\ket{0}\\
        =&-\bra{0}\bm{W}\ket{j}\bra{j}\bm{V}\ket{i}\bra{i}\bm{W}\ket{0}+\bra{0}\bm{W}\ket{j}\bra{j}\bm{W}\ket{i}\bra{i}\bm{V}\ket{0},
    \end{aligned}
\end{equation}
and for the inner two terms:
\begin{equation}
    \begin{aligned}
        &-\bra{0}\bm{W}\ket{i}\bra{i}\bm{V}\ket{j}\bra{j}\bm{W}\ket{0}-\bra{0}\bm{W}\ket{i}\bra{i}\bm{W}\ket{j}\bra{j}\bm{V}\ket{0}\\
        &+\bra{0}\bm{V}\ket{i}\bra{i}\bm{W}\ket{j}\bra{j}\bm{W}\ket{0}+\bra{0}\bm{W}\ket{i}\bra{i}\bm{W}\ket{j}\bra{j}\bm{V}\ket{0}\\
        =&-\bra{0}\bm{W}\ket{i}\bra{i}\bm{V}\ket{j}\bra{j}\bm{W}\ket{0}+\bra{0}\bm{V}\ket{i}\bra{i}\bm{W}\ket{j}\bra{j}\bm{W}\ket{0}.
    \end{aligned}
\end{equation}
We can then combine terms (along with the terms linear in $\bm{W}$) to see that:
\begin{equation}
    \begin{aligned}
        \hat{\ell}_{;i,j}=&\frac{2I\overline{o}}{rN^2}\Re\left\{\bra{0}\bm{V}\ket{i}\bra{i}\bm{W}\ket{j}\bra{j}\left(\bm{W}+\bm{V}\right)\ket{0}\right\}\\
        &-\frac{2I\overline{o}}{rN^2}\Re\left\{\bra{0}\bm{W}\ket{i}\bra{i}\bm{V}\ket{j}\bra{j}\left(\bm{W}+\bm{V}\right)\ket{0}\right\}.
    \end{aligned}
\end{equation}

We can simplify the Hessian even further if we assume that $\frac{\sigma_o}{\overline{o}}=\operatorname{o}\left(1\right)$ by dropping terms that are w.h.p.\ subleading in $\frac{\sigma_o}{\overline{o}}$. In particular, consider the following terms:
\begin{align}
    &\begin{aligned}
        \frac{1}{rN}\Re&\left\{\bra{0}\bm{V}\ket{i}\bra{i}\bm{W}\ket{j}\bra{j}\bm{V}\ket{0}\right\}\\
        &\leq\frac{N-r}{N}\left\lvert\frac{1}{\sqrt{N-r}}\bra{0}\bm{V}\ket{i}\right\rvert\left\lvert\frac{1}{r}\bra{i}\bm{W}\ket{j}\right\rvert\left\lvert\frac{1}{\sqrt{N-r}}\bra{j}\bm{V}\ket{0}\right\rvert,\label{eq:v_subleading_1}
    \end{aligned}\\
    &\begin{aligned}
        \frac{1}{rN}\Re&\left\{\bra{0}\bm{W}\ket{i}\bra{i}\bm{V}\ket{j}\bra{j}\bm{W}\ket{0}\right\}\\
        &\leq\frac{N-r}{N}\left\lvert\frac{1}{\sqrt{r}}\bra{0}\bm{W}\ket{i}\right\rvert\left\lvert\frac{1}{N-r}\bra{i}\bm{V}\ket{j}\right\rvert\left\lvert\frac{1}{\sqrt{r}}\bra{j}\bm{W}\ket{0}\right\rvert,\label{eq:v_subleading_2}
    \end{aligned}\\
    &\begin{aligned}
        \frac{1}{rN}\Re&\left\{\bra{0}\bm{W}\ket{i}\bra{i}\bm{V}\ket{j}\bra{j}\bm{V}\ket{0}\right\}\\
        &\leq\frac{\left(N-r\right)^{\frac{3}{2}}}{\sqrt{r}N}\left\lvert\frac{1}{\sqrt{r}}\bra{0}\bm{W}\ket{i}\right\rvert\left\lvert\frac{1}{N-r}\bra{i}\bm{V}\ket{j}\right\rvert\left\lvert\frac{1}{\sqrt{N-r}}\bra{j}\bm{V}\ket{0}\right\rvert.\label{eq:v_subleading_3}
    \end{aligned}
\end{align}
Recall that the diagonal elements of a $\beta$-Wishart matrix are $\chi^2$-distributed with $\beta r$ degrees of freedom (see Appendix~\ref{sec:wishart_background}) and thus have exponential tails. In Lemma~\ref{lem:low_rank_wis_tail_bounds}---proven in Appendix~\ref{sec:tail_bound_lems}---we prove that the off-diagonal elements also have exponential tails. In particular, all three of these terms are of the form $\kappa R_1 R_2 R_3$, where $0<\kappa<1$ is some overall prefactor and the $R_i$ are random variables obeying tail bounds of the form:
\begin{equation}
    \mathbb{P}\left[R_i\geq 1+t\right]\leq K\exp\left(-C\min\left(t,t^2\right)\right)
\end{equation}
for some universal constants $C,K>0$. We thus have by the union bound:
\begin{equation}
    \begin{aligned}
        \mathbb{P}&\left[\kappa R_1 R_2 R_3\geq\left(\kappa^{\frac{1}{3}}+t\right)^3\right]\\
        &\leq\mathbb{P}\left[\kappa\max\left(R_1,R_2,R_3\right)^3\geq\left(\kappa^{\frac{1}{3}}+t\right)^3\right]\\
        &\leq 3\max\left(\mathbb{P}\left[\kappa^{\frac{1}{3}}R_1\geq\kappa^{\frac{1}{3}}+t\right],\mathbb{P}\left[\kappa^{\frac{1}{3}}R_2\geq\kappa^{\frac{1}{3}}+t\right],\mathbb{P}\left[\kappa^{\frac{1}{3}}R_3\geq\kappa^{\frac{1}{3}}+t\right]\right)\\
        &\leq 3K\exp\left(-C\min\left(\kappa^{-\frac{1}{3}}t,\kappa^{-\frac{2}{3}}t^2\right)\right).
    \end{aligned}
\end{equation}
Taking into account the $p^2$ matrix elements via the union bound, these terms thus converge to $0$ in Ky Fan metric at a rate $\operatorname{\Omega}\left(\kappa^{\frac{1}{3}}\log\left(\kappa^{-\frac{1}{3}}p\right)\right)$. We can similarly use Lemma~\ref{lem:drop_column} to justify replacing $\bm{W}$ with a low-rank perturbation $\bm{\tilde{W}}$, which differs only in the removal of the first column of its Bartlett factor $\bm{L}_W$. Altogether, we have that we can take up to an order
\begin{equation}
    \alpha_5=\operatorname{O}\left(\frac{\mathcal{N}^{\frac{1}{3}}\left(N-r\right)^\frac{1}{3}\log\left(\frac{p^2N^{\frac{2}{3}}}{\mathcal{N}^{\frac{1}{3}}\left(N-r\right)^{\frac{1}{3}}}\right)}{N^{\frac{2}{3}}}\right)\leq\operatorname{O}\left(\frac{\mathcal{N}^{\frac{1}{3}}\sigma_o^{\frac{2}{3}}\log\left(\frac{p^2N^{\frac{1}{3}}\overline{o}^{\frac{2}{3}}}{\mathcal{N}^{\frac{1}{3}}\sigma_o^{\frac{2}{3}}}\right)}{N^{\frac{1}{3}}\overline{o}^{\frac{2}{3}}}\right)
\end{equation}
error in Ky Fan metric:
\begin{equation}
    \hat{\ell}_{;i,j}=\frac{2I\overline{o}}{rN^2}\Re\left\{\bra{0}\bm{V}\ket{i}\bra{i}\bm{\tilde{W}}\ket{j}\bra{j}\bm{W}\ket{0}\right\}.
\end{equation}

At this point it is instructive to consider this joint distribution in terms of the marginal distributions of elements in the Bartlett decompositions of $\bm{W}$ and $\bm{V}$. Let $N_{W,i,j}$ be the i.i.d.\ standard normally distributed random variables over $\mathbb{F}$ in the off-diagonal elements of $\bm{L}_W$ (with the convention $i<j$), and similarly for $N_{V,i,j}$ and the Bartlett factor $\bm{L}_V$ of $\bm{V}$. Furthermore, let $\beta^{-\frac{1}{2}}\chi_{W,i}$ be the i.i.d.\ $\chi$-distributed random variables along the diagonal of $\bm{L}_W$, and similarly for $\beta^{-\frac{1}{2}}\chi_{V,i}$ and $\bm{L}_V$.

We first condition on $\hat{\ell}=z$. From Eq.~\eqref{eq:hat_loss_alg_closed} this is equivalent to conditioning on $\chi_{W,0}=\sqrt{\frac{\beta r}{I\overline{o}}z}$. Conditioned on this event, the first derivative is distributed as:
\begin{equation}
    \hat{\ell}_{;i\mid z}=\frac{2I\beta\overline{o}}{rN}\sqrt{\frac{r}{I\overline{o}}z}\chi_{V,0}\Re\left\{\ci N_{W,0,i}^\ast N_{V,0,i}\right\}.
\end{equation}
Similarly, the second derivative conditioned on this event is distributed as:
\begin{equation}
    \hat{\ell}_{;i,j\mid z}=\frac{2I\beta\overline{o}}{rN^2}\sqrt{\frac{r}{I\overline{o}}z}\chi_{V,0}\Re\left\{N_{V,0,i}^\ast\bra{i}\bm{\tilde{W}}\ket{j}N_{W,0,j}\right\}.
\end{equation}
Note that $\bm{\tilde{W}}$ is independent from $N_{W,0,i}$ exactly due to the removal of the first column of the Bartlett factor of $\bm{W}$.

We will next simplify further using the fact that $\chi_{V,0}$ concentrates around its mean, in particular using the $\chi$-distribution tail bound (for $N-r>0$, some universal constant $C>0$, and bounded $t$):
\begin{equation}
    \mathbb{P}\left[\frac{1}{\left(N-r\right)^{\frac{1}{4}}}\left\lvert\chi-\sqrt{\beta\left(N-r\right)}\right\rvert\geq t\right]\leq 2\exp\left(-C\sqrt{N-r}\min\left(t,t^2\right)\right)
\end{equation}
which follows from Lemma~\ref{lem:chi_square_tail_bounds}, proved in Appendix~\ref{sec:tail_bound_lems}. Just as we accounted for $\alpha_5$, up to an order
\begin{equation}
    \alpha_6=\operatorname{O}\left(\frac{\mathcal{N}^{\frac{1}{3}}r^{\frac{1}{6}}\left(N-r\right)^{\frac{1}{12}}\log\left(\frac{p^2N^{\frac{2}{3}}}{\mathcal{N}^{\frac{1}{3}}r^{\frac{1}{6}}\left(N-r\right)^{\frac{1}{12}}}\right)}{N^{\frac{2}{3}}}\right)\leq\operatorname{O}\left(\frac{\mathcal{N}^{\frac{1}{3}}\sigma_o^{\frac{1}{6}}\log\left(\frac{p^2N^{\frac{5}{12}}\overline{o}^{\frac{1}{6}}}{\mathcal{N}^{\frac{1}{3}}\sigma_o^{\frac{1}{6}}}\right)}{N^{\frac{5}{12}}\overline{o}^{\frac{1}{6}}}\right)
\end{equation}
error in Ky Fan metric,
\begin{align}
    \hat{\ell}_{;i\mid z}&=\frac{2I\beta\overline{o}}{N}\sqrt{\frac{\beta}{I\overline{o}}\left(\frac{N}{r}-1\right)z}\Re\left\{\ci N_{W,0,i}^\ast N_{V,0,i}\right\},\\
    \hat{\ell}_{;i,j\mid z}&=\frac{2I\beta\overline{o}}{N^2}\sqrt{\frac{\beta}{I\overline{o}}\left(\frac{N}{r}-1\right)z}\Re\left\{N_{V,0,i}^\ast\bra{i}\bm{\tilde{W}}\ket{j}N_{W,0,j}\right\}.
\end{align}
Using Lemma~\ref{lem:sigma_o_simp} we can simplify this further as:
\begin{align}
    \hat{\ell}_{;i\mid z}&=\frac{2I\beta\sigma_o}{N}\sqrt{\frac{\beta z}{I\overline{o}}}\Re\left\{\ci N_{W,0,i}^\ast N_{V,0,i}\right\},\\
    \hat{\ell}_{;i,j\mid z}&=\frac{2I\beta\sigma_o}{N^2}\sqrt{\frac{\beta z}{I\overline{o}}}\Re\left\{N_{V,0,i}^\ast\bra{i}\bm{\tilde{W}}\ket{j}N_{W,0,j}\right\}.
\end{align}
We can simplify this even further by realizing that the distribution of $N_{V,0,i}$ is invariant under multiplication by an overall phase, i.e., it is circularly symmetric. We can thus absorb the argument of $N_{W,0,i}^\ast$ into $N_{V,0,i}$; we simultaneously conjugate $\bm{\tilde{W}}$ by unitary operators corresponding to this phase, so we need only consider the joint distribution of:
\begin{align}
    \hat{\ell}_{;i\mid z}&=\frac{2I\beta\sigma_o}{N}\sqrt{\frac{z}{I\overline{o}}}\chi_i\Re\left\{\ci N_{V,0,i}\right\},\\
    \hat{\ell}_{;i,j\mid z}&=\frac{2I\beta\sigma_o}{N^2}\sqrt{\frac{z}{I\overline{o}}}\chi_j\bra{i}\Re\left\{N_{V,0,i}^\ast\bm{\tilde{W}}\right\}\ket{j},
\end{align}
where here the $\chi_i$ are i.i.d.\ $\chi$-distributed random variables with $\beta$ degrees of freedom given by:
\begin{equation}
    \chi_i\equiv\sqrt{\beta}\left\lvert N_{W,0,i}\right\rvert.
\end{equation}
Noting $\sqrt{\beta}\Re\left\{\ci N_{V,0,i}\right\}$ is distributed as a standard normal random variable proves Theorem~\ref{thm:grad_conv} for sectors with $\mathbb{F}=\mathbb{C},\mathbb{H}$, with accumulated error in L\'{e}vy--Prokhorov distance:
\begin{equation}
    \pi'=\pi_1+\alpha_3+2\alpha_4+\alpha_6=\operatorname{O}\left(\frac{p\log\log\left(\mathcal{N}^{-t}\epsilon^{-1}\right)}{\log\left(\mathcal{N}^{-t}\epsilon^{-1}\right)}+\frac{p\log\left(t\right)}{\sqrt{t}}+\sqrt{\frac{\mathcal{N}\log\left(N\right)}{N^{1.001}}}+\frac{\mathcal{N}^{\frac{1}{3}}\log\left(N\right)}{N^{\frac{5}{12}}}\right).
\end{equation}

We finally consider the second derivative conditioned on both $\hat{\ell}=z$ and all $\hat{\ell}_{;i\mid z}=0$ (for $z>0$). Note that the $\chi$ distribution density is zero at $0$. In particular, this conditioning is equivalent to conditioning on $\Re\left\{\ci N_{V,0,i}\right\}=0$. Furthermore, when $\mathbb{F}=\mathbb{H}$, recall that ``$\ci$'' could be any one of $\ci,\cj,\ck$ due to Eq.~\eqref{eq:symp_param}. In particular, this is conditioning on \emph{all} nonreal components of $N_{V,0,i}$ equaling zero. Taking this all together yields:
\begin{equation}
    \hat{\ell}_{;i,j\mid z,\bm{0}}=\frac{2I\beta\sigma_o}{N^2}\sqrt{\frac{z}{I\overline{o}}}\chi_j\Re\left\{N_{V,0,i}^\ast\right\}\bra{i}\Re\left\{\bm{\tilde{W}}\right\}\ket{j}.
\end{equation}
As the real parts of complex or symplectic Wishart matrices are real Wishart with a factor of $\beta$ more degrees of freedom and rescaled by a factor of $\beta^{-1}$, and similarly the real parts of standard complex or symplectic Gaussian random variables are standard real Gaussian random variables rescaled by a factor of $\beta^{-\frac{1}{2}}$, we can more conveniently rewrite this as
\begin{equation}
    \hat{\ell}_{;i,j\mid z,\bm{0}}=\frac{2I\sigma_o}{N^2}\sqrt{\frac{z}{I\beta\overline{o}}}\hat{N}_i\chi_j\bra{i}\bm{\hat{W}}\ket{j},
\end{equation}
where now $\bm{\hat{W}}$ is a real Wishart matrix with $\beta r$ degrees of freedom and the $\hat{N}_i$ are i.i.d.\ standard normal random variables. This proves Theorem~\ref{thm:hess_conv} for sectors with $\mathbb{F}=\mathbb{C},\mathbb{H}$, with accumulated error in L\'{e}vy--Prokhorov distance:
\begin{equation}
    \pi''=\pi_1+\alpha_3+2\alpha_4+\alpha_5+\alpha_6=\operatorname{O}\left(\frac{p^2\log\log\left(\mathcal{N}^{-t}\epsilon^{-1}\right)}{\log\left(\mathcal{N}^{-t}\epsilon^{-1}\right)}+\frac{p^2\log\left(t\right)}{\sqrt{t}}+\frac{\mathcal{N}^{\frac{1}{3}}\log\left(N\right)}{N^{\frac{1}{3}}}\right).
\end{equation}

\subsubsection{\texorpdfstring{$\mathbb{F}=\mathbb{R}$}{F=R}}

We now consider when $\mathbb{F}=\mathbb{R}$, where we are interested in the joint distribution of:
\begin{align}
    &\hat{\ell}=\frac{I\overline{o}}{r}\Tr\left(\bm{\rho} \bm{X}^\intercal\bm{\tilde{O}}\bm{X}\right),\\
    &\hat{\ell}_{;i}=\frac{I\overline{o}}{rN}\Tr\left(\bm{\rho} \bm{X}^\intercal\left[\bm{X}\left(\ket{2i}\bra{2i+1}-\ket{2i+1}\bra{2i}\right)\bm{X}^\intercal,\bm{\tilde{O}}\right]\bm{X}\right),\\
    &\begin{aligned}
        \hat{\ell}_{;i,j}=\frac{I\overline{o}}{rN^2}\Tr\Big(\bm{\rho} \bm{X}^\intercal\Big[&\bm{X}\left(\ket{2j}\bra{2j+1}-\ket{2j+1}\bra{2j}\right)\bm{X}^\intercal,\\
        &\left[\bm{X}\left(\ket{2i}\bra{2i+1}-\ket{2i+1}\bra{2i}\right)\bm{X}^\intercal,\bm{\tilde{O}}\right]\Big]\bm{X}\Big),
    \end{aligned}
\end{align} 
where now $\bm{X}$ is a $N\times\left(2p+1\right)$ random matrix with i.i.d.\ standard Gaussian entries over $\mathbb{R}$. As the commutator of symmetric matrices is antisymmetric, and taking note of the identity:
\begin{equation}
    \Tr\left(A\left[B,C\right]\right)=\Tr\left(\left[C,A\right]B\right),
\end{equation}
we can rewrite this as:
\begin{align}
    \hat{\ell}&=\frac{I\overline{o}}{r}\Tr\left(\bm{\rho} \bm{X}^\intercal\bm{\tilde{O}}\bm{X}\right),\\
    \hat{\ell}_{;i}&=\frac{2I\overline{o}}{rN}\Tr\left(\bm{\rho} \bm{X}^\intercal\left[\bm{X}\ket{2i}\bra{2i+1}\bm{X}^\intercal,\bm{\tilde{O}}\right]\bm{X}\right),\\
    \hat{\ell}_{;i,j}&=\frac{2I\overline{o}}{rN^2}\Tr\left(\bm{\rho} \bm{X}^\intercal\left[\bm{X}\ket{2j}\bra{2j+1}\bm{X}^\intercal,\left[\bm{X}\left(\ket{2i}\bra{2i+1}-\ket{2i+1}\bra{2i}\right)\bm{X}^\intercal,\bm{\tilde{O}}\right]\right]\bm{X}\right).
\end{align} 

Recall that
\begin{equation}
    \bm{\tilde{O}}=\sum_{\mu=0}^{r-1}\ket{\mu}\bra{\mu}.
\end{equation}
Let $\bm{C}$ be the complement of $\bm{\tilde{O}}$, i.e.,
\begin{equation}
    \bm{C}=\sum_{\mu=r}^{N-1}\ket{\mu}\bra{\mu}.
\end{equation}
In particular, $\bm{C}$ and $\bm{\tilde{O}}$ sum to the identity: 
\begin{equation}
    \bm{C}+\bm{\tilde{O}}=\bm{I}_N.
\end{equation}
Furthermore,
\begin{align}
    \bm{W}&\equiv \bm{X}^\intercal\bm{\tilde{O}}\bm{X},\\
    \bm{V}&\equiv \bm{X}^\intercal \bm{C}\bm{X},
\end{align}
are each Wishart-distributed matrices; the former with $r$ degrees of freedom, and the latter with $N-r$. The independence of $\bm{W}$ and $\bm{V}$ can be checked from the associated Bartlett decompositions. We thus have that the distribution we are interested in is the joint distribution of:
\begin{align}
    &\hat{\ell}=\frac{I\overline{o}}{r}\bra{0}\bm{W}\ket{0},\label{eq:hat_loss_reals}\\
    &\begin{aligned}
        \hat{\ell}_{;i}&=\frac{2I\overline{o}}{rN}\bra{0}\left(\bm{X}^\intercal \bm{X}\ket{2i}\bra{2i+1}\bm{W}-\bm{W}\ket{2i}\bra{2i+1}\bm{X}^\intercal \bm{X}\right)\ket{0}\\
        &=-\frac{2I\overline{o}}{rN}\bra{0}\bm{W}\left(\ket{2i}\bra{2i+1}-\ket{2i+1}\bra{2i}\right)\bm{V}\ket{0},
    \end{aligned}
\end{align}
and:
\begin{equation}
    \begin{aligned}
        \hat{\ell}_{;i,j}=&\frac{2I\overline{o}}{rN^2}\bra{0}\bm{X}^\intercal\left[\bm{X}\ket{2j}\bra{2j+1}\bm{X}^\intercal,\left[\bm{X}\left(\ket{2i}\bra{2i+1}-\ket{2i+1}\bra{2i}\right)\bm{X}^\intercal,\bm{O}\right]\right]\bm{X}\ket{0}\\
        =&\frac{2I\overline{o}}{rN^2}\bra{0}\bm{X}^\intercal \bm{X}\ket{2j}\bra{2j+1}\bm{X}^\intercal \bm{X}\left(\ket{2i}\bra{2i+1}-\ket{2i+1}\bra{2i}\right)\bm{X}^\intercal \bm{O}\bm{X}\ket{0}\\
        &+\frac{2I\overline{o}}{rN^2}\bra{0}\bm{X}^\intercal \bm{O}\bm{X}\left(\ket{2i}\bra{2i+1}-\ket{2i+1}\bra{2i}\right)\bm{X}^\intercal \bm{X}\ket{2j}\bra{2j+1}\bm{X}^\intercal \bm{X}\ket{0}\\
        &-\frac{2I\overline{o}}{rN^2}\bra{0}\bm{X}^\intercal \bm{X}\left(\ket{2i}\bra{2i+1}-\ket{2i+1}\bra{2i}\right)\bm{X}^\intercal \bm{O}\bm{X}\ket{2j}\bra{2j+1}\bm{X}^\intercal \bm{X}\ket{0}\\
        &-\frac{2I\overline{o}}{rN^2}\bra{0}\bm{X}^\intercal \bm{X}\ket{2j}\bra{2j+1}\bm{X}^\intercal \bm{O}\bm{X}\left(\ket{2i}\bra{2i+1}-\ket{2i+1}\bra{2i}\right)\bm{X}^\intercal \bm{X}\ket{0}\\
        =&\frac{2I\overline{o}}{rN^2}\bra{0}\left(\bm{W}+\bm{V}\right)\ket{2j}\bra{2j+1}\left(\bm{W}+\bm{V}\right)\left(\ket{2i}\bra{2i+1}-\ket{2i+1}\bra{2i}\right)\bm{W}\ket{0}\\
        &+\frac{2I\overline{o}}{rN^2}\bra{0}\bm{W}\left(\ket{2i}\bra{2i+1}-\ket{2i+1}\bra{2i}\right)\left(\bm{W}+\bm{V}\right)\ket{2j}\bra{2j+1}\left(\bm{W}+\bm{V}\right)\ket{0}\\
        &-\frac{2I\overline{o}}{rN^2}\bra{0}\left(\bm{W}+\bm{V}\right)\left(\ket{2i}\bra{2i+1}-\ket{2i+1}\bra{2i}\right)\bm{W}\ket{2j}\bra{2j+1}\left(\bm{W}+\bm{V}\right)\ket{0}\\
        &-\frac{2I\overline{o}}{rN^2}\bra{0}\left(\bm{W}+\bm{V}\right)\ket{2j}\bra{2j+1}\bm{W}\left(\ket{2i}\bra{2i+1}-\ket{2i+1}\bra{2i}\right)\left(\bm{W}+\bm{V}\right)\ket{0}.
    \end{aligned}
\end{equation}

We can simplify the expression for the second derivative by grouping terms by their order in $\bm{W}$. Note first that all terms cubic in matrix elements of $\bm{W}$ sum to zero. For the quadratic terms, note that for the outer two terms:
\begin{equation}
    \begin{aligned}
        &+\bra{0}\bm{V}\ket{2j}\bra{2j+1}\bm{W}\left(\ket{2i}\bra{2i+1}-\ket{2i+1}\bra{2i}\right)\bm{W}\ket{0}\\
        &+\bra{0}\bm{W}\ket{2j}\bra{2j+1}\bm{V}\left(\ket{2i}\bra{2i+1}-\ket{2i+1}\bra{2i}\right)\bm{W}\ket{0}\\
        &-\bra{0}\bm{V}\ket{2j}\bra{2j+1}\bm{W}\left(\ket{2i}\bra{2i+1}-\ket{2i+1}\bra{2i}\right)\bm{W}\ket{0}\\
        &-\bra{0}\bm{W}\ket{2j}\bra{2j+1}\bm{W}\left(\ket{2i}\bra{2i+1}-\ket{2i+1}\bra{2i}\right)\bm{V}\ket{0}\\
        =&+\bra{0}\bm{W}\ket{2j}\bra{2j+1}\bm{V}\left(\ket{2i}\bra{2i+1}-\ket{2i+1}\bra{2i}\right)\bm{W}\ket{0}\\
        &-\bra{0}\bm{W}\ket{2j}\bra{2j+1}\bm{W}\left(\ket{2i}\bra{2i+1}-\ket{2i+1}\bra{2i}\right)\bm{V}\ket{0},
    \end{aligned}
\end{equation}
and for the inner two terms:
\begin{equation}
    \begin{aligned}
        &+\bra{0}\bm{W}\left(\ket{2i}\bra{2i+1}-\ket{2i+1}\bra{2i}\right)\bm{V}\ket{2j}\bra{2j+1}\bm{W}\ket{0}\\
        &+\bra{0}\bm{W}\left(\ket{2i}\bra{2i+1}-\ket{2i+1}\bra{2i}\right)\bm{W}\ket{2j}\bra{2j+1}\bm{V}\ket{0}\\
        &-\bra{0}\bm{V}\left(\ket{2i}\bra{2i+1}-\ket{2i+1}\bra{2i}\right)\bm{W}\ket{2j}\bra{2j+1}\bm{W}\ket{0}\\
        &-\bra{0}\bm{W}\left(\ket{2i}\bra{2i+1}-\ket{2i+1}\bra{2i}\right)\bm{W}\ket{2j}\bra{2j+1}\bm{V}\ket{0}\\
        =&+\bra{0}\bm{W}\left(\ket{2i}\bra{2i+1}-\ket{2i+1}\bra{2i}\right)\bm{V}\ket{2j}\bra{2j+1}\bm{W}\ket{0}\\
        &-\bra{0}\bm{V}\left(\ket{2i}\bra{2i+1}-\ket{2i+1}\bra{2i}\right)\bm{W}\ket{2j}\bra{2j+1}\bm{W}\ket{0}.
    \end{aligned}
\end{equation}
We can, in turn, combine these terms to obtain:
\begin{equation}
    \begin{aligned}
        &+\bra{0}\bm{W}\ket{2j}\bra{2j+1}\bm{V}\left(\ket{2i}\bra{2i+1}-\ket{2i+1}\bra{2i}\right)\bm{W}\ket{0}\\
        &-\bra{0}\bm{W}\ket{2j}\bra{2j+1}\bm{W}\left(\ket{2i}\bra{2i+1}-\ket{2i+1}\bra{2i}\right)\bm{V}\ket{0}\\
        &+\bra{0}\bm{W}\left(\ket{2i}\bra{2i+1}-\ket{2i+1}\bra{2i}\right)\bm{V}\ket{2j}\bra{2j+1}\bm{W}\ket{0}\\
        &-\bra{0}\bm{V}\left(\ket{2i}\bra{2i+1}-\ket{2i+1}\bra{2i}\right)\bm{W}\ket{2j}\bra{2j+1}\bm{W}\ket{0}\\
        =&+\bra{0}\bm{W}\ket{2j}\bra{2j+1}\bm{V}\left(\ket{2i}\bra{2i+1}-\ket{2i+1}\bra{2i}\right)\bm{W}\ket{0}\\
        &-\bra{0}\bm{W}\ket{2j}\bra{2j+1}\bm{W}\left(\ket{2i}\bra{2i+1}-\ket{2i+1}\bra{2i}\right)\bm{V}\ket{0}\\
        &-\bra{0}\bm{W}\ket{2j+1}\bra{2j}\bm{V}\left(\ket{2i}\bra{2i+1}-\ket{2i+1}\bra{2i}\right)\bm{W}\ket{0}\\
        &+\bra{0}\bm{W}\ket{2j+1}\bra{2j}\bm{W}\left(\ket{2i}\bra{2i+1}-\ket{2i+1}\bra{2i}\right)\bm{V}\ket{0}\\
        =&+\bra{0}\bm{W}\left(\ket{2j}\bra{2j+1}-\ket{2j+1}\bra{2j}\right)\bm{V}\left(\ket{2i}\bra{2i+1}-\ket{2i+1}\bra{2i}\right)\bm{W}\ket{0}\\
        &-\bra{0}\bm{W}\left(\ket{2j}\bra{2j+1}-\ket{2j+1}\bra{2j}\right)\bm{W}\left(\ket{2i}\bra{2i+1}-\ket{2i+1}\bra{2i}\right)\bm{V}\ket{0}.
    \end{aligned}
\end{equation}
We can then combine terms with those linear in $\bm{W}$ to see that:
\begin{equation}
    \begin{aligned}
        \hat{\ell}_{;i,j}&=\frac{2I\overline{o}}{rN^2}\bra{0}\bm{W}\left(\ket{2i}\bra{2i+1}-\ket{2i+1}\bra{2i}\right)\bm{V}\left(\ket{2j}\bra{2j+1}-\ket{2j+1}\bra{2j}\right)\left(\bm{W}+\bm{V}\right)\ket{0}\\
        &-\frac{2I\overline{o}}{rN^2}\bra{0}\bm{V}\left(\ket{2i}\bra{2i+1}-\ket{2i+1}\bra{2i}\right)\bm{W}\left(\ket{2j}\bra{2j+1}-\ket{2j+1}\bra{2j}\right)\left(\bm{W}+\bm{V}\right)\ket{0}.
    \end{aligned}
\end{equation}

We now use two lemmas---proved in Appendix~\ref{sec:tail_bound_lems}---that will simplify the form of our answer by allowing us to justify the dropping of terms that are subleading w.h.p. In particular, we use Lemma~\ref{lem:low_rank_wis_tail_bounds} followed by Lemma~\ref{lem:drop_column} just as we did in Eqs.~\eqref{eq:v_subleading_1},~\eqref{eq:v_subleading_2}, and~\eqref{eq:v_subleading_3}. By the same logic, we have that we can take up to an order
\begin{equation}
    \alpha_5=\operatorname{O}\left(\frac{\mathcal{N}^{\frac{1}{3}}\left(N-r\right)^\frac{1}{3}\log\left(\frac{p^2N^{\frac{2}{3}}}{\mathcal{N}^{\frac{1}{3}}\left(N-r\right)^{\frac{1}{3}}}\right)}{N^{\frac{2}{3}}}\right)\leq\operatorname{O}\left(\frac{\mathcal{N}^{\frac{1}{3}}\sigma_o^{\frac{2}{3}}\log\left(\frac{p^2N^{\frac{1}{3}}\overline{o}^{\frac{2}{3}}}{\mathcal{N}^{\frac{1}{3}}\sigma_o^{\frac{2}{3}}}\right)}{N^{\frac{1}{3}}\overline{o}^{\frac{2}{3}}}\right)
\end{equation}
error in Ky Fan metric:
\begin{equation}
    \hat{\ell}_{;i,j}=-\frac{2I\overline{o}}{rN^2}\bra{0}\bm{V}\left(\ket{2i}\bra{2i+1}-\ket{2i+1}\bra{2i}\right)\bm{\tilde{W}}\left(\ket{2j}\bra{2j+1}-\ket{2j+1}\bra{2j}\right)\bm{W}\ket{0},
\end{equation}
where $\bm{\tilde{W}}$ is as $\bm{W}$, but with the first column of its Bartlett factor $\bm{L}_W$ removed.

At this point it is instructive to consider this joint distribution in terms of the marginal distributions of elements in the Bartlett decompositions of $\bm{W}$ and $\bm{V}$. Let $N_{W,i,j}$ be the i.i.d.\ standard normally distributed random variables over $\mathbb{R}$ in the off-diagonal elements of $\bm{L}_W$ (with the convention $i<j$), and similarly for $N_{V,i,j}$ and the Bartlett factor $\bm{L}_V$ of $\bm{V}$. Furthermore, let $\chi_{W,i}$ be the i.i.d.\ $\chi$-distributed random variables along the diagonal of $\bm{L}_W$, and similarly for $\chi_{V,i}$ and $\bm{L}_V$.

We first condition on $\hat{\ell}=z$. From Eq.~\eqref{eq:hat_loss_alg_closed} this is equivalent to conditioning on $\chi_{W,0}=\sqrt{\frac{r}{I\overline{o}}z}$. Conditioned on this event, the first derivative is distributed as:
\begin{equation}
    \hat{\ell}_{;i\mid z}=\frac{2I\overline{o}}{rN}\sqrt{\frac{r}{I\overline{o}}z}\chi_{V,0}\left(N_{W,0,2i+1}N_{V,0,2i}-N_{W,0,2i}N_{V,0,2i+1}\right).
\end{equation}
Similarly, the second derivative conditioned on this event is distributed as:
\begin{equation}
    \begin{aligned}
        \hat{\ell}_{;i,j\mid z}=&\frac{2I\overline{o}}{rN^2}\sqrt{\frac{r}{I\overline{o}}z}\chi_{V,0}\\
        &\times\left(N_{V,0,2i+1}\bra{2i}-N_{V,0,2i}\bra{2i+1}\right)\bm{\tilde{W}}\left(\ket{2j}N_{W,0,2j+1}-\ket{2j+1}N_{W,0,2j}\right).
    \end{aligned}
\end{equation}
Note that $\bm{\tilde{W}}$ is independent from $N_{W,0,i}$ exactly due to the removal of the first column of the Bartlett factor of $\bm{W}$.

We will next simplify further using the fact that $\chi_{V,0}$ concentrates around its mean, in particular using the $\chi$-distribution tail bound (for $N-r>0$, some universal constant $C>0$, and bounded $t$):
\begin{equation}
    \mathbb{P}\left[\frac{1}{\left(N-r\right)^{\frac{1}{4}}}\left\lvert\chi-\sqrt{\beta\left(N-r\right)}\right\rvert\geq t\right]\leq 2\exp\left(-C\sqrt{N-r}\min\left(t,t^2\right)\right)
\end{equation}
which follows from Lemma~\ref{lem:chi_square_tail_bounds}, proved in Appendix~\ref{sec:tail_bound_lems}. Just as we accounted for $\alpha_5$, up to an order
\begin{equation}
    \alpha_6=\operatorname{O}\left(\frac{\mathcal{N}^{\frac{1}{3}}r^{\frac{1}{6}}\left(N-r\right)^{\frac{1}{12}}\log\left(\frac{p^2N^{\frac{2}{3}}}{\mathcal{N}^{\frac{1}{3}}r^{\frac{1}{6}}\left(N-r\right)^{\frac{1}{12}}}\right)}{N^{\frac{2}{3}}}\right)\leq\operatorname{O}\left(\frac{\mathcal{N}^{\frac{1}{3}}\sigma_o^{\frac{1}{6}}\log\left(\frac{p^2N^{\frac{5}{12}}\overline{o}^{\frac{1}{6}}}{\mathcal{N}^{\frac{1}{3}}\sigma_o^{\frac{1}{6}}}\right)}{N^{\frac{5}{12}}\overline{o}^{\frac{1}{6}}}\right)
\end{equation}
error in Ky Fan metric,
\begin{align}
    &\hat{\ell}_{;i\mid z}=\frac{2I\overline{o}}{N}\sqrt{\left(\frac{N}{r}-1\right)\frac{z}{I\overline{o}}}\left(N_{W,0,2i+1}N_{V,0,2i}-N_{W,0,2i}N_{V,0,2i+1}\right),\\
    &\begin{aligned}
        \hat{\ell}_{;i,j\mid z}=&\frac{2I\overline{o}}{N^2}\sqrt{\left(\frac{N}{r}-1\right)\frac{z}{{I\overline{o}}}}\\
        &\times\left(N_{V,0,2i+1}\bra{2i}-N_{V,0,2i}\bra{2i+1}\right)\bm{\tilde{W}}\left(\ket{2j}N_{W,0,2j+1}-\ket{2j+1}N_{W,0,2j}\right).
    \end{aligned}
\end{align}
Using Lemma~\ref{lem:sigma_o_simp} we can simplify this further as:
\begin{align}
    \hat{\ell}_{;i\mid z}&=\frac{2I\sigma_o}{N}\sqrt{\frac{z}{I\overline{o}}}\left(N_{W,0,2i+1}N_{V,0,2i}-N_{W,0,2i}N_{V,0,2i+1}\right),\\
    \hat{\ell}_{;i,j\mid z}&=\frac{2I\sigma_o}{N^2}\sqrt{\frac{z}{{I\overline{o}}}}\left(N_{V,0,2i+1}\bra{2i}-N_{V,0,2i}\bra{2i+1}\right)\bm{\tilde{W}}\left(\ket{2j}N_{W,0,2j+1}-\ket{2j+1}N_{W,0,2j}\right).
\end{align}
We can simplify this even further by realizing that the joint distribution of $\left(N_{V,0,2i},N_{V,0,2i+1}\right)$ is invariant under orthogonal transformations, i.e., it is circularly symmetric. We can thus simultaneously transform $\left(N_{V,0,2i},N_{V,0,2i+1}\right)$ and conjugate $\bm{\tilde{W}}$ by orthogonal operators such that we need only consider the joint distribution of:
\begin{align}
    \hat{\ell}_{;i\mid z}&=\frac{2I\sigma_o}{N}\sqrt{\frac{z}{I\overline{o}}}\chi_i N_{V,0,2i+1},\\
    \hat{\ell}_{;i,j\mid z}&=\frac{2I\sigma_o}{N^2}\sqrt{\frac{z}{I\overline{o}}}\chi_j\left(N_{V,0,2i}\bra{2i}-N_{V,0,2i+1}\bra{2i+1}\right)\bm{\tilde{W}}\ket{2j},
\end{align}
where here the $\chi_i$ are i.i.d.\ $\chi$-distributed random variables with $2$ degrees of freedom given by:
\begin{equation}
    \chi_i\equiv\sqrt{N_{W,0,2i}^2+N_{W,0,2i+1}^2}.
\end{equation}
This proves Theorem~\ref{thm:grad_conv} for sectors with $\mathbb{F}=\mathbb{R}$, with accumulated error in L\'{e}vy--Prokhorov distance:
\begin{equation}
    \pi'=\pi_1+\alpha_3+2\alpha_4+\alpha_6=\operatorname{O}\left(\frac{p\log\log\left(\mathcal{N}^{-t}\epsilon^{-1}\right)}{\log\left(\mathcal{N}^{-t}\epsilon^{-1}\right)}+\frac{p\log\left(t\right)}{\sqrt{t}}+\sqrt{\frac{\mathcal{N}\log\left(N\right)}{N^{1.001}}}+\frac{\mathcal{N}^{\frac{1}{3}}\log\left(N\right)}{N^{\frac{5}{12}}}\right).
\end{equation}

We finally consider the second derivative conditioned on both $\hat{\ell}=z$ and all $\hat{\ell}_{;i\mid z}=0$ (for $z>0$). Note that the $\chi$ distribution density is zero at $0$. In particular, this conditioning is equivalent to conditioning on $N_{V,0,2i+1}=0$, yielding:
\begin{equation}
    \hat{\ell}_{;i,j\mid z,\bm{0}}=\frac{2I\sigma_o}{N^2}\sqrt{\frac{z}{I\overline{o}}}N_{V,0,2i}\chi_j\bra{2i}\bm{\tilde{W}}\ket{2j}.
\end{equation}
As submatrices of Wishart matrices are also Wishart, we can more conveniently rewrite this by taking $\tilde{W}$ to be $p\times p$:
\begin{equation}
    \hat{\ell}_{;i,j\mid z,\bm{0}}=\frac{2I\sigma_o}{N^2}\sqrt{\frac{z}{I\overline{o}}}\hat{N}_i\chi_j\bra{i}\bm{\tilde{W}}\ket{j},
\end{equation}
where the $\hat{N}_i$ are i.i.d.\ standard normal random variables. This proves Theorem~\ref{thm:hess_conv} for sectors with $\mathbb{F}=\mathbb{R}$, with accumulated error in L\'{e}vy--Prokhorov distance:
\begin{equation}
    \pi''=\pi_1+\alpha_3+2\alpha_4+\alpha_5+\alpha_6=\operatorname{O}\left(\frac{p^2\log\log\left(\mathcal{N}^{-t}\epsilon^{-1}\right)}{\log\left(\mathcal{N}^{-t}\epsilon^{-1}\right)}+\frac{p^2\log\left(t\right)}{\sqrt{t}}+\frac{\mathcal{N}^{\frac{1}{3}}\log\left(N\right)}{N^{\frac{1}{3}}}\right).
\end{equation}

\section{Quantum Algorithm for Determining Asymptotic Trainability}\label{sec:meas_trainability}

We elaborate here on a claim made in Sec.~\ref{sec:conc} that one can efficiently estimate $\Tr\left(\left(\bm{O}^\alpha\right)^2\right)$ and $\Tr\left(\left(\bm{\rho}^\alpha\right)^2\right)$ on a quantum computer and thus, by Definition~\ref{def:trainability_formal}, determine whether a given quantum neural network architecture is asymptotically trainable or not. We assume as input the Jordan algebra $\mathcal{A}\cong\bigoplus_\alpha\mathcal{A}_\alpha$ associated with the model, given as orthonormal bases $\left\{\bm{B}_{\alpha,i}\right\}_i$ spanning each $\mathcal{A}_\alpha$. Note that the corresponding traces $\Tr_\alpha\left(\left(\bm{O}^\alpha\right)^2\right)$ and $\Tr_\alpha\left(\left(\bm{\rho}^\alpha\right)^2\right)$ in the defining representation of $\mathcal{A}_\alpha$ can then also be evaluated using Eq.~\eqref{eq:inner_product_equiv_inf}:
\begin{equation}
    \Tr\left(\cdot\right)=I_\alpha\Tr_\alpha\left(\cdot\right);
\end{equation}
in particular, $I_\alpha$ can be calculated from the ratio of traces of any $\bm{B}_{\alpha,i}$ in the defining representation of $\mathcal{H}^N\left(\mathbb{C}\right)\supseteq\mathcal{A}_\alpha$ and that of $\mathcal{A}_\alpha$.

We begin with $\Tr\left(\left(\bm{\rho}^\alpha\right)^2\right)$. It is immediate from the orthonormality of the $\bm{B}_{\alpha,i}$ that:
\begin{equation}
    \Tr\left(\left(\bm{\rho}^\alpha\right)^2\right)=\sum\limits_i\Tr\left(\bm{B}_{\alpha,i}\bm{\rho}\right)^2;
\end{equation}
in particular, $\Tr\left(\left(\bm{\rho}^\alpha\right)^2\right)$ is calculable via, for instance, phase estimation~\citep{Nielsen_Chuang_2010_fourier} of the $\bm{B}_{\alpha,i}$.

This leaves only $\Tr\left(\left(\bm{O}^\alpha\right)^2\right)$. If $\bm{O}$ were an efficiently preparable, valid quantum state, one could immediately calculate $\Tr\left(\left(\bm{O}^\alpha\right)^2\right)$ in the same way as $\Tr\left(\left(\bm{\rho}^\alpha\right)^2\right)$. In the more general case where, for instance. $\bm{O}$ is given via its Pauli decomposition:
\begin{equation}
    \bm{O}=\sum_{i=1}^S o_i\bm{P}_i,
\end{equation}
one can proceed similarly via a block encoding using the linear combination of unitaries (LCU) subroutine~\citep{10.5555/2481569.2481570}.

To outline this procedure, we assume $\bm{O}$ is an observable on $n$ qubits. Let:
\begin{equation}
    \bm{f}\left(\bm{O}\right)\equiv\sum_{i=1}^S\frac{o_i}{2}\bm{P}_i\otimes\left(\bm{I}_2+\bm{Z}\right),
\end{equation}
where $\bm{I}_N$ is the $N\times N$ identity matrix and $\bm{Z}$ the $2\times 2$ single-qubit Pauli $Z$ matrix, and consider the $\left(n+1\right)$-qubit quantum state:
\begin{equation}
    \bm{\omega}_0\equiv\frac{\bm{I}_{2^n}}{2^n}\otimes\ket{0}\bra{0}.
\end{equation}
We have by construction that:
\begin{equation}
    \bm{O}^2\otimes\ket{0}\bra{0}=\bm{f}\left(\bm{O}\right)\bm{\omega}_0\bm{f}\left(\bm{O}\right).
\end{equation}
Thus,
\begin{equation}
    \begin{aligned}
        \Tr\left(\left(\bm{O}^\alpha\right)^2\right)&=\sum_i\Tr\left(\bm{B}_{\alpha,i}\bm{O}^2\bm{B}_{\alpha,i}\right)\\
        &=\sum_i\Tr\left(\left(\bm{B}_{\alpha,i}\right)^2\bm{f}\left(\bm{O}\right)\bm{\omega}_0\bm{f}\left(\bm{O}\right)\right).
    \end{aligned}
\end{equation}
Of course, $\bm{f}\left(\bm{O}\right)$ is not necessarily unitary, so naive application of $\bm{f}\left(\bm{O}\right)$ is impossible on a quantum computer. However, one can proceed using the standard LCU subroutine~\citep{10.5555/2481569.2481570}. Assuming $\bm{O}$ is normalized such that its root mean square eigenvalue is $\operatorname{\Theta}\left(1\right)$,
\begin{equation}
    \Tr\left(\bm{f}\left(\bm{O}\right)\bm{\omega}_0\bm{f}\left(\bm{O}\right)\right)=2^{-n}\Tr\left(\bm{O}^2\right)=\operatorname{\Theta}\left(1\right).
\end{equation}
Following the protocol described by Theorem~1 of \citet{chakraborty2024implementingany}, then, each $\Tr\left(\left(\bm{B}_{\alpha,i}\right)^2\bm{f}\left(\bm{O}\right)\bm{\omega}_0\bm{f}\left(\bm{O}\right)\right)$ can be estimated to an additive error $\epsilon$ using one ancilla qubit and $\operatorname{O}\left(\frac{S^4}{\epsilon^2}\right)$ samples through an LCU block encoding.

\section{Open Questions}\label{sec:fut_direcs}

As our results are formal, there is room for violation of their conclusions in the settings where their assumptions do not hold. One example of this is the local minima distribution of quantum neural networks discussed in Appendix~\ref{sec:loc_min_disc}. The results given there assume that the eigenvalue spectrum of the objective observable is concentrated as is typical of low-weight fermionic~\citep{feng2019spectrum} and local spin Hamiltonians~\citep{erdHos2014phase}. However, the behavior of the local minima density differs when the spectrum is not concentrated, as is typical of observables drawn from the Gaussian unitary ensemble (GUE) and nonlocal spin systems. Observables from this latter class are known to be ``quantumly easy'' in the sense that phase estimation performed on the maximally mixed state efficiently prepares the ground state~\citep{chifang2023sparse}. Based on this intuition these nonlocal systems may also yield local minima distributions for quantum neural networks more amenable to variational optimization even at shallow depth. If this is true, this may give a natural class of optimization problems which are easy to solve given access to a quantum computer.

Furthermore, we nowhere discuss the impact of noise: we assume all operations implemented by the quantum neural network are unitary, not general quantum channels. It is already known that noise can affect the presence of barren plateaus~\citep{wang2021noise} as well as local minima~\citep{li2024noiseinduced,mele2024noiseinduced} in variational loss landscapes. Heuristically the former can be understood as an additional channel mixing the variational ansatz over Hilbert space, and the latter as noise effectively limiting the number of parameters influencing the variational ansatz and thus always keeping the network in the underparameterized regime. We hope in the future to make this heuristic understanding rigorous.

\section{Reduction of Spin Factor Sectors to Real Symmetric Sectors}\label{sec:spin_factor_same_loss_symm}

We elaborate here on the claim made in Appendix~\ref{sec:jasa} that the $\alpha$ where $\mathcal{A}_\alpha\cong\mathcal{L}_{N_\alpha}$---i.e., the spin factor sectors---are conceptually equivalent to the case $\mathcal{A}_\alpha\cong\mathcal{H}^{N_\alpha-1}\left(\mathbb{R}\right)$. To see this, consider a parameterization of $T^\alpha\left(\bm{\theta}\right)$ in the spin factor case:
\begin{equation}
    T\left(\bm{\theta}\right)=g_0^{\alpha\intercal}\prod_{i=1}^p g_i^\alpha\exp\left(\theta_i A_i^\alpha\right)g_i^{\alpha\intercal},
\end{equation}
where $g_i^\alpha\in\operatorname{SO}\left(N-1\right)$ and $A_i^\alpha\in\mathfrak{so}\left(N-1\right)$. As stated in Appendix~\ref{sec:jasa}, $g_i^\alpha$ acts trivially on the first component in the defining representation. Because of this (assuming the loss is shifted by a constant to always be nonnegative), the associated $\ell^\alpha\left(\bm{\theta};\bm{\rho}\right)$ is just equal to the square root of an instance of the $\mathcal{H}^{N_\alpha-1}\left(\mathbb{R}\right)$ case with initial state:
\begin{equation}
    \bm{\tilde{\rho}}^\alpha\equiv\bm{\rho}^{\alpha\intercal}\otimes\bm{\rho}^\alpha
\end{equation}
and objective observable:
\begin{equation}
    \bm{\tilde{O}}^\alpha\equiv\bm{O}^\alpha\otimes \bm{O}^{\alpha\intercal}.
\end{equation}

\section{Reduction of Previously Studied Ansatz Classes to Jordan Algebra-Supported Ansatzes}\label{sec:red_prev_cases_jordan_algebra}

We here show that Jordan algebra-supported ansatzes (JASAs) are reduced to by the algebraically defined classes of variational ansatzes previously introduced in the literature, with a primary focus on the Lie algebra-supported ansatzes (LASAs)~\citep{fontana2023adjoint}. A reduction in the other direction---i.e., given a Jordan algebra-defined ansatz, define an equivalent LASA---is not possible. As a simple example, consider the case when the objective observable $\bm{O}$ is a real symmetric matrix and the variational ansatz is PT symmetric (i.e., real) but has no other symmetries. This fits our framework with Jordan algebra $\mathcal{A}\cong\mathcal{H}^N\left(\mathbb{R}\right)$ and automorphism group given by the action of conjugation by $\operatorname{SO}\left(N\right)$. However, $\mathfrak{so}\left(N\right)$ in the defining representation consists of antisymmetric matrices, so this is not a LASA. We also give a reduction from the variational matchgate circuits of \citet{diaz2023showcasing} to our Jordan algebraic framework, and will later demonstrate that such a reduction to LASAs is not possible.

We first discuss how JASAs are reduced to by LASAs. The loss function of a LASA is of the form:
\begin{equation}
    \ell\left(\bm{\theta};\rho\right)=\tau\left(\rho,T\left(\bm{\theta}\right)O\right),
\end{equation}
where $O$ belongs to some dynamical Lie algebra $\mathfrak{g}$, $T\left(\bm{\theta}\right)$ is the adjoint action of some element of the compact Lie group $G$ associated with $\mathfrak{g}$, and $\rho$ is arbitrary. As demonstrated in \citet{fontana2023adjoint}, only the projection of $\rho$ onto the $\mathfrak{g}$ contributes to the loss, so we consider when $\rho\in\mathfrak{g}$ WLOG.

$G$ is a compact Lie group. Thus, up to an Abelian sector that does not contribute to the loss (as $O\in\mathfrak{g}$), $G$ can be written as the direct sum of simple compact Lie groups. By the well-known classification of simple compact Lie groups these are either the classical $\mathfrak{so}\left(N\right)$, $\mathfrak{su}\left(N\right)$, $\mathfrak{sp}\left(N\right)$, or one of the exceptional Lie groups. The exceptional Lie groups are of fixed dimension---i.e., there is no sense of an asymptotic limit---so we consider here the classical cases only.

We begin with the case $\mathfrak{g}\cong\mathfrak{su}\left(N\right)$, with defining representation $\dd{\phi}$ over $\mathbb{C}$. It is easy to see that $\ci\dd{\phi}\left(O\right)$ and $\ci\dd{\phi}\left(\rho\right)$ are in the defining representation of the Jordan algebra $\mathcal{H}\left(\mathbb{C}\right)$, with identical trace form. Such a simple component is thus equivalent to a JASA with simple subalgebra isomorphic to $\mathcal{H}^N\left(\mathbb{C}\right)$.

We now consider when $\mathfrak{g}\cong\mathfrak{su}\left(N\right)$, with defining representation $\dd{\phi}$ over $\mathbb{H}$. As the loss is real-valued we have the expansion in this sector:
\begin{equation}
    \begin{aligned}
        \ell\left(\bm{\theta};\rho\right)\propto&\Tr\left(\Re\left\{\ci\dd{\phi}\left(\rho\right)\right\}\dd{\phi}\left(U\left(\bm{\theta}\right)\right)\Re\left\{\ci\dd{\phi}\left(O\right)\right\}\dd{\phi}\left(U\left(\bm{\theta}\right)\right)^\dagger\right)\\
        &+\Tr\left(\Re\left\{\cj\dd{\phi}\left(\rho\right)\right\}\dd{\phi}\left(U\left(\bm{\theta}\right)\right)\Re\left\{\cj\dd{\phi}\left(O\right)\right\}\dd{\phi}\left(U\left(\bm{\theta}\right)\right)^\dagger\right)\\
        &+\Tr\left(\Re\left\{\ck\dd{\phi}\left(\rho\right)\right\}\dd{\phi}\left(U\left(\bm{\theta}\right)\right)\Re\left\{\ck\dd{\phi}\left(O\right)\right\}\dd{\phi}\left(U\left(\bm{\theta}\right)\right)^\dagger\right).
    \end{aligned}
\end{equation}
Each of these real parts is Hermitian and can be written as elements of the defining representation of $\mathcal{H}^N\left(\mathbb{H}\right)$, reducing this simple component to that of a JASA.

We finally consider when $\mathfrak{g}\cong\mathfrak{so}\left(N\right)$. There are more subtleties compared with the other classical Lie algebras as, in the defining representation over $\mathbb{R}$, there is no simple relation between Hermitian and anti-Hermitian operators. To make the following concrete, we here consider the representation of $\mathfrak{so}\left(N\right)$ defined by Majorana operators $\gamma_i$, with generators:
\begin{equation}
    M_{ij}=\frac{1}{2}\left[\gamma_i,\gamma_j\right],
\end{equation}
i.e., polynomials quadratic in Majorana operators. These are the generators of the spin algebra $\mathfrak{spin}$ associated with the vector space spanned by the $\gamma_i$ and thus is isomorphic to $\mathfrak{so}\left(N\right)$. Furthermore, the gates generated by the $M_{ij}$ are just matchgates. The loss function variance of variational matchgate circuits was studied in detail in \citet{diaz2023showcasing}. The authors there also consider when $O\not\in\mathfrak{g}$, making this an example of a barren plateau result beyond the LASA formalism.

We now demonstrate how this setting fits into the Jordan algebra picture. Let $\mathcal{A}$ be the Jordan algebra of Majorana fermions, with Jordan multiplication $\circ$ given by half the anticommutator. As in the Lie algebra picture, the dimension of this algebra grows exponentially with $n$; similarly, the Jordan algebra generated by $k$-body ($2<k<n$) Majorana terms has dimension growing exponentially with $n$, as does the generated algebra when acting on these $k$-body terms via conjugation by a matchgate. However, though these operators are not closed as an algebra under matchgate conjugation, they \emph{are} closed as a vector space. Furthermore, the trace form of a $k$-body term $\psi$ and a $k'$-body term $\psi'$ is always zero when $k\neq k'$. This allows us to instead consider $O$ as a member of a different Jordan algebra $\mathcal{A}'$, with the same underlying vector space as $\mathcal{A}$ and new Jordan multiplication operation $\bullet$ defined as:
\begin{equation}
    \psi\bullet\psi'=\begin{cases}
        1 & \text{if }\psi=\psi',\\
        \psi & \text{if }\psi'=1,\\
        \psi' & \text{if }\psi=1,\\
        0 & \text{otherwise}.
    \end{cases}
\end{equation}
Note that this new multiplication operation still satisfies the Jordan identity:
\begin{equation}
    \begin{aligned}
        \left(\psi\bullet\psi'\right)\left(\psi\bullet\psi\right)&=\psi\bullet\psi'\\
        &=\psi\bullet\left(\psi'\bullet 1\right)\\
        &=\psi\bullet\left(\psi'\bullet\left(\psi\bullet\psi\right)\right).
    \end{aligned}
\end{equation}
It also preserves the canonical trace form, as:
\begin{equation}
    \Tr\left(L_\circ\left(\psi\circ\psi'\right)\right)=2^n\delta_{\psi,\psi'}=\Tr\left(L_\bullet\left(\psi\bullet\psi'\right)\right).
\end{equation}
Here, $L_\bullet$ is the linear transformation associated with the Jordan multiplication $\bullet$:
\begin{equation}
    L_\bullet\left(u\right)v=u\bullet v,
\end{equation}
and similarly for $L_\circ$. The loss function in this sector is thus a JASA under the Jordan algebra $\mathcal{A}'$, and thus both general matchgate ansatzes or the LASA setting with $\mathfrak{g}\cong\mathfrak{so}\left(N\right)$ reduce to JASAs.

Interestingly, one can show that the matchgate setting \emph{cannot} generally be described by a LASA while maintaining the trace form, even when the Lie bracket is redefined. To see this, we consider the adjoint endomorphism
\begin{equation}
    \operatorname{ad}_{\left[\cdot,\cdot\right]}\left(u\right)v=\left[u,v\right]
\end{equation}
associated with the Lie bracket $\left[\cdot,\cdot\right]$. The canonical trace form for Lie algebras is the \emph{Killing form}, given by:
\begin{equation}
    K\left(u,v\right)=\Tr\left(\operatorname{ad}_{\left[\cdot,\cdot\right]}\left(u\right)\operatorname{ad}_{\left[\cdot,\cdot\right]}\left(v\right)\right).
\end{equation}
Note that (as $\operatorname{ad}_{\left[\cdot,\cdot\right]}\left(u\right)$ is antisymmetric):
\begin{equation}
    K\left(u,u\right)=-\left\lVert\operatorname{ad}_{\left[\cdot,\cdot\right]}\left(u\right)\right\rVert_{\text{F}}^2.
\end{equation}
For $u$ a Majorana monomial under the usual Lie algebra of Majoranas, this is just (minus) the number of terms anticommuting with the monomial. For $u$ a degree-$1$ monomial this will grow exponentially in the number of Majoranas $n$. If instead one considered defining a new Lie bracket $\left[\cdot,\cdot\right]'$ that is zero when mapping from $k$-degree Majorana monomials to those of different degree, $K\left(\cdot,\cdot\right)$ would always be zero on degree-$1$ Majorana monomials---in particular, the trace form is not preserved when modifying the Lie bracket.

\section{Reduction of Low-Purity Inputs to Maximally Mixed Inputs}\label{sec:red_low_purity_inps}

We elaborate here on the claim made in Appendix~\ref{sec:main_results_jaws} that if the input state $\bm{\rho}$ projected into a simple sector $\mathcal{A}_\alpha$ has sufficiently low purity, the loss contribution from this sector does not contribute in any meaningful way. This is true even when the loss is rescaled by some $\mathcal{N}\leq\operatorname{O}\left(N^{0.99}\right)$ as we consider elsewhere. Intuitively this is due to $\bm{\rho}^\alpha$ being close to maximally mixed and thus the contribution to the loss in this algebraic sector is effectively constant. We formally state this claim as the following lemma.
\begin{lemma}[Low purity inputs are approximately trivial]
    Let $\mathcal{J},\ell$ be as in Theorem~\ref{thm:loss_conv}. For a $\bm{\rho}\in\mathcal{R}$ such that for some $\alpha$:
    \begin{equation}
        \Tr\left(\left(\bm{\rho}^\alpha\right)^2\right)\leq N_\alpha^{-0.999},
    \end{equation}
    we have for sufficiently large $\epsilon^{-1},t$ that:
    \begin{equation}
         \operatorname{Var}\left[\mathcal{N}\ell^\alpha\left(\bm{\theta};\bm{\rho}\right)\right]\leq\operatorname{O}\left(\frac{\mathcal{N}^2}{N_\alpha^{1.997}}\right)+\operatorname{O}\left(\pi\right)=\operatorname{o}\left(1\right)
    \end{equation}
    for all $\mathcal{N}\leq\operatorname{O}\left(N^{0.99}\right)$.
\end{lemma}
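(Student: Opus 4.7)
The plan is to directly compute the variance of $\mathcal{N}\ell^\alpha\left(\bm{\theta};\bm{\rho}\right)$ using only the second-moment (i.e., $2$-design) property of the distribution $\mathcal{H}^\alpha$ guaranteed by Assumption~\ref{ass:t_design}, and then apply the low-purity hypothesis together with standard dimension counts. Crucially, the bound does \emph{not} require invoking Theorem~\ref{thm:loss_conv} in full: as remarked in the footnote to Corollary~\ref{cor:barren_plateaus_inf}, computing $\operatorname{Var}\left[\ell^\alpha\left(\bm{\theta};\bm{\rho}\right)\right]$ only requires control of Haar moments of degree at most $2$, so Assumption~\ref{eq:dim_inputs} on the input purity is not needed at this step. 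This is the observation that makes the present lemma a purely algebraic consequence of the variance formula in Corollary~\ref{cor:barren_plateaus_inf}, rather than a separate analytic estimate.

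The execution goes as follows. First, I would write out $\operatorname{Var}\left[\ell^\alpha\left(\bm{\theta};\bm{\rho}\right)\right]$ as a Haar average over $\operatorname{Aut}_1\left(\mathcal{A}_\alpha\right)$ using the reduction described in Lemma~\ref{lem:haar_random_h} (with $d=1$, $t=2$), which incurs an additive error in L\'{e}vy--Prokhorov metric of order
\begin{equation}
    \pi=\operatorname{O}\left(\frac{\log\log\left(\mathcal{N}^{-2}\epsilon^{-1}\right)}{\log\left(\mathcal{N}^{-2}\epsilon^{-1}\right)}+\frac{\log\left(2\right)}{\sqrt{2}}\right),
\end{equation}
but which vanishes as $\epsilon^{-1},t\to\infty$ because only a second-order central moment is involved. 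Second, I would apply the exact variance formula
\begin{equation}
    \operatorname{Var}\left[\ell^\alpha\left(\bm{\theta};\bm{\rho}\right)\right]=\frac{\Tr\left(\left(\bm{O}^\alpha\right)^2\right)\Tr\left(\left(\bm{\rho}^\alpha\right)^2\right)}{\dim_{\mathbb{R}}\left(\operatorname{Aut}\left(\mathcal{A}_\alpha\right)\right)}+\operatorname{O}\left(\pi\right),
\end{equation}
which is established in the proof of Corollary~\ref{cor:barren_plateaus_inf} and holds for arbitrary $\bm{\rho}^\alpha$.

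Third, I would bound the three ingredients using only the algebraic data of the JAWS: by the assumption that $\bm{O}$ has bounded operator norm, $\Tr\left(\left(\bm{O}^\alpha\right)^2\right)=\operatorname{O}\left(N_\alpha\right)$; by hypothesis $\Tr\left(\left(\bm{\rho}^\alpha\right)^2\right)\leq N_\alpha^{-0.999}$; and $\dim_{\mathbb{R}}\left(\operatorname{Aut}\left(\mathcal{A}_\alpha\right)\right)=\Theta\left(N_\alpha^2\right)$ as recorded in the proof of Corollary~\ref{cor:barren_plateaus}. Combining these and multiplying by $\mathcal{N}^2\leq\operatorname{O}\left(N_\alpha^{1.98}\right)$ yields
\begin{equation}
    \operatorname{Var}\left[\mathcal{N}\ell^\alpha\left(\bm{\theta};\bm{\rho}\right)\right]\leq\operatorname{O}\left(\frac{\mathcal{N}^2}{N_\alpha^{1.997}}\right)+\operatorname{O}\left(\pi\right)=\operatorname{o}\left(1\right),
\end{equation}
which is the claimed bound (with a slight slack in the exponent to absorb the subpolynomial factor from $\Tr\left(\left(\bm{O}^\alpha\right)^2\right)$).

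The only minor obstacle is verifying that the variance formula from Corollary~\ref{cor:barren_plateaus_inf} indeed holds without any purity assumption on $\bm{\rho}^\alpha$, since Corollary~\ref{cor:barren_plateaus_inf} as stated cites Theorem~\ref{thm:loss_conv}. I would handle this by noting that the proof of Corollary~\ref{cor:barren_plateaus} proceeds through a direct second-moment computation which uses Eq.~\eqref{eq:dim_inputs} nowhere---Eq.~\eqref{eq:dim_inputs} enters Theorem~\ref{thm:loss_conv} only when one wishes to apply the Gaussian approximation of Lemma~\ref{lem:conv_haar_to_gaussian} to $\bm{h}$, which is needed for the distributional statement but not for the second-moment identity. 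Hence the variance bound propagates freely to all $\bm{\rho}^\alpha$, and the lemma follows.
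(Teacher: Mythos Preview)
Your approach is valid but differs from the paper's. The paper does \emph{not} sidestep the purity hypothesis of Corollary~\ref{cor:barren_plateaus}: instead it decomposes $\bm{\rho}^\alpha$ into $M_\alpha\approx N_\alpha^{0.001}$ pieces, each of rank at most $\lfloor N_\alpha^{0.999}\rfloor$ (hence each admissible for the low-rank case of Lemma~\ref{lem:conv_haar_to_gaussian}), applies Corollary~\ref{cor:barren_plateaus} to every piece, and then bounds the variance of the sum via subadditivity of standard deviations. The extra $M_\alpha^2\approx N_\alpha^{0.002}$ factor this introduces is precisely why the stated exponent is $1.997$ rather than the $1.999$ your direct argument would actually yield.

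Your route is shorter but the justification you give for it is slightly off. As written in the paper, Corollary~\ref{cor:barren_plateaus} is derived \emph{through} Theorem~\ref{thm:loss_conv} (computing the variance of the Wishart surrogate $\hat{\ell}^\alpha$), so its proof does implicitly pass through Eq.~\eqref{eq:dim_inputs}; the footnote you cite concerns the vanishing of the $\operatorname{O}(\pi)$ error, not the removal of the purity assumption. To make your argument self-contained you must actually supply the direct second-moment Haar computation (a standard Weingarten exercise over $\operatorname{Aut}_1(\mathcal{A}_\alpha)$, as in \citet{ragone2023unified} or \citet{fontana2023adjoint}) rather than cite the existing proof. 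The paper's decomposition strategy trades the small exponent loss for staying entirely within the Wishart-process framework already built, never needing to reopen the Haar integral. Both arguments are correct; yours is more elementary once the missing Haar calculation is filled in, while the paper's is more self-referential.
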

\begin{proof}
    Assume $\Tr\left(\bm{\rho}^\alpha\right)=1$ WLOG by otherwise absorbing a factor of $\Tr\left(\bm{\rho}^\alpha\right)$ into $\bm{O}^\alpha$. Consider the decomposition into eigenspaces:
    \begin{equation}
        \bm{\rho}^\alpha=\bm{\sigma}^\alpha+\bm{\tau}^\alpha,
    \end{equation}
    where the nonzero eigenvalues of $\bm{\sigma}^\alpha$ are the largest $\left\lfloor N_\alpha^{0.999}\right\rfloor$ eigenvalues of $\bm{\rho}^\alpha$. Iterate this procedure
    \begin{equation}
        M_\alpha\equiv\left\lceil\frac{N_\alpha}{\left\lfloor N_\alpha^{0.999}\right\rfloor}\right\rceil
    \end{equation}
    total times to yield the decomposition:
    \begin{equation}\label{eq:eigendecomp}
        \bm{\rho}^\alpha=\bm{\sigma}^\alpha+\sum_{i=1}^{M_\alpha-1}\bm{\tau}_i^\alpha.
    \end{equation}
    The standard deviation of a sum of random variables is upper bounded by the sums of the standard deviations. Furthermore, from Eq.~\eqref{eq:eigendecomp} we have that
    \begin{equation}
        \Tr\left(\left(\bm{\rho}^\alpha\right)^2\right)\geq\Tr\left(\left(\bm{\sigma}^\alpha\right)^2\right).
    \end{equation}
    We can thus upper bound the standard deviation of the loss by applying Corollary~\ref{cor:barren_plateaus} to the terms in Eq.~\eqref{eq:eigendecomp}:
    \begin{equation}
        \begin{aligned}
            \operatorname{Var}\left[\mathcal{N}\ell^\alpha\left(\bm{\theta};\bm{\rho}\right)\right]&\leq\mathcal{N}^2M_\alpha^2\frac{\Tr\left(\left(\bm{O}^\alpha\right)^2\right)\Tr\left(\left(\bm{\sigma}^\alpha\right)^2\right)}{\dim_\mathbb{R}\left(\mathfrak{g}_\alpha\right)}+\operatorname{O}\left(\pi\right)\\
            &\leq\mathcal{N}^2M_\alpha^2\frac{\Tr\left(\left(\bm{O}^\alpha\right)^2\right)\Tr\left(\left(\bm{\rho}^\alpha\right)^2\right)}{\dim_\mathbb{R}\left(\mathfrak{g}_\alpha\right)}+\operatorname{O}\left(\pi\right)\\
            &\leq\operatorname{O}\left(\frac{\mathcal{N}^2}{N^{1.997}}\right)+\operatorname{O}\left(\pi\right),
        \end{aligned}
    \end{equation}
    yielding the final result.
\end{proof}

\section{\texorpdfstring{$\epsilon$}{Epsilon}-Approximate \texorpdfstring{$2$}{2}-Designs Suffice for the \texorpdfstring{$\mathcal{G}_i$}{Gi}}\label{sec:two_design_suffices}

We here argue that one need only assume that the $\mathcal{G}_i$ form $\epsilon$-approximate $2$-designs---not large-$t$ $t$-designs---if one takes the overall normalization $\mathcal{N}$ to be at most $\operatorname{o}\left(N^{\frac{2}{3}}\right)$ and if $p$ grows sufficiently slowly with the $N_\alpha$. More concretely, instead of taking Assumption~\ref{ass:asymp_growth_p} we may take the following assumption:
\begin{assumption}[Scaling of parameter space with ansatz moments]\label{ass:rot_inv}
    The number of trained parameters $p$ satisfies:\footnote{As in Assumption~\ref{ass:asymp_growth_p}, this can be weakened to a bound on $p$ if one is only interested in the gradient.}
    \begin{equation}
        p^2\leq\operatorname{o}\left(\min_\alpha\left(\frac{\log\left(\mathcal{N}^{-1}N_\alpha^{\frac{2}{3}}\right)}{\log\log\left(\mathcal{N}^{-1}N_\alpha^{\frac{2}{3}}\right)}\right)\right),
    \end{equation}
    where the $\mathcal{G}_i^\alpha$ are i.i.d.\ $\epsilon$-approximate $2$-designs over $\mathcal{T}_\alpha=\operatorname{Aut}_1\left(\mathcal{A}_\alpha\right)$, where
    \begin{equation}
        \epsilon\leq\operatorname{O}\left(N_\alpha^{-\frac{2}{3}}\right).
    \end{equation}
\end{assumption}
To achieve this we will rely on tools from random matrix theory to bound the error of various mixed cumulants of our expressions under such a change. In particular, we will use the existence of so-called \emph{limit distributions of all orders}.
\begin{definition}[Limit distribution of all orders~\citep{collins2024second}]
    A sequence of sets of $N\times N$ random matrices $\left\{\bm{A}_i\right\}_{i=1}^d$ has a \emph{limit distribution of all orders} if there exist functions $\alpha_{\bm{k}}$ such that the cumulants $\kappa_{\bm{k}}$ obey as $N\to\infty$:
    \begin{equation}
        \begin{aligned}
            \lim_{N\to\infty}N^{\left\lVert\bm{k}\right\rVert_1-2}\kappa_{\bm{k}}&\left(\Tr\left(p_1\left(\bm{A}_1,\ldots,\bm{A}_s\right)\right),\ldots,\Tr\left(p_r\left(\bm{A}_1,\ldots,\bm{A}_s\right)\right)\right)\\
            &=\alpha_{\bm{k}}\left(p_1\left(\bm{A}_1,\ldots,\bm{A}_s\right),\ldots,p_r\left(\bm{A}_1,\ldots,\bm{A}_s\right)\right)
        \end{aligned}
    \end{equation}
    for all fixed polynomials $p_i$. Here, $r\equiv\left\lVert\bm{k}\right\rVert_0$.
\end{definition}
Simple classes of distributions which satisfy this property include constant matrices of bounded operator norm, Gaussian random matrices, Wishart random matrices, and Haar random unitaries. We here will be interested in a weaker notion of the existence of a limit distribution of all orders, requiring only that higher-order moments vanish sufficiently quickly and not necessarily that they have well-defined limits. 
\begin{definition}[Weak limit distribution of all orders]\label{def:limit_dist_all_orders}
    A sequence of sets of $N\times N$ random matrices $\left\{\bm{A}_i\right\}_{i=1}^d$ has a \emph{weak limit distribution of all orders} if the cumulants $\kappa_{\bm{k}}$ obey as $N\to\infty$:
    \begin{equation}
        \kappa_{\bm{k}}\left(\Tr\left(p_1\left(\bm{A}_1,\ldots,\bm{A}_s\right)\right),\ldots,\Tr\left(p_r\left(\bm{A}_1,\ldots,\bm{A}_s\right)\right)\right)=\operatorname{O}\left(N^{2-\left\lVert\bm{k}\right\rVert_1}\right)
    \end{equation}
    for all fixed polynomials $p_i$. Here, $r\equiv\left\lVert\bm{k}\right\rVert_0$.
\end{definition}
Products of matrices with limit distributions of all orders over $\mathbb{F}=\mathbb{R},\mathbb{C},\mathbb{H}$ were considered in \citet{10.1063/1.4804168,collins2024second,doi:10.1142/S2010326321500179} respectively, allowing us to prove the following lemma.
\begin{lemma}[Existence of weak limit distributions of all orders]\label{lem:sec_ord_freeness}
    Let $L_i$, $i\in\left[d\right]$ be multilinear functions of the form:
    \begin{equation}
        L_i=\Tr\left(\bm{M}_i \bm{h}\bm{O}\bm{h}^\dagger\right),
    \end{equation}
    where $\bm{h}$ is Haar random and independent from the $\bm{M}_i$. Assume all $\bm{M}_i$ have weak limit distributions of all orders. For any cumulant $\kappa_{\bm{k}}\left(L_{i_1},\ldots,L_{i_r}\right)$ with $\left\lVert\bm{k}\right\rVert_1\geq 3$,
    \begin{equation}\label{eq:higher_order_cums_vanish}
        \kappa_{\bm{k}}\left(L_{i_1},\ldots,L_{i_r}\right)=\operatorname{O}\left(N^{2-\left\lVert\bm{k}\right\rVert_1}\right).
    \end{equation}
\end{lemma}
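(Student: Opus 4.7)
My plan is to reduce the statement to standard results on asymptotic higher-order freeness of Haar-distributed unitaries. The key observation is that each $L_i$ is the trace of a specific degree-two noncommutative polynomial $p_i(\bm{M}_i, \bm{h}\bm{O}\bm{h}^\dagger) = \bm{M}_i \cdot \bm{h}\bm{O}\bm{h}^\dagger$ in two families of matrices: the deterministic (or at worst conditionally-independent) family $\{\bm{M}_i\}$ and the Haar-conjugated family $\{\bm{h}\bm{O}\bm{h}^\dagger\}$. Since $\bm{O}$ is a fixed Hermitian matrix of bounded operator norm, it has a (trivial) weak limit distribution of all orders in the sense of Definition~\ref{def:limit_dist_all_orders}, because all of its mixed classical cumulants of traces of polynomials are just deterministic numbers of size $O(N)$, automatically satisfying $O(N^{2-\|\bm{k}\|_1})$ whenever $\|\bm{k}\|_1 \geq 2$.

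The second step is to invoke the appropriate field-specific higher-order freeness result: \citet{collins2024second} for $\mathbb{F}=\mathbb{C}$, \citet{10.1063/1.4804168} for $\mathbb{F}=\mathbb{R}$, and \citet{doi:10.1142/S2010326321500179} for $\mathbb{F}=\mathbb{H}$. Each of these states that if two families of matrices each individually admit a limit distribution of all orders, and one of them is conjugated by an independent Haar random unitary (over the corresponding field), then the combined family admits a limit distribution of all orders, with the sharp cumulant scaling $\kappa_{\bm{k}} = O(N^{2-\|\bm{k}\|_1})$. Applying this to $\{\bm{M}_i\} \cup \{\bm{h}\bm{O}\bm{h}^\dagger\}$ and then specializing to the particular polynomials $p_i$ whose traces equal our $L_i$ gives exactly Eq.~\eqref{eq:higher_order_cums_vanish}.

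The main obstacle is that the cited theorems are typically phrased assuming the existence of the limiting functional $\alpha_{\bm{k}}$, whereas our hypothesis only supplies the weaker cumulant bound of Definition~\ref{def:limit_dist_all_orders} for the $\bm{M}_i$. I would address this by inspecting the proofs, which proceed via Weingarten calculus: after integrating out $\bm{h}$, one obtains a sum indexed by pairs of permutations weighted by Weingarten functions of order $N^{-\|\bm{k}\|_1 - |\sigma|}$ and by traces of monomials in the $\bm{M}_i$ and $\bm{O}$. The convergence of $\alpha_{\bm{k}}$ is used only to identify limits of individual summands, whereas the $O(N^{2-\|\bm{k}\|_1})$ bound follows purely from the Weingarten scaling combined with the upper bound on traces of polynomials in the $\bm{M}_i$ supplied by their weak limit distribution. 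Thus the same proofs carry through verbatim to yield an upper bound rather than an asymptotic equality.

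A minor subsidiary issue is that the lemma allows the $\bm{M}_i$ to be random (and merely independent of $\bm{h}$), whereas most freeness statements are phrased for deterministic $\bm{M}_i$. I would handle this by conditioning on the $\bm{M}_i$, applying the deterministic statement, and then taking expectations, using that the conditional bound $O(N^{2-\|\bm{k}\|_1})$ has constants depending only on bounds on traces of polynomials of the $\bm{M}_i$ which are themselves controlled in expectation by the weak limit distribution assumption.
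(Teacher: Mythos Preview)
Your proposal is correct and follows essentially the same approach as the paper: both invoke the field-specific higher-order freeness results (for $\mathbb{R}$, $\mathbb{C}$, $\mathbb{H}$) to conclude that the products $\bm{M}_i\bm{h}\bm{O}\bm{h}^\dagger$ inherit a weak limit distribution of all orders, and then read off the cumulant bound from Definition~\ref{def:limit_dist_all_orders}. You are in fact more explicit than the paper about two points it glosses over---the passage from strong to weak limit distributions via the Weingarten bound, and the conditioning argument for random $\bm{M}_i$---but the underlying argument is the same.
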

\begin{proof}
    Following:
    \begin{enumerate}
        \item Lemma~5.12 of \citet{10.1093/imrn/rnt043} when $\mathbb{F}=\mathbb{R}$;
        \item Remark~4.8 of \citet{collins2024second} when $\mathbb{F}=\mathbb{C}$;
        \item and Corollary~3.2 of \citet{doi:10.1142/S2010326321500179} when $\mathbb{F}=\mathbb{H}$,
    \end{enumerate}
    the product of a matrix from a unitarily-invariant ensemble with any other independent matrix with a weak limit distribution of all orders has a weak limit distribution of all orders. This implies that the $\bm{M}_i \bm{h}\bm{O}\bm{h}^\dagger$ have a weak limit distribution of all orders. The vanishing of higher-order cumulants as in Eq.~\eqref{eq:higher_order_cums_vanish} then follows from Definition~\ref{def:limit_dist_all_orders}.
\end{proof}
We thus have the following lemma.
\begin{lemma}[Weak convergence to Haar random $\bm{g}_i$]
    Assume Assumption~\ref{ass:rot_inv}. Let $L_i$, $i\in\left[d\right]$ be uniformly bounded multilinear functions of the form:
    \begin{equation}
        L_i=\Tr\left(\bm{\rho} \bm{h}\bm{O}\bm{h}^\dagger\prod_{j\in\mathcal{I}_i}\bm{g}_j \bm{M}_j \bm{g}_j^\dagger\right),
    \end{equation}
    where $\mathcal{I}_i$ is some index set associated with $i$ of constant cardinality, $\bm{h}$ is Haar random, $\bm{g}_j\sim\mathcal{G}_j$, $\bm{\rho}$ and the $\bm{M}_i$ are fixed with bounded trace norm, and $d=\operatorname{o}\left(\frac{\log\left(\mathcal{N}^{-1}N^{\frac{2}{3}}\right)}{\log\log\left(\mathcal{N}^{-1}N^{\frac{2}{3}}\right)}\right)$. Let $\hat{L}_i$ be the same multilinear functions, where now the $\bm{g}_j$ are Haar random. The joint distribution of $\mathcal{N}L_i$ differs from the joint distribution of $\mathcal{N}\hat{L}_i$ by an error at most
    \begin{equation}
        \pi_2=\operatorname{O}\left(\frac{d\log\left(\mathcal{N}^{-1}N^{\frac{2}{3}}\right)}{\left(\mathcal{N}^{-1}N^{\frac{2}{3}}\right)^{0.99}}\right)
    \end{equation}
    in L\'{e}vy--Prokhorov metric for any $\mathcal{N}\leq\operatorname{O}\left(N\right)$.
\end{lemma}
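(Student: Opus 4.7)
The plan is to mimic the moment-matching argument used for Lemma~\ref{lem:haar_random_h}, but with the crucial twist that only the low-order moments can be matched exactly via the $2$-design property; the vanishing of all higher-order corrections must instead be read off from the decay of Haar cumulants provided by Lemma~\ref{lem:sec_ord_freeness}. First, I would observe that any product $\prod_{l=1}^{k} L_{i_l}$ is a polynomial of degree $(k,k)$ in the entries of each $\bm{g}_j$. For $k=2$, Assumption~\ref{ass:rot_inv} then implies that the joint second moments of $\mathcal{N}L_i$ and $\mathcal{N}\hat{L}_i$ agree up to an additive error of $O(\mathcal{N}^2 \epsilon)$ per moment; summing this over the $\operatorname{O}(d^2)$ pairs and using $\epsilon \leq \operatorname{O}(N^{-2/3})$ gives a contribution compatible with the target $\pi_2$.

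Next, I would control the higher-order ($k\geq 3$) joint cumulants on the Haar side by peeling off each independent Haar factor $\bm{g}_j$ in turn and invoking Lemma~\ref{lem:sec_ord_freeness}, together with the analogous statement that products of Haar-conjugated bounded matrices retain a weak limit distribution of all orders. This shows that the centered $k$-th joint cumulants of $\mathcal{N}\hat{L}_i$ are bounded by $\mathcal{N}^{k}\, O(N^{2-k})$, exactly as in the proof of Lemma~\ref{lem:haar_random_h}. For the $2$-design side, the only universal tool available is the uniform boundedness of $L_i$, which gives the crude bound $(\mathcal{N}B)^k$ for centered $k$-th moments. The key quantitative observation is that, since $\mathcal{N}\leq \operatorname{o}(N^{2/3})$, the parameter $M \equiv \mathcal{N}^{-1}N^{2/3}$ satisfies $M \to \infty$, and this is precisely the quantity that allows both sides' high-order tails to be truncated cheaply in probability.

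With these ingredients in hand, I would apply a Corollary~\ref{cor:lp_bound_moments}-style truncation: the characteristic-function expansion is kept out to order $k^{*}\sim \log(M)/\log\log(M)$ (matching the exponent of $p^2$ permitted by Assumption~\ref{ass:rot_inv}, which ensures the combinatorial factors from index sets $\mathcal{I}_i$ of bounded cardinality do not blow up), so that the accumulated low-order matching error is $\operatorname{O}(d\, k^{*}\, \mathcal{N}^2 \epsilon)$ and the tail error is $\operatorname{O}(M^{-0.99})$ after applying Markov's inequality to cut off $|\mathcal{N}L_i|$ at scale $\mathcal{N}$. Balancing these two contributions and rescaling to L\'evy--Prokhorov distance gives the stated $\pi_2 = \operatorname{O}\!\left( d \log(M) / M^{0.99}\right)$.

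The main obstacle is the asymmetry I have already flagged: the Haar side enjoys genuine cumulant decay $\mathcal{N}^{k} N^{2-k}$ from Lemma~\ref{lem:sec_ord_freeness}, while on the $2$-design side nothing beyond the trivial $(\mathcal{N}B)^k$ is available. Overcoming this requires pairing the uniform boundedness of $L_i$ with a careful truncation step so that the contributions of unmatched moments of order $k\geq 3$ enter the final bound only through the probability that $|\mathcal{N}L_i|$ exceeds an $\operatorname{\Omega}(\mathcal{N})$ threshold; this is where the exponent $0.99$ (rather than $1$) and the logarithmic factor in the final bound arise, and it is also why the result is useful only in the regime $\mathcal{N}\leq \operatorname{o}(N^{2/3})$ rather than the full $\mathcal{N}\leq \operatorname{O}(N)$ afforded by Lemma~\ref{lem:haar_random_h}.
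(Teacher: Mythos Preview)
Your proposal has a genuine gap, and it is precisely the ``asymmetry'' you yourself flag. You treat the $2$-design side as if nothing better than the trivial bound $(\mathcal{N}B)^k$ is available for cumulants of order $k\geq 3$, and then try to rescue the argument by a truncation-plus-Markov step. But with only $t=2$ moments matched, any Corollary~\ref{cor:lp_bound_moments}-style bound gives a non-vanishing error term $\operatorname{O}(d\log(t)/\sqrt{t})$, and no amount of truncating $|\mathcal{N}L_i|$ at scale $\mathcal{N}$ helps: the unmatched cumulants of orders $3,\ldots,k^{*}$ on the $2$-design side are simply not small under your hypotheses, so the characteristic-function difference on the region $\left\lVert\bm{u}\right\rVert_\infty\leq T$ cannot be controlled.

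The idea you are missing is that the Haar-random $\bm{h}$ appears in \emph{both} $L_i$ and $\hat{L}_i$. Because each $\bm{g}_j\bm{M}_j\bm{g}_j^\dagger$ has deterministic traces of all polynomials (conjugation preserves spectrum, and $\bm{M}_j$ is fixed), it trivially has a weak limit distribution of all orders regardless of whether $\bm{g}_j$ is Haar or a $2$-design. Feeding this into Lemma~\ref{lem:sec_ord_freeness} with $\bm{h}$ playing the role of the Haar unitary, and iterating over the factors, shows that the $L_i$ themselves---not just the $\hat{L}_i$---have cumulants of order $k$ bounded by $\operatorname{O}(N^{2-k})$ (with the extra $N^{-1}$ from the bounded trace norm giving $\operatorname{O}(N^{1-k})$). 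Hence the \emph{difference} of order-$k$ cumulants is $\operatorname{O}\left(\mathcal{N}^k N^{1-k}\right)\leq\operatorname{O}\left(\left(\mathcal{N}N^{-2/3}\right)^k\right)$ for $k\geq 3$; the $k\leq 2$ cases are handled by the $2$-design property with $\epsilon\leq\operatorname{O}(N^{-2/3})$. The argument is then symmetric, and Corollary~\ref{cor:lp_bound_cumulants} applies directly with $\epsilon=\mathcal{N}N^{-2/3}$, yielding $\pi_2$ without any ad hoc truncation.
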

\begin{proof}
    % Note that though, e.g., $\bm{h} \bm{O} \bm{h}^\dagger \bm{g}_1 \bm{M}_j \bm{g}_1^\dagger$ may not seem unitarily invariant after one step of this procedure (so the $\bm{g}_i$ are Haar random), it actually is because $\bm{k} \bm{h} \bm{O} \bm{h}^\dagger \bm{g}_1 \bm{M}_j \bm{g}_1^\dagger \bm{k}^\dagger\sim \bm{h} \bm{O} \bm{h}^\dagger \bm{k} \bm{k}^\dagger \bm{g}_1 \bm{M}_j \bm{g}_1^\dagger=\bm{h} \bm{O} \bm{h}^\dagger \bm{g}_1 \bm{M}_j \bm{g}_1^\dagger$.
    The $\bm{g}_j \bm{M}_j \bm{g}_j^\dagger$ have a limit distribution of all orders due to the $\bm{M}_j$ being fixed. Thus, iteratively one can show the trace arguments of the $L_i$ have weak limit distributions of all orders by the results cited in the proof of Lemma~\ref{lem:sec_ord_freeness}. Incorporating the bounded trace norm of the $\bm{M}_i$ and the normalization $\mathcal{N}$ gives an error for the cumulants of order $k>2$:
    \begin{equation}
        \tilde{\epsilon}=\operatorname{O}\left(\mathcal{N}^k N^{1-k}\right)\leq\operatorname{O}\left(\left(\mathcal{N} N^{-\frac{2}{3}}\right)^k\right).
    \end{equation}
    This also bounds the $k=1,2$ cases whenever
    \begin{equation}
        \epsilon\leq\operatorname{O}\left(N^{-\frac{2}{3}}\right),
    \end{equation}
    which is true by Assumption~\ref{ass:rot_inv}. The result then follows from Corollary~\ref{cor:lp_bound_cumulants}, where $\mu$ is subleading as in the proof of Lemma~\ref{lem:haar_random_h}.
\end{proof}

\section{Equicontinuity of Probability Densities}\label{sec:eq_of_dens}

In Appendix~\ref{sec:proof_main_res} we prove that the joint distribution of the loss function and its first two derivatives for a certain random class of variational quantum loss functions converges in distribution to a fairly simple expression involving Wishart-distributed random matrices. However, Corollaries~\ref{cor:loss_dens_conv} and~\ref{cor:grad_dens_conv} are stated in terms of the \emph{probability densities} of the loss and first derivative, and weak convergence does not necessarily imply pointwise convergence in densities. However, it is known that this \emph{is} the case when the sequence of densities are \emph{equicontinuous} and bounded~\citep{10.1214/aos/1176346604}. Whether or not this is true depends on the details of the distributions $\bm{\mathcal{G}},\bm{H}$. As an example, we here show this is true for the loss when $\bm{\mathcal{G}},\bm{H}$ are Haar random, under reasonable assumptions (which we conjecture are not necessary) on the spectrum of the objective observable $\bm{O}$ projected into any simple component.
\begin{lemma}[Equicontinuity of loss density with Haar random ansatz]\label{lem:unif_bound_deriv}
    Consider a sequence of $\bm{O}$ as $N\to\infty$ with eigenvalues:
    \begin{equation}
        0=o_1\leq o_2\leq\ldots\leq o_{N-1}\leq o_N=1,
    \end{equation}
    with mean eigenvalue $\overline{o}$ and $\frac{\Tr\left(\bm{O}\right)^2}{\Tr\left(\bm{O}^2\right)N}=\frac{r}{N}=\operatorname{\Theta}\left(1\right)$. For $\bm{w}\in\mathbb{R}^{N-2}$, define:
    \begin{align}
        E_{\bm{w}}&\equiv\sqrt{\bm{\tilde{o}}\cdot\left(\bm{w}\odot\bm{w}\right)},\\
        R_{\bm{w}}&\equiv\sqrt{\bm{w}^\intercal\cdot\bm{w}},
    \end{align}
    where $\bm{\tilde{o}}$ is the vector $\left(o_2,\ldots,o_{N-1}\right)$ and $\odot$ denotes the Hadamard product. Let $D_x$ be the domain of $\bm{w}$ satisfying:
    \begin{equation}
        E_{\bm{w}}^2\leq\overline{o}+N^{-\frac{1}{2}}x\leq 1+E_{\bm{w}}^2-R_{\bm{w}}^2.
    \end{equation}
    Assume
    \begin{equation}\label{eq:density_loc_lip_cont}
        p_{\sqrt{N}\ell}\left(x\right)=\frac{1}{\sqrt{N\left(\overline{o}+N^{-\frac{1}{2}}x\right)}S_{N-1}}\int_{D_x}\dd{\bm{w}}\left(1-\left(\overline{o}+N^{-\frac{1}{2}}x\right)^{-1}E_{\bm{w}}^2\right)^{-\frac{1}{2}}
    \end{equation}
    is locally Lipschitz and bounded at any $x$ as $N\to\infty$, with Lipschitz constants independent from $N$. Then the density of $\sqrt{r}\ell$ is bounded by a constant independent of $N$ and is equicontinuous as a sequence in $N$.
\end{lemma}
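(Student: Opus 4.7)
The proof will proceed in two conceptual stages: first, deriving that the density of $\sqrt{N}\ell$ has exactly the integral form displayed in Eq.~\eqref{eq:density_loc_lip_cont}, then leveraging the assumed local Lipschitz and boundedness properties to conclude equicontinuity and uniform local boundedness for the density of $\sqrt{r}\ell$. For the first stage, I would note that with $\bm{\rho}$ rank-$1$ and $\bm{U}$ Haar random, the loss takes the form $\ell=\bra{\psi}\bm{O}\ket{\psi}$ with $\ket{\psi}=\bm{U}\ket{0}$ uniformly distributed on the unit sphere. Working in the eigenbasis of $\bm{O}$ and writing amplitudes as $\sqrt{p_i}$, the loss becomes a weighted sum $\sum_i o_i p_i$. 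Using $o_1=0,o_N=1$, the constraints $\sum_i p_i=1$ and $\sum_i o_i p_i=\ell$ determine $p_1,p_N$ as affine functions of the middle amplitudes $\bm{w}$, namely $p_N=\ell-E_{\bm{w}}^2$ and $p_1=1-R_{\bm{w}}^2-\ell+E_{\bm{w}}^2$, which defines the domain $D_x$ through the non-negativity conditions.

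Next I would compute the density by writing the uniform measure on the sphere in the parameterization $(w_1,\bm{w},w_N)$, inserting a $\delta$-function to fix $\sqrt{N}\ell$, and evaluating the Jacobian for the change of variables $(w_1,w_N)\mapsto(R_{\bm{w}},\ell)$. This Jacobian produces the $(1-\ell^{-1}E_{\bm{w}}^2)^{-1/2}$ integrand factor, while the sphere's surface area normalization supplies the $S_{N-1}^{-1}$ prefactor and the change from $\ell$ to $\sqrt{N}\ell$ supplies the remaining $(N\ell)^{-1/2}$ factor. Together this matches Eq.~\eqref{eq:density_loc_lip_cont} exactly.

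For the second stage, local Lipschitz continuity of $p_{\sqrt{N}\ell}$ with constants independent of $N$ at every $x$ is by definition equicontinuity of the family $\{p_{\sqrt{N}\ell}\}_N$ at each point, and local boundedness uniform in $N$ transfers directly. To pass from $\sqrt{N}\ell$ to $\sqrt{r}\ell$, I would apply the elementary change of variables $p_{\sqrt{r}\ell}(x)=\sqrt{N/r}\,p_{\sqrt{N}\ell}\bigl(\sqrt{N/r}\,x\bigr)$; since $\sqrt{N/r}=\operatorname{\Theta}(1)$ by hypothesis, this is merely a bounded rescaling of the argument and of the amplitude, which preserves both the local Lipschitz constants (up to a bounded factor) and the pointwise bounds. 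Global uniform boundedness in $x$ then follows by combining this local uniform boundedness with the fact that $\sqrt{r}\ell$ has variance $\operatorname{O}(1)$ (via Corollary~\ref{cor:barren_plateaus} with $r/N=\operatorname{\Theta}(1)$): outside a compact interval determined by Chebyshev's inequality, the density integrates to a vanishing amount uniformly in $N$, which combined with local uniform bounds forces a uniform global bound.

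The main obstacle I expect is the Jacobian computation in the first stage, where one must carefully account for the fact that $w_1$ and $w_N$ are each determined only up to a sign by the constraints, producing combinatorial factors that must be tracked; in the complex and quaternionic cases there is the additional subtlety that the Haar measure on the sphere in $\mathbb{F}^N$ descends to a Dirichlet-like distribution on the $|c_i|^2$ rather than to a uniform real-sphere distribution on amplitudes, so the integrand form must be re-derived with the appropriate Jacobian and the assumption reinterpreted accordingly. A secondary, milder obstacle is in the global boundedness step: one must verify that the concentration of $\ell$ around $\overline{o}$ is strong enough to control tail contributions uniformly, for which the $r/N=\operatorname{\Theta}(1)$ hypothesis combined with the variance bound suffices.
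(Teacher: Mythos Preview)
Your proposal is correct and follows essentially the same approach as the paper: derive the integral representation of $p_{\sqrt{N}\ell}$ by parameterizing the Haar-random state in the eigenbasis of $\bm{O}$, using $o_1=0$, $o_N=1$ to eliminate two coordinates via the normalization and $\delta$-function constraints, and computing the resulting Jacobian; then invoke the assumption directly to conclude equicontinuity and boundedness, with the passage from $\sqrt{N}\ell$ to $\sqrt{r}\ell$ handled by the $\Theta(1)$ rescaling. Your description of the Jacobian step---``$(w_1,w_N)\mapsto(R_{\bm{w}},\ell)$''---is not literally well-defined since $R_{\bm{w}}$ depends only on the middle coordinates, and your extra Chebyshev argument for global boundedness is not needed (the paper reads the assumption as already giving uniform boundedness), but these are cosmetic rather than substantive issues.
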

\begin{proof}
    By assumption $r=\operatorname{\Theta}\left(N\right)$ so we consider the density of $\sqrt{N}\ell$ WLOG. Let $\left\{\ket{o_i}\right\}_{i=1}^N$ be the eigenvectors of $\bm{O}$ associated with the eigenvalues $o_i$. Let:
    \begin{equation}
        \ket{u}=\sum_{i=1}^N u_i\ket{o_i},
    \end{equation}
    and let:
    \begin{equation}
        S\left(\bm{u}\right)=\sum_{i=2}^N o_i\left\lvert u_i\right\rvert^2.
    \end{equation}
    Let $p_{\sqrt{N}\ell}\left(x\right)$ be the density of $\sqrt{N}\ell=\sqrt{N}\overline{o}+x$. We have that:
    \begin{equation}
        p_{\sqrt{N}\ell}\left(x\right)=S_{\beta N-1}^{-1}\int_{\left\lVert\bm{u}\right\rVert_2=1}\dd{\bm{u}}\delta\left(\sqrt{N}\left(S\left(\bm{u}\right)-\overline{o}\right)-x\right).
    \end{equation}
    Here, $S_{\beta N-1}$ is the surface area of the $\left(\beta N-1\right)$-sphere:
    \begin{equation}
        S_{\beta N-1}=\frac{2\cpi^{\frac{\beta N}{2}}}{\operatorname{\Gamma}\left(\frac{\beta N}{2}\right)}.
    \end{equation}
    We can integrate out the phases of the $u_i$ and note that the delta function has no dependence on $u_1$ since $o_1=0$. Therefore,
    \begin{equation}
        p_{\sqrt{N}\ell}\left(x\right)=S_{N-1}^{-1}2^N\int_{\substack{\left\lVert\bm{\tilde{u}}\right\rVert_2\leq 1,\\\bm{\tilde{u}}\geq\bm{0}}}\dd{\bm{\tilde{u}}}\delta\left(\sqrt{N}\left(S\left(\bm{\tilde{u}}\right)-\overline{o}\right)-x\right).
    \end{equation}
    Here, $\bm{\tilde{u}}\in\mathbb{R}_{\geq 0}^{N-1}$ has indices in $2,\ldots,N$ such that $\tilde{u}_i=\left\lvert u_i\right\rvert$. We will similarly use $\bm{\tilde{o}}$ to denote $\bm{o}$ with the $o_1=0$ component projected out.

    We now proceed to show equicontinuity and uniform boundedness of $p_f\left(x\right)$ by showing that the derivative of $p_f\left(x\right)$ is bounded everywhere by some $N$-independent constant. By symmetry we restrict to the components of $v_N\geq 0$, introducing an overall factor of $2^\beta$. We also introduce new coordinates $w_2,\ldots,w_N$, where $w_i=\tilde{u}_i$ for $2\leq i\leq N-1$ and $w_N=S\left(\bm{\tilde{u}}\right)$. The constraint $\tilde{u}_N^2\geq 0$ becomes in these coordinates:
    \begin{equation}\label{eq:first_norm_const}
        \sum_{i=2}^{N-1}o_i w_i^2\leq w_N.
    \end{equation}
    Only the final row of the resulting Jacobian is nontrivial and is equal to:
    \begin{equation}
        \bm{J_N}=2\bm{\tilde{o}}^\intercal\odot\bm{\tilde{u}}^\intercal,
    \end{equation}
    where $\odot$ denotes the Hadamard product. This gives a Jacobian determinant of (recalling that $o_N=1$):
    \begin{equation}
        \det\left(\bm{J}\right)=2\tilde{u}_N=2\sqrt{w_N-\sum_{i=2}^{N-1} o_i w_i^2}.
    \end{equation}
    Putting everything together yields a density:
    \begin{equation}
        p_{\sqrt{N}\ell}\left(x\right)=S_{N-1}^{-1}2^{N-2}\int_D\dd{\bm{w}}\frac{\delta\left(\sqrt{N}\left(w_N-\overline{o}\right)-x\right)}{\sqrt{w_N-\sum_{i=2}^{N-1} o_i w_i^2}},
    \end{equation}
    where $D$ is the domain of positive $w_i$ obeying the normalization constraint of Eq.~\eqref{eq:first_norm_const} as well as:
    \begin{equation}
        w_N+\sum_{i=2}^{N-1}\left(1-o_i\right)w_i^2\leq 1.
    \end{equation}
    Defining:
    \begin{align}
        E_{\bm{\tilde{w}}}^2&=\sum_{i=2}^{N-1}o_i \tilde{w}_i^2,\\
        R_{\bm{\tilde{w}}}^2&=\sum_{i=2}^{N-1}\tilde{w}_i^2,
    \end{align}
    and rescaling $w_N$ we can rewrite this as:
    \begin{equation}
        p_{\sqrt{N}\ell}\left(x\right)=\frac{1}{\sqrt{N}S_{N-1}}\int_{B^{N-2}}\dd{\bm{\tilde{w}}}\int_{\sqrt{N}E_{\bm{\tilde{w}}}^2}^{\sqrt{N}\left(E_{\bm{\tilde{w}}}^2+1-R_{\bm{\tilde{w}}}^2\right)}\dd{w_N}\frac{\delta\left(w_N-\sqrt{N}\overline{o}-x\right)}{\sqrt{N^{-\frac{1}{2}}w_N-E_{\bm{\tilde{w}}}^2}},
    \end{equation}
    where $B^{N-2}$ is the unit $\left(N-2\right)$-ball. Integrating out the delta function then yields:
    \begin{equation}\label{eq:loss_density}
        p_{\sqrt{N}\ell}\left(x\right)=\frac{1}{\sqrt{N\left(\overline{o}+N^{-\frac{1}{2}}x\right)}S_{N-1}}\int_{D_x}\dd{\bm{\tilde{w}}}\left(1-\left(\overline{o}+N^{-\frac{1}{2}}x\right)^{-1}E_{\bm{\tilde{w}}}^2\right)^{-\frac{1}{2}},
    \end{equation}
    where $D_x$ is the domain of $\bm{\tilde{w}}$ satisfying:
    \begin{equation}
        E_{\bm{\tilde{w}}}^2\leq\overline{o}+N^{-\frac{1}{2}}x\leq 1+E_{\bm{\tilde{w}}}^2-R_{\bm{\tilde{w}}}^2.
    \end{equation}
    By assumption this is locally Lipschitz and bounded as $N\to\infty$ with Lipschitz constants independent from $N$, and thus is equicontinuous and bounded.
\end{proof}

We now discuss the main assumption of Lemma~\ref{lem:unif_bound_deriv} in more detail, particularly the local Lipschitz continuity of Eq.~\eqref{eq:density_loc_lip_cont}. Note that the defining equations of $D_x$ imply that all $w_i^2\leq 1$, i.e., the domain of integration is always of $\operatorname{O}\left(1\right)$ volume. This holds true even if some $o_i$ vanish with $N$; it is apparent that these terms do not contribute to the integrand at leading order $\bm{O}$, and only the $\operatorname{\Theta}\left(1\right)$ eigenvalues contribute. Because of this, we will perform an example calculation of the Lipschitz constants of Eq.~\eqref{eq:density_loc_lip_cont} when all $o_i=\overline{o}=\frac{1}{2}$ for $i\neq 1,N$, though due to this observation similar results hold for all reasonable spectrums.

For such $\bm{O}$,
\begin{equation}
    E_{\bm{\tilde{w}}}^2=\frac{1}{2}R_{\bm{\tilde{w}}}^2,
\end{equation}
so
\begin{equation}
    \begin{aligned}
        p_{\sqrt{N}\ell}\left(x\right)&=\frac{\sqrt{2}}{\sqrt{N\left(1+2N^{-\frac{1}{2}}x\right)}S_{N-1}}\int_{D_x}\dd{\bm{\tilde{w}}}\left(1-\left(1+2N^{-\frac{1}{2}}x\right)^{-1}R_{\bm{\tilde{w}}}^2\right)^{-\frac{1}{2}}\\
        &=\frac{\sqrt{2}S_{N-3}}{\sqrt{N\left(1+2N^{-\frac{1}{2}}x\right)}S_{N-1}}\int_0^{\sqrt{1-2N^{-\frac{1}{2}}\left\lvert x\right\rvert}}\dd{r}r^{N-3}\left(1-\left(1+2N^{-\frac{1}{2}}x\right)^{-1}r^2\right)^{-\frac{1}{2}}\\
        &=\frac{S_{N-3}}{\sqrt{2N\left(1+2N^{-\frac{1}{2}}x\right)}S_{N-1}}\int_0^{1-2N^{-\frac{1}{2}}\left\lvert x\right\rvert}\dd{r}r^{\frac{N}{2}-2}\left(1-\left(1+2N^{-\frac{1}{2}}x\right)^{-1}r\right)^{-\frac{1}{2}}\\
        &=\frac{\left(1-2N^{-\frac{1}{2}}\left\lvert x\right\rvert\right)^{\frac{N}{2}-1}S_{N-3}}{\sqrt{2N\left(1+2N^{-\frac{1}{2}}x\right)}S_{N-1}}\int_0^1\dd{r}r^{\frac{N}{2}-2}\left(1-\frac{1-2N^{-\frac{1}{2}}\left\lvert x\right\rvert}{1+2N^{-\frac{1}{2}}x}r\right)^{-\frac{1}{2}}\\
        &=\frac{\left(1-2N^{-\frac{1}{2}}\left\lvert x\right\rvert\right)^{\frac{N}{2}-1}\operatorname{\Gamma}\left(\frac{N}{2}\right)}{\cpi\sqrt{2N\left(1+2N^{-\frac{1}{2}}x\right)}\operatorname{\Gamma}\left(\frac{N}{2}-1\right)}\int_0^1\dd{r}r^{\frac{N}{2}-2}\left(1-\frac{1-2N^{-\frac{1}{2}}\left\lvert x\right\rvert}{1+2N^{-\frac{1}{2}}x}r\right)^{-\frac{1}{2}}.
    \end{aligned}
\end{equation}
This integral is just the integral definition of the hypergeometric function $\operatorname{\prescript{}{2}F_1}$, yielding:
\begin{equation}
    p_{\sqrt{N}\ell}\left(x\right)=\frac{\left(1-2N^{-\frac{1}{2}}\left\lvert x\right\rvert\right)^{\frac{N}{2}-1}}{\cpi\sqrt{2N\left(1+2N^{-\frac{1}{2}}x\right)}}\operatorname{\prescript{}{2}F_1}\left(\frac{1}{2},\frac{N}{2}-1;\frac{N}{2};\frac{1-2N^{-\frac{1}{2}}\left\lvert x\right\rvert}{1+2N^{-\frac{1}{2}}x}\right).
\end{equation}

We now consider the large-$N$ asymptotics of the derivative of this expression. By Gauss's hypergeometric theorem, for $z=1+\operatorname{O}\left(N^{-\frac{1}{2}}\right)$,
\begin{equation}
    \operatorname{\prescript{}{2}F_1}\left(\frac{1}{2},\frac{N}{2}-1;\frac{N}{2};z\right)=\left(1+\operatorname{O}\left(N^{-\frac{1}{2}}\right)\right)\frac{\operatorname{\Gamma}\left(\frac{N}{2}\right)\operatorname{\Gamma}\left(\frac{1}{2}\right)}{\operatorname{\Gamma}\left(\frac{N-1}{2}\right)\operatorname{\Gamma}\left(1\right)}=\operatorname{O}\left(\sqrt{N}\right).
\end{equation}
Similarly, we have by the derivative rule for the hypergeometric function that:
\begin{equation}
    \operatorname{\prescript{}{2}F_1}'\left(\frac{1}{2},\frac{N}{2}-1;\frac{N}{2};z\right)=\frac{1}{2}\left(1-2N^{-1}\right)\prescript{}{2}F_1\left(\frac{3}{2},\frac{N}{2};\frac{N}{2}+1;z\right)=\operatorname{O}\left(N^{\frac{3}{2}}\right).
\end{equation}
Taking into account the $N^{-\frac{1}{2}}$ factors scaling $x$ thus yields (where defined):
\begin{equation}
    p_{\sqrt{N}\ell}'\left(x\right)=\operatorname{O}\left(1\right).
\end{equation}
In particular, $p_{\sqrt{N}\ell}\left(x\right)$ is locally Lipschitz with Lipschitz constants independent from $N$. These $\bm{O}$ thus satisfy the assumptions of Lemma~\ref{lem:unif_bound_deriv}, as claimed.

\section{Bounding Large Deviations of the Hessian Determinant}\label{sec:large_devs}

In calculating the density of local minima in Appendix~\ref{sec:loc_min_disc} we require analyzing the asymptotics of:
\begin{equation}\label{eq:hess_exp}
    \mathbb{E}\left[\det\left(\bm{D}_z\right)\bm{1}\left\{\bm{D}_z\succeq\bm{0}\right\}\right],
\end{equation}
where $\bm{D}_z$ is as in Eq.~\eqref{eq:d_def}, i.e., it is a Wishart-distributed random matrix normalized by its degrees of freedom parameter. As the determinant is exponentially sensitive to its argument, exponentially unlikely deviations in the convergence of the spectrum of $\bm{D}_z$ to its asymptotic value can, in principle, cause this expectation to not converge to the determinant of the (almost sure) asymptotic spectral measure of $\bm{D}_z$. The same is true of the minimum eigenvalue of $\bm{D}_z$. We here show that these deviations do not asymptotically contribute to Eq.~\eqref{eq:hess_exp}. We will here be brief as this is not the main focus of our work, and instead refer the interested reader to \citet{anschuetz2021critical} for a more detailed account of very similar arguments. In the following we consider only the case when $\gamma\leq 1$ as otherwise $\bm{D}_z$ is never full-rank and Eq.~\eqref{eq:hess_exp} is identically zero.

We begin with the empirical spectral measure of $\bm{D}_z$. The empirical spectral measures of $\bm{W}$ satisfies a large deviations principle at a speed $p^2$ with good rate function maximized by the Marčenko--Pastur distribution $\dd{\mu_{\text{MP}}^\gamma\left(\lambda\right)}$~\citep{doi:10.1142/S021902579800034X}. As the determinant is only sensitive at a speed $p$, i.e.,
\begin{equation}
    \det\left(\bm{D}_z\right)=\exp\left(p\int\dd{\mu_{\text{MP}}^\gamma\left(\lambda\right)}\ln\left(\lambda\right)\right),
\end{equation}
by Varadhan's lemma~\citep{Dembo2010} the expected determinant converges to the determinant of our asymptotic expression for $\bm{D}_z$.

We now consider the smallest eigenvalue of $\bm{D}_z$. For fluctuations below its asymptotic value this satisfies a large deviations principle with good rate function at a speed $p$, where this rate function is maximized at the infimum of the support of the Marčenko--Pastur distribution; for fluctuations above its asymptotic value it satisfies a large deviations principle at a speed $p^2$~\citep{PhysRevE.82.040104}. By Varadhan's lemma, then, only fluctuations below the asymptotic value potentially contribute, but as these fluctuations only decrease the value of the argument of Eq.~\eqref{eq:hess_exp} they do not contribute asymptotically~\citep{Dembo2010}. These together imply that Eq.~\eqref{eq:hess_exp} converges as claimed in Appendix~\ref{sec:loc_min_disc}.

\section{Lemmas on Convergence in Distribution}\label{sec:lem_weak_conv}

We here prove a few helper lemmas that will allow us to obtain quantitative error bounds on the rate of convergence of two sequences of distributions given bounds on the differences of their moments or cumulants. We will achieve this by bounding the \emph{L\'{e}vy--Prokhorov distance} between the two distributions, which metricizes weak convergence. We first restate a result of \citet{10.1214/aop/1176995146} which bounds this distance, with a slight modification due to our use of the infinity norm rather than the Euclidean norm (see Appendix~\ref{sec:conv_rvs}).
\begin{theorem}[Slightly modified version of Lemma~2.2, \citet{10.1214/aop/1176995146}]\label{thm:orig_lp_bound}
    Let $p\left(\bm{x}\right),q\left(\bm{x}\right)$ be two distributions on $\mathbb{R}^d$ with characteristic functions $\phi\left(\bm{u}\right),\gamma\left(\bm{u}\right)$, respectively. Assume there exists a $C$ such that $\int_{\left\lVert\bm{x}\right\rVert_\infty\geq C}p\left(\bm{x}\right)\leq\mu$. There exists a universal constant $K$ such that for all $T\geq\max\left(2C,Kd\right)$, the L\'{e}vy--Prokhorov distance between $p$ and $q$ is bounded by:
    \begin{equation}
        \operatorname{\pi}\left(p,q\right)\leq\left(\frac{T}{\cpi}\right)^d\int_{\left\lVert\bm{u}\right\rVert_\infty\leq T}\left\lvert\phi\left(\bm{u}\right)-\gamma\left(\bm{u}\right)\right\rvert\dd{\bm{u}}+\frac{16\ln\left(T\right)}{T}+\mu.
    \end{equation}
\end{theorem}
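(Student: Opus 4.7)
The plan is to adapt Yurinskii's original smoothing-inequality argument, replacing his Euclidean-ball Fourier support with a cubical one to match the infinity-norm Lévy--Prokhorov metric. The overall strategy has three components: smooth both $p$ and $q$ by convolving with a kernel whose Fourier transform is supported on the cube $[-T,T]^d$; bound the uniform distance between the smoothed densities by Fourier inversion, producing exactly the characteristic-function integral in the statement; and pay a thickening price in Lévy--Prokhorov distance for passing between smoothed and unsmoothed measures, with the tail bound $\mu$ absorbing the leakage on the $p$-side.

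Concretely, I would first introduce a product smoothing kernel $K(\bm{x}) = \prod_{i=1}^d k(x_i)$ whose one-dimensional factor $k$ is a nonnegative density with characteristic function supported in $[-1,1]$ and equal to $1$ at the origin (e.g.\ a Fejér- or Jackson-type kernel). Rescaled as $K_h(\bm{x}) = h^{-d} K(\bm{x}/h)$ with $h \sim 1/T$, the smoothed measures $p_h = p * K_h$ and $q_h = q * K_h$ satisfy, by Fourier inversion and $|\widehat{K_h}|\leq 1$,
\[
\|p_h - q_h\|_\infty \;\leq\; (2\pi)^{-d} \int_{\|\bm{u}\|_\infty \leq T} |\phi(\bm{u}) - \gamma(\bm{u})|\,\dd{\bm{u}}.
\]
Next, for any Borel $A \subseteq \{\|\bm{x}\|_\infty \leq C\}$, the difference $|p_h(A) - q_h(A)|$ is bounded by $\mathrm{Vol}(A)\cdot\|p_h - q_h\|_\infty$; the condition $T \geq 2C$ together with the tail bound $\mu$ on $p$ (and an analogous bound on $q$ inherited from the characteristic-function closeness near the origin) restricts attention to cubes of side $O(T)$, producing the volume factor $(2T/(2\pi))^d = (T/\pi)^d$. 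Finally, to convert a closeness bound on smoothed measures into a Lévy--Prokhorov bound on the originals, one uses that a random variable with density $K_h$ has infinity-norm tails decaying superpolynomially, so $p(A) \leq p_h(A^{\varepsilon}) + \mathbb{P}[\|\bm{Z}\|_\infty > \varepsilon]$ for a thickening scale $\varepsilon \sim h \ln T$, which upon optimizing $h$ yields the $16\ln(T)/T$ term.

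The main obstacle will be pinning down the explicit constants---in particular the factor $16$ and the requirement $T \geq Kd$---since these emerge from a careful joint optimization of the smoothing scale $h$, the kernel's tail moments, and the thickening scale $\varepsilon$, with the $d$-dependence in the lower bound on $T$ arising because the infinity-norm tail of $K_h$ collects a $d$-fold union bound over coordinates. The switch from the Euclidean to the infinity norm is structurally painless---the product-kernel construction makes cubical Fourier support natural, and the volume factor changes cleanly from an Euclidean-ball expression to $(T/\pi)^d$---but it does force a rederivation of the quantitative constants, which is where the bulk of the bookkeeping lives. Modulo that bookkeeping, the remainder of the argument is a direct transcription of Yurinskii's Lemma~2.2 with the cubical geometry in place of the spherical one.
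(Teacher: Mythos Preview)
The paper does not actually prove this theorem: it is presented purely as a restatement of Yurinskii's Lemma~2.2, with the only claimed novelty being the switch from the Euclidean to the infinity norm in the definition of the L\'{e}vy--Prokhorov metric. No argument is given beyond the sentence ``We here prove a few helper lemmas\ldots We first restate a result of [Yurinskii] which bounds this distance, with a slight modification due to our use of the infinity norm rather than the Euclidean norm.''

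Your outline is therefore doing more than the paper does. What you sketch---smoothing by a product kernel with cubical Fourier support, Fourier inversion to bound the smoothed density difference, and then a thickening argument with tail control---is precisely Yurinskii's original strategy, and your observation that the product-kernel construction makes the infinity-norm geometry natural is exactly the point behind the paper's ``slight modification.'' So your proposal is correct in spirit and in fact reconstructs the proof that the paper simply cites; there is nothing further to compare.
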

Motivated by this, we prove a general bound on the error of the characteristic functions given errors in the moments.
\begin{lemma}[Bound on characteristic functions from moments]\label{lem:char_func_bound_moments}
    Let $p\left(\bm{x}\right),q\left(\bm{x}\right)$ be two probability densities on $\mathbb{R}^d$ with characteristic functions $\phi\left(\bm{u}\right),\gamma\left(\bm{u}\right)$. Assume each moment of order $k\leq t$ of $p$ differs from that of $q$ by an additive error at most $\epsilon>0$, and assume all moments of order $k>t$ are bounded by $\left(C\sqrt{k}\right)^k$ for some $k$-independent $C\geq 0$. Then, for all $\bm{u}$ such that $\left\lVert\bm{u}\right\rVert_1\leq\frac{\sqrt{t+1}}{2eC}$,
    \begin{equation}
        \left\lvert\phi\left(\bm{u}\right)-\gamma\left(\bm{u}\right)\right\rvert\leq \epsilon\exp\left(\left\lVert\bm{u}\right\rVert_1\right)+\exp_2\left(1-t\right).
    \end{equation}
\end{lemma}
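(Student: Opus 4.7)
The plan is to Taylor expand both characteristic functions around $\bm{u}=\bm{0}$, truncate the expansion at order $t$, and control the low-order and high-order pieces by the two distinct moment hypotheses. Concretely, using the multinomial expansion
\begin{equation}
    \phi\left(\bm{u}\right)-\gamma\left(\bm{u}\right)=\sum_{k=0}^{\infty}\frac{\ci^k}{k!}\sum_{\left\lvert\alpha\right\rvert=k}\binom{k}{\alpha}\bm{u}^{\alpha}\left(\mathbb{E}_p\left[\bm{x}^{\alpha}\right]-\mathbb{E}_q\left[\bm{x}^{\alpha}\right]\right),
\end{equation}
I would split the sum at $k=t$ and bound each piece separately.

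For the low-order part $k\leq t$, the hypothesis gives $\left\lvert\mathbb{E}_p\left[\bm{x}^{\alpha}\right]-\mathbb{E}_q\left[\bm{x}^{\alpha}\right]\right\rvert\leq\epsilon$ for every multi-index. The multinomial theorem collapses the inner sum: $\sum_{\left\lvert\alpha\right\rvert=k}\binom{k}{\alpha}\left\lvert\bm{u}^{\alpha}\right\rvert=\left(\left\lvert u_1\right\rvert+\cdots+\left\lvert u_d\right\rvert\right)^k=\left\lVert\bm{u}\right\rVert_1^k$, so the truncated contribution is bounded by $\epsilon\sum_{k=0}^{t}\left\lVert\bm{u}\right\rVert_1^k/k!\leq\epsilon\exp\left(\left\lVert\bm{u}\right\rVert_1\right)$, giving the first term of the claimed bound.

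For the high-order tail $k>t$, I apply the triangle inequality together with the assumed moment bound $\left\lvert\mathbb{E}\left[\bm{x}^{\alpha}\right]\right\rvert\leq\left(C\sqrt{k}\right)^k$ (for both $p$ and $q$), yielding after the same multinomial collapse
\begin{equation}
    \sum_{k=t+1}^{\infty}\frac{2\left(C\sqrt{k}\left\lVert\bm{u}\right\rVert_1\right)^k}{k!}\leq 2\sum_{k=t+1}^{\infty}\left(\frac{eC\left\lVert\bm{u}\right\rVert_1}{\sqrt{k}}\right)^k,
\end{equation}
where I used the Stirling lower bound $k!\geq\left(k/e\right)^k$. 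The hypothesis $\left\lVert\bm{u}\right\rVert_1\leq\sqrt{t+1}/\left(2eC\right)$ makes the base of the exponential at most $\tfrac{1}{2}\sqrt{\left(t+1\right)/k}\leq\tfrac{1}{2}$ for every $k\geq t+1$, so the tail becomes a geometric series summing to $2\cdot 2^{-t}\cdot 2=\exp_2\left(1-t\right)$, which is precisely the second term in the claim.

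The only real subtlety is verifying that the multinomial collapse is sharp enough that the bounds on individual mixed moments really do combine into a clean $\left\lVert\bm{u}\right\rVert_1^k$ factor (rather than something weaker like a $d$-dependent Euclidean norm); this is what drives the $d$-independence of the final bound and justifies the statement being in terms of $\left\lVert\bm{u}\right\rVert_1$. Beyond that the argument is essentially bookkeeping: the threshold $\sqrt{t+1}/\left(2eC\right)$ is tuned precisely so that the per-term ratio in the tail sum stays below $\tfrac{1}{2}$ starting at $k=t+1$, and the constants are chosen so that this feeds cleanly into Theorem~\ref{thm:orig_lp_bound} in the downstream application.
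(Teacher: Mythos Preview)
Your proposal is correct and follows essentially the same approach as the paper: Taylor expand, split at order $t$, collapse the inner sum via the multinomial identity, bound the low-order piece by $\epsilon\exp\left(\left\lVert\bm{u}\right\rVert_1\right)$, and control the tail with Stirling plus a geometric series under the stated constraint on $\left\lVert\bm{u}\right\rVert_1$. The only difference is cosmetic---the paper routes the tail through $\sqrt{k!}$ while you apply $k!\geq(k/e)^k$ directly---and your final arithmetic ``$2\cdot 2^{-t}\cdot 2$'' has one spurious factor of $2$ (the geometric sum $\sum_{k\geq t+1}2^{-k}=2^{-t}$ already absorbs the $1/(1-1/2)$), but the stated conclusion $\exp_2(1-t)$ is correct.
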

\begin{proof}
    We have by the definition of the characteristic function and the triangle inequality that:
    \begin{equation}
        \left\lvert\phi\left(\bm{u}\right)-\gamma\left(\bm{u}\right)\right\rvert\leq\epsilon\sum_{\bm{k}\neq\bm{0}\in\mathbb{N}_0^d}\prod_{j=1}^d\frac{\left\lvert\left(\ci u_j\right)^{k_j}\right\rvert}{k_j!}+2\sum_{\left\lVert\bm{k}\right\rVert_1\geq t+1}C^{\left\lVert\bm{k}\right\rVert_1}\left\lVert\bm{k}\right\rVert_1^{\frac{\left\lVert\bm{k}\right\rVert_1}{2}}\prod_{j=1}^d\frac{\left\lvert\left(\ci u_j\right)^{k_j}\right\rvert}{k_j!}.
    \end{equation}
    The first term has the upper bound:
    \begin{equation}
        \epsilon\sum_{\bm{k}\neq\bm{0}\in\mathbb{N}_0^d}\prod_{j=1}^d\frac{\left\lvert\left(\ci u_j\right)^{k_j}\right\rvert}{k_j!}\leq\epsilon\exp\left(\left\lVert\bm{u}\right\rVert_1\right).
    \end{equation}
    For the second term, we have the upper bound (for $\left\lVert\bm{u}\right\rVert_1<\frac{\sqrt{t+1}}{\ce C}$):
    \begin{equation}
        \begin{aligned}
            \sum_{\left\lVert\bm{k}\right\rVert_1\geq t+1}C^{\left\lVert\bm{k}\right\rVert_1}\left\lVert\bm{k}\right\rVert_1^{\frac{\left\lVert\bm{k}\right\rVert_1}{2}}\prod_{j=1}^d\frac{\left\lvert u_j\right\rvert^{k_j}}{k_j!}&\leq\sum_{k\geq t+1}\frac{\left\lVert\sqrt{\ce}C\bm{u}\right\rVert_1^k}{\sqrt{k!}}\\
            &\leq\sum_{k\geq t+1}\left\lVert\frac{\ce C\bm{u}}{\sqrt{t+1}}\right\rVert_1^k\\
            &=\frac{\left(\frac{\ce C}{\sqrt{t+1}}\left\lVert\bm{u}\right\rVert_1\right)^{t+1}}{1-\frac{\ce C}{\sqrt{t+1}}\left\lVert\bm{u}\right\rVert_1}.
        \end{aligned}
    \end{equation}
    When $\left\lVert\bm{u}\right\rVert_1\leq\frac{\sqrt{t+1}}{2\ce C}$ this gives the bound:
    \begin{equation}
        2\sum_{\left\lVert\bm{k}\right\rVert_1\geq t+1}C^{\left\lVert\bm{k}\right\rVert_1}\left\lVert\bm{k}\right\rVert_1^{\frac{\left\lVert\bm{k}\right\rVert_1}{2}}\prod_{j=1}^d\frac{\left\lvert\left(\ci u_j\right)^{k_j}\right\rvert}{k_j!}\leq 4\left(\frac{\ce C}{\sqrt{t+1}}\left\lVert\bm{u}\right\rVert_1\right)^{t+1}\leq\exp_2\left(1-t\right).
    \end{equation}
    Combining these bounds yields the final result.
\end{proof}
This characteristic function bound combined with Theorem~\ref{thm:orig_lp_bound} gives us the following corollary.
\begin{corollary}[Bound on L\'{e}vy--Prokhorov metric from moments]\label{cor:lp_bound_moments}
    Let $p_N\left(\bm{x}\right),q_N\left(\bm{x}\right)$ be two sequences of distributions on $\mathbb{R}^d$ with characteristic functions $\phi_N\left(\bm{u}\right),\gamma_N\left(\bm{u}\right)$, respectively. Assume each moment of order $k\leq t$ of $p_N$ differs from that of $q_N$ by an ($N$-dependent) additive error at most $\epsilon>0$, and assume all moments of order $k>t$ (for $t$ $N$-dependent) are bounded by $\left(C\sqrt{k}\right)^k$ for some $\left(k,N\right)$-independent constant $C$. Assume as well that $\int_{\left\lVert\bm{x}\right\rVert_\infty\geq\frac{1}{2}\min\left(\frac{0.99}{d}\ln\left(\epsilon^{-1}\right),\frac{\sqrt{t+1}}{2\ce Cd}\right)}p\left(\bm{x}\right)\leq\mu$ for some ($N$-dependent) $\mu$. The L\'{e}vy--Prokhorov distance between $p_N,q_N$ is bounded by:
    \begin{equation}\label{eq:moments_conv_rate}
        \operatorname{\pi}\left(p_N,q_N\right)=\operatorname{O}\left(\frac{d\log\log\left(\epsilon^{-1}\right)}{\log\left(\epsilon^{-1}\right)}+\frac{d\log\left(t\right)}{\sqrt{t}}+\mu\right).
    \end{equation}
\end{corollary}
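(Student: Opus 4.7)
The plan is to apply Theorem~\ref{thm:orig_lp_bound} together with the characteristic-function estimate furnished by Lemma~\ref{lem:char_func_bound_moments}, choosing the cutoff parameter $T$ so that both hypotheses are simultaneously in force and the three resulting error contributions balance.

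First I would set
\begin{equation}
    T=\min\left(\frac{0.99}{d}\ln\left(\epsilon^{-1}\right),\frac{\sqrt{t+1}}{2\ce C d}\right).
\end{equation}
This choice does two things at once: (i) $\left\lVert\bm{u}\right\rVert_\infty\leq T$ forces $\left\lVert\bm{u}\right\rVert_1\leq dT\leq\frac{\sqrt{t+1}}{2\ce C}$, so the estimate of Lemma~\ref{lem:char_func_bound_moments} is valid throughout the domain of integration; (ii) $T$ is at least twice the cut-off $\frac{1}{2}\min(\cdot,\cdot)$ appearing in the corollary's tail hypothesis, which legitimizes taking the $C$ of Theorem~\ref{thm:orig_lp_bound} equal to $T/2$. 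In the asymptotic regime $\epsilon^{-1},t\to\infty$ of interest, $T$ also dominates any fixed multiple $Kd$ of the dimension, so the remaining hypothesis $T\geq\max(2C,Kd)$ of Theorem~\ref{thm:orig_lp_bound} is eventually satisfied.

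Next I would substitute the Lemma's pointwise bound $\left\lvert\phi_N\left(\bm{u}\right)-\gamma_N\left(\bm{u}\right)\right\rvert\leq\epsilon\exp\left(\left\lVert\bm{u}\right\rVert_1\right)+\exp_2\left(1-t\right)$ into the integral of Theorem~\ref{thm:orig_lp_bound}, evaluate the integral by product decomposition over coordinates (using $\int_{-T}^T \ce^{\left\lvert u_j\right\rvert}\dd{u_j}=2\left(\ce^T-1\right)$), and collect terms:
\begin{equation}
    \operatorname{\pi}\left(p_N,q_N\right)\leq\epsilon\left(\frac{2T\ce^T}{\cpi}\right)^d+\left(\frac{2T^2}{\cpi}\right)^d\exp_2\left(1-t\right)+\frac{16\ln\left(T\right)}{T}+\mu.
\end{equation}
Because $dT\leq 0.99\ln\left(\epsilon^{-1}\right)$, the first term is $\operatorname{O}\left(\epsilon^{0.01}\operatorname{poly}\left(\log\left(\epsilon^{-1}\right),t\right)\right)$; the second term decays like $2^{-t}$ times a polynomial in $t$; and the third term $\frac{16\ln\left(T\right)}{T}$ is the binding contribution, yielding $\operatorname{O}\left(\frac{d\log\log\left(\epsilon^{-1}\right)}{\log\left(\epsilon^{-1}\right)}\right)$ or $\operatorname{O}\left(\frac{d\log\left(t\right)}{\sqrt{t}}\right)$ depending on which branch of the minimum is active. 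Summing over both possibilities recovers Eq.~\eqref{eq:moments_conv_rate}.

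The only step requiring any genuine care is confirming that the polynomial prefactors $\left(2T/\cpi\right)^d$ and $\left(2T^2/\cpi\right)^d$ do not spoil the $\epsilon^{0.01}$ and $2^{-t}$ gains; since $T$ is bounded polynomially in $\log\left(\epsilon^{-1}\right)$ and $\sqrt{t}$, the exponential decay in the first two terms dominates any polynomial growth, so both terms are absorbed into the error rate of Eq.~\eqref{eq:moments_conv_rate}. This is essentially a book-keeping check rather than a substantive obstacle.
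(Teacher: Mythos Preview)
Your proposal is correct and follows essentially the same route as the paper: choose $T=\min\left(\frac{0.99}{d}\ln\left(\epsilon^{-1}\right),\frac{\sqrt{t+1}}{2\ce Cd}\right)$, feed Lemma~\ref{lem:char_func_bound_moments} into Theorem~\ref{thm:orig_lp_bound}, and verify that the integral term is dominated by $\frac{\ln T}{T}$. The only place where the paper is slightly more explicit is in the handling of the prefactor $T^{\operatorname{O}(d)}$: rather than calling it ``polynomial growth,'' the paper splits into the case $d=\operatorname{\Omega}\left(\min\left(\frac{\log\epsilon^{-1}}{\log\log\epsilon^{-1}},\frac{\sqrt{t}}{\log t}\right)\right)$ (where the claimed bound is trivially $\operatorname{\Omega}(1)$) and the complementary case (where $T^{2d}=\exp\left(\operatorname{o}\left(\min\left(\log\epsilon^{-1},\sqrt{t}\right)\right)\right)$ is genuinely subdominant). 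Your final paragraph acknowledges this as bookkeeping, which it is, but your phrasing ``polynomial growth'' tacitly treats $d$ as fixed; making the case split explicit would close that small gap.
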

\begin{proof}
    Let $T=\min\left(\frac{0.99}{d}\ln\left(\epsilon^{-1}\right),\frac{\sqrt{t+1}}{2\ce Cd}\right)$. From Theorem~\ref{thm:orig_lp_bound},
    \begin{equation}\label{eq:two_term_lp_bound}
        \operatorname{\pi}\left(p_N,q_N\right)\leq\operatorname{\tilde{O}}\left(T^{2d}\right)\sup_{\left\lVert\bm{u}\right\rVert_\infty\leq T}\left\lvert\phi_N\left(\bm{u}\right)-\gamma_N\left(\bm{u}\right)\right\rvert+\operatorname{O}\left(\frac{d\log\left(T\right)}{T}\right)+\operatorname{O}\left(\mu\right).
    \end{equation}
    From Lemma~\ref{lem:char_func_bound_moments},
    \begin{equation}
        \sup_{\left\lVert\bm{u}\right\rVert_\infty\leq T}\left\lvert\phi_N\left(\bm{u}\right)-\gamma_N\left(\bm{u}\right)\right\rvert\leq\epsilon^{0.01}+\exp_2\left(1-t\right).
    \end{equation}
    The convergence rate given in Eq.~\eqref{eq:moments_conv_rate} is trivial when $d=\operatorname{\Omega}\left(\min\left(\frac{\log\left(\epsilon^{-1}\right)}{\log\log\left(\epsilon^{-1}\right)},\frac{\sqrt{t}}{\log\left(t\right)}\right)\right)$ as $\epsilon^{-1},t,N\to\infty$. When instead $d=\operatorname{o}\left(\min\left(\frac{\log\left(\epsilon^{-1}\right)}{\log\log\left(\epsilon^{-1}\right)},\frac{\sqrt{t}}{\log\left(t\right)}\right)\right)$,
    \begin{equation}
        T^{2d}=\exp\left(\operatorname{o}\left(\min\left(\log\left(\epsilon^{-1}\right),\sqrt{t}\right)\right)\right).
    \end{equation}
    The second term in Eq.~\eqref{eq:two_term_lp_bound} thus dominates in this setting, implying the final bound.
\end{proof}

We also prove a bound on the error of characteristic functions given bounds on the cumulants.
\begin{lemma}[Bound on characteristic functions from cumulants]\label{lem:char_func_bound}
    Let $p\left(\bm{x}\right),q\left(\bm{x}\right)$ be two probability densities on $\mathbb{R}^d$ with characteristic functions $\phi\left(\bm{u}\right),\gamma\left(\bm{u}\right)$. Assume each cumulant of order $k$ of $p$ differs from that of $q$ by an additive error of at most $\epsilon^k>0$. Then
    \begin{equation}
        \left\lvert\phi\left(\bm{u}\right)-\gamma\left(\bm{u}\right)\right\rvert\leq\max\left(\exp\left(\exp\left(\epsilon\left\lVert\bm{u}\right\rVert_1\right)-1\right)-1,1-\exp\left(1-\exp\left(\epsilon\left\lVert\bm{u}\right\rVert_1\right)\right)\right).
    \end{equation}
\end{lemma}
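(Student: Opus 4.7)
The plan is to work at the level of cumulant generating functions. Write $K_p(\bm{u}) \equiv \log\phi(\bm{u})$ and $K_q(\bm{u}) \equiv \log\gamma(\bm{u})$, which admit the multivariate cumulant expansions
\begin{equation*}
K_p(\bm{u}) = \sum_{\bm{k}\neq\bm{0}} \kappa_{\bm{k}}(p) \prod_{j=1}^d \frac{(\ci u_j)^{k_j}}{k_j!},
\end{equation*}
and analogously for $K_q$, convergent at least in a neighborhood of $\bm{u}=\bm{0}$. First I set $\delta(\bm{u}) \equiv K_p(\bm{u}) - K_q(\bm{u})$ and bound it term-by-term using the hypothesized cumulant differences $|\kappa_{\bm{k}}(p) - \kappa_{\bm{k}}(q)| \leq \epsilon^{\|\bm{k}\|_1}$ together with the triangle inequality:
\begin{equation*}
|\delta(\bm{u})| \leq \sum_{\bm{k}\neq\bm{0}} \epsilon^{\|\bm{k}\|_1} \prod_j \frac{|u_j|^{k_j}}{k_j!} = \prod_j e^{\epsilon|u_j|} - 1 = \exp\bigl(\epsilon\|\bm{u}\|_1\bigr) - 1,
\end{equation*}
where the penultimate equality factorizes the exponential over coordinates. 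The absolute convergence of this majorant series also justifies the termwise subtraction at any $\bm{u}$.

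Next, I leverage the factorization $\phi - \gamma = \gamma (e^\delta - 1)$ together with the fact that $|\gamma(\bm{u})| \leq 1$ for every characteristic function, and the elementary termwise Taylor bound $|e^z - 1| \leq \sum_{k\geq 1} |z|^k/k! = e^{|z|} - 1$. Chaining these with the bound on $|\delta(\bm{u})|$ yields the first argument of the maximum:
\begin{equation*}
|\phi(\bm{u}) - \gamma(\bm{u})| \leq \exp(|\delta(\bm{u})|) - 1 \leq \exp\bigl(\exp(\epsilon\|\bm{u}\|_1) - 1\bigr) - 1.
\end{equation*}
The second argument $1 - \exp(1 - \exp(\epsilon\|\bm{u}\|_1))$ follows by the symmetric factorization $\phi - \gamma = -\phi(e^{-\delta} - 1)$ combined with the a priori bound $|\phi| \leq 1$ and a matched estimate for $|e^{-z} - 1|$; since both estimates are valid upper bounds, packaging them as a $\max$ simply makes the statement robust to whichever route one invokes.

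The main technical subtlety to address is the convergence of the \emph{individual} cumulant expansions for $\log\phi$ and $\log\gamma$ at the $\bm{u}$ of interest, rather than only the convergence of their difference (which is guaranteed by the majorant bound above). This is automatic in a neighborhood of the origin, and since this Lemma will ultimately be invoked through Theorem~\ref{thm:orig_lp_bound} on a bounded box in $\bm{u}$, it can be extended to the full range of interest by analytic continuation. The cleanest route—which I would take to sidestep concerns about branch cuts of the complex logarithm—is to define $\delta(\bm{u})$ directly through its majorant series as an entire function of $\bm{u}$ and verify the identity $\phi = \gamma e^\delta$ by analyticity from a neighborhood of $\bm{0}$ where both $\log$ expansions provably converge; everything downstream is then a clean application of the two elementary inequalities above.
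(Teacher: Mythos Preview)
Your proof is correct and follows essentially the same approach as the paper: factor out one characteristic function, bound the difference of log-characteristic functions termwise via the cumulant expansion to obtain $\exp(\epsilon\|\bm{u}\|_1)-1$, then exponentiate. You are in fact more careful than the paper about convergence of the cumulant series and branch issues; the paper simply writes the expansion and says ``taking into account the absolute values yields the final result,'' while you note (correctly) that the second argument of the $\max$ is redundant since $e^E-1 \geq 1-e^{-E}$ for all $E\geq 0$.
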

\begin{proof}
    We have that:
    \begin{equation}
        \begin{aligned}
            \left\lvert\phi\left(\bm{u}\right)-\gamma\left(\bm{u}\right)\right\rvert&\leq\left\lvert\phi\left(\bm{u}\right)\right\rvert\left\lvert 1-\exp\left(\ln\left(\gamma\left(\bm{u}\right)\right)-\ln\left(\phi\left(\bm{u}\right)\right)\right)\right\rvert\\
            &=\left\lvert 1-\exp\left(\ln\left(\gamma\left(\bm{u}\right)\right)-\ln\left(\phi\left(\bm{u}\right)\right)\right)\right\rvert.
        \end{aligned}
    \end{equation}
    By the definition of the cumulant generating function,
    \begin{equation}
        \left\lvert\ln\left(\gamma\left(\bm{u}\right)\right)-\ln\left(\phi\left(\bm{u}\right)\right)\right\rvert\leq\sum_{\left\lVert\bm{k}\right\rVert_1>0}\prod_{j=1}^d\frac{\left\lvert\left(\ci\epsilon u_j\right)^{k_j}\right\rvert}{k_j!}=\exp\left(\epsilon\left\lVert\bm{u}\right\rVert_1\right)-1.
    \end{equation}
    Taking into account the absolute values yields the final result.
\end{proof}
This characteristic function bound combined with Theorem~\ref{thm:orig_lp_bound} gives us the following corollary.
\begin{corollary}[Bound on L\'{e}vy--Prokhorov metric from cumulants]\label{cor:lp_bound_cumulants}
    Let $p_N\left(\bm{x}\right),q_N\left(\bm{x}\right)$ be two sequences of distributions on $\mathbb{R}^d$ with characteristic functions $\phi_N\left(\bm{u}\right),\gamma_N\left(\bm{u}\right)$, respectively. Assume each cumulant of order $k$ of $p_N$ differs from that of $q_N$ by an additive error of at most $\epsilon^k>0$. Assume as well that $\int_{\left\lVert\bm{x}\right\rVert_\infty\geq\frac{1}{2d}\epsilon^{-0.99}}p\left(\bm{x}\right)\leq\mu$ for some ($N$-dependent) $\mu$. Finally, assume that $d=\operatorname{o}\left(\frac{\log\left(\epsilon^{-1}\right)}{\log\log\left(\epsilon^{-1}\right)}\right)$ as $N\to\infty$. The L\'{e}vy--Prokhorov distance between $p_N,q_N$ is bounded by:
    \begin{equation}
        \operatorname{\pi}\left(p_N,q_N\right)=\operatorname{O}\left(\frac{d\log\left(\epsilon^{-1}\right)}{\epsilon^{-0.99}}+\mu\right).
    \end{equation}
\end{corollary}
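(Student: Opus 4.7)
The plan is to directly mirror the structure of the proof of Corollary~\ref{cor:lp_bound_moments}, replacing the moment-based characteristic function estimate of Lemma~\ref{lem:char_func_bound_moments} with the cumulant-based one of Lemma~\ref{lem:char_func_bound}, and then optimizing the cutoff parameter $T$ in Theorem~\ref{thm:orig_lp_bound}.

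First, I would choose $T = \epsilon^{-0.99}/d$. The tail assumption $\int_{\|\bm{x}\|_\infty\geq\frac{1}{2d}\epsilon^{-0.99}}p(\bm{x})\leq\mu$ gives $2C = \epsilon^{-0.99}/d$, so this choice saturates $T \geq 2C$; the scaling assumption $d=\operatorname{o}(\log(\epsilon^{-1})/\log\log(\epsilon^{-1}))$ then ensures $T \geq Kd$ for large enough $\epsilon^{-1}$, meeting the hypothesis of Theorem~\ref{thm:orig_lp_bound}. Crucially, this $T$ also guarantees that every $\bm{u}$ in the integration domain $\|\bm{u}\|_\infty \leq T$ satisfies $\epsilon\|\bm{u}\|_1 \leq \epsilon\cdot dT = \epsilon^{0.01} = \operatorname{o}(1)$, placing us in the regime where Lemma~\ref{lem:char_func_bound}'s double-exponential bound collapses to a linear one via $\exp(\exp(x)-1)-1 = x + \operatorname{O}(x^2)$.

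Second, I would read off the second and third terms of Theorem~\ref{thm:orig_lp_bound} immediately. The cutoff contributes
\begin{equation}
    \frac{16\ln(T)}{T} = \frac{16\ln(\epsilon^{-0.99}/d)}{\epsilon^{-0.99}/d} = \operatorname{O}\left(\frac{d\log(\epsilon^{-1})}{\epsilon^{-0.99}}\right),
\end{equation}
which is exactly the dominant term in the claimed bound, and the tail contribution $\mu$ matches the other additive term.

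Third, the remaining task is to show that the Fourier-inversion integral $(T/\pi)^d\int_{\|\bm{u}\|_\infty\leq T}|\phi_N(\bm{u})-\gamma_N(\bm{u})|\,\dd{\bm{u}}$ is subleading. Applying the collapsed bound $|\phi_N(\bm{u})-\gamma_N(\bm{u})| = \operatorname{O}(\epsilon\|\bm{u}\|_1)$ and using the scaling $d = \operatorname{o}(\log(\epsilon^{-1})/\log\log(\epsilon^{-1}))$, the prefactor $(T/\pi)^d = \exp(d\log(\epsilon^{-0.99}/(d\pi)))$ only inflates by a $\epsilon^{-\operatorname{o}(1)}$ factor, while the pointwise smallness $\epsilon^{0.01}$ and the linearity in $\|\bm{u}\|_1$ keep the resulting contribution below the $d\log(\epsilon^{-1})/\epsilon^{-0.99}$ scale. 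The main obstacle in the proof is the careful bookkeeping in this last step: balancing the volume factor $(T/\pi)^d$ against the characteristic-function error is exactly the role played by the assumed upper bound on $d$, and verifying this balance is the one place where the proof is not a mechanical substitution into Theorem~\ref{thm:orig_lp_bound}.
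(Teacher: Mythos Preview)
Your proposal is correct and follows essentially the same route as the paper: choose $T=\epsilon^{-0.99}/d$, bound the characteristic-function difference via Lemma~\ref{lem:char_func_bound} by $\operatorname{O}(\epsilon^{0.01})$ on the box $\|\bm{u}\|_\infty\le T$, and then use the assumption $d=\operatorname{o}(\log(\epsilon^{-1})/\log\log(\epsilon^{-1}))$ to show the $(T/\pi)^d$-weighted integral is dominated by the $\log(T)/T$ term. The paper's argument is slightly terser---it absorbs the box volume and prefactor together as $\operatorname{\tilde O}(T^{2d})$ and simply notes $T^{2d}=\exp(\operatorname{o}(\log(\epsilon^{-1})))$---but the substance is identical.
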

\begin{proof}
    Let $T=\frac{1}{d}\epsilon^{-0.99}$. From Theorem~\ref{thm:orig_lp_bound},
    \begin{equation}\label{eq:two_term_lp_bound_cums}
        \operatorname{\pi}\left(p_N,q_N\right)\leq\operatorname{\tilde{O}}\left(T^{2d}\right)\sup_{\left\lVert\bm{u}\right\rVert_\infty\leq T}\left\lvert\phi_N\left(\bm{u}\right)-\gamma_N\left(\bm{u}\right)\right\rvert+\operatorname{O}\left(\frac{d\log\left(T\right)}{T}\right)+\operatorname{O}\left(\mu\right).
    \end{equation}
    From Lemma~\ref{lem:char_func_bound},
    \begin{equation}
        \sup_{\left\lVert\bm{u}\right\rVert_\infty\leq T}\left\lvert\phi_N\left(\bm{u}\right)-\gamma_N\left(\bm{u}\right)\right\rvert\leq\operatorname{\Theta}\left(\epsilon^{0.01}\right).
    \end{equation}
    Similarly,
    \begin{equation}
        T^{2d}=\exp\left(\operatorname{o}\left(\log\left(\epsilon^{-1}\right)\right)\right).
    \end{equation}
    The second term in Eq.~\eqref{eq:two_term_lp_bound_cums} thus dominates, implying the given convergence rate.
\end{proof}

\section{Tail Bound Lemmas}\label{sec:tail_bound_lems}

We here prove some helper lemmas giving tail bounds for the marginal entries of $\beta$-Wishart random variables. We will particularly leverage the Bartlett decomposition of $\beta$-Wishart matrices (see Appendix~\ref{sec:wishart_background}). In the following we will use $\ket{\cdot}$ to denote vectors in the vector space on which the defining representation of $\mathcal{H}^{p+1}\left(\mathbb{F}\right)$ acts, with basis $\ket{0}\cup\left\{\ket{i}\right\}_{i=1}^p$.

We first prove that $\beta$-Wishart matrices are robust to certain low-rank perturbations of their Bartlett decomposition. In particular, we show that removing a single column from the Bartlett factor $\bm{L}$ of a $\beta$-Wishart matrix with many degrees of freedom---effectively reducing the degrees of freedom parameter of the matrix by one---has little effect on the joint distribution of entries of the matrix. Intuitively, this allows us to argue in Appendix~\ref{sec:first_deriv_exp} that such a Wishart matrix is approximately independent from these entries.
\begin{lemma}[Robustness of $\beta$-Wishart Bartlett decompositions]\label{lem:drop_column}
    Let $\bm{W}$ be a $\left(p+1\right)\times\left(p+1\right)$ $\beta$-Wishart matrix with $r$ degrees of freedom and identity scale matrix. Let $\bm{L}\bm{L}^\dagger$ be the Bartlett decomposition of $\bm{W}$. Let $\bm{\tilde{L}}$ be $\bm{L}$ with the first column removed, and define $\bm{\tilde{W}}\equiv\bm{\tilde{L}}\bm{\tilde{L}}^\dagger$. For every $t\geq 0$,
    \begin{equation}
        \mathbb{P}\left[r^{-\frac{1}{2}}\max_{i,j\in\left[1,\ldots,p\right]}\left\lvert\bra{i}\bm{W}\ket{j}-\bra{i}\bm{\tilde{W}}\ket{j}\right\rvert\geq t^2+\sqrt{2}\beta^{-\frac{1}{2}}r^{-\frac{1}{4}}t+r^{-\frac{1}{2}}\right]\leq p\exp\left(-\frac{\sqrt{r}}{2}t^2\right).
    \end{equation}
    % For the below, see \url{https://en.wikipedia.org/wiki/Convergence_of_random_variables#Properties_2} and expand the Lambert W function near infinity.
    In particular, the argument converges to zero under the Ky Fan metric as $r\to\infty$ at a rate $\operatorname{\Omega}\left(\frac{\sqrt{r}}{\log\left(pr\right)}\right)$.
\end{lemma}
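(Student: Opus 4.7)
The plan is to exploit the lower-triangular structure of $\bm{L}$ to reduce the entire claim to a one-sided concentration inequality for the first column. Writing $\bm{c}$ for the first column of $\bm{L}$, one has $\bm{W}-\bm{\tilde{W}} = \bm{c}\bm{c}^\dagger$, since removing the first column of $\bm{L}$ subtracts exactly this rank-$1$ contribution from $\bm{L}\bm{L}^\dagger$. For $i,j\in[1,\ldots,p]$ this yields the identity $\bra{i}(\bm{W}-\bm{\tilde{W}})\ket{j} = L_{i,0}\overline{L_{j,0}}$, and so
\[
\max_{i,j\in[1,\ldots,p]}\bigl\lvert\bra{i}\bm{W}\ket{j}-\bra{i}\bm{\tilde{W}}\ket{j}\bigr\rvert \;=\; \bigl(\max_{i\in[1,\ldots,p]}\lvert L_{i,0}\rvert\bigr)^2.
\]
Thus the only object to control is a maximum of $p$ copies of $\lvert L_{i,0}\rvert^2$.

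Next I invoke the Bartlett decomposition recalled in Appendix~\ref{sec:wishart_background}: the subdiagonal entries of $\bm{L}$, including $L_{1,0},\ldots,L_{p,0}$, are i.i.d.\ standard Gaussian over $\mathbb{F}$, so each $\beta\lvert L_{i,0}\rvert^2$ is $\chi^2$-distributed with $\beta$ degrees of freedom. Applying the Laurent--Massart inequality
\[
\mathbb{P}\bigl[\chi^2_\beta \geq \beta + 2\sqrt{\beta s} + 2s\bigr] \leq \exp(-s)
\]
and rescaling gives, for every $s\geq 0$,
\[
\mathbb{P}\bigl[\lvert L_{i,0}\rvert^2 \geq 1 + 2\sqrt{s/\beta} + 2s/\beta\bigr] \leq \exp(-s).
\]
Setting $s=\sqrt{r}\,t^2/2$ makes the right-hand side $\exp(-\sqrt{r}\,t^2/2)$, and the threshold becomes $1 + \sqrt{2}\,\beta^{-1/2}r^{1/4}t + \beta^{-1}\sqrt{r}\,t^2$. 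Since $\beta\geq 1$, the quadratic coefficient satisfies $\beta^{-1}\sqrt{r}\,t^2 \leq \sqrt{r}\,t^2$, so the tail event in the lemma (with quadratic coefficient $\sqrt{r}$) is contained in this one, and the probability estimate is preserved.

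A union bound over $i\in[1,\ldots,p]$ converts the single-coordinate estimate into the claimed bound on $(\max_i\lvert L_{i,0}\rvert)^2$ at the cost of a factor of $p$, and multiplying through by $r^{-1/2}$ is exactly the stated inequality. The Ky Fan convergence at rate $\operatorname{\Omega}(\sqrt{r}/\log(pr))$ follows by choosing $\epsilon$ so that both the threshold and the probability bound equal $\epsilon$: balancing $\epsilon \asymp t^2$ with $\epsilon \geq p\exp(-\sqrt{r}\,t^2/2)$ forces $\sqrt{r}\,\epsilon \gtrsim \log(p/\epsilon) \asymp \log(pr)$, giving $\epsilon = \operatorname{\Theta}(\log(pr)/\sqrt{r})$. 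The proof has no genuine obstacle--it reduces to a single application of the $\chi^2$ concentration inequality--so the only care required is in matching the algebraic form of the bound, specifically absorbing the factor of $\beta^{-1}$ from Laurent--Massart into the simpler $\beta$-free quadratic term of the claim.
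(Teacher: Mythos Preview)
Your proof is correct and follows essentially the same route as the paper's: both identify $\bm{W}-\bm{\tilde{W}}$ as the rank-one outer product of the first column of $\bm{L}$, reduce the maximum over $i,j$ to $(\max_i|L_{i,0}|)^2$, apply the Laurent--Massart $\chi^2$ tail bound with the substitution $s=\tfrac{1}{2}\sqrt{r}\,t^2$, and finish with a union bound over the $p$ coordinates. Your explicit handling of the $\beta^{-1}$ factor on the quadratic term is a minor clarification that the paper leaves implicit.
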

\begin{proof}
    Let $N_{0,i}$ be the $\left(i,0\right)$ entry of $\bm{L}$. Note that the $N_{0,i}$ are i.i.d. The lemma statement is then equivalent to showing that
    \begin{equation}
        \mathbb{P}\left[\max_{i,j\in\left[1,\ldots,p\right]}\left\lvert N_{0,i}N_{0,j}^\ast\right\rvert^2\geq\left(\sqrt{r}t^2+\sqrt{2}\beta^{-\frac{1}{2}}r^{\frac{1}{4}}t+1\right)^2\right]\leq p\exp\left(-\frac{\sqrt{r}}{2}t^2\right).
    \end{equation}
    This in turn is implied by:
    \begin{equation}\label{eq:n_square_bound}
        \mathbb{P}\left[\beta\max_{i\in\left[1,\ldots,p\right]}\left\lvert N_{0,i}\right\rvert^2\geq\beta\sqrt{r}t^2+\sqrt{2\beta}r^{\frac{1}{4}}t+\beta\right]\leq p\exp\left(-\frac{\sqrt{r}}{2}t^2\right).
    \end{equation}
    $\beta\left\lvert N_{0,i}\right\rvert^2$ is $\chi^2$-distributed with $\beta$ degrees of freedom. The union bound along with standard $\chi^2$ tail bounds (see, e.g., Lemma~1 of \citet{10.1214/aos/1015957395}) thus immediately imply Eq.~\eqref{eq:n_square_bound} and therefore also the final result.
\end{proof}

We now prove a tail bound for a $\chi$-distributed random variable with $D$ degrees of freedom.
\begin{lemma}[Tail bounds for $\chi$-distributed random variables]\label{lem:chi_square_tail_bounds}
    Let $\chi$ be $\chi$-distributed with $D$ degrees of freedom. For every $0\leq t\leq\frac{3}{8}D^{\frac{1}{4}}$,
    \begin{equation}
        \mathbb{P}\left[D^{-\frac{1}{4}}\left\lvert\chi-\sqrt{D}\right\rvert\geq\frac{4}{3}t+D^{-\frac{1}{4}}t^2\right]\leq 2\exp\left(-\sqrt{D}t^2\right).
    \end{equation}
    % For the below, see \url{https://en.wikipedia.org/wiki/Convergence_of_random_variables#Properties_2} and expand the Lambert W function near infinity.
    In particular, the argument converges to zero under the Ky Fan metric as $D\to\infty$ at a rate $\Omega\left(\frac{\sqrt[4]{D}}{\sqrt{\log\left(D\right)}}\right)$.
\end{lemma}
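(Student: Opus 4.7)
The plan is to reduce the claim to the Laurent--Massart tail bounds for the $\chi^2$ distribution (Lemma~1 of \citet{10.1214/aos/1015957395}) applied to $\chi^2$, and then to convert bounds on $\chi^2-D$ into bounds on $\chi-\sqrt{D}$ via the factorization
\[
    \chi^2 - D = (\chi - \sqrt{D})(\chi + \sqrt{D}).
\]
Setting the Laurent--Massart deviation parameter to $x = \sqrt{D}\,t^2$ immediately yields the two one-sided events
\[
    \mathbb{P}\!\left[\chi^2 - D \geq 2D^{3/4}t + 2\sqrt{D}\,t^2\right] \leq \exp(-\sqrt{D}\,t^2), \qquad \mathbb{P}\!\left[D - \chi^2 \geq 2D^{3/4}t\right] \leq \exp(-\sqrt{D}\,t^2),
\]
and all that remains is to extract $\chi-\sqrt{D}$ from these.

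The upper tail is immediate: on $\chi \geq \sqrt{D}$ one has $\chi + \sqrt{D} \geq 2\sqrt{D}$, so dividing the first displayed inequality by $\chi+\sqrt{D}$ yields $\chi - \sqrt{D} \leq D^{1/4}t + t^2$ outside a set of probability at most $\exp(-\sqrt{D}\,t^2)$; on $\chi < \sqrt{D}$ this one-sided bound is trivially satisfied.

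The lower tail is the crux, and it is exactly where the hypothesis $t \leq \tfrac{3}{8}D^{1/4}$ enters. The naive bound $\chi + \sqrt{D} \geq \sqrt{D}$ would only give $\sqrt{D} - \chi \leq 2D^{1/4}t$, missing the claimed constant $\tfrac{4}{3}$. The fix is a bootstrap: on the good event $D-\chi^2 < 2D^{3/4}t$ one has $\chi^2 > D\left(1-2D^{-1/4}t\right)$, and the hypothesis $2D^{-1/4}t \leq \tfrac{3}{4}$ then forces $\chi^2 > D/4$, so $\chi \geq \sqrt{D}/2$ and $\chi + \sqrt{D} \geq \tfrac{3}{2}\sqrt{D}$. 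Dividing $D-\chi^2 < 2D^{3/4}t$ by this refined lower bound produces $\sqrt{D} - \chi \leq \tfrac{4}{3}D^{1/4}t$. A union bound over the two one-sided events combined with the observation that $\tfrac{4}{3}D^{1/4}t + t^2$ dominates both of them then gives the stated inequality after rescaling by $D^{-1/4}$. The hardest step is exactly this bootstrap; the quantitative cushion in the hypothesis $t \leq \tfrac{3}{8}D^{1/4}$ is what makes it clean.

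The Ky Fan rate is a cosmetic optimization. Setting the overall tail probability equal to $\epsilon$ gives $t = \bigl(\log(2/\epsilon)/\sqrt{D}\bigr)^{1/2}$; substituting into $\tfrac{4}{3}t + D^{-1/4}t^2$ and solving self-consistently for $\epsilon$ yields $\epsilon = \Theta\!\bigl(\sqrt{\log D}/D^{1/4}\bigr)$, which is the claimed convergence rate $\Omega\!\bigl(D^{1/4}/\sqrt{\log D}\bigr)$, and for large $D$ the required constraint $t \leq \tfrac{3}{8}D^{1/4}$ is automatic under this choice.
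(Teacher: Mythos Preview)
Your proposal is correct and follows essentially the same route as the paper: both reduce to the Laurent--Massart $\chi^2$ tail bounds with deviation parameter $x=\sqrt{D}\,t^2$ and then convert to bounds on $\chi-\sqrt{D}$. The only cosmetic difference is that the paper handles the conversion via the elementary inequalities $\sqrt{1+x}\leq 1+x/2$ and $\sqrt{1-x}\geq 1-\tfrac{2}{3}x$ for $0\leq x\leq\tfrac{3}{4}$, whereas you use the factorization $\chi^2-D=(\chi-\sqrt{D})(\chi+\sqrt{D})$ with a bootstrap lower bound on $\chi+\sqrt{D}$; these two arguments are equivalent and produce identical constants, with the hypothesis $t\leq\tfrac{3}{8}D^{1/4}$ entering in exactly the same way.
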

\begin{proof}
    We have from standard $\chi^2$ tail bounds (see, e.g., Lemma~1 of \citet{10.1214/aos/1015957395}) that:
    \begin{align}
        \mathbb{P}\left[\chi^2-D\geq 2D^{\frac{3}{4}}t+2\sqrt{D}t^2\right]&\leq\exp\left(-\sqrt{D}t^2\right),\\
        \mathbb{P}\left[\chi^2-D\leq-2D^{\frac{3}{4}}t\right]&\leq\exp\left(-\sqrt{D}t^2\right).
    \end{align}
    This implies that:
    \begin{equation}
        \mathbb{P}\left[\chi\geq\sqrt{D+2D^{\frac{3}{4}}t+2\sqrt{D}t^2}\right]\leq\exp\left(-\sqrt{D}t^2\right)
    \end{equation}
    and (for $t\leq\frac{1}{2}D^{\frac{1}{4}}$)
    \begin{equation}
        \mathbb{P}\left[\chi\leq\sqrt{D-2D^{\frac{3}{4}}t}\right]\leq\exp\left(-\sqrt{D}t^2\right).
    \end{equation}
    Using the general inequality (for $x\geq -1$):
    \begin{equation}
        \sqrt{1+x}\leq 1+\frac{x}{2},
    \end{equation}
    we have that:
    \begin{equation}
        \sqrt{D+2D^{\frac{3}{4}}t+2\sqrt{D}t^2}-\sqrt{D}\leq D^{\frac{1}{4}}t+t^2
    \end{equation}
    such that
    \begin{equation}
        \mathbb{P}\left[\chi-\sqrt{D}\geq D^{\frac{1}{4}}t+t^2\right]\leq\exp\left(-\sqrt{D}t^2\right).
    \end{equation}
    Similarly, using the general inequality (for $0\leq x\leq\frac{3}{4}$):
    \begin{equation}
        1-\frac{2x}{3}\leq\sqrt{1-x},
    \end{equation}
    we have under the given assumptions that:
    \begin{equation}
        \mathbb{P}\left[\sqrt{D}-\chi\geq\frac{4}{3}D^{\frac{1}{4}}t\right]\leq\exp\left(-\sqrt{D}t^2\right).
    \end{equation}
    Noting that both $D^{\frac{1}{4}}t+t^2$ and $\frac{4}{3}D^{\frac{1}{4}}t$ are upper bounded by $\frac{4}{3}D^{\frac{1}{4}}t+t^2$ then implies the final result.
\end{proof}

We end with a tail bound for the off-diagonal entries of a $\beta$-Wishart matrix.
\begin{lemma}[Off-diagonal $\beta$-Wishart tail bounds]\label{lem:low_rank_wis_tail_bounds}
    Let $\bm{W}$ be a $\left(p+1\right)\times\left(p+1\right)$ $\beta$-Wishart matrix with $r$ degrees of freedom and identity scale matrix. There exist universal constants $C,K>0$ that depends only on $\beta$ such that, for every $t\geq 0$,
    \begin{equation}
        \mathbb{P}\left[r^{-\frac{1}{2}}\max_{i\in\left[1,\ldots,p\right]}\left\lvert\bra{0}\bm{W}\ket{i}\right\rvert\geq 1+t\right]\leq Kp\exp\left(-C\min\left(t,t^2\right)\right).
    \end{equation}
\end{lemma}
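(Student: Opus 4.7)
The plan is to reduce the problem to one-dimensional chi-squared concentration via the Bartlett decomposition $\bm{W} = \bm{L}\bm{L}^\dagger$ with $\bm{L}$ lower-triangular (reviewed in Appendix~\ref{sec:wishart_background} and already used in the proof of Lemma~\ref{lem:drop_column}). Indexing rows and columns from $0$ to $p$, the key observation is that $L_{0,k} = 0$ for $k \geq 1$, so the sum defining the $(0,i)$-entry of $\bm{W}$ collapses to a single term:
\begin{equation}
    \bra{0}\bm{W}\ket{i} = L_{0,0}\, L_{i,0}^\ast \qquad \text{for all } i \in [1,\ldots,p].
\end{equation}
Here $\beta L_{0,0}^2 \sim \chi^2(\beta r)$, the $L_{i,0}$ for $i \geq 1$ are i.i.d.\ $\mathcal{N}^\beta(0,1)$ (so $\beta |L_{i,0}|^2 \sim \chi^2(\beta)$), and all these variables are mutually independent. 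In particular, $\mathbb{E}[r^{-1/2}L_{0,0}]$ and $\mathbb{E}[|L_{i,0}|]$ are each at most $1$, so the threshold $1+t$ in the statement is above the typical product.

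I would then bound the event of interest by a union of two events using the elementary inequality: if $(r^{-1/2}L_{0,0})\,|L_{i,0}| \geq 1 + t$, then either $r^{-1/2}L_{0,0} \geq \sqrt{1+t}$ or $|L_{i,0}| \geq \sqrt{1+t}$. Standard Laurent--Massart chi-squared tail estimates (the same inequalities invoked in Lemma~\ref{lem:chi_square_tail_bounds}) give
\begin{equation}
    \mathbb{P}\!\left[L_{0,0}^{2} \geq r(1+t)\right] \leq \exp\!\left(-c\,\beta r\,\min(t, t^2)\right), \qquad \mathbb{P}\!\left[|L_{i,0}|^2 \geq 1+t\right] \leq \exp\!\left(-c\,\beta\,\min(t, t^2)\right),
\end{equation}
for a universal $c > 0$. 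Since $\beta, r \geq 1$, each of these is of the desired form $K_\beta \exp(-C_\beta \min(t, t^2))$, with constants depending only on $\beta$ (and actually strictly sharper for the chi-$\beta r$ factor as $r$ grows).

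Applying the union bound over $i \in [1, \ldots, p]$ to the $|L_{i,0}|$ piece and adding the single estimate for $L_{0,0}$ yields
\begin{equation}
    \mathbb{P}\!\left[r^{-1/2} \max_{i\in[1,\ldots,p]} |\bra{0}\bm{W}\ket{i}| \geq 1 + t\right] \leq (1+p)\,K_\beta\, \exp\!\left(-C_\beta \min(t, t^2)\right),
\end{equation}
which on absorbing $1 + p \leq 2p$ into $K_\beta$ gives the claimed bound. There is no substantive obstacle: the Bartlett decomposition collapses the off-diagonal row of $\bm{W}$ into a product of a single chi variable and an independent Gaussian, and the resulting two-factor tail bound is an immediate consequence of standard concentration for $\chi^2$ random variables together with a union bound.
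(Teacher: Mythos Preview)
Your proof is correct and essentially identical to the paper's. Both arguments use the Bartlett decomposition to write $\bra{0}\bm{W}\ket{i}=L_{0,0}L_{i,0}^\ast$, invoke the elementary fact that $ab\geq c^2$ forces $\max(a,b)\geq c$ to reduce to the events $\{L_{0,0}^2\geq r(1+t)\}$ and $\{|L_{i,0}|^2\geq 1+t\}$, and then apply Laurent--Massart-type $\chi^2$ tail bounds with a union bound over $i$.
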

\begin{proof}
    Let $\bm{L}\bm{L}^\dagger$ be the Bartlett decomposition of $\bm{W}$. Let $\chi_0$ be the $\left(0,0\right)$ entry of $\bm{L}$ and $N_{0,i}$ the $\left(i,0\right)$ entry. The lemma statement is then equivalent to showing that
    \begin{equation}
        \mathbb{P}\left[\chi_0^2\max_{i\in\left[1,\ldots,p\right]}\left\lvert N_{0,i}\right\rvert^2\geq r\left(1+t\right)^2\right]\leq Kp\exp\left(-C\min\left(t,t^2\right)\right).
    \end{equation}
    Note that:
    \begin{equation}
        \begin{aligned}
            \mathbb{P}\left[\chi_0^2\max_{i\in\left[1,\ldots,p\right]}\left\lvert N_{0,i}\right\rvert^2\geq r\left(1+t\right)^2\right]&\leq\mathbb{P}\left[\max\left(\frac{\chi_0^2}{r},\max_{i\in\left[1,\ldots,p\right]}\left\lvert N_{0,i}\right\rvert^2\right)^2\geq\left(1+t\right)^2\right]\\
            &=\mathbb{P}\left[\max\left(\frac{\chi_0^2}{r},\max_{i\in\left[1,\ldots,p\right]}\left\lvert N_{0,i}\right\rvert^2\right)\geq 1+t\right],
        \end{aligned}
    \end{equation}
    so by the union bound we need only check whether
    \begin{equation}
        \max\left(\mathbb{P}\left[\chi_0^2\geq r\left(1+t\right)\right],\mathbb{P}\left[\left\lvert N_{0,i}\right\rvert^2\geq 1+t\right]\right)\leq\frac{K}{2}\exp\left(-C\min\left(t,t^2\right)\right)
    \end{equation}
    to prove the final result. As $\beta\left\lvert N_{0,i}\right\rvert^2$ is $\chi^2$-distributed with $\beta$ degrees of freedom the final result immediately follows from standard $\chi^2$ tail bounds.
\end{proof}

\end{document}